\documentclass[a4paper,UKenglish,pdfa,cleveref,autoref,numberwithinsect]{lipics-noline-v2021}

\usepackage{microtype}



\title{Pebble transducers with unary output}


\author{Gaëtan Douéneau-Tabot}{Université Paris Cité, CNRS, IRIF, F-75013, Paris, France
\and Direction générale de l'armement - Ingénierie des projets, Paris, France
}{doueneau@irif.fr}{}{}

\authorrunning{G. Douéneau-Tabot}

\Copyright{Gaëtan Douéneau-Tabot}

\ccsdesc[100]{Theory of computation~Formal languages and automata theory~Automata extensions~Transducers} 

\keywords{polyregular functions, pebble transducers, marble transducers, streaming string transducers, factorization forests}

\category{}


\acknowledgements{The author is grateful to Olivier Carton for discussing about this work.
He also thanks the reviewers  of MFCS 2021 for their helpful comments and remarks.}

\EventEditors{}
\EventNoEds{1}
\EventLongTitle{}
\EventShortTitle{MFCS 2021}
\EventAcronym{}
\EventYear{}
\EventDate{}
\EventLocation{}
\EventLogo{}
\SeriesVolume{}
\ArticleNo{XX}


\usepackage{tikz}
\usepackage[utf8]{inputenc}
\usetikzlibrary{shapes,arrows,positioning,automata,shadows}
\usepackage{amsfonts, amsmath, amssymb,dsfont, mathrsfs}
\usepackage{amsthm}
\usepackage{hyperref}
\usepackage[ruled]{algorithm2e}
\usepackage{stmaryrd}
\usepackage{pifont}
\usepackage{soulutf8}
\usepackage{yhmath}
\usepackage{bm}


\newtheorem{sublemma}[theorem]{\bfseries Sublemma}

\newcommand{\mb}[1]{\mathbb{#1}}
\newcommand{\mc}[1]{\mathcal{#1}}
\newcommand{\mf}[1]{\mathfrak{#1}}

\newcommand{\vide}{\varepsilon}
\newcommand{\omar}[2]{#1{\uparrow}#2}
\newcommand{\defined}{:=}
\renewcommand{\phi}{\varphi}
\renewcommand{\epsilon}{\varepsilon}
\newcommand{\Nat}{\mb{N}}

\newcommand{\uu}{\textcolor{red}{u_1}}
\newcommand{\ud}{\textcolor{blue!90}{u_2}}

\newcommand{\mh}{\Lambda}


\newcommand{\lmove}{\triangleleft}
\newcommand{\rmove}{\triangleright}
\newcommand{{\lmark}}{\vdash}
\newcommand{{\rmark}}{\dashv}

\newcommand{\apar}[1]{{\wideparen{#1}}}
\newcommand{\facto}[2]{\operatorname{\normalfont \textsf{Fact}}(#2)}
\newcommand{\cro}[1]{\bm{\langle}\mathbf{#1}\bm{\rangle}}
\newcommand{\marq}[1]{\underline{#1}}
\newcommand{\neutral}{1_M}
\newcommand{\deutral}{2_M}

\newcommand{\nodi}{\mc{I}}
\newcommand{\nodj}{\mc{J}}
\newcommand{\nodb}{\mc{B}}
\newcommand{\nodp}{\mc{P}}
\newcommand{\noda}{\mc{A}}
\newcommand{\forest}{\mc{F}}

\newcommand{\up}[1]{\operatorname{\textsf{Up}}(#1)}
\newcommand{\nba}{\operatorname{\normalfont \textsf{nb-a}}}
\newcommand{\fr}[2]{\operatorname{\normalfont \textsf{Fr}}_{#2}(#1)}
\newcommand{\dep}[1]{\operatorname{\textsf{Dep}}(#1)}
\newcommand{\calls}{\operatorname{\normalfont \textsf{Calls}}}
\newcommand{\pro}[1]{\operatorname{\normalfont \textsf{prod}}(#1)}
\newcommand{\nodes}[1]{\operatorname{\normalfont \textsf{Nodes}}(#1)}
\newcommand{\bitype}[1]{\operatorname{\normalfont \textsf{bitype}}(#1)}
\newcommand{\type}[2]{\operatorname{\normalfont \textsf{type}}_{#2}(#1)}
\newcommand{\basis}[2]{\operatorname{\normalfont \textsf{basis}}_{#2}(#1)}
\newcommand{\itera}[1]{\operatorname{\normalfont \textsf{Iterable-nodes}}(#1)}
\newcommand{\mimi}[2]{\operatorname{\normalfont \textsf{middle}}_{#2}(#1)}
\newcommand{\valu}[1]{\operatorname{\normalfont \textsf{value}}(#1)}
\newcommand{\mine}[2]{\operatorname{\normalfont \textsf{min}}_{#2}(#1)}
\newcommand{\maxe}[2]{\operatorname{\normalfont \textsf{max}}_{#2}(#1)}
\newcommand{\lefe}[2]{\operatorname{\normalfont \textsf{left}}_{#2}(#1)}
\newcommand{\rige}[2]{\operatorname{\normalfont \textsf{right}}_{#2}(#1)}
\newcommand{\parti}[1]{\operatorname{\normalfont \textsf{Part}}(#1)}
\newcommand{\ltm}{\operatorname{\normalfont \textsf{Fact}}_{\le \mh}}

\newcommand{\regs}{\mf{X}}
\newcommand{\Som}{\operatorname{\textsf{Sum}}}
\newcommand{\Old}{\operatorname{\textsf{Old}}}

\newcommand{\SST}{{\textsf{SST}}}
\newcommand{\SSTs}{{\textsf{SSTs}}}

\newcommand{\oras}{\mf{F}}
\newcommand{\exte}{\mf{f}}
\newcommand{\exteg}{\mf{g}}

\newcommand{\trans}{\mc{T}}
\newcommand{\utrans}{\mc{U}}


\newcommand{\produc}{\operatorname{\normalfont \textsf{product}}}
\newcommand{\lproduc}{\operatorname{\normalfont \textsf{letter-product}}}

\newcommand{\sq}{\operatorname{\textsf{square}}}
\newcommand{\isq}{\operatorname{\textsf{iterated-square}}}
\newcommand{\trian}{\operatorname{\textsf{triangular-sum}}}
\newcommand{\mirror}{\operatorname{\textsf{reverse}}}



\begin{document}

\maketitle

\begin{abstract} 
 Boja{\'{n}}czyk recently initiated an intensive study
 of deterministic pebble transducers, which
 are two-way automata that can drop marks (named "pebbles") 
 on their input word, and produce an output word.
 They describe functions from words to words.
 Two natural restrictions of this definition have been investigated:
 marble transducers by Douéneau-Tabot et al.,
 and comparison-free pebble transducers (that we rename
 here "blind transducers") by Nguy{\^{e}}n et al.
 
  Here, we study the decidability of membership
  problems between the classes of functions computed by
  pebble, marble and blind transducers that produce
  a unary output. First, we
  show that pebble and marble transducers have the same
  expressive power when the outputs are unary (which is false over non-unary outputs).
  Then, we characterize $1$-pebble transducers with unary output
  that describe a function computable by a blind
  transducer, and show that the membership problem is decidable.
 These results can be interpreted in terms of automated simplification
of programs.
\end{abstract}



\section{Introduction}

Regular languages can be described by several models such as deterministic, non-deterministic, or two-way (the reading head can move in two directions) finite automata \cite{shepherdson1959reduction}.
A natural extension consists in adding an output mechanism to finite automata.
Such machines, called \emph{transducers}, describe functions from words to words
(or relations when non-deterministic) and provide a natural way to model simple programs that
produce outputs.
The particular model of a \emph{two-way transducer} consists in a two-way automaton
enhanced with an output function.  It describes the class of \emph{regular
functions} which has been intensively studied for its fundamental properties:
closure under composition \cite{chytil1977serial}, logical characterization by
monadic second-order transductions \cite{engelfriet2001mso}, decidable
equivalence problem \cite{gurari1982equivalence}, etc.

\subparagraph*{Pebble transducers and their variants.}
The model of $k$-pebble transducer can be defined as
an inductive extension of two-way transducers.
A $0$-pebble transducer is just a two-way transducer.
For $k \ge 1$, a $k$-pebble transducer $\trans$ is a two-way transducer that,
when in a given configuration, can "call" an \emph{external function} $\exte$,
computed by some $(k{-}1)$-pebble transducer.\linebreak
$\trans$ gives as argument to
$\exte$ its input word together with a mark, named "pebble", on the
position from which the call was performed,
and uses the output of $\exte$ within its own output.

The behavior of a $1$-pebble transducer
is depicted in Figure \ref{fig:pebbles}.
Intuitively, a $k$-pebble transducer is some recursive program
whose recursion depth is at most $k{+}1$. Equivalently, it can be seen
as an iterative algorithm with "two-way for-loops",
such that the maximal depth of nested loops is $k{+}1$.
A $k$-pebble transducer can only produce an output whose length
is polynomial in its input's length, more precisely
$\mc{O}(n^{k+1})$ when $n$ is the input's length
(this is intuitive from the "nested loops" point of view).
The functions computed by a $k$-pebble transducer for some $k \ge 0$
are thus called \emph{polyregular functions} \cite{bojanczyk2018polyregular}.
Several properties of polyregular functions have been investigated:
closure under composition \cite{bojanczyk2018polyregular}, logical characterization by
monadic second-order interpretations \cite{bojanczyk2019string}, etc.
The equivalence problem (given two machines, do they compute the same function?)
is however still open.

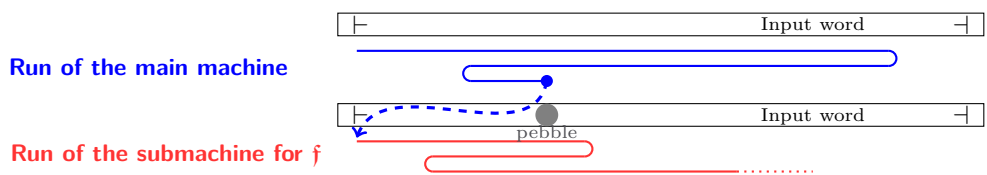
\begin{figure}[h!]

\centering
\begin{tikzpicture}{scale=1}

	\newcommand{\couleur}{blue}
	\newcommand{\texte}{\small \bfseries \sffamily \mathversion{bold} }

	\draw (-0.25,4.2) rectangle (8.25,4.5);
	\node[above] at (6,4.1) {$\substack{\text{Input word}}$};
	\node[above] at (0.05,4.1) {$\lmark$};
	\node[above] at (7.95,4.1) {$\rmark$};

	\node[above] at (-2.8,3.55) {\textcolor{\couleur}{\texte Run of the main machine}};
	\draw[-,thick,\couleur](0,4) -- (7,4);
	\draw[-,thick, \couleur] (7,4) arc (90:-90:0.1);
	\draw[-,thick,\couleur](7,3.8) -- (1.5,3.8);	
	\draw[-, thick,\couleur] (1.5,3.8) arc (90:270:0.1);
	\draw[-,thick,\couleur](1.5,3.6) -- (2.5,3.6);	
	\fill[fill = \couleur,even odd rule] (2.5,3.6) circle (0.08);
        \draw[->,very thick,\couleur,dashed](2.5,3.6) to[out= -100, in = 70] (0,2.85);


	\draw (-0.25,3) rectangle (8.25,3.3);
	\node[above] at (6,2.9) {$\substack{\text{Input word}}$};
	\node[above] at (0.05,2.9) {$\lmark$};
	\node[above] at (7.95,2.9) {$\rmark$};
	\fill[fill = gray] (2.5,3.15) circle (0.15);

	\node[above] at (-2.57,2.35) {\textcolor{red!80}{\texte Run of the submachine for $\exte$} };
	\draw[-,thick,red!80](0,2.8) -- (3,2.8);
	\draw[-,thick, red!80] (3,2.8) arc (90:-90:0.1);
	\draw[-,thick,red!80](3,2.6) -- (1,2.6);	
	\draw[-, thick,red!80] (1,2.6) arc (90:270:0.1);
	\draw[-,thick,red!80](1,2.4) -- (5,2.4);	
	\draw[-,thick,red!80,dotted](5,2.4) -- (6,2.4);

	\node[above,darkgray] at (2.5,2.67) {$\substack{\text{pebble}}$};
	
\end{tikzpicture}

\caption{\label{fig:pebbles} Behavior of a $1$-pebble transducer}

\end{figure}

Recently, two natural restrictions of pebble transducers have been introduced.
First, the \emph{$k$-marble transducers} of \cite{doueneau20} only
give as argument to their external function the prefix of the input word
which ends in the calling position (see Figure \ref{fig:marbles}).
Second, the \emph{$k$-blind transducers}\footnote{The original terminology of \cite{nguyen2020comparison} is \emph{comparison-free pebble transducers}, but we strongly believe that the term "blind" is more adapted, since there are no pebbles in this model.}
of \cite{nguyen2020comparison} give the whole input word,
but no pebble on the calling position (see Figure \ref{fig:blind}).
The classes of functions they compute are
strict subclasses of polyregular functions \cite{doueneau20,nguyen2020comparison}.

\subparagraph*{Class membership problems.}
These various models of transducers
raise several membership problems: given
a function computed by a machine of model $X$,
can it be computed by some machine of model $Y$?
When $Y$ is a restriction of $X$, this problem
reformulates as a program optimization question:
given a "complex" algorithm in a class $X$, can we build an
equivalent "simpler" one in class $Y$?
Thus it is of a foremost interest in practice.

Given a function $f$ computed by an $\ell$-pebble transducer,
one can ask whether it is computable by a $k$-pebble transducer
for a given $k < \ell$. The problem is open, but it is solved in
the case of marble \cite{doueneau20} and blind  \cite{nguyen2020comparison}
transducers and it turns out that a necessary and sufficient condition
for this membership is that $|f(w)| = \mc{O}(|w|^{k+1})$.
Using the "nested loops" interpretation of pebble transducers,
it means that an output of size $\mc{O}(|w|^{k+1})$
can always be produced with at most $k{+}1$ nested loops.

\subparagraph*{Contributions.} In this paper, we study a different membership problem: can a 
function given by a $k$-pebble transducer be computed
by a $k$-marble or $k$-blind transducer?
It turns out to be a more difficult question, since there is no
intuitive and machine-independent candidate for a membership condition (such as the size of the output).
In general, membership problems for transducers are difficult,
since contrary to regular languages, there is no "canonical" object known to
represent a regular function. Hence, there can be several seemingly unrelated
manners to produce the same function, and moving from one to another
can be technical.

We focus on \emph{transducers whose output alphabet is unary},
and our proof techniques are new.
The first main result is that (when the outputs are unary) $k$-pebble transducers
and $k$-marble transducers compute the same functions
(one direction is obvious since $k$-marble is a restriction of $k$-pebble).
The transformation is effective, but the way of producing the output
must sometimes be completely modified (the transformation modifies
the \emph{origin semantics}, in the sense of \cite{bojanczyk2014transducers}),
which creates an additional difficulty.
The correspondence
fails as soon as the output is not over a unary alphabet,
as detailed in Example \ref{ex:fail}.

\begin{example}[\cite{doueneau20,nguyen2020comparison}] \label{ex:fail}
The partial function $\{a,b\}^* \rightarrow \{a,b\}^*, a^mb^n \mapsto (b^na)^m$
can be computed by a $1$-pebble transducer,
but not by a $k$-marble for any $k \ge 0$.
\end{example}

Since the equivalence problem is decidable for marble transducers,
it follows from our result that it is also decidable for pebble transducers
with unary output.

As a second main result, we show how to decide (when the outputs
are unary) whether a function given by $1$-pebble ($\equiv$ $1$-marble) transducer
can be computed by a $1$-blind transducer, or more generally by a $k$-blind
transducer for some $k \ge 0$.
The technical proof also gives a syntactical characterization
of $1$-marble transducers whose function verify this property:
it describes a kind of "symmetry" in the production of the machine on its input.
Furthermore, the conversion is effective when possible, but once more
the manner of producing the output can be strongly modified.
Our techniques heavily rely on the theory of \emph{factorization forests};
this is, to our knowledge, the first time 
this notion is used for membership problems
of transducers, and we believe this approach to be fruitful.

Our results are summarized in red in Figure \ref{fig:conclu}.
We also give some examples of functions (their outputs
are non-negative integers, since we identify $\{a\}^*$ with $\Nat$).

%
%
%

\begin{figure}[h!]

    \def\acircle{(0,0) circle (1)}
    \def\bcircle{(0,1.1) circle (2.1)}
    \def\ccircle{(0,1.6) circle (2.6)}
    \def\icircle{(0,1.7) circle (2.7)}
    
    	\begin{center}
	\hspace*{-0.4cm}
        \begin{tikzpicture}{scale=0.9}

            \fill[red!15] \acircle;     

            \begin{scope}
            \clip {(0,-1) rectangle (8,2)};
            \fill[red!15] \icircle;
            \end{scope}
            
            \begin{scope}
            \clip {\icircle};
            \fill[red!15] (-0.94,0.37) -- (1.404,5.75) -- (4.246,4.3) -- (3.25,1.4) -- cycle;
            \end{scope}

            \draw \acircle node {$\substack{0\text{-pebble} \\ =\\ 0\text{-marble} \\ = \\ 0\text{-blind}}$};
            \draw \bcircle ;
            \draw[dashed] \icircle;

            \node at (0.8,2) {$\substack{1\text{-blind}}$};
            \node at (-0.9,2.1) {$\textcolor{red}{\substack{1\text{-pebble} \\=\\ 1\text{-marble}}}$};
            
             \node at (0,0.86) {\tiny $\mc{O}(n)$};
             \node at (0,3.05) {\tiny $\mc{O}(n^2)$};
             \node at (0,4.25) {\tiny $\mc{O}(n^{k+1})$};
            
            \node at (-1,3.5) {$\textcolor{red}{\substack{k\text{-pebble} \\ = \\ k\text{-marble} }}$};
            \node at (1.3,3.4) {$\substack{k\text{-blind}}$};

             \draw[darkgray,dotted,very thick] (0.8,-0.2) -- (3.35,-0.2);
             \fill[darkgray] {(0.8,-0.2) circle (0.05)};
             \node[right] at (3.4,-0.15) {\small \textcolor{darkgray}{$\nba: w \in \{a,b\}^* \mapsto |w|_a$}};

             \draw[darkgray,dotted,very thick] (1.2,1) -- (3.35,1);
             \fill[darkgray] {(1.2,1) circle (0.05)};
             \node[right] at (3.4,1.05) {\small \textcolor{darkgray}{$\lproduc:  w\in \{a,b\}^* \mapsto |w|_a  |w|_b$}};
             
             \draw[darkgray,dotted,very thick] (1.4,0.5) -- (3.35,0.5);
             \fill[darkgray] {(1.4,0.5) circle (0.05)};
             \node[right] at (3.4,0.55) {\small \textcolor{darkgray}{$\sq : a^n \mapsto n^2$}};
             
             \draw[darkgray,dotted,very thick] (1,1.5) -- (3.35,1.5);
             \fill[darkgray] {(1,1.5) circle (0.05)};
             \node[right] at (3.4,1.55) {\small \textcolor{darkgray}{$\produc :  a^m b^n \mapsto mn$}};

             \draw[darkgray,dotted,very thick] (-0.5,2.7) -- (3.35,2.7);
             \fill[darkgray] {(-0.5,2.7) circle (0.05)};
             \node[right] at (3.4,2.75) {\small \textcolor{darkgray}{$\isq: a^{n_1} b a^{n_2} b \cdots  b a^{n_\ell} b \mapsto \sum_{i=1}^\ell  (n_i)^2$}};

             \draw[darkgray,dotted,very thick] (-0.3,2.2) -- (3.35,2.2);            
             \fill[darkgray] {(-0.3,2.2) circle (0.05)};
             \node[right] at (3.4,2.25) {\small \textcolor{darkgray}{$\trian :  a^{n_\ell} b a^{n_{\ell{-}1}} b \cdots  b a^{n_1} b \mapsto \sum_{i=1}^\ell i n_i$}};
             
              \draw[red,thick,->] (-1.2,1.55) -- (0.3,1.55);
            \node[right] at (-1.25,1.3) { $\substack{\text{\textcolor{red}{\textsc{decidable}}}\\ \text{\textcolor{red}{\textsc{membership}}} }$};

             
        \end{tikzpicture}
        \end{center}
        
    \caption{\label{fig:conclu} Classes of functions with unary output and results of this paper}
\end{figure}
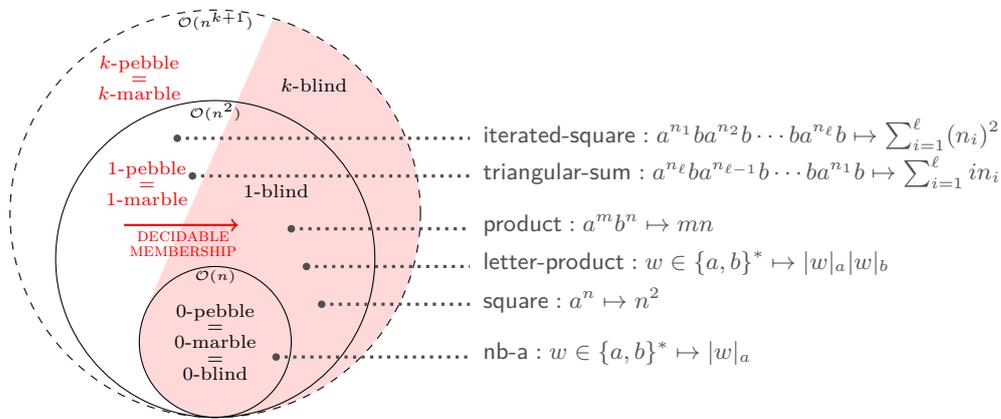

\subparagraph*{Outline.} We first recall in Section \ref{sec:prelim} the definitions
of $k$-pebble, $k$-marble and $k$-blind transducers, simplified for the case of unary outputs.
In Section \ref{sec:bimachines}, we define the notions of
$k$-pebble, $k$-marble and $k$-blind bimachines, and show their equivalence with the transducer
models. Bimachines are easier to handle in the proofs,
due to the fact that they avoid two-way moves.
In Section \ref{sec:marbles}, we show that $k$-marble and $k$-pebble transducers are equivalent.
Finally, we solve in Section \ref{sec:blind} the
class membership problem from $1$-pebble to $1$-blind. Due to space
constraints, several proofs are sketched in the main paper, and we focus on
the most significant lemmas and characterizations.

\section{Preliminaries}

\label{sec:prelim}

$\mb{N}$ is the set of nonnegative integers.
If $0 \le i \le j$, the set $[i{:}j]$ denotes $\{i, i{+}1, \dots, j\} \subseteq \Nat$
(empty if $j <i$).
Capital letters $A, B$ denote finite sets of letters (alphabets).
The empty word is denoted $\vide$. 
If $w \in A^*$, let $|w| \in \Nat$ be its length,
and for $1 \le i \le |w|$ let $w[i]$ be its $i$-th letter.
If $I = \{i_1 < \cdots < i_{\ell}\} \subseteq \{1, \dots, |w|\}$,
 let $w[I] \defined w[i_1] \cdots w[i_{\ell}]$.
 If $a \in A$, we denote by $|w|_a$ the number of letters $a$ occurring in $w$. 
 Given $A = \{a, \dots \}$, let $\marq{A} \defined \{\marq{a}, \dots\}$
be a disjoint copy of $A$.
For $1 \le i \le |w|$, we define $\omar{w}{i} \defined w[1{:}i{-1}]\marq{w[i]}w[i{+}1{:}|w|]$
as "$w$ in which position $i$ is underlined".
We assume that the reader is familiar with the basics of automata theory, in particular the notion of  two-way deterministic automaton.

\subparagraph*{Two-way transducers.} A deterministic two-way transducer is a deterministic two-way automaton enhanced with the ability to produce outputs along its run.  The class of functions described by these machines is known as \emph{regular functions} \cite{chytil1977serial, engelfriet2001mso}.

\begin{definition}[] \label{def:manymarbles}

	A \emph{(deterministic) two-way transducer}
	$(A,B,Q,q_0,F,\delta,\lambda)$ is:
	\item 
	\begin{itemize}
	\item an input alphabet $A$ and an output alphabet $B$;
	\item a finite set of states $Q$
	with an initial state $q_0 \in Q$
	and a set $F \subseteq Q$ of final states;
	\item a (partial) transition function $\delta: Q \times (A \uplus \{{\lmark}, {\rmark}\})
	\rightarrow Q \times \{\lmove, \rmove\}$;
	\item a (partial) output function $\lambda: Q \times (A \uplus \{{\lmark}, {\rmark}\}) \rightarrow B^*$ with same domain as $\delta$.

	\end{itemize}

\end{definition}

When given as input a word $w \in A^*$, the two-way transducer disposes of a read-only input tape containing ${\lmark} w {\rmark}$. The marks ${\lmark}$ and ${\rmark}$ are used to detect the borders of the tape, by convention we denote them as positions $0$ and $|w|{+}1$ of $w$.
Formally, a \emph{configuration} over  ${\lmark} w {\rmark}$ is a tuple $(q,i)$ where $q \in Q$ is the current state and $0 \le i \le |w|{+}1$ is the position of the reading head. The \emph{transition relation} $\rightarrow$ is defined as follows. Given a configuration $(q,i)$, let $(q',\star):= \delta(q,w[i])$. Then $(q, i) \rightarrow (q', i')$ whenever either $\star = \lmove$ and  $i' = i-1$ (move left), or $\star = \rmove$ and $i' = i+1$ (move right), with $0 \le i' \le |w|{+}1$. A \emph{run} is a sequence of configurations $(q_1,i_1) \rightarrow \cdots \rightarrow (q_n,i_n)$. Accepting runs are those that begin in $(q_0, 0)$ and end in a configuration of the form $(q, |w|{+}1)$ with $q \in F$ (and it never visits such a configuration before).
The function $f: A^* \rightarrow B^*$ computed by the machine is defined as follows.
Let $w \in A^*$, if there exists an accepting run on ${\lmark} w {\rmark}$,
then $f(w)$ is the concatenation of the $\lambda(q,w[i])$
along this unique run on ${\lmark} w {\rmark}$.
To make $f$ a total function, we let $f(w) \defined \epsilon$ if there is no accepting run
(the language of words having an accepting run in a two-way transducer is regular \cite{shepherdson1959reduction}, hence the domain does not matter).

\begin{example} \label{ex:mirror} $\mirror: A^* \rightarrow A^*, abac \mapsto caba$ can be computed by a two-way transducer.
\end{example}

From now on, the output alphabet of the machines will always
be a singleton. Up to identifying $\{a\}^*$ and $\Nat$, we assume that
 $\lambda: Q \times (A \uplus \{{\lmark}, {\rmark}\}) \rightarrow \Nat$
 and $f:A^* \rightarrow \Nat$.

\subparagraph*{External functions.} We now extend the notion
of output function $\lambda$: it will not give directly an integer, but performs
a call to an \emph{external function} which returns an integer.
For pebbles, the output of the external functions depends on the input word
and the current position.

\begin{definition}[] \label{def:two-external} 

A \emph{two-way transducer with external pebble functions}
$(A,Q,q_0,F,\delta,\oras,\lambda)$ is:

\item 

\begin{itemize}

\item an input alphabet $A$;

\item a finite set of states $Q$ with an initial state $q_0 \in Q$ and a set $F \subseteq Q$ of final states;

\item a (partial) transition function $\delta: Q \times (A \uplus \{{\lmark}, {\rmark}\}) \rightarrow Q \times \{\lmove, \rmove\}$;

\item a finite set $\oras$ of external functions  $\exte: (A \uplus \marq{A})^* \rightarrow \Nat$;

\item a (partial) output function $\lambda: Q \times (A \uplus \{{\lmark}, {\rmark}\}) \rightarrow \oras$ with same domain as $\delta$.

\end{itemize}

\end{definition}

Configurations $(q,i$) and runs of two-way transducers with external functions
are defined as for classical two-way transducers.
The function $f: A^* \rightarrow \Nat$ computed by the machine is defined as follows.
Let $w \in A^*$ such that there exists an accepting run on ${\lmark} w {\rmark}$.
If $\lambda(q,w[i]) = \exte \in \oras$,
we let $\nu(q,i) \defined \exte(\omar{w}{i})$,
that is the result of $\exte$ applied to $w$ marked in $i$.
Finally, $f(w)$ is defined as the sum of the $\nu(q,i)$ along this unique
accetping run on ${\lmark} w {\rmark}$.
We similarly set $f(w) = 0$ if there is no accepting run.

\begin{remark} If the external functions are constant,
we exactly have a two-way transducer.
\end{remark}

\begin{example}\label{ex:mul1} Let $a,b \in A$,
$\exte_b: w \in (A \uplus \marq{A})^* \mapsto |w|_{b}$
and $\exte_0: w \in (A \uplus \marq{A})^* \mapsto 0$.
The two-way transducer with external pebble functions,
which makes a single pass on its input and calls $\exte_b$ if reading $a$
and $\exte_0$ otherwise, computes $\lproduc: w \in A \mapsto |w|_a |w|_b$.
\end{example}

We define two other models. Their definition is nearly the same,
except that the external functions of $\oras$ have type $A^* \rightarrow \Nat$
and $\nu(q,i)$ is defined in a slightly different way:
\begin{itemize}
\item in a \emph{two-way transducer with external blind functions}, we define $\nu(q,i) \defined \exte(w)$.
The external function is applied to $w$ without marking the current position;
\item in a \emph{two-way transducer with external marble functions}, we define $\nu(q,i) \defined \exte(w[1{:}i])$. The external function is applied to the prefix of $w$ stopping at the current position.
\end{itemize}

\subparagraph*{Pebble, blind and marble transducers.}
We now describe the transducer models using the formalism
of external functions. These are not the original definitions
from \cite{bojanczyk2018polyregular,doueneau20,nguyen2020comparison},
but the correspondence is straightforward,
as soon as we know that pebble automata can only recognize regular
languages.

\begin{definition} \label{def:pt}
For $k \ge 0$, a \emph{$k$-pebble} (resp. \emph{$k$-blind, $k$-marble}) \emph{transducer} is:
\item
\begin{itemize}
\item if $k =0$, a two-way transducer;
\item if $k \ge 1$, a two-way transducer with external pebble (resp. blind, marble) functions
that are computed by $(k{-}1)$-pebble (resp. $(k{-}1)$-blind, $(k{-}1)$-marble) transducers.
\end{itemize}
\end{definition}

The intuitive behavior of a $1$-pebble transducer is
depicted in Figure \ref{fig:pebbles} in Introduction. 
We draw in Figure \ref{fig:blind} the behavior of a
$1$-blind transducer, which is the same except that 
the calling position is not marked for the machine
computing the external function.

\begin{figure}[h!]

\centering
\begin{tikzpicture}{scale=1}

	\newcommand{\couleur}{blue}
	\newcommand{\texte}{\small \bfseries \sffamily \mathversion{bold} }

	
	\draw (-0.25,4.2) rectangle (8.25,4.5);
	\node[above] at (6,4.1) {$\substack{\text{Input word}}$};
	\node[above] at (0.05,4.1) {$\lmark$};
	\node[above] at (7.95,4.1) {$\rmark$};

	\node[above] at (-1.8,3.55) {\textcolor{\couleur}{\texte Main machine}};
	\draw[-,thick,\couleur](0,4) -- (7,4);
	\draw[-,thick, \couleur] (7,4) arc (90:-90:0.1);
	\draw[-,thick,\couleur](7,3.8) -- (1.5,3.8);	
	\draw[-, thick,\couleur] (1.5,3.8) arc (90:270:0.1);
	\draw[-,thick,\couleur](1.5,3.6) -- (2.5,3.6);	
	\fill[fill = \couleur,even odd rule] (2.5,3.6) circle (0.08);
        \draw[->,very thick,\couleur,dashed](2.5,3.6) to[out= -120, in = 60] (0,2.85);

	\draw (-0.25,3) rectangle (8.25,3.3);
	\node[above] at (6,2.9) {$\substack{\text{Input word}}$};
	\node[above] at (0.05,2.9) {$\lmark$};
	\node[above] at (7.95,2.9) {$\rmark$};

	\node[above] at (-1.94,2.35) {\textcolor{red!80}{\texte Submachine}};
	\draw[-,thick,red!80](0,2.8) -- (3,2.8);
	\draw[-,thick, red!80] (3,2.8) arc (90:-90:0.1);
	\draw[-,thick,red!80](3,2.6) -- (1,2.6);	
	\draw[-, thick,red!80] (1,2.6) arc (90:270:0.1);
	\draw[-,thick,red!80](1,2.4) -- (5,2.4);	
	\draw[-,thick,dotted,red!80](5,2.4) -- (6,2.4);

\end{tikzpicture}
\caption{\label{fig:blind}  Behavior of a $1$-blind transducer}
\end{figure}
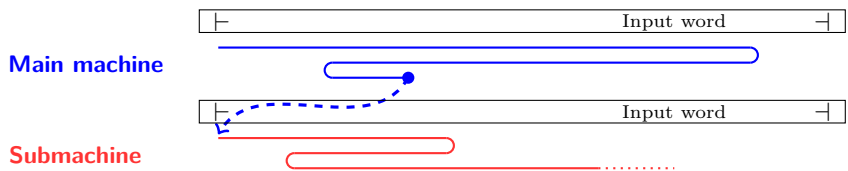

\begin{figure}[h!]

\centering
\begin{tikzpicture}{scale=1}

	\newcommand{\couleur}{blue}
	\newcommand{\texte}{\small \bfseries \sffamily \mathversion{bold} }

	\draw (-0.25,4.2) rectangle (8.25,4.5);
	\node[above] at (6,4.1) {$\substack{\text{Input word}}$};
	\node[above] at (0.05,4.1) {$\lmark$};
	\node[above] at (7.95,4.1) {$\rmark$};

	\node[above] at (-1.8,3.55) {\textcolor{\couleur}{\texte Main machine}};
	\draw[-,thick,\couleur](0,4) -- (3,4);
	\draw[-,thick, \couleur] (3,4) arc (90:-90:0.1);
	\draw[-,thick,\couleur](3,3.8) -- (1.5,3.8);	
	\draw[-, thick,\couleur] (1.5,3.8) arc (90:270:0.1);
	\draw[-,thick,\couleur](1.5,3.6) -- (4.7,3.6);	
	\fill[fill = \couleur,even odd rule] (4.7,3.6) circle (0.08);
        \draw[->,very thick,\couleur,dashed](4.7,3.6) to[out= -140, in = 50] (0,2.85);

	\draw (-0.25,3) rectangle (5.25,3.3);
	\node[above] at (0.05,2.9) {$\lmark$};
	\node[above] at (5,2.9) {$\rmark$};

	\node[above] at (-1.94,2.35) {\textcolor{red!80}{\texte Submachine}};
	\draw[-,thick,red!80](0,2.8) -- (4.7,2.8);
	\draw[-,thick, red!80] (4.7,2.8) arc (90:-90:0.1);
	\draw[-,thick,red!80](1,2.6) -- (4.7,2.6);	
	\draw[-, thick,red!80] (1,2.6) arc (90:270:0.1);
	\draw[-,thick,red!80](1,2.4) -- (2.5,2.4);	
	\draw[-,thick,dotted,red!80](3.5,2.4) -- (2.5,2.4);

\end{tikzpicture}

\caption{\label{fig:marbles} Behavior of a $1$-marble transducer}

\end{figure}
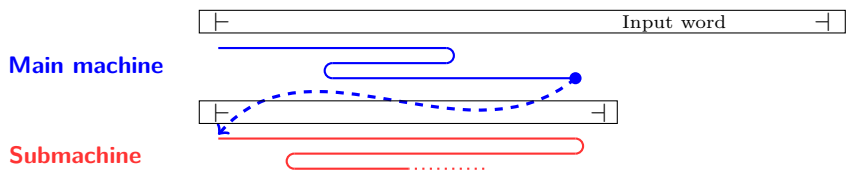

\begin{example} \label{ex:mul2} By restricting the functions $\exte_{b}$ and $\exte_0$
of Example \ref{ex:mul1} to $A^*$, we see that $\lproduc : w \in A^* \mapsto |w|_a |w|_b$
can be computed by a $1$-blind transducer.
\end{example}

The intuitive behavior of a $1$-marble transducer is depicted in Figure \ref{fig:marbles}.

\begin{example} \label{ex:mul3}
	The function  $\lproduc : w \in A^* \mapsto |w|_a |w|_b$ can
	be computed by a $1$-marble transducer as follows. Assume that $a \neq b$,
	let $\exte_{a}: w \mapsto |w|_a$ and $\exte_b: w \mapsto |w|_b$.
	The machine calls $\exte_b$ when reading $a$ and $\exte_a$
	when reading $b$. This way, each $a$ is "counted" $|w|_b$ times
	(from the call of $\exte_b$ starting in this $a$ which computes
	all the $b$ before it, plus each time it is seen in a call of $\exte_a$ starting
	from some $b$ after this $a$).
\end{example}

The strategy for computing $\lproduc$ is really different
between Examples \ref{ex:mul2} and \ref{ex:mul3}.
This illustrates the difficulty to obtain a "canonical" form for
a transduction.

\section{From two-way transducers to bimachines}

\label{sec:bimachines}

Since we consider a commutative output monoid, the order
in which the production is performed does not matter.
It is thus tempting to simplify a two-way transducer
in a one-way machine which visits each position only
once. This is exactly what we do with bimachines,
with the subtlety that they are able to check regular properties
of the prefix (resp. suffix) starting  (resp. ending) in the current position.
From now on, we  consider only total functions
of type $A^+ \rightarrow \Nat$ (the output on $\epsilon$
can be treated separately and does not matter).

\begin{definition}[] \label{def:bimachine-external}
	A \emph{bimachine with external pebble functions} $(A,M,\mu,\oras,\lambda)$ consists of:
	\item
	\begin{itemize}
	\item an input alphabet $A$;
	\item a morphism into a finite monoid $\mu: A^* \rightarrow M$;
	\item a finite set $\oras$ of external functions $\exte: (A \uplus \marq{A})^+ \rightarrow \Nat$;
	\item a total output function  $\lambda: M\times A \times M \rightarrow \oras$.
	\end{itemize}
\end{definition}

Given $1 \le i \le |w|$ a position of $w \in A^*$, let
 $\exte_i \defined \lambda(\mu(w[1{:}{i{-}1}]),w[i],\mu(w[i{+}1{:}|w|])) \in \oras$.
The bimachine defines a function $f: A^+ \rightarrow \Nat$ as follows:
\begin{equation*}
f(w) \defined \sum_{1 \le i \le |w|} \exte_i(\omar{w}{i}).
\end{equation*}

As before, we can define \emph{bimachines with external
blind} (resp. \emph{marble}) functions (in this case we have $\exte_i: A^+ \rightarrow \Nat$).
We then let:
\begin{equation*}
f(w) \defined \sum_{1 \le i \le |w|} \exte_i(w) \left(
\text{resp. }f(w) \defined \sum_{1 \le i \le |w|} \exte_i(w[1{:}i])\right).
\end{equation*}

As for two-way transducers, we define bimachines
(without external functions)
 by setting $\lambda: M \times A \times M \rightarrow \Nat$.
Equivalently, it corresponds to bimachines with external constant functions
$\exte: w \mapsto n$. Going further, we define $k$-pebble bimachines by induction.

\begin{definition}
For $k \ge 0$, a \emph{$k$-pebble} (resp. \emph{$k$-blind, $k$-marble}) \emph{bimachine} is:
\item
\begin{itemize}
\item if $k=0$, a bimachine (without external functions);
\item if $k \ge 1$, a bimachine with external pebble (resp. blind, marble)
 functions which are computed by
$(k{-}1)$-pebble (resp $(k{-}1)$-blind, $(k{-}1)$-marble) bimachines.
\end{itemize}

\end{definition}

\begin{example} \label{ex:trian} The function
$\trian:  a^{n_\ell} b a^{n_{\ell-1}} b \cdots  b a^{n_1} b \mapsto \sum_{i=1}^\ell i n_i$
can be computed by a $1$-marble bimachine.
It uses the singleton monoid $M = \{\neutral\}$ and
the morphism $\mu: a,b \mapsto \neutral$
in all its bimachines. The output function
of the main bimachine is defined by
$\lambda(\neutral,a,\neutral) \defined \exte_a$ and $\lambda(\neutral,b,\neutral) \defined \exte_b$.
For computing $\exte_a: w \mapsto 0$ we use output
$\lambda_{\exte_a}(\neutral,a,\neutral) \defined 0$ and
$\lambda_{\exte_a}(\neutral,b,\neutral) \defined 0$, and
for computing $\exte_b: w \mapsto |w|_a$ we use output
$\lambda_{\exte_b}(\neutral,a,\neutral) \defined 1$ and
$\lambda_{\exte_b}(\neutral,b,\neutral) \defined 0$.
\end{example}

Standard proof techniques allow to relate
bimachines and transducers.

\begin{proposition} \label{lem:pebble-bim}
 $k$-pebble (resp. $k$-blind, $k$-marble) bimachines
 and  $k$-pebble (resp. $k$-blind, $k$-marble) transducers
 compute the same functions, and both conversions are effective.
\end{proposition}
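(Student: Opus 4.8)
The plan is to prove the equivalence by induction on $k$, using the classical simulation of two-way automata by one-way automata with look-behind/look-ahead (the Shepherdson/Hopcroft–Ullman crossing-sequence argument) as the base mechanism, and carefully tracking how the external functions are passed in each of the three variants (pebble, blind, marble).

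\textbf{From bimachines to transducers.} This is the easy direction. Given a $k$-pebble bimachine $(A,M,\mu,\oras,\lambda)$, I would build a two-way transducer with external functions that simulates it. The morphism $\mu$ cannot be evaluated on-line by a one-way machine without look-ahead, but a two-way transducer can compute, at each position $i$, both $\mu(w[1{:}i{-}1])$ and $\mu(w[i{+}1{:}|w|])$: make one left-to-right sweep storing the running prefix value, and at each position make an auxiliary sub-sweep (or precompute the suffix values in a preliminary right-to-left pass using the fact that these are finite-monoid values, which a two-way automaton can store positionally only if it revisits — so concretely, do a left-to-right pass computing prefix values, then for each position a scan to the right to recompute the suffix value, all within the finite-state control by iterating over monoid elements and verifying). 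At position $i$ the transducer then knows $(\mu(w[1{:}i{-}1]),w[i],\mu(w[i{+}1{:}|w|]))$, hence the external function $\exte_i = \lambda(\dots)$ to call, and it calls it exactly as the transducer semantics prescribes: with a pebble on $i$ for the pebble model, with $w$ unmarked for the blind model, with $w[1{:}i]$ for the marble model. The external functions are computed by $(k{-}1)$-pebble (resp. blind, marble) bimachines, which by the induction hypothesis are equivalent to $(k{-}1)$-pebble (resp. blind, marble) transducers, so we plug those in. Summation of contributions matches because the transducer's output is the concatenation/sum along its (accepting) run, and each position contributes exactly once.

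\textbf{From transducers to bimachines.} For a two-way transducer with external pebble functions $\trans = (A,Q,q_0,F,\delta,\oras,\lambda)$, the key observation is that the \emph{purely transitional} part of $\trans$ is just a two-way automaton, so by the standard crossing-sequence construction there is a morphism $\mu: A^* \to M$ into a finite monoid such that $\mu(w[1{:}i{-}1])$ and $\mu(w[i{+}1{:}|w|])$ determine, for each position $i$, the finite multiset of states in which the run of $\trans$ visits position $i$ (equivalently, the list of "passes" over $i$). Because the output monoid is commutative, the total contribution of position $i$ is the sum, over each visit $(q,i)$ in the run, of $\nu(q,i) = \exte(\omar{w}{i})$ where $\exte = \lambda(q,w[i])$; the number of such visits at a given letter is bounded, so this sum is itself computed by a finite combination of the external functions, i.e. by a new external function obtained by adding finitely many copies — formally one introduces into $\oras'$ the function $\omar{w}{i} \mapsto \sum_{q \text{ visits } i} \exte_{q}(\omar{w}{i})$, which is still computable by a $(k{-}1)$-pebble transducer (finite sums of such are, after a trivial product-with-counter construction) and hence, by the induction hypothesis applied in the bimachine direction, by a $(k{-}1)$-pebble bimachine. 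Set $\lambda'(\mu(w[1{:}i{-}1]),w[i],\mu(w[i{+}1{:}|w|]))$ to be that combined external function; then the bimachine's value $\sum_i \exte'_i(\omar{w}{i})$ equals $f(w)$. The blind and marble cases are identical except that $\exte$ is fed $w$ resp. $w[1{:}i]$, which does not interact with the crossing-sequence bookkeeping at all, so the same construction goes through verbatim with $\oras'$ consisting of blind resp. marble functions.

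\textbf{Main obstacle.} The only genuinely delicate point is making the crossing-sequence simulation produce a \emph{bimachine} in the precise sense of Definition \ref{def:bimachine-external} — i.e. that the set of visits to position $i$ is a function of the triple $(\mu(w[1{:}i{-}1]), w[i], \mu(w[i{+}1{:}|w|]))$ alone, with $\mu$ a genuine monoid morphism — and that the per-position contribution depends only on that triple (plus the external-function value). This requires choosing $M$ as the transition monoid of the two-way automaton (pairs of functions on crossing-state-sets, or the Shepherdson right/left behaviour functions), checking the morphism property, and checking that non-accepting inputs are handled (we may assume the domain is all of $A^+$ since $f$ is total with value $0$ outside the accepting set, and the accepting set is regular hence recognised by $\mu$ after refinement). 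Everything else — the induction, the finite-sum closure of external functions, the commutativity argument — is routine, so I would state the crossing-sequence lemma as a black box ("standard proof techniques", as the paper says) and spend the write-up only on the reduction of the visit-multiset and its contribution to the three data of a bimachine.
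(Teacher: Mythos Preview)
Your proposal is correct and follows essentially the same approach as the paper's proof sketch: induction on $k$, with the bimachine-to-transducer direction handled by simulating $\mu$ via lookaround, and the transducer-to-bimachine direction handled by the Shepherdson crossing-sequence construction, noting that the new external functions are finite linear combinations of the old ones (since each position is visited boundedly often and the output monoid is commutative). The paper singles out exactly the same delicate point you do---that the per-position contribution becomes a linear combination of the original external functions---and treats the rest as standard.
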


\begin{proof}[Proof sketch.]
Both directions are treated by induction.
From bimachines to transducers, 
we show that
a bimachine with external
pebble functions can be transformed in
an equivalent two-way transducer
with the same external pebble functions
(we use a \emph{lookaround} \cite{engelfriet2001mso}
to simulate $\mu$).
From transducers to bimachines,
the induction step shows that
a two-way transducer with external
pebble functions can be transformed in
an equivalent bimachine
with external pebble functions,
by adapting the classical reduction
from two-way to one-way automata \cite{shepherdson1959reduction}.
However the new external functions
can be linear combinations
of the former ones, since we
produce "all at once" the results of
several visits in a position.
We only need to use a finite number of
combinations, since in its accepting runs, a two-way transducer
can only visit each position a bounded number of times.
\end{proof}

\section{Equivalence between $k$-pebble and $k$-marble transducers}

\label{sec:marbles}

The main goal of this section is to show
equivalence between $k$-pebble and $k$-marble
transducers, over unary outputs.
We shall use another model
which is equivalent to marble transducers \cite{doueneau20}:
a \emph{streaming string transducer
(with unary output)}, which consists in a deterministic
automaton with a finite set $\regs$ of
registers that store integers.
At each letter read, the values of the registers
are updated by doing a linear combination
of their former values, whose
coefficients depend on the current state of the automaton.
In our definition we focus on the registers
and forget about the states, which corresponds to
a weighted automaton over the semiring $(\Nat, +, \times)$
(it is shown in \cite{doueneau20} that both models
are equivalent). The update is represented by a matrix
from $\Nat^{\regs \times \regs}$, which is chosen depending
on the letter read.

\begin{definition}A \emph{streaming string transducer} ($\SST$) $\trans = (A,\regs, I, T, F)$ is:
\item
\begin{itemize}

\item an input alphabet $A$ and a finite set $\regs$ of registers;

\item an initial row vector $I \in \Nat^{\regs}$;

\item a register update function $T: A \rightarrow \Nat^{\regs \times \regs}$;

\item an output column vector $F \in \Nat^\regs$.
\end{itemize}
\end{definition}

$T$ can be extended as a monoid morphism from $A^*$ to $(\Nat^{\regs \times \regs}, \times)$.
Given $w \in A^*$, the vector $I T(w)$ intuitively describes
the values of the registers after reading $w$.
To define the function $f: A^* \rightarrow \Nat$ computed by $\trans$,
we combine these values by the output vector:
\begin{equation*}
f(w) \defined I T(w) F.
\end{equation*}

\begin{example} The function
$\trian:  a^{n_\ell} b a^{n_{\ell-1}} b \cdots  b a^{n_1} b \mapsto \sum_{i=1}^\ell i n_i$
can be computed by an $\SST$. We use two registers $x,y$ and allow constants in the updates for
more readability: $x$ is initialized to $0$ and updated $x \leftarrow x+1$ on $a$ and $x \leftarrow x$ on $b$,
and $y$ is initialized to $0$ and updated $y \leftarrow y$ on $a$ and $y \leftarrow y+x$ on $b$.
Finally we output $y$.
\end{example}

We are now ready to state the main results of this section.

\begin{theorem} \label{theo:pebble:sst}
Given a $k$-pebble bimachine, 
one can build an equivalent  $\SST$.
\end{theorem}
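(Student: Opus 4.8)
The plan is to proceed by induction on $k$. For $k=0$ the statement is classical: a bimachine (without external functions) computes a regular function with unary output, and such functions are well known to be $\SST$-computable — indeed a bimachine already visits each position once, so one builds an $\SST$ whose state tracks the pair $(\mu(\text{prefix}), \text{suffix-type})$ and whose single register accumulates $\lambda$-values; more carefully, since the suffix-type is not available in a left-to-right pass, one uses the standard trick of maintaining one register per possible future suffix-behaviour (a "Hankel"-style construction over the finite monoid $M$), which is exactly how regular functions with output in $(\Nat,+)$ are realized by copyless-in-spirit linear $\SST$s. This base case I would only sketch, citing the equivalence between regular functions and $\SST$s together with Proposition~\ref{lem:pebble-bim}.

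For the inductive step, suppose the claim holds for $k-1$, and let $\trans = (A,M,\mu,\oras,\lambda)$ be a $k$-pebble bimachine, so that each $\exte \in \oras$ is computed by a $(k{-}1)$-pebble bimachine, hence by induction by an $\SST$ $\trans_{\exte} = (A \uplus \marq{A}, \regs_{\exte}, I_{\exte}, T_{\exte}, F_{\exte})$. The key point is that $\exte$ is evaluated on $\omar{w}{i}$, i.e.\ on $w$ with a single marked position $i$. Split the matrix product: $I_{\exte} T_{\exte}(\omar{w}{i}) F_{\exte} = I_{\exte}\, T_{\exte}(w[1{:}i{-}1])\, T_{\exte}(\marq{w[i]})\, T_{\exte}(w[i{+}1{:}|w|])\, F_{\exte}$. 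Therefore
\begin{equation*}
f(w) = \sum_{i=1}^{|w|} \alpha_i \cdot N_{c_i} \cdot \beta_i,
\end{equation*}
where $\alpha_i = I_{\exte_i} T_{\exte_i}(w[1{:}i{-}1])$ is a row vector, $\beta_i = T_{\exte_i}(w[i{+}1{:}|w|]) F_{\exte_i}$ is a column vector, $c_i = w[i]$, $\exte_i = \lambda(\mu(w[1{:}i{-}1]),w[i],\mu(w[i{+}1{:}|w|]))$, and $N_{c_i} = T_{\exte_i}(\marq{c_i})$ is a fixed matrix. The prefix data $(\mu(w[1{:}i{-}1]), \alpha_i)$ for all relevant $\exte$ can be updated left-to-right; the suffix data $(\mu(w[i{+}1{:}|w|]), \beta_i)$ for all $\exte$ is the problematic part, and as in the base case I would handle it by keeping, in the $\SST$'s registers, for each external function $\exte$, for each monoid element $m \in M$ and each "pending" prefix-vector component, a separate running sum — this is the standard finite-state resolution of the need to know the suffix type before committing a production. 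Concretely the new $\SST$ has register set roughly $\bigcup_{\exte} (\regs_{\exte} \times M) \uplus \{\text{output accumulators indexed by } M\}$, with state component $\mu(\text{prefix read so far})$; each transition on letter $c$ performs: (1) for each $\exte$, left-multiply the $\regs_{\exte}$-block by $T_{\exte}(c)$ appropriately (a linear update); (2) seed a new contribution for position $i$ using the now-known $\mu$-prefix; (3) fold completed contributions into the $M$-indexed accumulators; and the output vector $F$ finally selects, using $\mu(\text{whole word})$, the accumulator for the correct suffix-monoid-value and combines it. Every update is $\Nat$-linear, as required by the $\SST$ model, because matrix entries of the $T_{\exte}$ are nonnegative integers and all combinations are sums and products of such.

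The main obstacle I expect is bookkeeping the suffix information. Since an $\SST$ reads left-to-right, at position $i$ it does not yet know $\mu(w[i{+}1{:}|w|])$, which determines both which external function $\exte_i$ is called and which $\SST$ $\trans_{\exte_i}$ governs the suffix contribution $\beta_i$. The resolution — carry one parallel copy of the computation for every element of $M$ (there are finitely many), i.e.\ register indices of the form $(\text{register of } \trans_{\exte}, m)$ where $m$ ranges over candidate suffix-values, and only at the end read off the copy consistent with the true $\mu(w)$ — is routine in spirit but needs care to set up so that the update functions stay linear and the copies do not interfere; I would present this reduction abstractly (perhaps factoring out a lemma: "a finite $\Nat$-linear combination, indexed by positions, of (prefix-vector)$\times$(fixed matrix depending on a regular property of the position)$\times$(suffix-vector) terms is $\SST$-computable"), and then the theorem follows by applying that lemma with the vectors and matrices extracted above. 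The effectiveness of all constructions is immediate since everything is explicit and the induction is over the finite nesting depth $k$.
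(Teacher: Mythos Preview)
Your plan is correct and is essentially the paper's: induction on $k$, with the inductive step isolated as the statement that a bimachine whose external pebble functions are $\SST$-computable is itself $\SST$-computable (the paper's Lemma~\ref{lem:bimoversst}), proved via the same prefix/marked-letter/suffix factorisation $T_\exte(\omar{w}{i}) = T_\exte(w[1{:}i{-}1])\,T_\exte(\marq{w[i]})\,T_\exte(w[i{+}1{:}|w|])$ that you give. The paper presents the register construction more cleanly by first building an \emph{$\SST$ with lookaround} (whose update at position $i$ may read both $\mu(w[1{:}i{-}1])$ and $\mu(w[i{+}1{:}|w|])$) carrying just two register families --- $\Old_x$ tracking $I_\exte T_\exte(w[1{:}i])(x)$ and $\Som_x$ tracking $\sum_{j\le i,\ j\in\calls(w,\exte)} I_\exte T_\exte(\omar{w[1{:}i]}{j})(x)$, so that $\Som$ absorbs the ``suffix vector'' automatically by continuing to right-multiply by $T_\exte(c)$ --- and then invoking a separate lookaround-elimination lemma, rather than carrying the $M$-indexed parallel copies inline as you sketch.
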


The proof is done by induction on $k \ge 0$.
Consider a bimachine whose external functions
are computed by $(k{-}1)$-pebble bimachines.
By hypothesis, we can compute these functions by $\SSTs$.
The induction step is shown by Lemma \ref{lem:bimoversst},
which uses new proof techniques.

\begin{lemma} \label{lem:bimoversst}
Given a bimachine with external pebble functions
computed by $\SSTs$, one
can build an equivalent $\SST$
(with no external functions).
\end{lemma}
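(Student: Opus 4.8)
The goal is to absorb one level of "external pebble calls" into the register mechanism of an $\SST$. The starting data is a bimachine $(A,M,\mu,\oras,\lambda)$ whose external functions $\exte \in \oras$ are each computed by an $\SST$; the output on input $w$ is $f(w) = \sum_{i=1}^{|w|} \exte_i(\omar{w}{i})$, where $\exte_i$ depends only on the context $(\mu(w[1{:}i{-}1]), w[i], \mu(w[i{+}1{:}|w|]))$. First I would observe that we may assume $\oras$ is a single $\SST$ with register set $\regs$, by taking a disjoint union of the $\SSTs$ for the finitely many external functions and letting $\lambda$ select, via the context, which block of coordinates of the initial/output vectors is "live" — equivalently, fold the finite choice of external $\SST$ into the choice of initial vector. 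So the problem reduces to: given one $\SST$ $\utrans = (A\uplus\marq{A}, \regs, I, T, F)$ and a bimachine that, at each position $i$, feeds $\utrans$ the word $\omar{w}{i}$ with a context-dependent initial/output vector, compute $\sum_i (I_i\, T(\omar{w}{i})\, F_i)$ with a single $\SST$ over $A$.

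**Key computation to track.** Fix $w$ of length $n$. For position $i$, $T(\omar{w}{i}) = T(w[1{:}i{-}1])\cdot T(\marq{w[i]}) \cdot T(w[i{+}1{:}n])$. The left factor $P_i := T(w[1{:}i{-}1])$ is a prefix product that the new $\SST$ can maintain incrementally as it reads $w$ left to right (it is just the image of the prefix under the morphism $T$ restricted to unmarked letters). The right factor $S_i := T(w[i{+}1{:}n])$ is a suffix product, which a left-to-right machine cannot store directly. The standard trick here is to keep, as registers, not the suffix products themselves but the running \emph{partial sums} of the quantities we ultimately want, reorganized so that the suffix contribution is applied uniformly. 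Concretely, define for a prefix $w[1{:}j]$ the row vector $\Sigma_j := \sum_{i=1}^{j} I_{c(i)}\, P_i\, T(\marq{w[i]}) \in \Nat^{\regs}$, where $c(i)$ records the (finitely many) relevant bits of the context of $i$ — but crucially $c(i)$ also depends on $\mu(w[i{+}1{:}n])$, the suffix monoid element, which is not yet known. The resolution is to index the registers by the monoid element $m \in M$: maintain a family $(\Sigma_j^{(m)})_{m \in M}$ where $\Sigma_j^{(m)}$ is the partial sum computed \emph{as if} the eventual suffix type were $m$. Each $\Sigma_j^{(m)}$ is updated linearly when reading $w[j{+}1]$ (new term added, old terms multiplied on the right by $T(w[j{+}1])$), so this fits the $\SST$ update format; the number of registers is $|M|\cdot|\regs|$ (times the finite index for which external $\SST$ is chosen). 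At the end, reading the right endmarker, the true suffix type of each position is determined by a global partition of $[1{:}n]$ according to $\mu$ of suffixes, and we select $\Sigma_n^{(m)}$ for the appropriate $m$ and multiply by the corresponding output vector $F$ — but this last selection must itself be made incrementally, which is the subtlety below.

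**Handling the suffix dependence cleanly.** The honest way to deal with "$c(i)$ depends on the suffix" is the classical left-to-right simulation of a bimachine: a left-to-right reading determines $\mu(w[1{:}i{-}1])$ exactly, and for the suffix we do not need $\mu(w[i{+}1{:}n])$ at position $i$ — we only need it at the very end. So I would run the new $\SST$ with register bank indexed by $M$ (the "guessed" suffix type), each bank accumulating $\Sigma^{(m)}$ using the \emph{hypothetical} output function value $\lambda(\cdot,\cdot,m)$; and separately track the genuine prefix monoid element $\mu(w[1{:}j])$ in the finite state (or encode it, since $\SSTs$ as defined here are stateless weighted automata — fold the state into the register indices, which only multiplies the register count by $|M|$ again, and make the update matrices block-structured to implement the $\mu$-transitions). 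At the right endmarker, the output vector picks, for each position's bank, the $F$-combination dictated by the \emph{correct} suffix type; but since different positions have different suffix types, I instead maintain the banks so that bank $m$ collects exactly the positions whose suffix type equals $m$ — and this membership is a function of the suffix, so it is not known on the fly. The clean fix: maintain $|M|$ banks where bank $m$ accumulates positions whose suffix type \emph{will be} $m$, updating all banks on each letter by shifting contributions according to the monoid multiplication $T(w[j{+}1])$ acting on suffix types too (if position $i$ currently has suffix type $m'$ relative to prefix ending at $j$, after reading one more letter $a$ its suffix type becomes $m' \cdot \mu(a)$ — wait, suffix extends backward, so actually reading forward \emph{adds} letters to suffixes of earlier positions). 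This is exactly the reason the register update is a well-defined linear map: appending letter $a$ sends the $m$-indexed family $(\Sigma^{(m)})_m$ to $(\Sigma'^{(m)})_m$ with $\Sigma'^{(m)} = \big(\sum_{m' : m'\mu(a) = m} \Sigma^{(m')}\big)\cdot T(a) + [\text{new term for position } j{+}1\text{ with guessed suffix }m]$, which is linear in the old registers. Output at the end: $\sum_{m} \Sigma^{(m)}_n \cdot T(a_{\rmark}) \cdot F$ where the genuine suffix of each position is now $m$ since we have consumed all of $w$.

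**Main obstacle.** The real work — and the step I expect to be delicate — is not the algebraic bookkeeping but checking that \emph{every} dependency of $\exte_i$ on the context is captured by a finite, left-to-right-computable quantity, and in particular that the marked letter $\marq{w[i]}$ (where the $\SST$ for $\exte$ "sees the pebble") is handled correctly: at position $i$ the new $\SST$ must inject the \emph{marked} transition matrix $T(\marq{w[i]})$ for that one position while using \emph{unmarked} matrices $T(w[j])$ for all $j \neq i$, and this has to be woven into the update so that it happens exactly once per position. The formal verification that the resulting finite register family evolves by $\Nat$-matrices and that $IT(w)F$ of the new machine equals $\sum_i \exte_i(\omar{w}{i})$ is a somewhat lengthy but routine induction on $|w|$; I would state the register-update matrices explicitly as block matrices indexed by $M \times \regs$ (or $M \times M \times \regs$ once the prefix state is folded in) and prove the invariant "$\Sigma_j^{(m)} = \sum_{i \le j,\ \text{guessed suffix } m} I_{c(i,m)} T(w[1{:}i{-}1]) T(\marq{w[i]}) T(w[i{+}1{:}j])$" by induction, then evaluate at $j = n$ and post-multiply by the right-endmarker matrix and $F$.
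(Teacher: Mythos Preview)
Your core mechanism is the paper's: for each external register $x$ of each $\exte$, keep an ``unmarked prefix'' value $\Old_x = (I_\exte T_\exte(w[1{:}j]))(x)$ and a ``sum over calling positions so far'' value $\Som_x = \sum_{i \le j,\, i \in \calls(w,\exte)} (I_\exte T_\exte(\omar{w[1{:}j]}{i}))(x)$; when the current position calls $\exte$, advance the $\Som_y$ through the unmarked matrix $T_\exte(a)$ and simultaneously feed the $\Old_y$ through the marked matrix $T_\exte(\marq a)$ into $\Som_x$. Your identification of that one-time injection of the marked matrix as the crux is exactly right.

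The gap is in your handling of the suffix dependence of $\lambda$. The paper does not eliminate it directly: it builds an $\SST$ \emph{with lookaround} (the update at position $i$ may read $\mu(w[i{+}1{:}|w|])$), so that ``which $\exte$ is called here'' is simply available when the $\Som$/$\Old$ update is performed, and then invokes a known conversion from $\SST$ with lookaround to plain $\SST$. Your inline elimination via $M$-indexed banks $\Sigma^{(m)}$ conflates two incompatible semantics. The shift rule $\Sigma'^{(m)} = \big(\sum_{m':m'\mu(a)=m}\Sigma^{(m')}\big)\,T(a)$ forces $m$ to be the \emph{gap} $\mu(w[i{+}1{:}j])$; under that reading a freshly added position $j{+}1$ has gap $1_M$ and can only enter bank $1_M$ --- but at that moment its eventual suffix is unknown, so you cannot choose the correct context-dependent initial vector $I_{c(j{+}1,\cdot)}$. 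If instead you add it to every bank $m$ with guess $m$ (as ``new term \dots with guessed suffix $m$'' suggests), then after migration each position sits in several banks and your output $\sum_m \Sigma_n^{(m)} F$ overcounts. A correct direct construction needs banks indexed by the \emph{pair} (current gap, guessed full suffix), extracting only the diagonal $(m,m)$ at the end, or equivalently by $(\exte,\text{prefix type},\text{letter},\text{gap})$ with the context-dependent $I$ and $F$ applied only in the output step once gap equals suffix; either way $|M|\cdot|\regs|$ registers are not enough. That extra bookkeeping is precisely what the paper's lookaround detour packages into a black box.
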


\begin{proof}[Proof idea.]
Let $\trans$ be the $\SST$ computing
an external function $\exte$.
On input $w \in A^+$, the bimachine calls $\exte$ on
several positions $1\le i_1 < \cdots < i_{\ell} \le |w|$,
which induces executions of $\trans$ on $\omar{w}{i_1}, \dots, \omar{w}{i_\ell}$.
These executions are very similar: they only
differ when reading the marked letter.
Thus we build an $\SST$ which computes "simultaneously"
all these executions, by keeping track of the sum
of the values of the registers of $\trans$ along them.
\end{proof}

As a consequence of Theorem \ref{theo:pebble:sst},
we obtain equivalence
between pebbles and marbles over unary outputs.
The result is false over non-unary output alphabets
\cite{doueneau20,nguyen2020comparison}.
We also relate these functions with
 those computed by $\SST$, assuming that
 the output is bounded by a polynomial 
 in the input's length.

\begin{corollary} \label{cor:pebmar}For all $k \ge 0$ and $f:A^* \rightarrow \Nat$, the following conditions
are equivalent:
\item
\begin{enumerate}
\item \label{po:pebble} $f$ is computable by a $k$-pebble transducer;
\item \label{po:marble} $f$ is computable by a $k$-marble transducer;
\item \label{po:sst} $f$ is computable by an $\SST$ and $f(w) = \mc{O}(|w|^{k+1})$.
\end{enumerate}
Furthermore the transformations are effective.
\end{corollary}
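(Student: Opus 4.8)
The plan is to prove the cycle of implications $(\ref{po:pebble}) \Rightarrow (\ref{po:sst}) \Rightarrow (\ref{po:marble}) \Rightarrow (\ref{po:pebble})$, relying on Theorem \ref{theo:pebble:sst} and on known results about marble transducers. The implication $(\ref{po:marble}) \Rightarrow (\ref{po:pebble})$ is immediate: a $k$-marble transducer is a special case of a $k$-pebble transducer (the external marble functions are a syntactic restriction of external pebble functions, obtained by simply ignoring the pebble mark and truncating the input to the prefix), so no work is needed there. For $(\ref{po:pebble}) \Rightarrow (\ref{po:sst})$, first invoke Proposition \ref{lem:pebble-bim} to turn the given $k$-pebble transducer into an equivalent $k$-pebble bimachine, then apply Theorem \ref{theo:pebble:sst} to obtain an equivalent $\SST$. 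The size bound $f(w) = \mc{O}(|w|^{k+1})$ is the standard output-length estimate for $k$-pebble transducers: by induction on $k$, a $0$-pebble (two-way) transducer produces output linear in $|w|$, and the main machine of a $k$-pebble transducer makes $\mc{O}(|w|)$ calls to external $(k{-}1)$-pebble functions, each of size $\mc{O}(|w|^k)$ by the induction hypothesis, giving $\mc{O}(|w|^{k+1})$ in total. This can also be derived directly from the $\SST$: the entries of $I\,T(w)\,F$ grow polynomially, and one pins down the exact degree from the structure of $\trans$, but the cleaner route is to keep the bound as a property inherited from the pebble model before the conversion.

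The substantive implication is $(\ref{po:sst}) \Rightarrow (\ref{po:marble})$: given an $\SST$ computing $f$ with the guarantee $f(w) = \mc{O}(|w|^{k+1})$, build a $k$-marble transducer for $f$. Here I would cite the analysis of $\SSTs$ (equivalently, marble transducers) from \cite{doueneau20}, which establishes exactly this kind of dichotomy — an $\SST$ whose output has size $\mc{O}(|w|^{k+1})$ can be "stratified" into $k{+}1$ nested levels of marble-style passes. Concretely, one studies the growth of the register values: each register of the $\SST$ has a well-defined polynomial growth degree determined by which registers feed into which along the update matrices (a combinatorial condition on the cycles of the "dependency" structure of $T$), and the hypothesis $f(w)=\mc{O}(|w|^{k+1})$ forces every register contributing to the output through $F$ to have degree at most $k$. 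Registers of degree $d$ can be simulated by a $d$-marble transducer by a recursive decomposition: a register that accumulates (sums) the contributions of a degree-$(d{-}1)$ register across all positions is precisely modelled by a main marble machine calling, at each position, a $(d{-}1)$-marble external function computing that lower-degree register's value on the prefix. Assembling these gives the desired $k$-marble transducer.

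The main obstacle is this last step, $(\ref{po:sst}) \Rightarrow (\ref{po:marble})$: controlling the polynomial growth degree of each register of an arbitrary $\SST$ and turning the degree decomposition into a genuine nesting of marble passes is delicate, because register updates can be intertwined in ways that do not obviously respect a clean stratification (e.g. mutual recursion among registers of the same degree, or "copyless"-versus-copyful subtleties in how values are duplicated). The right tool is the growth-rate analysis of weighted automata / marble transducers already developed in \cite{doueneau20}, so the proof should be organized as a reduction to that result rather than redone from scratch; all that remains is to check that the hypothesis produced by our route — an $\SST$ together with the a priori bound $\mc{O}(|w|^{k+1})$ — matches the hypothesis of that external theorem, and that effectiveness is preserved at each conversion (it is, since Proposition \ref{lem:pebble-bim}, Theorem \ref{theo:pebble:sst}, and the cited $\SST$-stratification are all effective, and the growth degree of an $\SST$ is computable).
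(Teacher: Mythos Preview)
Your proposal is correct and follows essentially the same route as the paper: the cycle $(\ref{po:marble}) \Rightarrow (\ref{po:pebble}) \Rightarrow (\ref{po:sst}) \Rightarrow (\ref{po:marble})$, with the first implication trivial, the second via Theorem \ref{theo:pebble:sst} plus the standard $\mc{O}(|w|^{k+1})$ output bound, and the third delegated to \cite{doueneau20}. Your explicit mention of passing through Proposition \ref{lem:pebble-bim} before applying Theorem \ref{theo:pebble:sst} is a small but correct clarification that the paper's proof leaves implicit.
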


\begin{proof}
Clearly a $k$-pebble transducer can
simulate a $k$-marble transducer,
hence $\ref{po:marble} \Rightarrow \ref{po:pebble}$.
Let $f$ be computed by a $k$-pebble transducer,
we have $f(w) = \mc{O}(|w|^{k+1})$ and
by Theorem \ref{theo:pebble:sst} one can build an $\SST$ for $f$.
Thus $\ref{po:pebble} \Rightarrow \ref{po:sst}$.
Finally $\ref{po:sst} \Rightarrow \ref{po:marble}$
is shown in \cite{doueneau20}.
\end{proof}

Another important consequence is that we can decide equivalence
of pebble transducers with unary output,
since we can do so for marble transducers \cite{doueneau20}.

\begin{corollary} One can decide if two pebble transducers
compute the same function.
\end{corollary}

This has been an open question since \cite{bojanczyk2018polyregular},
and it is still open for generic output alphabets.

\section{Deciding if $1$-pebble is $1$-blind}

\label{sec:blind}

Since the equivalence between marbles and
pebbles is established, we now compare
$1$-pebble (which are $1$-marble) transducers
with $1$-blind transducers.
It turns out that $1$-pebble are strictly more expressive;
 the main goal of this section is to show Theorem \ref{theo:membership}.

\begin{theorem}[Membership] \label{theo:membership} One can decide if a function given
by a $1$-marble (or $1$-pebble) transducer can be computed by
a $k$-blind transducer for some $k \ge 0$.
If this condition holds, one can
 build a $1$-blind transducer which computes it.
\end{theorem}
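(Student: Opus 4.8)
The plan is to work with the bimachine formalism of Section~\ref{sec:bimachines} and reason about the $\SST$ (equivalently $1$-marble $\equiv$ $1$-pebble, by Corollary~\ref{cor:pebmar}) attached to each calling position. Fix a $1$-pebble bimachine $\trans$ with morphism $\mu: A^* \to M$ and external functions $\exte_i$ depending only on the triple $(\mu(w[1{:}i{-}1]), w[i], \mu(w[i{+}1{:}|w|]))$; each $\exte_i$ is a function $(A \uplus \marq{A})^+ \to \Nat$ computed by a two-way transducer, and the only dependence on the marked position is \emph{where} the mark sits. The key observation is that $\exte_i(\omar{w}{i})$ decomposes, via the run of the submachine, into a contribution coming from the prefix $w[1{:}i{-}1]$, a contribution coming from the suffix $w[i{+}1{:}|w|]$, a contribution attached to the marked letter itself, and \emph{cross terms} — productions in the prefix whose value depends on a regular property of the suffix and vice versa. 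A $1$-blind transducer is exactly one that may not know, inside the submachine, where the call was made; so the condition for membership should say that, after summing over all calling positions $i$, the cross terms either cancel or rearrange into something a blind machine can recompute. I would first make this decomposition precise using factorization forests: build a Simon factorization forest of bounded height for $\mu(w)$, and express $f(w) = \sum_i \exte_i(\omar{w}{i})$ as a sum indexed by nodes of the forest, isolating for each idempotent node the "iterable" part whose contribution grows with the number of children.

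The second step is to extract the obstruction. Along an idempotent node of the forest with children $c_1, \dots, c_p$, a $1$-pebble machine can, for a call made inside block $c_j$, produce an amount that depends on how many blocks lie to its left versus to its right (this is the genuinely "pebble" phenomenon, cf.\ $\trian$ and $\mirror$ being combined). Summing over all calls inside the node, the total is a \emph{quadratic}-looking expression $\sum_{j} (\alpha \cdot \#\{\text{left}\} + \beta \cdot \#\{\text{right}\} + \gamma) \cdot (\text{size of } c_j)$. I expect the characterization to be: the function is $1$-blind-computable iff for every idempotent node type, this bilinear form is in fact \emph{symmetric in left/right in a way that lets it be re-summed as a product of two independent linear quantities} — concretely, the coefficient of the genuinely asymmetric part ($\alpha - \beta$, roughly) must vanish, or more precisely the contribution must factor as $(\text{something computable scanning left-to-right}) \times (\text{something computable scanning the whole word blindly})$. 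This is the "symmetry in the behavior of the machine on its input" alluded to in the introduction. Since the number of node types is finite and each coefficient is computable from $\trans$, checking these finitely many linear equalities is decidable; that gives the decision procedure.

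For the \emph{construction} direction, assuming the symmetry condition holds, I would build the $1$-blind transducer by the same $\SST$-style bookkeeping used in Lemma~\ref{lem:bimoversst}, but now exploiting that each node's contribution factors: one factor is computed by a sub-$\SST$ scanning the relevant region, the other by a blind sub-call on the whole word, and the main machine multiplies and sums them. The point is that factorization forests give a \emph{canonical} recursive skeleton on which to hang these computations, so the absence of left/right asymmetry at every level means the whole sum telescopes into a bounded number of blind calls — hence a $1$-blind transducer (and, as claimed, depth $1$ already suffices, no need for larger $k$). Conversely, if the symmetry fails at some node type, I would derive a contradiction: pump the bad idempotent to get inputs on which $f$ exhibits a left/right-dependent quadratic behavior that no $k$-blind transducer can match, using the known growth/commutation limitations of blind transducers from \cite{nguyen2020comparison} (a blind transducer's contribution at a position cannot depend on a partition of the \emph{same level} into left and right parts, since it re-reads the whole word symmetrically).

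\textbf{Main obstacle.} The hard part will be pinning down the exact algebraic form of the "symmetry" condition so that it is simultaneously (i) necessary — provably violated functions are not $k$-blind for \emph{any} $k$, which requires a lower-bound argument against all blind transducers at once, not just $1$-blind ones — and (ii) sufficient and \emph{effective}, i.e.\ yielding an actual $1$-blind machine. Managing the cross terms across \emph{nested} idempotent nodes (so that fixing the asymmetry at one level does not reintroduce it at another) is where the factorization-forest induction has to be set up very carefully; I expect this bookkeeping, rather than any single clever step, to be the bulk of the work.
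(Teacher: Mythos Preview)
Your high-level architecture matches the paper's: factorization forests, a decidable symmetry condition on the bimachine, a pumping lower bound against all $k$-blind machines, and an explicit $1$-blind construction when the condition holds. But there is a real gap in the decomposition you set up, and it propagates into the symmetry condition you sketch.

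You index $f(w) = \sum_i \exte_i(\omar{w}{i})$ by the \emph{calling} position $i$ alone and then analyse each term's dependence on left/right context. The paper instead decomposes by \emph{pairs} of positions: $\pro{i,j}$ is the production of the submachine at position $i$ when called from position $j$, so $f(w) = \sum_{i \le j} \pro{i,j}$, and then groups these by pairs of nodes $(\nodi,\nodj)$ of the forest (Lemma~\ref{lem:nodes-prod}). The basic object becomes a \emph{bitype} $m\cro{u_1}m'\cro{u_2}m''$, recording two factors together with the monoid elements before, between, and after them. The symmetry condition (Definition~\ref{def:symmetrical}) then says: under suitable idempotency hypotheses, the production of such a bitype does not depend on the middle element $m'$, and is unchanged when the roles of $u_1$ and $u_2$ are swapped.

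Your single-index formulation cannot isolate this. Your test ``$\alpha = \beta$'' (or ``factors as a product of two linear quantities'') looks at one block's contribution as a function of how many siblings lie left versus right; but for $\sq: a^n \mapsto n^2$ (which \emph{is} $1$-blind) the obvious marble bimachine produces $2k{-}1$ at the $k$-th letter, which is just as left/right-asymmetric as the marble bimachine for $\isq$ (which is \emph{not} blind). The bitype view separates them cleanly: over a unary alphabet the monoid image is trivial, so every middle $m'$ is the same and $\sq$ is vacuously symmetrical, whereas for $\isq$ the middle element records whether a $b$ sits between the two chosen $a$'s, and the bitype production genuinely depends on that.

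For the construction, the paper does not ``factor the contribution as a product of two blind calls'' as you propose. Once symmetry holds, $\pro{\nodi,\nodj}$ for an ``independent'' pair depends only on the bounded \emph{types} of $\nodi$ and $\nodj$ and not on their relative position; the $1$-blind machine then just counts, for each unordered pair of types, how many node-pairs realise it (Lemma~\ref{lem:independent}). For the lower bound, the paper pumps \emph{two} copies of the same idempotent pattern with independent parameters $(X,Y)$ and $(X',Y')$ and shows that any $k$-blind output is a polynomial in $X{+}X'$ and $Y{+}Y'$ (Lemma~\ref{lem:forbid}); comparing with the marble output, whose cross-coefficients are exactly bitype productions (Lemma~\ref{lem:prod-iter}), forces the equalities of Definition~\ref{def:symmetrical}. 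Your ``left/right-dependent quadratic behaviour'' intuition points in this direction but, as you anticipated, is not sharp enough to defeat all $k$ at once without the two-position pairing.
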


Let us fix a function $f: A^+ \rightarrow \Nat$
described by a $1$-marble bimachine
$\trans = (A,M, \mu, \oras, \lambda)$.
For $\exte \in \oras$, let $\trans_{\exte} \defined (A,M, \mu, \lambda_{\exte})$
be the bimachine which computes it.
We enforce the morphism $\mu$ to be surjective (up to considering the co-restriction to its image)
and the same in all machines (up to taking the product of all morphisms used).
Our goal is to give a decidable condition on $\trans$
for $f$ to be computable by a $1$-blind transducer.
For this purpose, we define the notion of \emph{bitype}.
Intuitively, it describes two disjoint factors in an input word,
together with a finite abstraction of their "context".

Let $\mh \defined 3|M|$ (it will be justified by Theorem \ref{theo:simon}).

\begin{definition} \label{def:bitype}
A \emph{bitype} $\Phi \defined m \cro{\uu} m' \cro{\ud} m''$
consists in $m,m',m'' \in M$, $\uu, \ud \in A^+$.
\end{definition}

We can define "the production performed in $\uu$ by the calls from $\ud$", in $\Phi$.
For $1 \le i \le |\uu|$ and $1 \le j \le |\ud|$, let
$\Phi(i,j) \defined \lambda_{\exte_{j}}(m\mu(\uu[1{:}{i}{-}1]), \uu[i], \mu(\uu[i{+1}{:}|\uu|])m'\mu(\ud[1{:}j])) \in \Nat $
where $\exte_{j} \defined \lambda(m\mu(\uu)m'\mu(\ud[1{:}{j}{-}1]), \ud[j],\mu(\ud[j{+1}{:}|\ud|])m'')$. Then we set:
\begin{equation*}
 \pro{m \cro{\uu} m' \cro{\ud}m''} \defined \sum_{\substack{1 \le i \le |\uu| \\ 1 \le j \le |\ud|}}
\Phi(i,j) \in \Nat.
\end{equation*}

\begin{definition} \label{def:symmetrical} The $1$-marble bimachine $\trans$
is \emph{symmetrical} whenever
 $\forall m,n,m_1,n_1,m_2,n_2 {\in} M$ and $\uu, \ud \in A^+$
such that $|\uu|, |\ud| \le 2^{\mh}$, $e_1 {\defined}\mu(\uu)$,  $e_2 {\defined} \mu(\ud)$
and $e {\defined} m_1 e_1 n_1 {=} m_2 e_2 n_2$ are idempotents,
there exists $K \ge 0$ such that  $\forall p \in M$:
\item
\begin{itemize}
\item  if
$m_1 e_1 p e_2 n_2 = e$, $em_1e_1p e_2= em_2e_2$ and $e_1p e_2 n_2 e= e_1n_1e$,

then $\pro{mem_1 e_1 \cro{\uu} e_1 p e_2 \cro{\ud} e_2 n_2 en} = K$;

\item if
$m_2 e_2 p e_1 n_1 = e$, $em_2e_2pe_1 = em_1e_1$ and $ e_2pe_1 n_1e = e_2n_2e$,

then $\pro{mem_2 e_2 \cro{\ud} e_2 p e_1 \cro{\uu} e_1 n_1 en} = K$.
\end{itemize}

\end{definition}

\begin{figure}[h!]
\centering
\begin{subfigure}{1\linewidth}
\centering\begin{tikzpicture}[scale=0.7]


	\draw (-4.5,5) rectangle (4.5,4.5);
	\node[above] at (0,4.45) {$e$};
	\draw (-6.5,5) rectangle (-4.5,4.5);
	\node[above] at (-5.5,4.45) {$e$};
	\draw (6.5,5) rectangle (4.5,4.5);
	\node[above] at (5.5,4.45) {$e$};
	\draw (-8.5,5) rectangle (-6.5,4.5);
	\node[above] at (-7.5,4.45) {$m$};
	\draw (8.5,5) rectangle (6.5,4.5);
	\node[above] at (7.5,4.45) {$n$};

	\draw (-4.5,4.5) rectangle (-3.5,4);
	\node[above] at (-4,3.9) {$m_1$};	
	\draw (-3.5,4.5) rectangle (-2.5,4);
	\node[above] at (-3,3.9) {$e_1$};	
		\draw[fill=red!80]  (-2.5,3.5) rectangle (-1.5,4);
		\node[above] at (-2,3.4) {$u_1$};
	\draw (-2.5,4.5) rectangle (-1.5,4);
	\node[above] at (-2,3.9) {$e_1$};
	\draw (-1.5,4.5) rectangle (-0.5,4);
	\node[above] at (-1,3.9) {$e_1$};	
	
	\draw (-0.5,4.5) rectangle (0.5,4);
	\node[above] at (0,3.9) {$p$};	
	
	\draw (4.5,4.5) rectangle (3.5,4);	
	\node[above] at (4,3.9) {$n_2$};	
	\draw (3.5,4.5) rectangle (2.5,4);
	\node[above] at (3,3.9) {$e_2$};
		\draw[fill=blue!70]  (2.5,3.5) rectangle (1.5,4);
		\node[above] at (2,3.4) {$u_2$};	
	\draw (2.5,4.5) rectangle (1.5,4);
	\node[above] at (2,3.9) {$e_2$};	
	\draw (1.5,4.5) rectangle (0.5,4);
	\node[above] at (1,3.9) {$e_2$};
	
	\draw[thick,dashed](-1.5,3.3) --(-1.5,3) -- (6.5,3) -- (6.5,4.3) ;
	\node[above] at (3.5,2.4) {$e_1 n_1 e$};
	
	\draw[thick,dashed](1.5,2.8) --(1.5,2.5) -- (-6.5,2.5) -- (-6.5,4.3);
	\node[above] at (-3.5,2.4) {$e m_2 e_2$};

\end{tikzpicture}

\subcaption{Bitype $mem_1 e_1 \cro{\uu} e_1 p e_2 \cro{\ud} e_2 n_2 en$.}
\end{subfigure}

\vspace{1\baselineskip}

\begin{subfigure}{1\linewidth}
\centering\begin{tikzpicture}[scale=0.7]


	\draw (-4.5,5) rectangle (4.5,4.5);
	\node[above] at (0,4.45) {$e$};
	\draw (-6.5,5) rectangle (-4.5,4.5);
	\node[above] at (-5.5,4.45) {$e$};
	\draw (6.5,5) rectangle (4.5,4.5);
	\node[above] at (5.5,4.45) {$e$};
	\draw (-8.5,5) rectangle (-6.5,4.5);
	\node[above] at (-7.5,4.45) {$m$};
	\draw (8.5,5) rectangle (6.5,4.5);
	\node[above] at (7.5,4.45) {$n$};

	\draw (-4.5,4.5) rectangle (-3.5,4);
	\node[above] at (-4,3.9) {$m_2$};	
	\draw (-3.5,4.5) rectangle (-2.5,4);
	\node[above] at (-3,3.9) {$e_2$};	
		\draw[fill=blue!70]  (-2.5,3.5) rectangle (-1.5,4);
		\node[above] at (-2,3.4) {$u_2$};
	\draw (-2.5,4.5) rectangle (-1.5,4);
	\node[above] at (-2,3.9) {$e_2$};
	\draw (-1.5,4.5) rectangle (-0.5,4);
	\node[above] at (-1,3.9) {$e_2$};	
	
	\draw (-0.5,4.5) rectangle (0.5,4);
	\node[above] at (0,3.9) {$p$};	
	
	\draw (4.5,4.5) rectangle (3.5,4);	
	\node[above] at (4,3.9) {$n_1$};	
	\draw (3.5,4.5) rectangle (2.5,4);
	\node[above] at (3,3.9) {$e_1$};
		\draw[fill=red!80]  (2.5,3.5) rectangle (1.5,4);
		\node[above] at (2,3.4) {$u_1$};	
	\draw (2.5,4.5) rectangle (1.5,4);
	\node[above] at (2,3.9) {$e_1$};	
	\draw (1.5,4.5) rectangle (0.5,4);
	\node[above] at (1,3.9) {$e_1$};
	
	\draw[thick,dashed](-1.5,3.3) --(-1.5,3) -- (6.5,3) -- (6.5,4.3) ;
	\node[above] at (3.5,2.4) {$e_2 n_2 e$};
	
	\draw[thick,dashed](1.5,2.8) --(1.5,2.5) -- (-6.5,2.5) -- (-6.5,4.3);
	\node[above] at (-3.5,2.4) {$e m_1 e_1$};

\end{tikzpicture}

\subcaption{Bitype $mem_2 e_2 \cro{\ud} e_2 p e_1 \cro{\uu} e_1 n_1 en$}
\end{subfigure}

\caption{\label{fig:bitypes-figg}  The bitypes used to define a symmetrical $1$-marble bimachine}
\end{figure}
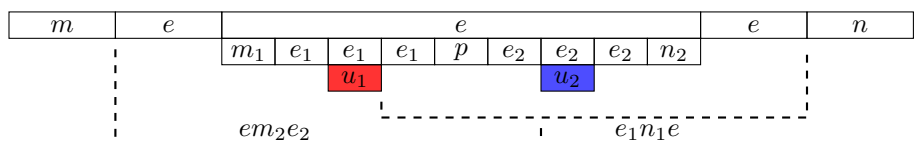
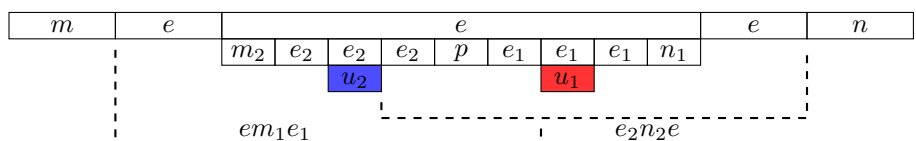

Symmetry means that, under some idempotent conditions,
$\pro{m \cro{\uu} m' \cro{\ud}m''}$ only depends on $m,m'', m' \mu(\ud) m''$ and $m \mu(\uu) m'$,
that are the "contexts" of $\uu$ and $\ud$, but not on the element $m'$
which separates them. The same holds if we swap $\uu$ and $\ud$.
The bitypes considered in Definition \ref{def:symmetrical} are depicted
in Figure \ref{fig:bitypes-figg}, together with the equations they satisfy.

Symmetry is the decidable condition we are looking for, as shown in Theorem \ref{theo:characterization}.
Recall that $f$ is the function computed by the $1$-marble bimachine $\trans$.

\begin{theorem}[Characterization] \label{theo:characterization} The following conditions are equivalent:
\item
\begin{enumerate}
\item \label{it:unt} $f$ is computable by a $k$-blind transducer for some $k \ge 0$;
\item \label{it:det} $f$ is computable by a $1$-blind transducer;
\item \label{it:trt} $\trans$ is symmetrical.
\end{enumerate}
\end{theorem}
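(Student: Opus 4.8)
The plan is to prove the three-way equivalence \ref{it:unt} $\Leftrightarrow$ \ref{it:det} $\Leftrightarrow$ \ref{it:trt} as a cycle: the implication \ref{it:det} $\Rightarrow$ \ref{it:unt} is trivial, so the real content is \ref{it:unt} $\Rightarrow$ \ref{it:trt} (symmetry is \emph{necessary}) and \ref{it:trt} $\Rightarrow$ \ref{it:det} (symmetry is \emph{sufficient}, and in fact gives a $1$-blind transducer). Throughout I would work with the bimachine models of Section~\ref{sec:bimachines} via Proposition~\ref{lem:pebble-bim}, and use that a $k$-blind transducer (hence bimachine) has output of size $\mc{O}(|w|^{k+1})$; in particular, by Corollary~\ref{cor:pebmar} its function is $\SST$-computable, so a first reduction shows that a $k$-blind transducer computing $f$ may be assumed to be a $1$-blind transducer as soon as $f$ is $\SST$-computable with at most quadratic growth — but rather than chase the exact level, I would phrase the necessity direction directly from the $k$-blind assumption.

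For \textbf{necessity} (\ref{it:unt} $\Rightarrow$ \ref{it:trt}), suppose $f$ is computed by some $k$-blind transducer $\utrans$. The key observation is that the quantity $\pro{m \cro{\uu} m' \cro{\ud} m''}$ can be \emph{isolated} inside the value of $f$ on a suitably constructed family of inputs: plug $\uu$ and $\ud$ into a context built from words realizing the monoid elements $m,m',m''$ and the idempotent $e$, with the idempotent factors pumped to large powers $e^N$. On such a word, $f$ decomposes (by the definition of $\pro{\cdot}$ of $\trans$, expanding the marble bimachine's summation) into a sum whose cross-term between the "$\uu$-block" and the "$\ud$-block" is exactly $\pro{m e m_1 e_1 \cro{\uu} e_1 p e_2 \cro{\ud} e_2 n_2 e n}$, plus terms that are polynomial in $N$ and whose coefficients do not involve $p$. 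On the other hand, the $k$-blind transducer $\utrans$, run on the same family, must produce the same value; since $\utrans$ reads the whole word but places no pebble, its behaviour on the pumped idempotent regions is governed by Ramsey-type / pumping arguments, and a standard interpolation argument on the exponents $N$ forces the $p$-dependent cross term to be \emph{constant} in $p$ — this is precisely symmetry. The bound $\mh = 3|M|$ on $|\uu|,|\ud|$ and the idempotent conditions are what make this a finite, decidable condition (one only needs to check finitely many bitypes), and I expect Theorem~\ref{theo:simon} (Simon's factorization forest theorem, with the $3|M|$ bound) to be invoked here to justify that this finite check captures the general case.

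For \textbf{sufficiency} (\ref{it:trt} $\Rightarrow$ \ref{it:det}), assume $\trans$ is symmetrical, and build a $1$-blind transducer (or bimachine) for $f$. The idea is to rewrite $f(w) = \sum_{i,j} (\text{production at } i \text{ from a call at } j)$, which a priori requires knowing \emph{where} each call position $j$ sits relative to each produced position $i$ (a pebble would tell you this), as a sum that only depends on regular abstractions of prefixes and suffixes — which is exactly what a blind transducer can compute. Concretely, I would take a factorization forest of $w$ of bounded height $\le \mh$ for $\mu$ (Theorem~\ref{theo:simon}), and for each pair of positions $(i,j)$ classify the pair by the type of their least common ancestor node in the forest and the idempotents labelling the branches: when $i$ and $j$ lie in different children of an idempotent node, the total production summed over all such pairs is, by symmetry, independent of the fine position of the separating element, so it can be computed from the child indices and the two "context" monoid values alone — data that a bimachine with blind external functions can recover by a bounded-depth recursion along the forest. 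When $i$ and $j$ lie in the same child we recurse into that child; the bounded height $\le \mh$ guarantees the recursion terminates with a bounded number of levels, yielding a constant-depth blind bimachine, which by the "the output size is $\mc{O}(|w|^{k+1})$, levels collapse" phenomenon (Corollary~\ref{cor:pebmar} / \cite{nguyen2020comparison}) and the quadratic growth of $f$ collapses to a $1$-blind transducer.

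The \textbf{main obstacle} I anticipate is the sufficiency direction, specifically organizing the summation $\sum_{i,j}\Phi(i,j)$ along a factorization forest so that the symmetry hypothesis applies cleanly. Symmetry as stated in Definition~\ref{def:symmetrical} only equates $\pro{\cdot}$ for very specific bitypes (idempotent branch conditions, $|\uu|,|\ud| \le 2^{\mh}$); lifting this to arbitrary $\uu,\ud$ and arbitrary separating contexts requires a careful induction on forest height that propagates the invariant "$\pro{\cdot}$ depends only on the contexts" up the tree, which is delicate because composing two idempotent-bounded pieces need not be idempotent-bounded. Getting the bookkeeping of left/right contexts right — and checking that the resulting per-node quantities are genuinely computable by a bimachine that is \emph{blind} (no marked position) rather than needing a pebble — is where the real work lies, and it is presumably why the characterization is proved via the factorization-forest machinery rather than elementary pumping alone.
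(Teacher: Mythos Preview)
Your overall cycle and the ingredients you name (pumping for necessity, factorization forests for sufficiency) match the paper, but there are two concrete gaps worth flagging.

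\textbf{Necessity (\ref{it:unt} $\Rightarrow$ \ref{it:trt}).} Symmetry in Definition~\ref{def:symmetrical} is not merely ``$\pro{\cdot}$ is constant in $p$''; it also requires the \emph{same} constant $K$ when $\uu$ and $\ud$ are swapped (the two bullet points share one $K$). The paper proves these as two separate lemmas (Lemma~\ref{lem:revert1} for $p$-independence, Lemma~\ref{lem:revert2} for swapping), and your sketch only addresses the first. More importantly, a single pumping parameter $N$ is too coarse to isolate the cross term. The paper's device (Lemma~\ref{lem:forbid}) is to build words containing \emph{two} blocks $\uu^{\omega X}$ and $\uu^{\omega X'}$ (and likewise for $\ud$) placed in identical contexts, and to show that a $k$-blind machine cannot tell them apart: its output is a polynomial in $X+X'$ and $Y+Y'$ only. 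Comparing with the marble bimachine's output (Lemma~\ref{lem:prod-iter}), whose cross coefficients are exactly the bitype productions, then forces the equalities by polynomial identification. Your ``Ramsey/interpolation'' gesture is in the right direction, but the two-copies trick is the actual mechanism that encodes blindness.

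\textbf{Sufficiency (\ref{it:trt} $\Rightarrow$ \ref{it:det}).} The paper does \emph{not} recurse on the forest and then collapse levels via a growth bound. It works directly at level $1$ by partitioning $\parti{\forest}\times\parti{\forest}$ into three sets $D(\forest)$, $L(\forest)$, $I(\forest)$ (Lemmas~\ref{lem:distinguishable}, \ref{lem:linked}, \ref{lem:independent}) and building a $1$-blind bimachine for each piece separately. Two of the three pieces ($D$: nodes with distinct bases, separated by the bounded frontier of the root; $L$: linked pairs, boundedly many per node) require no symmetry at all---they are handled by pure boundedness. Only $I(\forest)$ (same basis, not on the same branch) uses symmetry, and there the key move you are missing is the introduction of a bounded \emph{type} $\type{\nodi}{\forest}\in T$ for each iterable node: symmetry is exactly what makes $\pro{\nodi,\nodj}$ depend only on the unordered pair $\{\type{\nodi}{\forest},\type{\nodj}{\forest}\}$ (Lemma~\ref{lem:indep-prof}). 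This is where the bound $|\uu|,|\ud|\le 2^{\mh}$ in Definition~\ref{def:symmetrical} comes from: it is the maximal frontier size (Lemma~\ref{lem:sizef}), so the words $\uu,\ud$ in the bitypes are exactly the frontier words $w[\fr{\nodi}{\forest}]$, $w[\fr{\nodj}{\forest}]$. Once productions depend only on types, counting pairs of given types is a genuinely blind task. Your ``classify by least common ancestor and recurse into the same child'' plan runs into the difficulty you yourself name: idempotent nodes have unboundedly many children, so ``child index'' is not data a blind submachine can be handed. The paper's $D/L/I$ split plus types is precisely the bookkeeping that resolves this.
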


Theorem \ref{theo:membership} follows from Theorem \ref{theo:characterization}, since
it suffices to check whether the machine is symmetrical, which can be decided
by ranging over all monoid elements (including idempotents) and  words of length at most $2^{\mh}$.

\begin{example} Let us show that the bimachine of Example \ref{ex:trian}
computing $\trian$ is not symmetrical.
Let $\uu \defined a, \ud \defined b$,
$m,n,m_1,n_1,m_2,n_2,p,e = \neutral$, $e_1 {\defined}\mu(\uu) = \neutral$
and $e_2 {\defined} \mu(\ud) = \neutral$. 
Then $\pro{mem_1 e_1 \cro{\uu} e_1 p e_2 \cro{\ud} e_2 n_2 en} = \pro{\neutral \cro{a} \neutral \cro{b} \neutral} = 1$
and  $\pro{mem_2 e_2 \cro{\ud} e_2 p e_1 \cro{\uu} e_1 n_1 en} = \pro{\neutral \cro{b} \neutral \cro{a} \neutral} = 0$.
Furthermore the equations of Definition \ref{def:symmetrical} hold,
thus $\trian$ is not computable by a $k$-blind bimachine.
\end{example}

Lemma \ref{lem:blind-symm} shows $\ref{it:unt} \Rightarrow \ref{it:trt}$ in Theorem \ref{theo:characterization}.
It allows to show that some function
cannot be computed by a $k$-marble transducer.
Its proof is technical; a coarse intuition is that 
a $1$-blind bimachine which makes a production
on $\uu$ when called from
$\ud$ cannot see the monoid element $m'$ between $\uu$
and $\ud$ (since $\ud$ is not marked, its position is "forgotten").

\begin{lemma} \label{lem:blind-symm} If $f$ is computable by a $k$-blind bimachine,
then $\trans$ is symmetrical.
\end{lemma}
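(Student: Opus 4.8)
The plan is to show the contrapositive: if $\trans$ is not symmetrical, then $f$ cannot be computed by any $k$-blind bimachine. So fix witnesses $m,n,m_1,n_1,m_2,n_2 \in M$, words $\uu,\ud \in A^+$ with $|\uu|,|\ud| \le 2^{\mh}$, idempotents $e_1 = \mu(\uu)$, $e_2 = \mu(\ud)$, $e = m_1 e_1 n_1 = m_2 e_2 n_2$, and (say) two elements $p, p' \in M$ satisfying the equations of Definition~\ref{def:symmetrical} for the first bullet but with $\pro{m e m_1 e_1 \cro{\uu} e_1 p e_2 \cro{\ud} e_2 n_2 e n} \ne \pro{m e m_1 e_1 \cro{\uu} e_1 p' e_2 \cro{\ud} e_2 n_2 e n}$ (the case where the discrepancy occurs between the two bullets, i.e.\ swapping $\uu$ and $\ud$, is handled the same way using the second family of bitypes). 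The strategy is to build, for each large parameter $N$, a concrete input word $w_N$ realizing the first bitype with many copies of $\uu$ and $\ud$ — roughly $w_N = v_0\, \uu^{N}\, v_1\, \ud^{N}\, v_2$ where $v_0,v_1,v_2$ are fixed finite words chosen so that $\mu(v_0) = m$, $\mu(v_1)$ realizes the ``middle'' context $p$ (composed appropriately with $e_1,e_2$), $\mu(v_2) = n$, and the idempotent padding is absorbed by $e_1,e_2,e$ being idempotent. The point of the equations in Definition~\ref{def:symmetrical} is exactly that all $N^2$ pairs (one copy of $\uu$, one copy of $\ud$) sit in the same bitype $m e m_1 e_1 \cro{\uu} e_1 p e_2 \cro{\ud} e_2 n_2 e n$, so the total contribution of these cross-pairs to $f(w_N)$ equals $N^2 \cdot \pro{m e m_1 e_1 \cro{\uu} e_1 p e_2 \cro{\ud} e_2 n_2 e n} + \mathcal{O}(N)$ (the $\mathcal{O}(N)$ absorbing calls within the same block, between a block and a fixed $v_i$, and calls on the $v_i$ themselves). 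Replacing $p$ by $p'$ via a different middle word $v_1'$ gives an input $w_N'$ with the same first-order data except a change of $N^2(\pro{\cdots p \cdots} - \pro{\cdots p' \cdots})$ in this count.

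Next I would exploit the ``blindness'' to argue that a $k$-blind bimachine cannot distinguish $w_N$ from $w_N'$ at the relevant order. The key structural fact is that a blind external function is applied to the whole input word $w$ with \emph{no mark} on the calling position; hence the production a submachine makes when called from a position in the $\uu^N$ block depends only on $w$ itself (and the state/monoid context of the main machine at that position), not on \emph{which} of the $N$ copies of $\uu$ triggered the call. Unrolling the $k$-level recursion, the total output of a $k$-blind bimachine on an input of the shape $v_0\, s_1^{N}\, v_1\, s_2^{N}\, v_2$ (with $s_1,s_2$ fixed, $\mu(s_1),\mu(s_2)$ idempotent) is, by a standard pumping/Ramsey argument on the finitely many monoid contexts, a polynomial in $N$ of degree $\le k{+}1$ whose coefficients are determined by the monoid images $\mu(v_0),\mu(s_1),\mu(v_1),\mu(s_2),\mu(v_2)$ and finitely much bounded data about $v_0,v_1,v_2$ — but crucially \emph{not} by a monoid element ``between'' $s_1^N$ and $s_2^N$ beyond what those images already encode, because the submachines never see a distinguished position. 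One formalizes this by observing that $f(v_0 s_1^N v_1 s_2^N v_2)$ agrees, for all large $N$ and all $v_1$ with a fixed value of $\mu(s_2^\omega v_1 s_1^\omega)$ or the appropriate combination appearing in the equations, with a single polynomial $P(N)$. Since $w_N$ and $w_N'$ are exactly of this form with the \emph{same} values of all these monoid parameters (this is what the three equations in the first bullet of Definition~\ref{def:symmetrical} guarantee: they say precisely $m_1 e_1 p e_2 n_2 = e$, $e m_1 e_1 p e_2 = e m_2 e_2$, $e_1 p e_2 n_2 e = e_1 n_1 e$, and the same with $p'$, so $\mu(w_N)$ and $\mu(w_N')$ and all sub-contexts coincide), a $k$-blind bimachine must output the same value $P(N)$ on both.

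Finally I would derive the contradiction by comparing with the actual value of $f$. Using the bitype decomposition of the $1$-marble bimachine $\trans$ — sum over all ordered pairs of positions $(i,j)$ of the contribution $\Phi(i,j)$, plus the diagonal/one-position terms — we get $f(w_N) - f(w_N') = N^2 \big(\pro{m e m_1 e_1 \cro{\uu} e_1 p e_2 \cro{\ud} e_2 n_2 e n} - \pro{m e m_1 e_1 \cro{\uu} e_1 p' e_2 \cro{\ud} e_2 n_2 e n}\big) + \mathcal{O}(N)$, which is a nonzero quadratic in $N$; in particular $f(w_N) \ne f(w_N')$ for all large $N$. But we just argued a $k$-blind bimachine gives equal outputs on $w_N$ and $w_N'$, so $f$ is not computed by any $k$-blind bimachine, hence (by Proposition~\ref{lem:pebble-bim}) by no $k$-blind transducer. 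The main obstacle is the second paragraph: making rigorous the claim that a $k$-blind bimachine's output on the family $v_0 s_1^N v_1 s_2^N v_2$ is insensitive to the ``separating'' monoid element beyond the fixed sub-contexts. This requires a careful induction on $k$, at each level replacing the external submachine (an $\SST$, via Theorem~\ref{theo:pebble:sst}, or directly a $(k{-}1)$-blind bimachine) by its effect as a polynomial in $N$ and checking that the monoid bookkeeping of the outer bimachine — states reached at block boundaries, number of times each block-context is entered — is a function only of the images already fixed. The idempotency hypotheses on $e_1,e_2,e$ and the bound $|\uu|,|\ud|\le 2^{\mh} = 2^{3|M|}$ (the Simon/factorization-forest bound of Theorem~\ref{theo:simon}) are what let this Ramsey-style stabilization go through uniformly, and matching them against the three defining equations of symmetry is the delicate combinatorial core.
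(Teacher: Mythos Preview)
Your central claim in the second paragraph is false, and the gap is not merely technical. You assert that a $k$-blind bimachine must give the same output on $w_N = v_0\,\uu^N v_1\,\ud^N v_2$ and $w_N' = v_0\,\uu^N v_1'\,\ud^N v_2$ because ``$\mu(w_N)$ and $\mu(w_N')$ and all sub-contexts coincide''. But $\mu$ is the morphism of the $1$-marble bimachine $\trans$; the hypothetical $k$-blind bimachine has its \emph{own} morphism $\nu:A^*\to N$, and the three equations of Definition~\ref{def:symmetrical} say nothing about $\nu(v_1)$ versus $\nu(v_1')$. A $k$-blind bimachine can certainly distinguish $w_N$ from $w_N'$: already at level $0$ the output function $\lambda(\nu(\text{left}),a,\nu(\text{right}))$ at any position in the $\uu^N$ block sees $\nu(v_1)$ in its right context, and for $k\ge 1$ every external call $\exte(w_N)$ is evaluated on the whole word, which physically contains $v_1$. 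So the difference $f(w_N)-f(w_N')$ computed on the $k$-blind side is not zero, and in general its leading term in $N$ is of the same order as the $N^2$ term you extract from $\trans$, so no contradiction arises. You cannot repair this by choosing $v_1,v_1'$ with $\nu(v_1)=\nu(v_1')$, because $\nu$ depends on the unknown $k$-blind bimachine while $v_1,v_1'$ must realise the prescribed $\mu$-images $p,p'$.

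The paper's proof avoids comparing two different inputs altogether. It builds a \emph{single} word family
\[
W(X,Y,X',Y') \;=\; w^{2\omega-1}\bigl(\alpha\,\uu^{\omega X}\beta\,\ud^{\omega Y}\gamma\bigr)\,w^{\omega-1}\bigl(\alpha\,\uu^{\omega X'}\beta\,\ud^{\omega Y'}\gamma\bigr)\,w^{\omega}
\]
in which \emph{both} bitypes $\Phi$ (across $\beta$) and $\Phi'$ (across the $w^{\omega-1}$ separator) occur simultaneously. The key property of $k$-blind bimachines used (the paper's Lemma~\ref{lem:forbid}) is that on this family the output is a polynomial in $X+X'$ and $Y+Y'$ only---this is the precise sense in which blindness forces a symmetry, and it holds for \emph{whatever} morphism $\nu$ the blind machine uses, because the two parenthesised blocks are genuinely indistinguishable once the outer $w^\omega$ padding stabilises the contexts. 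Matching this against the $1$-marble computation (Lemma~\ref{lem:prod-iter}), the coefficients of $T_1T_2$ and $T_1T_2'$ must agree, which forces $\pro{\Phi}=\pro{\Phi'}$. The swap of $\uu$ and $\ud$ (your ``second family'') is handled by the same device, now identifying the coefficients of $T_1T_2$ and $T_2T_1'$. The essential idea you are missing is that one must place the two competing bitypes inside \emph{one} input and exploit a parameter symmetry, rather than compare two inputs that any sufficiently fine $k$-blind machine can tell apart.
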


Since $\ref{it:det} \Rightarrow \ref{it:unt}$ in Theorem \ref{theo:characterization} is obvious, it
remains to show that if $\trans$ is symmetrical, then
$f$ is effectively computable by a $1$-blind bimachine.
This is the goal of the two following subsections.
The main tool for the proof is the notion of factorization forest:
using \mbox{Lemma \ref{lem:nodes-prod}}, it allows us to compute the function $f$
without directly referring to a machine.

\subsection{Factorization forests}

Recall that $\mu:A^+ \rightarrow M$ is a fixed monoid morphism.
A \emph{factorization forest} \cite{bojanczyk2009factorization}
of $w \in A^+$ is an unranked tree structure which decomposes $w$
following the image of its factors by $\mu$.

\begin{definition}[\cite{simon1990factorization,bojanczyk2009factorization}] \label{def:facto}
A \emph{factorization (forest)} of $w \in A^+$ is a tree defined as follows:
\item
\begin{itemize}
\item if $w = a \in A$, it is a leaf $a$;
\item if $|w| \ge 2$, then $(\forest_1) \cdots (\forest_n)$ is a factorization of
$w$ if each $\forest_i$ is a factorization of some $w_i \in A^+$
such that $w = w_1 \cdots w_n$, and either:
\item[]
\begin{itemize}
\item $n=2$: the root is a \emph{binary node};
\item or $n \ge 3$ and $\mu(w_1) = \dots = \mu(w_n)$ is an idempotent:
the root is an \emph{idempotent node}.
\end{itemize}
\end{itemize}
\end{definition}

The set of factorizations over $w$
is denoted $\facto{\mu}{w}$. Recall that $\mh = 3|M|$.

\begin{theorem}[\cite{simon1990factorization,bojanczyk2009factorization}] \label{theo:simon}
For all $w \in A^+$, there is $\forest \in \facto{\mu}{w}$ of height at most $\mh$.
\end{theorem}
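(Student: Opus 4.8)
The statement to prove is Theorem \ref{theo:simon}, the Factorization Forest Theorem of Simon (in the form stated by Bojańczyk): for every $w \in A^+$ there is a factorization forest $\forest \in \facto{\mu}{w}$ of height at most $\mh = 3|M|$. This is a classical result, so my plan would be to reproduce the standard proof, which proceeds by induction on the structure of the monoid $M$ via its ideal structure, or more cleanly, by an argument tracking the subsemigroup generated by the factor images.

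The plan is as follows. First I would reduce to working with the semigroup $S = \mu(A^+) \subseteq M$ (the image is a subsemigroup, so without loss of generality we can assume $\mu$ is surjective onto a semigroup $S$, and bound the height by $3|S| \le 3|M|$). The proof then goes by induction, and the cleanest modern packaging is: prove by induction on $|S|$ that any word whose letters all map into a subsemigroup of size $\le s$ admits a forest of height $\le 3s$. Actually the sharper inductive statement that makes the constant come out right is to induct on the $\mathcal{J}$-class structure — or, following Bojańczyk's exposition, to use the following lemma: if all factors considered have image in a fixed $\mathcal{R}$-class (or $\mathcal{L}$-class) then one can build a forest of height $\le 3$ over a "flat" decomposition, and combining the $\mathcal{R}$- and $\mathcal{L}$- directions with the strict descent in $\mathcal{J}$-order gives a factor of $3$ per $\mathcal{J}$-class. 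Let me instead describe the version I would actually write.

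The concrete steps I would carry out. (1) Set up the induction on the size of the subsemigroup $T$ of $M$ generated by $\{\mu(a) : a$ occurs in $w\}$; the base case $|T| = 1$ means $\mu$ is constant on letters of $w$ with idempotent value, so $w$ itself is one idempotent node over leaves (height $\le 1 \le 3$). (2) For the inductive step, pick an element $e$; if $\mu(w) = e \cdot e'$ factors nontrivially one can split — but the real engine is: consider the minimal $\mathcal{J}$-class $J$ among images of suffixes/prefixes, use Green's relations to show that between two consecutive positions where a prefix lands in $J$ the "jump" is controlled by multiplication within a single $\mathcal{R}$-class, hence by an idempotent structure; build the local forests of height $\le 2$ using the Ramsey-type / idempotent-pumping argument ($\mathcal{H}$-classes containing an idempotent are groups, and within a group one folds things with one idempotent node of height $1$ plus a binary node), then have one more binary/idempotent level combining the blocks, and recurse on the strictly smaller semigroup obtained by restricting to the part strictly above $J$. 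The "$3$" arises as: at most $1$ for the $\mathcal{L}$-direction, $1$ for the $\mathcal{R}$-direction, $1$ for the $\mathcal{H}$/group part, per level of $\mathcal{J}$-descent, and there are at most $|M|$ $\mathcal{J}$-classes. (3) Conclude by observing the recursion depth is bounded by the number of $\mathcal{J}$-classes, giving total height $\le 3|M| = \mh$.

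The main obstacle — and the reason this theorem is famous rather than trivial — is getting the constant down to a \emph{linear} function of $|M|$ (here $3|M|$) rather than something exponential: a naive "pick an idempotent, split, recurse" strategy loses a factor at every split and gives a tower. The delicate part is the combinatorial lemma that, within a single $\mathcal{J}$-class, an arbitrarily long product can be folded into a forest of \emph{bounded} (height $\le 3$, or $\le 2$ depending on bookkeeping) height using idempotent nodes — this needs the structure theory of $\mathcal{J}$-classes (egg-box picture, the fact that $\mathcal{H}$-classes with idempotents are groups, and Green's lemma relating $\mathcal{R}$-equivalent elements by right multiplication). Since the excerpt explicitly cites \cite{simon1990factorization,bojanczyk2009factorization} for this theorem, in the actual paper I would simply invoke it as a black box rather than reprove it; the proof sketch above is what one would write if a self-contained argument were required, and I would expect the paper to just cite it.

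\begin{proof}
This is the Factorization Forest Theorem; see \cite{simon1990factorization} for the original statement and \cite{bojanczyk2009factorization} for the bound $\mh = 3|M|$ with the definition of factorization forests used here. We only sketch the idea. One argues by induction on the size of the subsemigroup $T \subseteq M$ generated by the images $\mu(a)$ of the letters occurring in $w$. If $|T| = 1$ then all letters have the same image, which is idempotent, and $w$ is a single idempotent node over leaves, of height at most $1$. Otherwise one fixes a $\le_{\mathcal{J}}$-minimal class $J$ among the images of prefixes of $w$; using Green's relations, the portions of $w$ lying "inside" $J$ can be grouped into blocks each admitting a forest of height at most $3$ built with idempotent nodes (this uses that $\mathcal{H}$-classes containing an idempotent are groups, and Green's lemma), while the remaining structure is handled by the induction hypothesis applied to the strictly smaller semigroup obtained above $J$. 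Since there are at most $|M|$ such classes and each contributes at most $3$ to the height, the resulting forest has height at most $3|M| = \mh$.
\end{proof}
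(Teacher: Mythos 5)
Your proposal correctly identifies that this is Simon's Factorization Forest Theorem (with Boja\'nczyk's bound $\mh = 3|M|$), cited in the paper without proof as a classical black-box result, and your conclusion to simply invoke \cite{simon1990factorization,bojanczyk2009factorization} matches the paper exactly. The sketch you add of the Green's-relations / $\mathcal{J}$-class-induction argument is an accurate summary of the standard proof, but it is supplementary: the paper itself states the theorem with citations and no proof.
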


Let $\apar{A} \defined A \uplus \{(,)\}$.
We have defined $\facto{\mu}{w}$ as a set of tree structures,
but we can assume that $\facto{\mu}{w} \subseteq \apar{A}^+$.
Indeed, in Definition \ref{def:facto}, 
a factorization of $w$ can also be seen as "the word $w$ with parentheses".
There exists a rational function which computes factorizations,
under this formalism.
We reformulate this statement in Proposition \ref{prop:effective-factorizations}
using a two-way transducer (which, exceptionally
in this paper, has a non-unary output alphabet $\apar{A}$).

\begin{proposition}[Folklore] \label{prop:effective-factorizations}
One can build a two-way transducer which computes a 
function $A^+ \rightarrow \apar{A}^+, w \mapsto \forest \in \facto{\mu}{w}$
for some $\forest$ of height at most $\mh$.
\end{proposition}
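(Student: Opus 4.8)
The plan is to show how a deterministic two-way transducer can compute, on input $w \in A^+$, a bracketed word encoding a factorization forest of $w$ of height at most $\mh$, by implementing on the machine the standard greedy/recursive construction underlying Simon's theorem (Theorem~\ref{theo:simon}). The construction proceeds top-down: to factor a factor $u$ of $w$, look at $\mu(u)$; if $|u|=1$ output the leaf; otherwise try to write $u$ as a concatenation of maximally many pieces of a common idempotent image (giving an idempotent node), and if no such nontrivial decomposition exists, split $u$ into two pieces (a binary node) chosen so that the recursion depth stays bounded. Since Theorem~\ref{theo:simon} guarantees a forest of height $\le \mh = 3|M|$ exists, it suffices to make the machine follow a canonical such strategy — e.g.\ at each node always peel off the \emph{longest} prefix whose $\mu$-image can still be "completed" appropriately, which is the choice used in the usual inductive proof of Simon's bound.

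First I would make precise the recursion the transducer simulates. For a factor $u = w[\ell{:}r]$, the machine needs two pieces of information: the monoid element $\mu(w[\ell{:}r])$ and, for locating splitting points, the ability to evaluate $\mu$ on sub-factors. Both are handled by equipping the two-way transducer with a \emph{lookaround} (regular lookahead/lookbehind), which the excerpt already uses freely (cf.\ the proof sketch of Proposition~\ref{lem:pebble-bim}): a lookaround can tell the machine, at any position $i$ and for any fixed pair $(a,b)\in M\times M$, whether $\mu(w[\ell{:}i]) = a$ and $\mu(w[i{+}1{:}r]) = b$ for the endpoints $\ell,r$ currently under consideration. Then I would describe the control: a stack-free recursive traversal is implemented by the two-way motion itself — the machine writes "$($", descends into the first child by moving its head, and when it has finished a child it scans rightward/leftward to find the boundary of the next sibling (again using lookaround to recompute the relevant $\mu$-images) and either continues or writes "$)$" and returns to the parent level. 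Because the forest height is bounded by the constant $\mh$, the machine only needs finitely many "levels" of bookkeeping, which it can store in its state (a bounded stack of node-descriptions, each an element of $M$ together with a flag binary/idempotent and a counter bounded by... — actually the number of children of an idempotent node is unbounded, so instead the state stores, per level, only the idempotent $e$ and the two monoid elements needed to detect the current child's right boundary, which is finite information).

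The output correctness argument then splits into two checks. \textbf{(i)} The bracketed word produced is a syntactically well-formed factorization in the sense of Definition~\ref{def:facto}: leaves are single letters, binary nodes have exactly two children, and idempotent nodes have $\ge 3$ children all of the same idempotent $\mu$-image — this follows directly from the branching rules of the simulated recursion. \textbf{(ii)} Its height is $\le \mh$: this is exactly the content of Theorem~\ref{theo:simon} combined with the observation that the canonical greedy strategy the machine follows is one witnessing the $3|M|$ bound (so I would either quote that the greedy strategy achieves the bound, or, more safely, have the machine follow the specific recursion from the textbook proof of Simon's theorem and invoke that proof's bound verbatim). Finally, the transducer halts on $\rmark$ after closing all open parentheses, so it defines a total function $A^+ \to \apar{A}^+$ of the required form.

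\textbf{Main obstacle.} The delicate point is implementing the traversal of an idempotent node with an \emph{unbounded} number of children using only finite state and two-way motion: after finishing child $i$ the machine must find where child $i{+}1$ ends, i.e.\ the longest remaining prefix (of the current sibling-suffix) with $\mu$-image equal to the node's idempotent $e$ — and it must do so without remembering $i$. I expect to resolve this by observing that "the right endpoint of the next child" is a regular property of the current head position relative to the (fixed-by-state) left and right boundaries of the node and the element $e$, hence detectable by lookaround; and that recomputing the node's left boundary on the way back up is likewise a lookaround query, since that boundary is determined by the bounded information stored for the enclosing levels. Making this bookkeeping fit into finitely many states — i.e.\ verifying that at each of the $\le\mh$ active levels only a bounded amount of data (a few elements of $M$ and a few flags, but no unbounded counter) is needed — is the part that requires care, and is where I would spend the detailed argument; everything else is routine two-way-transducer engineering.
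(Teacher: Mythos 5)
Since the paper labels this "Folklore" and gives no proof, the question is only whether your argument stands on its own. It does not: there are two linked gaps.

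First, the lookaround claim is wrong as stated. A lookaround at position $i$ gives you regular properties of $w[1{:}i]$ and $w[i{+}1{:}|w|]$, i.e.\ the values $\mu(w[1{:}i])$ and $\mu(w[i{+}1{:}|w|])$; it cannot in general tell you $\mu(w[\ell{:}i])$ for an interior endpoint $\ell$, because finite monoids need not be cancellative. Your whole traversal argument rests on the machine being able to evaluate $\mu$ on the factor delimited by "the endpoints currently under consideration," and that is exactly the thing lookaround does not provide. Second, and relatedly, the greedy rule "peel off the longest remaining prefix with $\mu$-image $e$" is not a correct way to enumerate the children of an idempotent node: the longest such prefix may overshoot a legitimate child boundary and leave a suffix that no longer factors into $e$-blocks (again, no cancellation). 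So the "canonical strategy" the transducer is supposed to follow has not actually been exhibited, and the part you defer — that per level only "a few elements of $M$ and flags, no unbounded counter" are needed — is precisely where the real theorem lives, not where routine engineering lives: without storing (or being able to re-derive) the boundaries $\ell,r$ of each active node, the machine cannot know when to close a parenthesis or where to scan for the next split point.

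The standard way to fill both gaps is Colcombet's deterministic/forward strengthening of Simon's theorem ("On factorisation forests and some applications," 2007): it constructs, by a single left-to-right pass, a canonical forest of height $\le 3|M|$ in which the depth of each leaf, and the depth of the least common ancestor of two consecutive leaves, are regular properties of the marked word $\omar{w}{i}$. Since the forest height is bounded by $\mh$, the block of parentheses emitted between positions $i$ and $i{+}1$ has length $\le 2\mh$ and is determined by those two regular quantities, so a bimachine (and hence, by Proposition~\ref{lem:pebble-bim}, a two-way transducer) can output $\forest$ directly, without simulating a top-down recursion and without ever needing to evaluate $\mu$ on interior factors. If you want to salvage the top-down traversal picture, you would still need to import Colcombet's insight that each node boundary is itself the image of the current position under a regular function — that is exactly what reduces the "bounded stack of node descriptions" to genuinely finite information — and at that point you may as well cite the result.
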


We denote by $\nodes{\forest}$ the set of (idempotent or binary) nodes of $\forest$.
In order to simplify the statements, we identify a node with
the subtree rooted in this node. Thus $\nodes{\forest}$ can also
be seen as the set of subtrees of $\forest$, and $\forest \in \nodes{\forest}$.
We shall use the standard tree vocabulary of "height" (a leaf is a tree of
height $1$), "parent node", "descendant" and "ancestor"
(defined in a non-strict way: a node is itself one of its ancestors),
"branch", etc.

\begin{example} \label{ex:factorization} Let $A = \{a,b,c\}$,
$M {=} \{\neutral,\deutral,3_M\}$ with $\deutral^2 {=} \neutral$, $3_M$ absorbing,
$\mu(a) {\defined} \deutral$
and $\mu(b) {\defined} \mu(c) {\defined} 3_M$.
Then
$\forest \defined (aa)(bc(a(cbbcb))b) \in \facto{\mu}{aabcaccbcbcb}$
(we dropped the parens around single letters for more readability)
is depicted in Figure \ref{fig:iterable}.
Idempotent nodes are drawn
using a horizontal line.
\end{example}

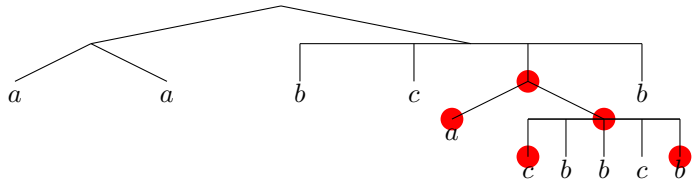
\begin{figure}[h!]

\centering
\begin{tikzpicture}{scale=1}

	\newcommand{\couleur}{blue}
	\newcommand{\texte}{\small \bfseries \sffamily \mathversion{bold} }

	\fill[fill=red]  (4.75,8.5)  circle (0.15);
	\fill[fill=red]  (3.75,8)  circle (0.15);
	\fill[fill=red]  (5.75,8)  circle (0.15);
	\fill[fill=red]  (4.75,7.5)  circle (0.15);
	\fill[fill=red]  (6.75,7.5)  circle (0.15);
		
	\draw (1.5,9.5) -- (-1,9);
	\draw (1.5,9.5) -- (4,9);
	\draw (1.75,9) -- (6.25,9);
	\draw (6.25,8.5) -- (6.25,9);
	\node[above] at (6.25,8.1) {$b$};
	\draw (4.75,8.5) -- (4.75,9);

	\draw (3.25,8.5) -- (3.25,9);
	\node[above] at (3.25,8.1) {$c$};
	\draw (1.75,8.5) -- (1.75,9);
	\node[above] at (1.75,8.1) {$b$};

	\draw (-1,9) -- (-2,8.5);
	\node[above] at (-2,8.1) {$a$};
	\draw (-1,9) -- (0,8.5);
	\node[above] at (0,8.1) {$a$};
	
	\draw (4.75,8.5) -- (3.75,8);
	\node[above] at (3.75,7.6) {$a$};	
	\draw (4.75,8.5) -- (5.75,8);

	\draw (4.75,8) -- (6.75,8);
	\draw (4.75,8) -- (6.75,8);
	
	\draw (6.75,7.5) -- (6.75,8);
	\node[above] at (6.75,7.1)  {$b$};	
	\draw (6.25,7.5) -- (6.25,8);
	\node[above] at (6.25,7.1)  {$c$};	
	\draw (5.75,7.5) -- (5.75,8);
	\node[above] at (5.75,7.1)  {$b$};	
	\draw (5.25,7.5) -- (5.25,8);
	\node[above] at (5.25,7.1) {$b$};	
	\draw (4.75,7.5) -- (4.75,8);
	\node[above] at  (4.75,7.1)  {$c$};

\end{tikzpicture}

\caption{\label{fig:iterable}  The factorization $(aa)(bc(a(cbbcb))b)$ of $aabcacbbcbb$}

\end{figure}

We define $\itera{\forest} \subseteq \nodes{\forest}$
as the set of nodes which are the middle child 
of an idempotent node. Intuitively, such nodes
can be copied without modifying their "context".

\begin{definition} Let $\forest \in \facto{\mu}{w}$, we define the set of iterable nodes
of $\forest$ by induction:
\item
\begin{itemize}
\item if $\forest = a\in A$ is a leaf, $\itera{\forest} \defined \varnothing$;
\item if $\forest = (\forest_1) \cdots (\forest_n)$ is a binary or idempotent node, then:
\begin{equation*}
\itera{\forest} \defined \{\forest_i: 2 \le i \le n{-}1\}  \biguplus_{1 \le i \le n} \itera{\forest_i}.
\end{equation*}
\end{itemize}
\end{definition}

On the contrary, we now define sets of nodes which
cannot be duplicated individually.

\begin{definition} Let $\forest \in \facto{\mu}{w}$, we define
the \emph{dependency} of $\forest$ as follows:
\item
\begin{itemize}
\item if $\forest = a\in A$ is a leaf, then $\dep{\forest} \defined \{ a \}$;
\item if $\forest = (\forest_1) \cdots (\forest_n)$ is binary or idempotent, then
$\dep{\forest} \defined \{\forest\} \cup \dep{\forest_1} \cup \dep{\forest_n}$.
\end{itemize}
\end{definition}

Intuitively, $\dep{\forest} \subseteq \nodes{\forest}$ contains all the nodes of $\forest$
except those which are descendant of a middle child.
If $\nodi \in \nodes{\forest}$, we consider
$\dep{\nodi} \subseteq \nodes{\nodi}$ as a subset of $\nodes{\forest}$.
We then define the frontier of $\nodi$, denoted
$\fr{\nodi}{\forest} \subseteq \{1, \dots, |w|\}$
as the set of positions of $w$ which belong to $\dep{\nodi}$
(when seen as leaves of $\forest$).

\begin{example} In Figure \ref{fig:iterable},
the top-most red node $\nodi$ is iterable.
Furthermore $\dep{\nodi}$ is the set of red nodes,
$\fr{\nodi}{\forest} = \{5,6,10\}$
and $w[\fr{\nodi}{\forest} ] = acb$.
\end{example}

The relationship between iterable nodes and dependencies is detailed below.
We denote by $\parti{\forest} \defined \itera{\forest} \uplus  \{\forest\} $,
the set of iterable nodes plus the root.

\begin{lemma} \label{lem:partition} Let $\forest \in \facto{\mu}{w}$, then
$ \{\dep{\nodi}: \nodi \in \parti{\forest}\}$ is a partition of $\nodes{\forest}$;
and $ \{\fr{\nodi}{\forest}: \nodi \in \parti{\forest}\}$
is a partition of $\{1, \dots, |w|\}$.
\end{lemma}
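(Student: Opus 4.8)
\textbf{Proof plan for Lemma \ref{lem:partition}.}
The plan is to prove both statements simultaneously by induction on the structure of $\forest$, and then observe that the second statement follows from the first by applying the map "a node of $\forest$ to the set of positions (leaves) in its dependency". The key point is that $\fr{\nodi}{\forest}$ is by definition the set of positions that occur in $\dep{\nodi}$ when we think of positions as leaves; so if $\{\dep{\nodi} : \nodi \in \parti{\forest}\}$ partitions $\nodes{\forest}$, then in particular it partitions the subset of leaves of $\forest$, which is exactly $\{1,\dots,|w|\}$. Hence the bulk of the work is the first claim.

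For the first claim I would argue as follows. If $\forest = a \in A$ is a leaf, then $\itera{\forest} = \varnothing$, so $\parti{\forest} = \{\forest\}$, and $\dep{\forest} = \{a\} = \nodes{\forest}$; the partition is trivial. If $\forest = (\forest_1)\cdots(\forest_n)$ is a binary or idempotent node, then by definition $\nodes{\forest} = \{\forest\} \uplus \biguplus_{1 \le i \le n} \nodes{\forest_i}$, and also $\parti{\forest} = \{\forest\} \uplus \{\forest_i : 2 \le i \le n{-}1\} \uplus \biguplus_{1 \le i \le n} \itera{\forest_i}$. I would split the analysis by which "block" a node $\nodi \in \parti{\forest}$ lies in:
\begin{itemize}
\item For $\nodi = \forest$ (the root), $\dep{\forest} = \{\forest\} \cup \dep{\forest_1} \cup \dep{\forest_n}$. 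By the induction hypothesis applied to $\forest_1$ and $\forest_n$, the set $\dep{\forest_1}$ is one block of the partition of $\nodes{\forest_1}$ (namely $\dep{\nodi}$ for $\nodi$ the root of $\forest_1$), and similarly for $\dep{\forest_n}$.
\item For $\nodi = \forest_i$ with $2 \le i \le n{-}1$, the subtree $\forest_i$ is itself a middle child, hence an iterable node of $\forest$, and $\dep{\forest_i}$ is again the root-block of the partition of $\nodes{\forest_i}$ by the induction hypothesis.
\item For $\nodi \in \itera{\forest_i}$ for some $i$, the block $\dep{\nodi}$ is one of the non-root blocks of the partition of $\nodes{\forest_i}$ given by the induction hypothesis.
\end{itemize}
Combining: for each fixed $i$, the blocks $\{\dep{\nodi} : \nodi \in \itera{\forest_i}\}$ together with the root-block $\dep{\forest_i}$ form exactly the partition of $\nodes{\forest_i}$ (by the induction hypothesis for $\forest_i$, since $\parti{\forest_i} = \itera{\forest_i} \uplus \{\forest_i\}$). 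For $i \in \{2,\dots,n{-}1\}$ the root-block $\dep{\forest_i}$ is contributed directly; for $i \in \{1,n\}$ the root-block $\dep{\forest_i}$ is instead absorbed into $\dep{\forest}$ rather than appearing on its own; and these two absorbed blocks, plus the extra singleton $\{\forest\}$, assemble to give exactly $\dep{\forest}$. Every node of $\nodes{\forest}$ is thereby covered exactly once, which is the desired partition.

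\textbf{Main obstacle.} The routine part is the disjointness bookkeeping, but the step that needs genuine care is verifying that $\dep{\forest} = \{\forest\} \cup \dep{\forest_1} \cup \dep{\forest_n}$ is a \emph{disjoint} union of the three pieces $\{\forest\}$, $\dep{\forest_1}$, $\dep{\forest_n}$, and that it is disjoint from every block coming from the $\forest_i$ with $2 \le i \le n{-}1$. Disjointness of $\dep{\forest_1}$ and $\dep{\forest_n}$ follows because $\nodes{\forest_1} \cap \nodes{\forest_n} = \varnothing$ (distinct children of a node have disjoint subtrees), and $\{\forest\}$ is disjoint from both since $\forest$ is strictly taller than any $\forest_i$; disjointness from the middle blocks is the same observation. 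One should also double-check the degenerate case $n = 2$, where there is no middle child, so $\itera{\forest}$ only receives contributions from the $\itera{\forest_i}$ and $\dep{\forest} = \{\forest\}\cup\dep{\forest_1}\cup\dep{\forest_2}$ consumes both children's root-blocks — the argument above still goes through verbatim. Once the first claim is established, the second is immediate as explained, since the frontier map just restricts each block to its leaves and the leaves of $\forest$ are in bijection with $\{1,\dots,|w|\}$.
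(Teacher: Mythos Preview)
Your proposal is correct and follows essentially the same approach as the paper: a structural induction on $\forest$, using the induction hypothesis on each $\forest_i$ to obtain a partition of $\nodes{\forest_i}$, then reassembling by absorbing the root-blocks $\dep{\forest_1}$ and $\dep{\forest_n}$ into $\dep{\forest}$ while leaving the middle children's root-blocks as standalone parts. The paper's proof is more terse (a single chain of equalities with $\uplus$) whereas you spell out the disjointness and the $n=2$ case more explicitly, but the argument is the same.
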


We define $\pro{i, j}$ in $w$ as "the production performed
in $i$ when called from $j$".

\begin{definition} Let $w \in A^+$ and $1 \le i \le j \le |w|$ two positions
of $w$. We define $\pro{i, j} \in \Nat$ as $\lambda_{\exte_{j}}(\mu(w[1{:}{i}{-}1]), w[i], \mu(w[i{+1}{:}j]))$,
where $\exte_{j} \defined \lambda(\mu(w[1{:}{j}{-}1]), w[j],\mu(w[j{+1}{:}|w|]))$.
\end{definition}

We extend this definition to pairs of nodes:
given $\nodi, \nodj \in \nodes{\forest}$, we define $\pro{\nodi, \nodj}$
 "the sum of all productions performed in the frontier of $\nodi$,
when called from the frontier of $\nodj$" as follows
(we have to ensure that the calling positions are "on the right").

\begin{definition} \label{def:prodnodes} Let $w \in A^+$,
$\forest \in \facto{\mu}{w}$ and $\nodi, \nodj \in \nodes{\forest}$. We define:
\begin{equation*}
 \pro{\nodi,\nodj} \defined \sum_{\substack{i \in \fr{\nodi}{\forest} \\ j \in \fr{\nodj}{\forest} \\ {i \le j}}}
\pro{i,j} \in \Nat.
\end{equation*}
\end{definition}

If $\nodi$ is an ancestor of $\nodj$ (or the converse)
then $\fr{\nodi}{\forest}$ and $\fr{\nodj}{\forest}$ are interleaved, hence
we can have both $\pro{\nodi, \nodj} \neq 0$ and $\pro{\nodj, \nodi} \neq 0$.
However, if $\nodi$ and $\nodj$ are not on the same branch,
we have either $\pro{\nodi, \nodj} = 0$ or $\pro{\nodj, \nodi} = 0$.

Applying Lemma \ref{lem:partition}, it is not hard to compute
$f(w)$ using the $\pro{\nodi, \nodj}$.

\begin{lemma} \label{lem:nodes-prod}
Let $w \in A^+$, $\forest \in \facto{\mu}{w}$. Then:
\begin{equation*}
f(w) =  \sum_{\nodi ,\nodj \in \parti{\forest}} \pro{\nodi,\nodj}.
\end{equation*}
\end{lemma}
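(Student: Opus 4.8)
The plan is to unfold the definition of $f(w)$ and reorganize the double sum according to the partition of positions provided by Lemma \ref{lem:partition}. Recall that the $1$-marble bimachine $\trans$ defines $f(w) = \sum_{1 \le i \le |w|} \exte_i(w[1{:}i])$, and that $\exte_i(w[1{:}i]) = \sum_{1 \le j \le i} \pro{j,i}$ by the very definition of how the external bimachine $\trans_{\exte_i}$ evaluates on the prefix $w[1{:}i]$ (each position $j \le i$ of that prefix contributes $\pro{j,i}$). Hence $f(w) = \sum_{1 \le j \le i \le |w|} \pro{j,i}$, i.e.\ the sum of $\pro{j,i}$ ranges over all pairs of positions with $j \le i$. (One has to be a little careful about the convention used in $\pro{i,j}$ — it is written so that the first coordinate is the "produced" position and the second is the "calling" one, and the calling position sits weakly to the right, matching the prefix semantics of marbles.)

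The second step is purely combinatorial. By Lemma \ref{lem:partition}, $\{\fr{\nodi}{\forest} : \nodi \in \parti{\forest}\}$ is a partition of $\{1,\dots,|w|\}$. Therefore every pair $(j,i)$ of positions with $j \le i$ lies in exactly one product block $\fr{\nodi}{\forest} \times \fr{\nodj}{\forest}$ with $j \in \fr{\nodi}{\forest}$ and $i \in \fr{\nodj}{\forest}$, for a unique pair $(\nodi,\nodj) \in \parti{\forest}^2$ (possibly $\nodi = \nodj$). Grouping the terms of $\sum_{j \le i} \pro{j,i}$ according to this block decomposition gives
\begin{equation*}
f(w) = \sum_{\nodi,\nodj \in \parti{\forest}} \ \sum_{\substack{j \in \fr{\nodi}{\forest},\ i \in \fr{\nodj}{\forest} \\ j \le i}} \pro{j,i} = \sum_{\nodi,\nodj \in \parti{\forest}} \pro{\nodi,\nodj},
\end{equation*}
where the last equality is exactly Definition \ref{def:prodnodes}. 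This finishes the proof.

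I expect the only real subtlety — not an obstacle so much as a bookkeeping point — to be matching up the orientation conventions: verifying that the condition "$j \le i$" appearing in the marble semantics of $f$ (calling position to the right of the produced position, since marbles see a prefix) is precisely the condition "$i \le j$" built into Definition \ref{def:prodnodes}, and that $\pro{i,j}$ as defined from $\lambda_{\exte_j}$ agrees with the single-position contribution of the external bimachine $\trans_{\exte_i}$ on the prefix $w[1{:}i]$. Once these conventions are pinned down, both steps are routine: the first is just unfolding two nested definitions, and the second is the trivial fact that summing over a set equals summing block-by-block over a partition of that set. No use of symmetry, idempotents, or the value of $\mh$ is needed here; the lemma holds for an arbitrary factorization forest $\forest$ and will later be combined with a forest of bounded height (Theorem \ref{theo:simon}).
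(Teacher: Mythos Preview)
Your proposal is correct and follows essentially the same route as the paper: unfold $f(w)$ as the double sum $\sum_{1 \le i \le j \le |w|} \pro{i,j}$, then regroup it block-by-block using the partition of positions from Lemma~\ref{lem:partition} to recover $\sum_{\nodi,\nodj \in \parti{\forest}} \pro{\nodi,\nodj}$ via Definition~\ref{def:prodnodes}. Your discussion of the orientation conventions (produced position versus calling position) is accurate, and once those are fixed the argument is exactly the paper's.
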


\subsection{Typology of pairs of nodes}

We intend to compute (if possible) $f$
using a $1$-blind transducer. Following Lemma \ref{lem:nodes-prod},
it is enough to consider the productions performed on the pairs of nodes
of a factorization.
For this study, we split the pairs depending on their relative position in the tree.

\subparagraph*{Pairs separated by the frontier of the root.}
The frontier of the root $\fr{\forest}{\forest}$ plays a very specific
role with respect to blind transducers. Indeed,
over factorizations of height at most $\mh$, the size
of the frontier is bounded, hence it
splits the word in a bounded number of
distinguishable "blocks". Formally, we define the
notion of \emph{basis}.

\begin{definition} An idempotent node is a 
\emph{basis} if it belongs to the dependency of the root.
\end{definition}

The following result is shown by induction.

\begin{lemma} \label{lem:dep-root} Let $w \in A^+$ and $\forest \in \facto{\mu}{w}$.
Given $\nodi \in \itera{\forest}$, there exists a unique basis,
 denoted $\basis{\nodi}{\forest}$, such that $\nodi$ is the 
descendant of a middle child of $\basis{\nodi}{\forest}$.
\end{lemma}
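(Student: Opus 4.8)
The plan is to prove Lemma \ref{lem:dep-root} by structural induction on the factorization $\forest$, showing simultaneously existence and uniqueness of the basis $\basis{\nodi}{\forest}$ for every iterable node $\nodi \in \itera{\forest}$. Recall that a \emph{basis} is by definition an idempotent node lying in $\dep{\forest}$, and that $\dep{\forest}$ consists precisely of the root together with, recursively, the dependency of the first and last children of each binary or idempotent node (equivalently, all nodes not strictly below a middle child of some idempotent node). The key observation to exploit is the recursive structure of $\itera{\cdot}$: when $\forest = (\forest_1)\cdots(\forest_n)$ is a binary or idempotent node, an iterable node $\nodi$ is either (a) one of the \emph{direct} middle children $\forest_2, \dots, \forest_{n-1}$ (this can only happen when $n \ge 3$, i.e.\ when $\forest$ is itself an idempotent node), or (b) an iterable node inside some $\forest_i$, i.e.\ $\nodi \in \itera{\forest_i}$.

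First I would set up the base case: if $\forest = a$ is a leaf, then $\itera{\forest} = \varnothing$, so the statement is vacuously true. Then for the inductive step with $\forest = (\forest_1)\cdots(\forest_n)$, I split along the dichotomy above. In case (a), where $\nodi = \forest_j$ for some $2 \le j \le n-1$ and $\forest$ is an idempotent node: then $\forest$ itself is an idempotent node belonging to $\dep{\forest}$ (the root is always in its own dependency), hence $\forest$ is a basis, and $\nodi$ is a middle child of it; I would take $\basis{\nodi}{\forest} \defined \forest$. In case (b), where $\nodi \in \itera{\forest_i}$: by induction there is a unique basis $\nodb$ of $\forest_i$ with $\nodi$ a descendant of a middle child of $\nodb$. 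If $i \in \{1, n\}$, then $\dep{\forest_i} \subseteq \dep{\forest}$, so $\nodb$ is also a basis of $\forest$ and I would take $\basis{\nodi}{\forest} \defined \nodb$. If instead $1 < i < n$ (so again $\forest$ is idempotent and $\forest_i$ is a direct middle child), then $\nodi$ is a descendant of the middle child $\forest_i$ of the idempotent node $\forest \in \dep{\forest}$, so again $\basis{\nodi}{\forest} \defined \forest$ works.

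The main work, and the place where care is needed, is \textbf{uniqueness}. The point is that the bases lying on the branch from the root down to $\nodi$ are linearly ordered (they all lie on one branch, being ancestors of $\nodi$), and I must show exactly one of them has $\nodi$ below one of its \emph{middle} children. Concretely: walk down the unique branch from $\forest$ to $\nodi$; a node on this branch fails to be a basis exactly when, at some earlier step, the branch descended through a middle child of an idempotent node. Let $\nodb$ be the \emph{last} (deepest) basis on this branch — it exists since the root is a basis, and every deeper node on the branch that is still a basis would have to be reached by always going through first/last children. Then the next step of the branch below $\nodb$ must go into a middle child of $\nodb$ (if it went into the first or last child, that child would still be in $\dep{\forest}$ hence a basis, contradicting deepestness; and $\nodb$ must be an idempotent node with $n\ge 3$ children for a middle child to exist — which is forced because otherwise the branch's next node is again a basis). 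Hence $\nodi$ is a descendant of a middle child of $\nodb$. Conversely, any basis $\nodb'$ with $\nodi$ below a middle child of $\nodb'$ is an ancestor of $\nodi$, so lies on this branch; and below $\nodb'$ the branch enters a middle child, so every node of the branch strictly below $\nodb'$ is \emph{not} in $\dep{\forest}$ (being a descendant of a middle child) and thus not a basis — forcing $\nodb' = \nodb$. This pins down $\basis{\nodi}{\forest}$ uniquely. The subtle bookkeeping is exactly in matching the two recursive definitions of $\dep{\cdot}$ and $\itera{\cdot}$ against the "branch" picture; once that correspondence is made precise, the induction goes through mechanically, so I would state it via the branch argument above rather than unfolding the induction twice for uniqueness.
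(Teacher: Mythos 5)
Your existence argument is fine and essentially mirrors the paper's structural induction, but your uniqueness argument has a genuine gap: it repeatedly conflates the notion of a \emph{basis} (which is, by definition, an \emph{idempotent} node lying in $\dep{\forest}$) with the weaker notion of simply being a node of $\dep{\forest}$. Concretely, a node of $\dep{\forest}$ may be a binary node, in which case it is not a basis. This produces two false assertions in your branch argument. First, "it exists since the root is a basis" is wrong in general: the root is always in $\dep{\forest}$, but it need not be an idempotent node. For instance, take $\forest = (\forest_1)(\forest_2)$ a binary node whose left child $\forest_1 = (a)(a)(a)$ is idempotent; the middle leaf of $\forest_1$ is iterable, yet the root is a binary node and therefore not a basis (here $\forest_1$ is the basis). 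Second, "if it went into the first or last child, that child would still be in $\dep{\forest}$ hence a basis" does not follow, for the same reason; and the parenthetical attempt to recover ("which is forced because otherwise the branch's next node is again a basis") is circular, since the fact that $\nodb$ is idempotent is supposed to be a consequence of, not an assumption behind, the branch structure.

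The fix is straightforward and keeps your overall plan intact: define $\nodb$ to be the \emph{deepest node of $\dep{\forest}$} on the branch from the root to $\nodi$, not the deepest basis. This exists because the root is in $\dep{\forest}$. Since $\nodi \notin \dep{\forest}$ (by Lemma \ref{lem:partition}, $\nodi \in \dep{\nodi}$ which is disjoint from $\dep{\forest}$), the node $\nodb$ is a strict ancestor of $\nodi$. The next step of the branch below $\nodb$ cannot go into the first or last child of $\nodb$, for such a child would lie in $\dep{\nodb} \subseteq \dep{\forest}$, contradicting maximality of depth; hence it goes into a middle child, which forces $\nodb$ to be an idempotent node and therefore a basis. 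Uniqueness then follows as you sketched: any competing basis $\nodb'$ lies on the branch and in $\dep{\forest}$, so $\nodb'$ is at or above $\nodb$; but every node strictly below $\nodb'$ on the branch lies under a middle child of $\nodb'$ and so is outside $\dep{\forest}$, forcing $\nodb' = \nodb$. For what it is worth, the paper proves both existence and uniqueness by a single structural induction on $\forest$, splitting on whether $\nodi$ sits in the leftmost/rightmost child (where the induction hypothesis on $\forest_1$ or $\forest_n$ is transferred via $\dep{\forest_i} \subseteq \dep{\forest}$) or in a middle child (where $\forest$ itself is the unique basis); your branch-walking reformulation of uniqueness is a reasonable alternative once the $\dep{\forest}$-versus-basis distinction is handled correctly.
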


\begin{definition} Given $w \in A^+$ and $\forest \in \facto{\mu}{w}$, we define
$D(\forest) \subseteq \parti{\forest} {\times} \parti{\forest}$ by:
\begin{equation*}
D(\forest) \defined \{(\nodi, \nodj): \nodi, \nodj \in \itera{\forest} \text{ and } \basis{\nodi}{\forest} \neq \basis{\nodj}{\forest}\}.
\end{equation*}
\end{definition}

Intuitively $\basis{\nodi}{\forest} \neq \basis{\nodj}{\forest}$ means that
$\fr{\nodi}{\forest}$ and $\fr{\nodj}{\forest}$
belong to two different "blocks" of the input. Lemma \ref{lem:distinguishable} is shown
by building a $1$-blind bimachine which visits successively each basis $\nodb$,
and for each iterable $\nodj$ such that $\basis{\nodj}{\forest} = \nodb$, calls a submachine which visits
the $\nodi$ such that  $\basis{\nodi}{\forest} \neq \nodb$ and produces $\pro{\nodi, \nodj}$.
The key element for doing this operation without pebbles 
is that the number of bases is bounded.

\begin{lemma} \label{lem:distinguishable} One can build a $1$-blind bimachine computing:
\begin{equation*}
f_D: (\apar{A})^+ \rightarrow \Nat, \forest \mapsto
\left\{
    \begin{array}{l}
        \displaystyle \sum_{(\nodi,\nodj) \in D(\forest)} \pro{\nodi,\nodj} \text{ if } \forest \text{ factorization of height at most } \mh; \\
        0 \text{ otherwise.}\\
    \end{array}
\right.
\end{equation*}

\end{lemma}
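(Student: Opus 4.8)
The plan is to construct the $1$-blind bimachine for $f_D$ explicitly, exploiting the fact that the frontier of the root has bounded size. First I would invoke Proposition \ref{prop:effective-factorizations} to reduce to working directly on factorizations: the input is a word $\forest \in \apar{A}^+$, and we may assume a bimachine can check (via its morphism into a suitable finite monoid) whether $\forest$ encodes a valid factorization of height at most $\mh$; if not, output $0$. So fix such a $\forest$, a factorization of some $w \in A^+$. By Theorem \ref{theo:simon} the height is at most $\mh = 3|M|$, and since $\dep{\forest}$ excludes all descendants of middle children, $\fr{\forest}{\forest}$ is a set of positions of bounded cardinality — bounded by a constant $N$ depending only on $|A|$ and $|M|$ (each node on a $\dep$-branch contributes at most its two extreme children, and the height is bounded). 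Consequently the number of bases is bounded by $N$ as well, and the word $w$ is cut by the frontier positions of the root into at most $N+1$ "blocks", each block being exactly the union of the frontiers $\fr{\nodi}{\forest}$ for the iterable nodes $\nodi$ hanging (via a middle child) off a single basis $\nodb$.

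Next I would describe the bimachine. The main $1$-blind bimachine scans $\forest$ and, reading each position of the (bounded-size) frontier of the root, it identifies which basis $\nodb$ this position belongs to; more precisely, for each pair of distinct bases $(\nodb, \nodb')$ it wants to accumulate $\sum \pro{\nodi, \nodj}$ over iterable $\nodi$ with $\basis{\nodi}{\forest} = \nodb$ and iterable $\nodj$ with $\basis{\nodj}{\forest} = \nodb'$ and $i \le j$. Since there are at most $\binom{N}{2}$ such ordered pairs of bases, and each basis is pinpointed by a bounded-size set of positions, the main bimachine can in $\mc{O}(1)$ many "calls" hand control to a submachine, parametrized by the identity of the two bases $\nodb, \nodb'$ (encoded in the finite control, using the positions of the root frontier as anchors, read off via the morphism $\mu$ and the regular structure of the factorization). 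The submachine is a $0$-blind bimachine (an ordinary bimachine with no external functions): given the whole word and knowing, from the regular information available at each position, which basis a given iterable node descends from, it sums $\pro{i,j}$ over the relevant pairs $i \le j$. Crucially, because $\basis{\nodi}{\forest} \neq \basis{\nodj}{\forest}$, the frontiers $\fr{\nodi}{\forest}$ and $\fr{\nodj}{\forest}$ lie in distinct root-blocks and are therefore \emph{not interleaved}: one entire block precedes the other, so "$i \le j$" is determined by which block each lies in, and we are genuinely summing $\pro{i,j}$ over a Cartesian product of two positionally-separated regular sets — exactly the kind of quantity a plain bimachine computes (the value $\pro{i,j}$ is $\lambda_{\exte_j}$ applied to monoid abstractions of prefixes/infixes, all computable from $\mu$).

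The key point making this a \emph{blind} rather than \emph{pebble} construction is that the submachine never needs to mark the calling position $j$: since the root frontier has bounded size, there are only boundedly many bases, so the main machine can afford to iterate over all (basis, basis) pairs in its finite control and invoke the submachine once per pair, passing the pair's identity rather than a pebble. Within the submachine, membership of a position in the frontier of an iterable node with a prescribed basis is a regular property of the position together with its context — detectable by the bimachine's morphism on the factorization word — so no pebble is required there either. I would then verify, by Lemma \ref{lem:partition} (the $\fr{\nodi}{\forest}$ for $\nodi \in \parti{\forest}$ partition $\{1,\dots,|w|\}$) and the definition of $D(\forest)$, that summing the submachine outputs over all ordered basis-pairs yields exactly $\sum_{(\nodi,\nodj) \in D(\forest)} \pro{\nodi, \nodj}$. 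The main obstacle I anticipate is the bookkeeping that lets a bimachine, reading a letter of $\forest$, recover the identity of the basis of the iterable node whose frontier that letter belongs to, and compare it against the pair currently being processed — this requires threading through the morphism enough information about the path from the root (which bracket-depths and which idempotent nodes one has passed, and whether one is currently inside a middle child), but since the height is bounded by $\mh$ this information is finite and can be built into $M$ (replacing $\mu$ by a product with a suitable "position-in-tree" morphism on $\apar{A}^+$). Carefully setting up that enriched morphism, and checking that "$i \le j$" reduces to block order, are the routine-but-delicate steps; everything else follows from the bounded-frontier observation and Lemma \ref{lem:nodes-prod}.
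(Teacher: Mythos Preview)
Your high-level picture is right --- the root frontier has bounded size, so $w$ is cut into boundedly many blocks, and $(\nodi,\nodj)\in D(\forest)$ forces $\fr{\nodi}{\forest}$ and $\fr{\nodj}{\forest}$ into distinct blocks --- but the bimachine architecture you describe does not work. You have the main machine make only $\mc{O}(1)$ calls (``invoke the submachine once per pair''), each to a $0$-blind submachine that is supposed to compute $\sum_{i\in B,\,j\in B'}\pro{i,j}$ for the corresponding blocks $B,B'$. A $0$-blind bimachine outputs at most $\mc{O}(|\forest|)$, while that double sum is $\Theta(|B|\cdot|B'|)$, hence $\Theta(|w|^2)$ in general; so $\mc{O}(1)$ calls to $0$-blind submachines cannot produce $f_D$. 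Your sentence ``summing $\pro{i,j}$ over a Cartesian product of two positionally-separated regular sets --- exactly the kind of quantity a plain bimachine computes'' is the false step: a plain bimachine computes a \emph{single} sum $\sum_i g(i)$, not a double one.

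The paper's construction reverses your decomposition. The main bimachine ranges over \emph{all} positions $j$ of $w$ (so $\Theta(|w|)$ calls, not $\mc{O}(1)$), and at each such $j$ it calls an external function $\exte^{m\cro{a}n}_k$ whose name records not only $j$'s block index $k$ but also $j$'s local context $m=\mu(w[p_k{+}1{:}j{-}1])$, $a=w[j]$, $n=\mu(w[j{+}1{:}p_{k+1}{-}1])$ inside that block. These finitely many data are exactly what the submachine needs in order to reconstruct $\pro{i,j}$ for this specific $j$ while it iterates over $i$ in earlier blocks, producing the linear sum $\sum_i\pro{i,j}$. In other words, blindness works not merely because there are boundedly many basis pairs, but because the tuple $(k,m,a,n)$ is finite information that completely captures $j$'s role in the value $\pro{i,j}$ and can therefore be passed through the external function's name instead of a pebble.
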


\subparagraph*{Linked pairs.}
Let $U(\forest) \defined \parti{\forest} {\times} \parti{\forest} \smallsetminus D(\forest)$,
it corresponds to the pairs of $\itera{\forest}$ which have the same basis,
plus all the pairs $(\forest, \nodi)$ and $(\nodi, \forest)$ for $\nodi \in \parti{\forest}$.
We now study the pairs of $U(\forest)$ which are 
"linked", in the sense that one node is (nearly) the
ancestor of the other.

\begin{definition} Let $w \in A^+$, $\forest \in \facto{\mu}{w}$. Let $L(\forest)$ be the
set of all $(\nodi, \nodj)\in U(\forest)$ such that $\nodi$ (or $\nodj)$
is either the ancestor of, or the right/left sibling of an ancestor
of $\nodj$ (or $\nodi$).
\end{definition}

In particular, we have
$(\forest, \forest), (\nodi, \forest), (\forest, \nodi) , (\nodi, \nodi) \in L(F)$ for all $\nodi \in \parti{\forest}$.
If $\forest$ has height at most $\mh$, there are at most $3 \mh$
nodes which are either an ancestor or the right/left sibling
of an ancestor of $\nodi$. Lemma \ref{lem:linked} follows from this boundedness.

\begin{lemma} \label{lem:linked} One can build a $0$-blind bimachine computing:
\begin{equation*}
f_L: (\apar{A})^+ \rightarrow \Nat, \forest \mapsto
\left\{
    \begin{array}{l}
        \displaystyle \sum_{(\nodi,\nodj) \in L(\forest)} \pro{\nodi,\nodj} \text{ if } \forest \text{ factorization of height at most } \mh; \\
        0 \text{ otherwise.}\\
    \end{array}
\right.
\end{equation*}

\end{lemma}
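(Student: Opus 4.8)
The plan is to build a $0$-blind bimachine (that is, an ordinary bimachine, possibly with a non-trivial morphism into a finite monoid) that reads a word $\forest \in \apar{A}^+$, first checks whether it is a genuine factorization of height at most $\mh$ of the underlying word $w = \pi(\forest)$ (the projection erasing parentheses), and if so computes $\sum_{(\nodi,\nodj)\in L(\forest)}\pro{\nodi,\nodj}$. The checking part is routine: by Theorem~\ref{theo:simon} and Proposition~\ref{prop:effective-factorizations} the set of such $\forest$ is a regular language of $\apar{A}^+$, recognizable via the morphism $\mu$ together with a bracket-balancing/height check, so we fold it into $M' = M \times (\text{bracket-height bookkeeping bounded by }\mh)$. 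All the work is then in the production term.

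The key observation is that the bimachine scans the positions of $\forest$, but it is convenient to think of it as scanning the positions $i \in \{1,\dots,|w|\}$ of $w$ (the leaves); at leaf $i$ it must emit $\sum_{j}\pro{i,j}$ where $j$ ranges over those positions such that the pair of nodes $(\nodi,\nodj)$ "containing" $i$ and $j$ in the sense of Lemma~\ref{lem:partition} — namely $\nodi \in \parti{\forest}$ with $i \in \fr{\nodi}{\forest}$ and $\nodj \in \parti{\forest}$ with $j \in \fr{\nodj}{\forest}$ — forms a linked pair $(\nodi,\nodj)\in L(\forest)$, and (by the constraint in Definition~\ref{def:prodnodes}) $i\le j$ or $j \le i$ appropriately. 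The crucial finiteness fact, stated just before the lemma, is that a node $\nodi$ of a height-$\le\mh$ factorization has at most $3\mh$ relatives that are ancestors or left/right siblings of ancestors; hence for a fixed leaf $i$, the nodes $\nodj$ linked to $\nodi$ form a bounded-size set, each of whose frontier has bounded size (a frontier is contained in the dependency of the node, which near the root is small, but in general can be large — so I should be careful: what is actually bounded is the number of \emph{linked partner nodes} $\nodj$, not the size of each $\fr{\nodj}{\forest}$). The point is rather that whether a position $j$ lies in a linked partner's frontier is determined, as the head sits at position $i$, by the bounded-height "spine" of nodes above $i$: reading the prefix $\forest[1{:}\text{(position of }i)]$ the bimachine can maintain, in its finite state/the left monoid element, the sequence of (at most $\mh$) bracket-opening contexts above the current leaf together with $\mu$-images of the already-seen portions of each; symmetrically the right monoid element encodes the suffix. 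From this bounded data, for each of the finitely many shapes of a linked pair, the sum over $j$ in the relevant frontier segment of $\pro{i,j}$ is a finite sum of values $\lambda_{\exte}(\cdot)$ whose arguments are $\mu$-images computable from the left context, the current letter $w[i]$, and the right context — hence a function of $(m,w[i],m')$ only, i.e. exactly what a bimachine output function is allowed to depend on. Summing over the boundedly many pair-shapes gives the output at position $i$.

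Concretely I would: (1) fix the recognizer monoid $M'$ as above so that "height-$\le\mh$ factorization" is a state property; (2) for $\forest$ a valid factorization and a leaf position, define from the left/right $M'$-contexts a bounded "profile" recording, for each depth $d \le \mh$, which kind of node (binary, idempotent; which child: first/middle/last) the ancestor at depth $d$ is, whether it is a basis, and the $\mu$-images of the left and right parts of that ancestor relative to the leaf — all finite information; (3) from Lemma~\ref{lem:partition} read off which $\nodi \in \parti{\forest}$ contains the current leaf, and from the profile enumerate the boundedly many $\nodj$ that make $(\nodi,\nodj) \in L(\forest)$ (ancestor or sibling-of-ancestor of $\nodi$, or $\nodi$ ancestor-etc.\ of $\nodj$), including the degenerate ones with $\forest$ or $\nodi$ itself; (4) for each such $\nodj$ and each $i\le j$ (or $j\le i$) constraint, evaluate the contribution $\sum \pro{i,j}$ as a fixed polynomial-free finite expression in the monoid elements of the profile and $w[i]$, using the definition of $\pro{i,j}$ and of $\lambda,\lambda_{\exte}$; (5) output the total as $\lambda'(m,w[i],m')$. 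Each leaf is visited once, no external calls are made, so this is a $0$-blind bimachine; for non-factorizations the recognizer rejects and we output $0$.

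The main obstacle I expect is step (4), handling the summation over an entire frontier segment with only bounded state: $\fr{\nodj}{\forest}$ can be long, so "$\sum_{j \in \fr{\nodj}{\forest},\ i \le j}\pro{i,j}$" is not a bounded sum of individual terms. The resolution is that, holding $i$ fixed, the quantity $\pro{i,j} = \lambda_{\exte_j}(\mu(w[1{:}i{-}1]),w[i],\mu(w[i{+}1{:}j]))$ only depends on $j$ through the triple of monoid elements $\bigl(\mu(w[1{:}j{-}1]),\,w[j],\,\mu(w[j{+}1{:}|w|])\bigr)$ defining $\exte_j$ together with $\mu(w[i{+}1{:}j])$; as $j$ ranges over a frontier segment whose endpoints are determined by the spine above $i$, these data factor through $\mu$-images of boundedly many word-blocks delimited by that spine, so the sum collapses to a bounded expression — essentially because along the frontier of a fixed node, $\mu$ takes only boundedly many "profiles" and the contribution is constant on each. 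Making this collapse precise, i.e.\ writing the frontier sum as a function of the finite profile, is the technical heart; once it is done, the bimachine construction is bookkeeping, and the boundedness of the number of linked partners (the $3\mh$ bound) is what guarantees the whole output stays a legal bimachine output.
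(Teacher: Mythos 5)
Your proposal takes a genuinely different route from the paper. The paper does \emph{not} directly build a $0$-blind bimachine: it first builds a $1$-\emph{pebble} bimachine (the main machine ranges over $i$ and calls a pebble-external function that, seeing where $i$ is marked, ranges over $j$ and tests whether the enclosing pair $(\nodi,\nodj)$ lies in $L(\forest)$), then proves $f_L(\forest)=\mathcal{O}(|\forest|)$ using that each node has at most $3\mh$ linked partners and each $\pro{\nodi,\nodj}$ is uniformly bounded, and finally invokes the pebble-minimization theorem of \cite{lhote2020pebble} to convert the $1$-pebble bimachine with linear output to a $0$-pebble (hence $0$-blind) one. The paper even footnotes that the direct construction you attempt ``is possible, but somehow more complex to explain.'' So your strategy is legitimate, but the paper's indirection through growth rate buys, essentially for free, exactly the bookkeeping you are signing up to do by hand.

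Within your sketch there is also a concrete misconception that misdirects the hard step. You assert that $\fr{\nodj}{\forest}$ ``can be long'' and that only the \emph{number} of linked partners is bounded, ``not the size of each $\fr{\nodj}{\forest}$.'' This is false: by Lemma~\ref{lem:sizef}, every node $\nodj$ of a height-$\le\mh$ factorization satisfies $|\dep{\nodj}|\le 2^{\mh}$ and hence $|\fr{\nodj}{\forest}|\le 2^{\mh}$, because the dependency keeps only the first and last child at each level, so it branches at most twice per level regardless of where the node sits. Consequently the sum over $j\in\fr{\nodj}{\forest}$ in your step~(4) is already a bounded sum, and the ``collapse along a long frontier'' argument you gesture at addresses a non-existent obstacle. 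The real difficulty of the direct route (which your write-up never isolates) is choosing the morphism on $\apar{A}^*$ so that the left/right contexts of a leaf position $\pi(i)$ jointly determine, for each of the $\mathcal{O}(1)$ positions $j$ in the frontiers of the $\mathcal{O}(1)$ linked partners, the values $w[j]$, $\mu(w[i{+}1{:}j])$, $\mu(w[1{:}j{-}1])$ and $\mu(w[j{+}1{:}|w|])$ — information carried by a bounded number of word-blocks delimited by the bounded-height spine above $i$. Once the frontier bound is used correctly this can be made to work, but as written your proof neither states the right bound nor carries out the morphism construction, so it does not yet establish the lemma.
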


\subparagraph*{Independent nodes.}
The remaining sum is the most interesting,
since it is the only case where we use the
assumption that $\trans$ to be symmetrical (and this assumption is crucial).
Let $\forest \in \facto{\mu}{w}$,
we define the set $I(\forest) \defined U(\forest) \smallsetminus L(\forest)$.
It contains the pairs $(\nodi,\nodj)$ of iterable nodes such that
$\basis{\nodi}{\forest} = \basis{\nodj}{\forest}$ (i.e. they descend
from a common "big" idempotent), and
$\nodi$ (or $\nodj$) is not an ancestor of $\nodj$ (or $\nodi$),
nor the left or right sibling of its ancestor.

\begin{lemma} \label{lem:independent} If $\trans$ is symmetrical,
one can build a $1$-blind bimachine computing:
\begin{equation*}
f_I: (\apar{A})^+ \rightarrow \Nat, \forest \mapsto
\left\{
    \begin{array}{l}
        \displaystyle \sum_{(\nodi,\nodj) \in I(\forest)} \pro{\nodi,\nodj} \text{ if } \forest \text{ factorization of height at most } \mh; \\
        0 \text{ otherwise.}\\
    \end{array}
\right.
\end{equation*}
\end{lemma}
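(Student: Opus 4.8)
The plan is to compute $f_I$ with a $1$-blind bimachine that runs over the factorization $\forest$ (seen as a word over $\apar{A}$), and whose external $0$-blind bimachine produces, for a well-chosen pair of iterable nodes, the value $\pro{\nodi,\nodj}$. The crucial point is that, by definition of $I(\forest)$, a pair $(\nodi,\nodj) \in I(\forest)$ lives inside a single "big" idempotent node $\nodb = \basis{\nodi}{\forest} = \basis{\nodj}{\forest}$, and $\nodi$, $\nodj$ are descendants of two \emph{distinct} middle children of $\nodb$. Since the height of $\forest$ is at most $\mh$ and $\mh = 3|M|$, all the relevant monoid data along the branches from $\nodb$ down to $\nodi$ and $\nodj$ is a bounded amount of information; moreover, because the children of $\nodb$ all have the same idempotent image $e$, the "context" of $\nodi$ inside $\forest$ — namely the triple $(m\mu(\uu)m', m', m'\mu(\ud)m'')$ in the notation of Definition \ref{def:symmetrical}, where $\uu = w[\fr{\nodi}{\forest}]$ and $\ud = w[\fr{\nodj}{\forest}]$ — can be read off from $\forest$ in a way that does \emph{not} require knowing how many children $\nodb$ has between $\nodi$'s child and $\nodj$'s child. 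This is exactly what symmetry buys us.

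First I would make the reduction from $\pro{\nodi,\nodj}$ to a $\pro{\cdot}$ of a bitype precise. Fix a pair $(\nodi,\nodj)\in I(\forest)$ with common basis $\nodb$, and let the two middle children of $\nodb$ above $\nodi$ and $\nodj$ be $\nodb_1$ and $\nodb_2$ respectively, appearing in that left-to-right order (the other order is symmetric, handled by the second bullet of Definition \ref{def:symmetrical}). Writing $\uu = w[\fr{\nodi}{\forest}]$, $\ud = w[\fr{\nodj}{\forest}]$, $e_1 = \mu(\uu)$, $e_2 = \mu(\ud)$, I claim $\pro{\nodi,\nodj}$ equals $\pro{\Phi}$ for the bitype $\Phi$ whose three monoid components record: the image of everything strictly to the left of $\fr{\nodi}{\forest}$, the image of everything strictly between $\fr{\nodi}{\forest}$ and $\fr{\nodj}{\forest}$, and the image of everything strictly to the right of $\fr{\nodj}{\forest}$. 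Here $|\uu|, |\ud| \le 2^{\mh}$ because $\fr{\cdot}{\forest}$ has size bounded by the (bounded) height, and the idempotency hypotheses of Definition \ref{def:symmetrical} are met by taking $e$ to be the common idempotent image of $\nodb$'s children (and $m_1 e_1 n_1 = m_2 e_2 n_2 = e$ comes from grouping $\nodb_1$ together with its left/right siblings inside $\nodb$). Symmetry then tells us $\pro{\Phi} = K$ depends only on the contexts, in particular only on $m$, $m''$, $m\mu(\uu)m'$, $m'\mu(\ud)m''$ — data the machine can reconstruct locally from $\forest$ — and \emph{not} on the separating element $m'$, which the blind submachine cannot observe because it does not see the calling position.

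Next I would build the machines. The outer $1$-blind bimachine scans $\forest$; at each position that is the opening parenthesis of an iterable node $\nodj$ (detectable by a bounded-memory bottom-up computation on the bracketed word $\forest$, using that iterable = middle child of an idempotent node and that the morphism on $\apar{A}^*$ recording heights/images is finite), it calls a $0$-blind submachine that, reading $\forest$ again, finds every iterable node $\nodi$ forming a pair in $I(\forest)$ with $\nodj$ and having $\fr{\nodi}{\forest}$ to the \emph{left} of $\fr{\nodj}{\forest}$, computes the relevant contexts by a finite-monoid pass, and outputs the value $K = \pro{\Phi}$ supplied by symmetry (this $K$ is a function of finitely many monoid elements, so it can be hard-coded in the output function $\lambda$). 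Summing over all such $\nodj$ and all matching $\nodi$ reconstructs $\sum_{(\nodi,\nodj)\in I(\forest)} \pro{\nodi,\nodj}$; the case where $\forest$ is not a height-$\le\mh$ factorization is detected by the regular lookaround and produces $0$.

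The main obstacle, and where I expect the real work to be, is the bookkeeping in the second paragraph: verifying that the partition of the frontier given by Lemma \ref{lem:partition} lets us decompose $\pro{\nodi,\nodj}$ exactly as $\pro{\Phi}$ for a bitype whose monoid data genuinely matches an instance of Definition \ref{def:symmetrical} — i.e.\ that the equations $m_1 e_1 p e_2 n_2 = e$, $e m_1 e_1 p e_2 = e m_2 e_2$, $e_1 p e_2 n_2 e = e_1 n_1 e$ are \emph{forced} by the tree structure around $\nodb$, so that symmetry applies and yields a \emph{constant} $K$ that the blind machine can output without knowing $m'$. The subtlety is that $\nodi$ and $\nodj$ need not be direct children of $\nodb$; they sit below middle children, so one must fold the intermediate (non-iterable, dependency) nodes into the $m_1, n_1, m_2, n_2$ correctly and check the idempotency/absorption identities survive this folding — this is precisely the "bounded height $\Rightarrow$ bounded data, plus idempotence" argument, but it requires care to match the exact shape of Definition \ref{def:symmetrical}.
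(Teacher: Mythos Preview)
Your high-level strategy is right, and matches the paper's: symmetry collapses $\pro{\nodi,\nodj}$ to a constant $K$ that depends only on bounded ``type'' data of $\nodi$ and $\nodj$, so computing $f_I$ reduces to \emph{counting} the pairs in $I(\forest)$ having each given pair of types. Your bitype analysis in the second paragraph is essentially the paper's Lemma relating $\pro{\nodi,\nodj}$ to $\pro{\bitype{\nodi,\nodj}}$, followed by matching that bitype against Definition~\ref{def:symmetrical}. (One inaccuracy: $\nodi$ and $\nodj$ need not be descendants of two \emph{distinct} middle children of $\nodb$; they may sit under the same one, and the paper's verification of the equations in Definition~\ref{def:symmetrical} works instead with the siblings of $\nodi$ and $\nodj$ themselves.)

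The genuine gap is in your third paragraph. You have the $0$-blind submachine, called from $\nodj$, ``find every iterable node $\nodi$ forming a pair in $I(\forest)$ with $\nodj$ and having $\fr{\nodi}{\forest}$ to the left of $\fr{\nodj}{\forest}$''. A blind submachine cannot do this: it does not see $\nodj$'s position, so it can test neither ``to the left of $\nodj$'' nor ``$\nodi$ is not linked to this particular $\nodj$''. The only information about $\nodj$ it can receive is a bounded label encoded in the choice of external function --- say $\nodj$'s type and the index of $\basis{\nodj}{\forest}$ among the boundedly many bases. With that, the submachine can enumerate all iterable nodes of a given type sharing that basis, but this set properly contains $\{\nodi:(\nodi,\nodj)\in I(\forest)\}$: it also contains the few nodes linked to $\nodj$. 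As written, your construction either overcounts (pairs already handled by $f_L$) or is simply not implementable blindly.

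The paper closes this gap with an inclusion--exclusion you are missing. Order the two types by depth, say $d_1\ge d_2$, and let the \emph{outer} machine iterate over the deeper node $\nodi$ (of type $\tau_1$). For each such $\nodi$, the set $A_\forest(\nodi)$ of linked $\nodj$ with type $\tau_2$ has size at most $3$ (the ancestor at depth $d_2$ and its two possible siblings). The outer machine computes $c=|A_\forest(\nodi)|\in\{0,1,2,3\}$ and the basis index, and calls an external function $\exte^{-c}_{\nodb}$; this function blindly outputs $\max(0,|J_\forest(\nodb)|-c)$, where $J_\forest(\nodb)$ is the set of \emph{all} iterable $\nodj$ of type $\tau_2$ with $\basis{\nodj}{\forest}=\nodb$ --- a quantity it can compute without knowing $\nodi$. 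This equals exactly $|\{\nodj:(\nodi,\nodj)\in I(\forest),\ \type{\nodj}{\forest}=\tau_2\}|$, which is the count you need.
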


\begin{proof}[Proof idea.]
We define $\type{\nodi}{\forest}$ for $\nodi \in \itera{\forest}$
as a bounded abstraction of $\nodi$ which describes
the frontier and the location of $\nodi$ in $\forest$
and in $\basis{\nodi}{\forest}$.
Using symmetry, we show that for $(\nodi, \nodj) \in I(\forest)$,
$\pro{\nodi, \nodj}$ only depends
on $\type{\nodi}{\forest}$ and $\type{\nodj}{\forest}$,
but not on their relative positions.
Hence we build a $1$-blind bimachine,
whose main bimachine ranges over all possible $\nodj$
and computes $\type{\nodj}{\forest}$,
and whose submachines range over all possible $\nodi$
(a special treatment has to be done to avoid $\nodi$ such that
$(\nodi,\nodj) \in L(\forest)$),
compute $\type{\nodi}{\forest}$ and output $\pro{\nodi, \nodj}$.
The submachines do not need to "see" $\nodj$.
\end{proof}

We finally show $\ref{it:trt} \Rightarrow \ref{it:det}$
in Theorem \ref{theo:characterization}.
Given $w \in A^+$ we first compute $\forest$ of height at most $\mh$
by Proposition \ref{prop:effective-factorizations}.
Then we use the machines from Lemmas
\ref{lem:distinguishable}, \ref{lem:linked} and \ref{lem:independent} 
and build a $1$-blind transducer computing the sum of their outputs.
The original function can be recovered since
$1$-blind transducers are closed under composition
with two-way \cite{nguyen2020comparison}.

\section{Conclusion and outlook}

As a conclusion, we discuss future work.
This paper introduces new proof techniques,
in particular the use of factorization forests
to study the productions of transducers.
We believe that these techniques
give a step towards other membership problems
concerning pebble transducers.
Among them, let us mention the membership problem from $k$-pebble to $k$-blind,
at first over unary alphabets.
Similarly, the membership  from $k$-pebble to $k$-marble
over non-unary alphabets is worth being studied
(the answer seems to rely on combinatorial properties of the output,
since unary outputs can always be produced using marbles).


\newpage

\bibliographystyle{plain}
\bibliography{up}

\newpage

\appendix

\section{Proof of Lemma \ref{lem:bimoversst}}

We show that given a bimachine with external pebble functions,
which are computed by $\SSTs$,
one can build an equivalent $\SST$.

\subsection{$\SST$ with lookaround}

We first define a variant of $\SST$ with the same
expressive power. Intuitively, this model
is similar to bimachines, in the sense
that the register update not only depends 
on the current letter, but also on a finite
abstraction of the prefix and suffix.

\begin{definition}An \emph{$\SST$ with lookaround} $\trans = (A,\regs,  M, \mu,I, \lambda, F)$ is:
\item
\begin{itemize}

\item an input alphabet $A$ and a finite set $\regs$ of registers;

\item a morphism into a finite monoid $\mu: A^* \rightarrow M$;

\item an initial row vector $I \in \Nat^{\regs}$;

\item a register update function $\lambda: M \times A \times M\rightarrow \Nat^{\regs \times \regs}$;

\item an output column vector $F \in \Nat^\regs$.
\end{itemize}
\end{definition}

Let us define its semantics. Intuitively, in position $i$ of $w \in A^+$,
we perform the register update $\lambda(\mu(w[1{:}i{-1}]),w[i],]\mu(w[i{+1}{:}|w|]))$.
Formally, for $0 \le i \le |w|$, we define $\trans^{w,i} \in \Nat^{\regs}$ ("the values of the registers after reading $w[1{:}i]$"\footnote{Due to the fact that $\lambda$ looks "on the right", $\trans^{w,i}$ depends on the whole $w$ and not only on $w[1{:}i]$.}) as follows:
\begin{itemize}

\item $\trans^{w,0} \defined I$;

\item for $i \ge 1$, $\trans^{w,i} \defined \trans^{w,i{-}1}\times \lambda(\mu(w[1{:}i{-1}]),w[i],]\mu(w[i{+1}{:}|w|]))$.

\end{itemize}

To define the function $f: A^+ \rightarrow \Nat$ computed by $\trans$,
we combine the final values by the output vector:
\begin{equation*}
f(w) \defined \trans^{w,|w|} F.
\end{equation*}

It is known that $\SST$ with lookaround are equivalent to $\SST$
(the proof is roughly a "determinisation" procedure for eliminating the rightmost argument of $\lambda$,
and an encoding of the monoid in the registers for eliminating the leftmost argument).

\begin{lemma}[\cite{filiot2017copyful}] \label{lem:stateless}
Given an $\SST$ with lookaround, we can build an equivalent $\SST$.
\end{lemma}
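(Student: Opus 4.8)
The plan is to prove Lemma~\ref{lem:stateless} by two successive eliminations: first remove the right lookaround (the rightmost argument of $\lambda$) by a determinization over suffix-types, then remove the left lookaround (the leftmost argument) by encoding the monoid element read so far inside the registers. Both steps preserve the computed function and each produces an $\SST$ with strictly fewer ``lookaround features'', so that after the second step we obtain a plain $\SST$ as in the original definition.

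\textbf{Step 1: eliminating the right lookaround.} Given $\trans=(A,\regs,M,\mu,I,\lambda,F)$, the value $\lambda(\mu(w[1{:}i{-}1]),w[i],\mu(w[i{+}1{:}|w|]))$ depends on the suffix only through $\mu(w[i{+}1{:}|w|])\in M$. I would build an $\SST$ with lookaround whose state component deterministically computes, at each position $i$, the element $\mu(w[i{:}|w|])$ --- but this is a \emph{co-deterministic} (right-to-left) piece of information, so instead I run the standard subset/type construction: the new machine processes $w$ left-to-right, and in its registers it maintains, for \emph{every} possible value $s\in M$ of the remaining suffix, a candidate register valuation $\trans^{w,i}_s$ computed under the assumption that $\mu(w[i{+}1{:}|w|])=s$. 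Concretely the register set becomes $\regs\times M$, the update on letter $a$ with left-context $m$ maps the block indexed by $s$ to the block indexed by $\mu(a)^{-1}$-preimages appropriately --- more precisely, the block that will be indexed by $s$ \emph{after} reading $a$ is obtained from the block indexed by $\mu(a)\cdot s$ \emph{before} reading $a$ via the matrix $\lambda(m,a,s)$. (Here $m=\mu(w[1{:}i{-}1])$ is still available as a lookaround argument, so this intermediate machine is an $\SST$ with \emph{left-only} lookaround.) At the end of the word the correct suffix is $s=\neutral$, so the output vector reads off the block indexed by $1_M$ and applies $F$. One checks by induction on $i$ that the block indexed by $\mu(w[i{+}1{:}|w|])$ in the new machine after reading $w[1{:}i]$ equals $\trans^{w,i}$, which gives correctness.

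\textbf{Step 2: eliminating the left lookaround.} Now I have an $\SST$ $\trans'=(A,\regs',M,\mu,I',\lambda',F')$ whose update $\lambda'(m,a)$ depends on $m=\mu(w[1{:}i{-}1])$ but not on the suffix. Since the machine is deterministic and reads left to right, it can simply \emph{carry} the current monoid element itself. I would augment the register set to $\regs'\uplus M$ (using the $M$-indexed registers as $\{0,1\}$-indicators of ``the prefix read so far has image $m$'') --- or, to stay within the stateless formalism, encode the monoid element as a bit pattern over fresh registers and use the fact that register updates may test which monoid-indicator register is currently $1$ only indirectly, by letting the update matrix on letter $a$ act as $\lambda'(m,a)$ on the $\regs'$-block \emph{in the unique summand where the $m$-indicator is $1$}. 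The cleanest way to phrase this is: the new register update $T(a)\in\Nat^{(\regs'\uplus M)\times(\regs'\uplus M)}$ is block-defined so that the $M$-indicators evolve by the right multiplication automaton of $\mu$ (a $0/1$ permutation-like matrix sending indicator $m$ to indicator $m\mu(a)$), while the $\regs'$-block is updated by $\sum_{m\in M} (\text{indicator }m)\cdot\lambda'(m,a)$; since exactly one indicator is $1$ at any time, this realizes the correct context-dependent update. Initialize the $m$-indicator for $1_M$ to $1$ and all others to $0$, keep $I'$ on the $\regs'$-block, and output $F'$ on the $\regs'$-block. A straightforward induction shows the $\regs'$-block after reading $w[1{:}i]$ equals $\trans'^{w,i}$, so the function is preserved; this final machine is a plain $\SST$, completing the proof.

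\textbf{The main obstacle} is making Step~2 fit the \emph{stateless} $\SST$ definition: a genuine $\SST$ update $T:A\to\Nat^{\regs\times\regs}$ depends only on the letter, so the dependence on $m$ must be smuggled entirely into the register valuations. The trick of keeping $0/1$-indicator registers for the monoid element works because the update ``$\sum_m (\text{indicator }m)\cdot\lambda'(m,a)$'' is still a single fixed nonnegative-integer matrix in the letter $a$ once one performs the bilinear expansion over the product register space --- i.e.\ one really works in $\Nat^{(\regs'\uplus M)\times(\regs'\uplus M)}$ and uses that multiplying the indicator row-entry by an $\regs'$-block entry is just one coordinate of the matrix product. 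I expect the bookkeeping of these block matrices (and verifying that the product of two such ``block-triangular in the indicator'' matrices is again of the same shape, so that $T$ extends to a morphism $A^*\to(\Nat^{\regs\times\regs},\times)$ correctly) to be the fiddly part, though it is entirely routine. Step~1 is comparatively standard, being the register analogue of the subset construction used to remove regular lookahead from two-way devices; the only subtlety is getting the direction of the suffix-indexing right (reindex \emph{before} applying the update matrix), which the induction invariant pins down. Everything here is effective, so the construction is algorithmic, as required.
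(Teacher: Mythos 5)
The paper does not actually prove this lemma: it cites \cite{filiot2017copyful} and gives precisely the one-line sketch you are following (determinize over suffix types to remove the right argument of $\lambda$, then encode the monoid element in the registers to remove the left argument). Your Step~1 is correct: keeping one block of registers per $s\in M$ holding the candidate valuation under the hypothesis $\mu(w[i{+}1{:}|w|])=s$, with the reindexing ``write block $s$ from the old block $\mu(a)\cdot s$ via the matrix $\lambda(m,a,s)$'', is the right determinization, and the invariant you state is exactly what makes it go through.

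Step~2 as written has a genuine gap. You take register set $\regs'\uplus M$, keep $0/1$ indicator registers for the monoid element, and update the $\regs'$-block by ``$\sum_{m}(\text{indicator }m)\cdot\lambda'(m,a)$''. But an $\SST$ update is a \emph{linear} map $v\mapsto v\,T(a)$: the coefficient of $v_x$ (for $x\in\regs'$) in each new register is a fixed entry of $T(a)$ and cannot depend on which indicator is $1$, while the indicator entries $v_m$ can only contribute an \emph{additive} term $T(a)_{m,\cdot}$, never the multiplicative \emph{selection} among the matrices $\lambda'(m,a)$ that you need. ``Multiplying an indicator entry by an $\regs'$-block entry'' is bilinear in $v$ and is not a coordinate of the matrix product $v\,T(a)$, so the claimed collapse to a single matrix in $\Nat^{(\regs'\uplus M)\times(\regs'\uplus M)}$ does not happen. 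The fix --- which is presumably what ``product register space'' ought to have meant --- is the \emph{product} register set $\regs'\times M$, where register $(x,m)$ holds the value of $x$ if $\mu(w[1{:}i])=m$ and $0$ otherwise. Then $T(a)_{(x,m),(y,m')}\defined\lambda'(m,a)_{xy}$ if $m\mu(a)=m'$ and $\defined 0$ otherwise is a single fixed matrix; the update is correct because at most one $m$-block is nonzero at any time; initialization places $I'$ in the $1_M$-block and $0$ elsewhere; and $F_{(x,m)}\defined F'_x$ recovers the output. With that one correction your two-step construction is sound and agrees with the cited reference.
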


Hence, it is sufficient to build an $\SST$ with lookaround.

\subsection{Main proof of Lemma \ref{lem:bimoversst}}

Let $\trans = (A, M, \mu, \lambda , \oras)$ be the bimachine with external pebble functions.
Each $\exte \in \oras$ is computed by an
$\SST$ $\trans_\exte \defined (A \uplus \marq{A}, \regs_\exte, I_\exte, T_\exte,F_\exte)$.

\begin{example}[Running example]
Let $\trans = (A, M, \mu, \lambda , \{\exte\})$ with $M = \{\neutral\}$ singleton
and $\lambda(\neutral,a,\neutral) = \exte$ for $a \in A$.
Let $\trans_{\exte}$ have two registers $x,y$ with $x$ initialized to $1$
and $y$ initialized to $0$. When reading $a \in A \uplus \marq{A} $ it performs $x \leftarrow x, y \leftarrow y+x$. Finally it outputs $y$.

Then $\exte(w) = |w|$ and $\trans$ computes $f: w \mapsto |w|^2$.
\end{example}

\begin{definition} Let $w \in A^*$ and $\exte \in \oras$.
We define $\calls(w,\exte)$ as the set of positions of $w$
in which $\exte$ is called, that is $\{ 1 \le j \le |w|:
\lambda(\mu(w[1{:}j{-}1]),w[j],\mu(w[j{+}1{:}|w|]) = \exte\}$.
\end{definition}

For $1 \le j \le i \le |w|$, $I_{\exte} T_{\exte}(\omar{w[1{:}i]}{j})$
corresponds to the value of the registers of $\trans_\exte$
in position $i$ when the call to $\exte$ is performed from position $j$.

\begin{claim} \label{claim:somsom}For all $w \in A^*$, the following holds:
\begin{equation*}
\begin{aligned}
f(w) = \sum_{\exte \in \oras} \sum_{\substack{1 \le j \le |w| \\ j \in \calls(w,\exte)}} I_{\exte} T_{\exte}(\omar{w}{j})F_\exte
\end{aligned}
\end{equation*}
\end{claim}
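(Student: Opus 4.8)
The plan is to unfold the definition of $f$ as given by the bimachine semantics and then group the summands according to which external function is called at each position. Recall that the bimachine $\trans = (A,M,\mu,\oras,\lambda)$ defines $f(w) = \sum_{1 \le j \le |w|} \exte_j(\omar{w}{j})$, where $\exte_j \defined \lambda(\mu(w[1{:}j{-}1]),w[j],\mu(w[j{+}1{:}|w|])) \in \oras$. First I would observe that the map $j \mapsto \exte_j$ partitions $\{1,\dots,|w|\}$ into the sets $\calls(w,\exte)$ for $\exte \in \oras$: by definition $j \in \calls(w,\exte)$ iff $\exte_j = \exte$, and since $\lambda$ is a total function into $\oras$, each $j$ lies in exactly one such set. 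Hence $f(w) = \sum_{\exte \in \oras} \sum_{j \in \calls(w,\exte)} \exte_j(\omar{w}{j}) = \sum_{\exte \in \oras} \sum_{j \in \calls(w,\exte)} \exte(\omar{w}{j})$, using that $\exte_j = \exte$ on that inner sum.

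The second and final ingredient is to rewrite $\exte(\omar{w}{j})$ in terms of the $\SST$ $\trans_\exte = (A \uplus \marq{A}, \regs_\exte, I_\exte, T_\exte, F_\exte)$ that computes it. By the defining equation of the function computed by an $\SST$, applied to the input word $\omar{w}{j} \in (A \uplus \marq{A})^+$, we have $\exte(\omar{w}{j}) = I_\exte\, T_\exte(\omar{w}{j})\, F_\exte$. Substituting this into the expression above yields exactly the claimed identity. So the proof is essentially a two-line chain of equalities: unfold the bimachine semantics and regroup by $\calls(w,\exte)$; then unfold the $\SST$ semantics of each $\exte$.

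I do not expect a genuine obstacle here — the statement is a bookkeeping reformulation that sets up the later construction (it re-expresses $f(w)$ as a sum, over external functions $\exte$ and over their call positions $j$, of $\SST$ register evaluations $I_\exte T_\exte(\omar{w}{j}) F_\exte$, which is the form needed to build the simultaneous $\SST$). The only point requiring a word of care is that $\calls(w,\exte)$ genuinely partitions the positions, which is immediate from $\lambda$ being total and single-valued; and that the edge case $w = \vide$ is trivial since both sides are the empty sum, consistent with treating $f$ as a function $A^+ \rightarrow \Nat$. If anything were to be "the hard part," it is only making sure the indexing $1 \le j \le |w|$ matches between the bimachine's defining sum and the set $\calls(w,\exte)$, but both range over exactly the positions of $w$, so they coincide.
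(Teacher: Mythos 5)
Your proof is correct and matches the paper's own argument step for step: unfold the bimachine semantics $f(w) = \sum_j \exte_j(\omar{w}{j})$, partition the sum by $\calls(w,\exte)$, and then substitute the $\SST$ evaluation $\exte(\omar{w}{j}) = I_\exte T_\exte(\omar{w}{j}) F_\exte$. There is nothing to add.
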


\begin{proof} By definition of external pebble functions we have:
\begin{equation*}
\begin{aligned}
f(w) \defined \sum_{1 \le j \le |w|} \exte_j(\omar{w}{j})
\end{aligned}
\end{equation*}

where $\exte_j \defined \lambda(\mu(w[1{:}j{-}1]),w[j],\mu(w[j{+}1{:}|w|]))$ is "the external function called in $j$".
Hence by partitioning the sum depending on the external functions it follows:
\begin{equation*}
\begin{aligned}
f(w) \defined \sum_{\exte \in \oras} \sum_{\substack{1 \le j \le |w| \\ j \in \calls(w,\exte)}} \exte(\omar{w}{j})
\end{aligned}
\end{equation*}

And finally we note that
 $\exte(\omar{w}{j})
 = I_{\exte} T_{\exte}(\omar{w}{j})F_{\exte}.$
\end{proof}

\subparagraph*{Idea of the construction.}
Let us fix an external function $\exte$.
Following Claim \ref{claim:somsom}, we want to
build an $\SST$ with lookaround $\utrans$ which computes the values of the vector:
\begin{equation*}
\begin{aligned}
\sum_{\substack{1 \le j \le |w| \\ j \in \calls(w,\exte)}} I_{\exte} T_{\exte}(\omar{w}{j}) \in \Nat^\regs.
\end{aligned}
\end{equation*}

For this, it  will keep track when in position $i$ of the values of:
\begin{equation*}
\begin{aligned}
\sum_{\substack{1 \le j \le i \\ j \in \calls(w,\exte)}} I_{\exte} T_{\exte}(\omar{w[1{:}i]}{j}) \in \Nat^{\regs}
\end{aligned}
\end{equation*}

and these values will be updated when going from $i$ to $i+1$.

\begin{example}[Running example]

We have $I_{\exte} T_{\exte}(\omar{w[1{:}i]}{j})(x) =1$ and  $I_{\exte} T_{\exte}(\omar{w[1{:}i]}{j})(y) = i$.

Hence $\displaystyle \sum_{\substack{1 \le j \le i \\ j \in \calls(w,\exte)}}I_{\exte} T_{\exte}(\omar{w[1{:}i]}{j})(x) = i$ and $\displaystyle \sum_{\substack{1 \le j \le i \\ j \in \calls(w,\exte)} }I_{\exte} T_{\exte}(\omar{w[1{:}i]}{j})(y) = i^2$.

\end{example}

\subparagraph*{Formal construction.}
Let $\utrans = (A,\regs,  M, \mu,I, \kappa, F)$ be an $\SST$
with lookaround with:
\begin{itemize}
\item the set $\regs \defined  \biguplus_{\exte \in \oras} \{\Som_x:x \in \regs_{\exte}\} \uplus \{\Old_x:x \in \regs_{\exte}\}$ of registers;

\item the morphism $\mu: A^* \rightarrow M$ used in $\trans$;

\item an initial column vector $I \in \Nat^{\regs}$ such that for all $\exte \in \oras$ and $x \in \regs_{\exte} $:
\begin{itemize}
\item $I(\Som_x) = 0$;
\item $I(\Old_x) = I_{\exte}(x)$;
\end{itemize}

\item the update $\kappa: M \times A \times M\rightarrow \Nat^{\regs \times \regs}$ as follows.
Let $(m,a,n) \in M \times A \times N$ and $\exte \defined \lambda(m,a,n)$.
Then $\kappa(m,a,n)$ performs the following updates:
\begin{itemize}

\item for all $\exteg \neq \exte$ and $x \in \regs_{\exteg}$:
\begin{itemize}
\item $\Som_x \leftarrow  \sum_{y \in \regs_\exteg} \alpha_y \Som_y$;
\item $\Old_x \leftarrow  \sum_{y \in \regs_\exteg} \alpha_y \Old_y$;
\end{itemize}

where $x \leftarrow  \sum_{y \in \regs_\exteg} \alpha_y y$ is the
update performed by $\trans_{\exteg}$ when reading $a$;

\item for all $x \in \regs_\exte$:
\begin{itemize}
\item $\Som_x \leftarrow \sum_{y \in \regs_\exte} \alpha_y \Som_y + \sum_{y \in \regs_\exte} \beta_y \Old_y$;
\item $\Old_x \leftarrow  \sum_{y \in \regs_\exte} \alpha_y \Old_y$;
\end{itemize}

where $x \leftarrow  \sum_{y \in \regs_\exte} \alpha_y y$ is the
update performed by $\trans_{\exte}$ when reading $a$;

and $x \leftarrow  \sum_{y \in \regs_\exte} \beta_y y$ is the
update performed by $\trans_{\exte}$ when reading $\marq{a}$.

Intuitively, the sum with the $\beta_y$
corresponds to what is "added" by the new call to $\exte$.
\end{itemize}

\item the output line vector $F \in \Nat^\regs$ such that for all $\exte \in \oras$ and $x \in \regs_{\exte} $:
\begin{itemize}
\item $F(\Som_x) = F_{\exte}(x)$;
\item $F(\Old_x) = 0$.
\end{itemize}

\end{itemize}

\begin{example}[Running example] $\utrans$ performs the following updates:
\begin{itemize}
\item $\Old_x \leftarrow \Old_x$, $\Old_y \leftarrow \Old_y + \Old_x$;
\item $\Som_x \leftarrow \Som_x + \Old_x$, $\Som_y \leftarrow \Som_y + \Som_x + \Old_y + \Old_x $.
\end{itemize}
We can check that  $\utrans^{w,i}(\Old_x) = 1$, $\utrans^{w,i}(\Old_y) = i$
and $\utrans^{w,i}(\Som_x) = i$, $\utrans^{w,i}(\Som_y) = i^2$.
\end{example}

\subparagraph*{Correctness of the construction.}

As the registers $\Old_x$ for $x \in \regs_{\exte}$ are updated following the updates
of $\trans_{\exte}$, it follows immediately that:

\begin{claim} \label{claim:old} Given $x \in \regs_{\exte}$, for all $w \in A^+$ and $1 \le i \le |w|$ we have:
\begin{equation*}
\begin{aligned}
\utrans^{w,i}(\Old_x) = I_{\exte} T_{\exte}(w[1{:}i])(x).
\end{aligned}
\end{equation*}
\end{claim}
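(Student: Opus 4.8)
The plan is to prove Claim \ref{claim:old} by a straightforward induction on $i$, the key observation being that the block of registers $(\Old_x)_{x \in \regs_\exte}$ evolves \emph{exactly} like the registers of $\trans_\exte$ reading the unmarked word $w[1{:}i]$, no matter which external function $\kappa$ triggers at each position.

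First I would record the base case, which it is convenient to state for $0 \le i \le |w|$: by definition of $I$ we have $\utrans^{w,0}(\Old_x) = I(\Old_x) = I_\exte(x)$, and since $T_\exte$ is a monoid morphism, $T_\exte(w[1{:}0]) = T_\exte(\vide)$ is the identity matrix, so $I_\exte T_\exte(\vide)(x) = I_\exte(x)$. Then, for the inductive step, assume the identity holds at $i-1$ for every $y \in \regs_\exte$. Set $a \defined w[i]$ and let $(m,a,n) \defined (\mu(w[1{:}i{-}1]),\, w[i],\, \mu(w[i{+}1{:}|w|]))$ be the lookaround read at position $i$, with $\exte' \defined \lambda(m,a,n)$ the external function called there. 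Inspecting the definition of $\kappa$, one sees that \emph{whether or not} $\exte' = \exte$, each register $\Old_x$ with $x \in \regs_\exte$ is updated by $\Old_x \leftarrow \sum_{y \in \regs_\exte} \alpha_y \Old_y$, where $x \leftarrow \sum_{y \in \regs_\exte} \alpha_y y$ is the update performed by $\trans_\exte$ on reading the unmarked letter $a$; moreover this update only involves $\Old$-registers of the same block $\regs_\exte$.

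Using the induction hypothesis and then the morphism property of $T_\exte$, this gives
\[
\utrans^{w,i}(\Old_x) \;=\; \sum_{y \in \regs_\exte} \alpha_y\, \utrans^{w,i{-}1}(\Old_y) \;=\; \sum_{y \in \regs_\exte} \alpha_y\, I_\exte T_\exte(w[1{:}i{-}1])(y) \;=\; I_\exte T_\exte(w[1{:}i{-}1])\, T_\exte(a)(x) \;=\; I_\exte T_\exte(w[1{:}i])(x),
\]
where the third equality uses that the update $x \leftarrow \sum_{y} \alpha_y y$ means exactly $(T_\exte(a))_{y,x} = \alpha_y$ in the "row vector times matrix" semantics of $\SST$. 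This closes the induction.

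The only delicate point is the bookkeeping of the matrix/vector conventions together with the two structural remarks about $\kappa$ — that the $\Old$-block of $\regs_\exte$ is updated according to $\trans_\exte$ on the \emph{unmarked} letter, and that it never gets contributions from $\Old$-registers of another block $\regs_\exteg$ nor depends on which $\exte'$ fires at the current position. Both are immediate from the construction, so I do not expect any real obstacle here; the claim is a direct unwinding of the definitions, and it is the natural companion of the forthcoming analysis of the $\Som$-registers.
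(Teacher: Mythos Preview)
Your proof is correct and follows the same approach as the paper, which simply asserts the claim as an immediate consequence of the fact that the $\Old$-registers are updated exactly as in $\trans_\exte$ on the unmarked letter. Your write-up is a careful unfolding of that one-line remark, and the observation that the update of the $\Old$-block for $\regs_\exte$ is the same whether or not $\exte$ is the function called at the current position is exactly the point.
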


We can finally show that the registers $\Som_x$ store
the information we wanted.

\begin{claim} Given $x \in \regs_{\exte}$, for all $w \in A^+$ and $0 \le i \le |w|$ we have:
\begin{equation*}
\begin{aligned}
\utrans^{w,i}(\Som_x) = \sum_{\substack{1 \le j \le i \\ j \in \calls(w,\exte)}} I_{\exte} T_{\exte}(\omar{w[1{:}i]}{j})(x).
\end{aligned}
\end{equation*}
\end{claim}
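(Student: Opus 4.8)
The plan is to prove the claim by induction on $i$, using Claim~\ref{claim:old} to control the $\Old_x$ registers and the explicit shape of the update $\kappa$. The base case $i=0$ is immediate: by definition $\utrans^{w,0} = I$, so $\utrans^{w,0}(\Som_x) = I(\Som_x) = 0$, which matches the empty sum on the right-hand side (there is no $j$ with $1 \le j \le 0$).

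For the inductive step, suppose the identity holds for $i-1$ (with $i \ge 1$) and fix $x \in \regs_{\exte}$. Let $m \defined \mu(w[1{:}i{-}1])$, $a \defined w[i]$, $n \defined \mu(w[i{+}1{:}|w|])$, and $\exteg \defined \lambda(m,a,n)$, so that the update applied when going from position $i-1$ to $i$ is $\kappa(m,a,n)$. There are two cases. If $\exteg \neq \exte$, then $\kappa$ updates $\Som_x \leftarrow \sum_{y \in \regs_\exte} \alpha_y \Som_y$ where $x \leftarrow \sum_y \alpha_y y$ is the update of $\trans_\exte$ on the (unmarked) letter $a$; applying the induction hypothesis to each $\Som_y$ and using linearity together with the fact that reading $a$ unmarked at position $i$ turns $I_\exte T_\exte(\omar{w[1{:}i{-}1]}{j})$ into $I_\exte T_\exte(\omar{w[1{:}i]}{j})$ for every call position $j \le i-1$, and that $i \notin \calls(w,\exte)$ in this case, yields exactly the desired sum over $1 \le j \le i$. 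If $\exteg = \exte$, then additionally $i \in \calls(w,\exte)$, and the update reads $\Som_x \leftarrow \sum_y \alpha_y \Som_y + \sum_y \beta_y \Old_y$ where $\beta_y$ come from the update of $\trans_\exte$ on the \emph{marked} letter $\marq{a}$. The first summand, by the induction hypothesis exactly as before, accounts for the contribution of the calls $j$ with $1 \le j \le i-1$; the second summand, by Claim~\ref{claim:old}, equals $\sum_y \beta_y I_\exte T_\exte(w[1{:}i{-}1])(y) = I_\exte T_\exte(\omar{w[1{:}i]}{i})(x)$, which is precisely the missing $j=i$ term. Summing the two gives the claim.

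The key bookkeeping observation I would isolate as a sublemma is that for a fixed call position $j$ with $j \le i-1$, one has $I_\exte T_\exte(\omar{w[1{:}i]}{j}) = I_\exte T_\exte(\omar{w[1{:}i{-}1]}{j}) \times T_\exte(a)$, since $\omar{w[1{:}i]}{j} = \omar{w[1{:}i{-}1]}{j}\, a$ (the marked position $j$ is strictly before $i$, so the $i$-th letter is appended unmarked); this is what makes the $\alpha_y$-part of $\kappa$ coincide with advancing every term of the inductive sum by one letter, and is where matrix associativity of $T_\exte$ is used.

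The main obstacle is purely notational rather than conceptual: one must be scrupulous about the distinction between reading $a$ versus $\marq{a}$ in $\trans_\exte$ (the $\alpha_y$ versus $\beta_y$ coefficients), about the off-by-one between "after reading $w[1{:}i]$" and the position $i$ itself, and about the fact that $\trans^{w,i}$ genuinely depends on all of $w$ via the lookaround $\mu$ — but since the update coefficients only depend on $(m,a,n)$ and not on how we derived them, the induction goes through cleanly. No deeper structural difficulty arises here: all the real work was done in setting up $\utrans$, and this claim is the verification that the setup does what was intended.
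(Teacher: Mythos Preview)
Your proof is correct and follows essentially the same approach as the paper: induction on $i$, a case split on whether the external function called at position $i$ is $\exte$, use of Claim~\ref{claim:old} for the $\Old_y$ contribution, and the bookkeeping identity $\omar{w[1{:}i]}{j} = \omar{w[1{:}i{-}1]}{j}\,a$ for $j \le i{-}1$. If anything, you spell out the case $\exteg \neq \exte$ more explicitly than the paper, which only treats the harder case and dismisses the other as similar.
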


\begin{proof} We proceed by induction on $0 \le i \le |w|$.
For $i=0$ both terms equal $0$. For the induction step with $i \ge 1$ 
let $\exte \in \oras$ and $x \in \regs_{\exte}$.

Suppose that $\lambda(\mu(w[1{:}i{-}1]),w[i],\mu(w[i{+}1{:}|w|])) = \exte$
(the case when they differ is similar and even easier), then:
\begin{equation}
\label{eq:firre}
\begin{aligned}
\sum_{\substack{1 \le j \le i \\ j \in \calls(w,\exte)}} I_{\exte} T_{\exte}(\omar{w[1{:}i]}{j})(x)
= I_{\exte} T_{\exte}(\omar{w[1{:}i]}{i})(x) +\sum_{\substack{1 \le j \le i{-1} \\ j \in \calls(w,\exte)}} I_{\exte} T_{\exte}(\omar{w[1{:}i]}{j})(x).
\end{aligned}
\end{equation}

\begin{itemize}
\item Let $x \leftarrow  \sum_{y \in \regs_\exte} \alpha_y y$
be the update performed by $\trans_{\exte}$ when reading $a$.

Then for $j\le i-1$, $ I_{\exte} T_{\exte}(\omar{w[1{:}i]}{j})(x) = \sum_{y \in \regs_{\exte}} \alpha_y \times I_{\exte} T_{\exte}(\omar{w[1{:}i{-}1]}{j})(y)$.
\item Let $x \leftarrow  \sum_{y \in \regs_\exte} \beta_y y$
be the update performed by $\trans_{\exte}$ when reading $\marq{a}$.

Then $I_{\exte}  T_{\exte}(\omar{w[1{:}i]}{i})(x) = \sum_{y \in \regs_{\exte}} \beta_y  \times I_{\exte} T_{\exte}(w[1{:}i{-}1])(y)$.
\end{itemize}

Hence we can rewrite Equation \ref{eq:firre} using
values in position $i{-}1$. By Claim \ref{claim:old} and the induction hypothesis,
this sum coincides with the update in $\utrans$.
\end{proof}

The fact that $\utrans$ computes $f$
follows from the definition of $F$
and Claim \ref{claim:somsom}.

\section{Non-computability of $\isq$ by a $k$-blind bimachine}

In order to simplify the description, we do not
deal with $\isq$ but a slight variant:
\begin{equation*}
f: a^{n_1} b \cdots  ba^{n_\ell}b \mapsto \sum_{i=1}^{\ell} n_i(n_i +1) = \isq + \nba.
\end{equation*}
Indeed, by Claim \ref{claim:sss}, if $\isq$ is computable by a $k$-blind
transducer, then so is $f$ (since $\nba$ is computable by a $0$-blind).

\subparagraph*{Description with a $1$-marble bimachine.}
Let us first give a $1$-marble bimachine for $f$.
The monoid is $M = \{\neutral, 0_M\}$
where $\neutral$ is neutral and $0_M$ absorbing (both are idempotents).
The morphism is $\mu: a\mapsto \neutral, b \mapsto 0_M$.
In the main bimachine, we define the output function
$\lambda(m,a,m) \defined \exte_a$ and $\lambda(m,b,m) \defined \exte_b$
for all $m \in M$.

Function $\exte_b$ is constant to $0$, its output
is $\lambda_{\exte_b}(m,a,m) \defined 0$ and
$\lambda_{\exte_b}(m,b,m) \defined 0$
for all $m \in M$.
Function $\exte_a$ outputs twice the number of $a$
after the last $b$. Its output
is $\lambda_{\exte_a}(m,a,\neutral) \defined 2$,
$\lambda_{\exte_a}(m,a,0_M) \defined 0$,
and $\lambda_{\exte_a}(m,b,m) \defined 0$
for all $m \in M$.

\subparagraph*{Non-computability by $1$-blind.}
We set $\uu \defined a, \ud \defined a$ and
$m,n, m_1,n_1,m_2,n_2 = 0_M$.
Let $e_1 {\defined}\mu(\uu) = \neutral$,  $e_2 {\defined} \mu(\ud) = \neutral$,
note that $e {\defined} m_1 e_1 n_1 {=} m_2 e_2 n_2 = 0_M$.
Then:
\begin{itemize}
\item if $p = 0_M$, then
$\pro{mem_1 e_1 \cro{\uu} e_1 p e_2 \cro{\ud} e_2 n_2 en} = \pro{0_M \cro{a} 0_M \cro{a} 0_M} = 0$

and $m_1 e_1 p e_2 n_2 =e = 0_M$, $em_1e_1p e_2= em_2e_2 = 0_M$ and $e_1p e_2 n_2 e= e_1n_1e = 0_M$;

\item if $p = \neutral$, then
$\pro{mem_1 e_1 \cro{\uu} e_1 p e_2 \cro{\ud} e_2 n_2 en}= \pro{0_M \cro{a} \neutral \cro{a} 0_M} = 2$

and $m_2 e_2 p e_1 n_1 = e = 0_M$, $em_2e_2pe_1 = em_1e_1 = 0_M$ and $ e_2pe_1 n_1e = e_2n_2e = 0_M$.

\end{itemize}

It follows from Definition \ref{def:symmetrical} that the bimachine
is not symmetrical, hence by Lemma \ref{lem:blind-symm} the function
$f$ is not computable by a $k$-blind bimachine for any $k \ge 0$.

\section{Proof of Lemma \ref{lem:blind-symm} - blind $\Rightarrow$ symmetrical}

Lemma \ref{lem:blind-symm} is the core of the negative membership result,
since it describes a necessary condition for $f:A^+ \rightarrow \Nat$ given
by the $1$-marble bimachine $\trans$,
to be computable by a $k$-blind bimachine.
Assume that it is the case,
we want to show that $\trans$ is symmetrical.

The elements $m,n \in M$ of Definition \ref{def:symmetrical} define a "context",
which should be reported everywhere in the proof,
but it is never used in the results. Hence to simplify the notations,
we suppose that $m = n$ is the neutral element of $M$
(it will no longer be mentioned).

\subsection{The production does not depend on $p$}

This subsection is devoted to showing Lemma \ref{lem:revert1}.
Intuitively, it says that the production on $\uu$ called from $\ud$ only
depends their respective "contexts" (which are $e_1, m_1,n_1$ and $e_2, m_2,n_2$)
but not on the $p$ which separates them.

\begin{lemma}  \label{lem:revert1}
Let $m_1,n_1,m_2,n_2 \in M$ and $\uu , \ud \in A^+$
such that $e_1 {\defined}\mu(\uu)$,  $e_2 {\defined} \mu(\ud)$
and $e {\defined} m_1 e_1 n_1 {=} m_2 e_2 n_2$ are idempotents.
Let $p \in M$ be such that $m_1 e_1 p e_2 n_2 {=} e$, $em_1e_1p e_2{=} em_2e_2$ and $e_1p e_2 n_2 e{=} e_1n_1e$. Then:
\begin{equation*}
\pro{em_1 e_1 \cro{\uu} e_1 p e_2 \cro{\ud} e_2 n_2 e} = \pro{em_1 e_1 \cro{\uu} e_1 n_1 e m_2 e_2 \cro{\ud} e_2 n_2 e}.
\end{equation*}
\end{lemma}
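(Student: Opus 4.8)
The statement says that replacing the "separator" $p$ between $\uu$ and $\ud$ by the concatenation $n_1 e m_2 e_2$ (padded by the idempotents $e_1$ on the left and $e_2 n_2 e$ on the right, so that the two "contexts" are unchanged) does not affect the production $\pro{-}$. The key observation is that $\pro{m\cro{\uu}m'\cro{\ud}m''}$ is, by unfolding its definition, a double sum over positions $i$ of $\uu$ and $j$ of $\ud$ of a quantity $\Phi(i,j)$ which depends on $m,m',m''$ \emph{only through the monoid images} $m\mu(\uu[1{:}i{-}1])$, $\uu[i]$, $\mu(\uu[i{+}1{:}|\uu|])m'\mu(\ud[1{:}j])$, and $m'\mu(\ud[1{:}j{-}1])$, $\ud[j]$, $\mu(\ud[j{+}1{:}|\ud|])m''$. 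So the plan is: (i) expand both sides using this definition; (ii) check, summand by summand, that each of the relevant monoid products is the same on the left-hand side (separator $p$) and on the right-hand side (separator $n_1 e m_2 e_2$); (iii) conclude that the double sums are termwise equal, hence equal.

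\textbf{Carrying it out.} Fix $i \in [1{:}|\uu|]$ and $j \in [1{:}|\ud|]$. On the left we have separator $m' = e_1 p e_2$ and flanking contexts $m = e m_1 e_1$, $m'' = e_2 n_2 e$; on the right we have separator $m' = e_1 n_1 e m_2 e_2$ and the \emph{same} flanking contexts. I would list the five monoid elements entering $\Phi(i,j)$ and verify each:
\begin{itemize}
\item $m\mu(\uu[1{:}i{-}1])$: this does not involve $m'$, so it is unchanged.
\item $\uu[i]$ and $\ud[j]$: letters of $\uu,\ud$, unchanged.
\item $\mu(\uu[i{+}1{:}|\uu|])\,m'\,\mu(\ud[1{:}j})$ on the left equals $\mu(\uu[i{+}1{:}|\uu|])\,e_1 p e_2\,\mu(\ud[1{:}j])$. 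Using $\mu(\uu)=e_1$ idempotent and $\mu(\ud)=e_2$ idempotent, one rewrites $\mu(\uu[i{+}1{:}|\uu|]) = \mu(\uu[i{+}1{:}|\uu|])e_1$ (since the suffix followed by... — here one uses idempotency of $e_1$ together with the fact that $\mu(\uu[1{:}i{-}1])\mu(\uu[i])\mu(\uu[i{+}1{:}|\uu|]) = e_1$; more precisely $e_1\mu(\uu[i{+}1{:}|\uu|]) $ absorbs into the left and $\mu(\uu[i{+}1{:}|\uu|])$ can be post-multiplied by $e_1$ after noticing $\mu(\uu[i{+}1{:}|\uu|]) = \mu(\uu[i{+}1{:}|\uu|])\mu(\uu) = \mu(\uu[i{+}1{:}|\uu|])e_1$ only when $\uu[i{+}1{:}|\uu|]$ is empty — so the clean argument is instead to factor through $e_1$ on the left using $e_1 = e_1 e_1$). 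The honest way: prepend the hypothesis $em_1e_1 p e_2 = em_2 e_2$ to see that $e_1 p e_2$ and $e_1 n_1 e m_2 e_2$ become equal after multiplying on the left by enough of the context. Then check likewise $e_1 p e_2 n_2 e = e_1 n_1 e$ handles the right multiplication. Combining, $\mu(\uu[i{+}1{:}|\uu|]) \cdot e_1 p e_2 \cdot \mu(\ud[1{:}j]) = \mu(\uu[i{+}1{:}|\uu|]) \cdot e_1 n_1 e m_2 e_2 \cdot \mu(\ud[1{:}j])$ because both suffix-of-$\uu$ factors go through $e_1$ (idempotent) and both prefix-of-$\ud$ factors go through $e_2$ (idempotent), so the expression only sees $e_1(\cdot)e_2$, and $e_1 p e_2 = e_1 n_1 e m_2 e_2$ follows from $em_1e_1 p e_2 = em_2e_2$ and $e_1 p e_2 n_2 e = e_1 n_1 e$ after pre/post-multiplying appropriately and using $e$-idempotency.
\item $m'\mu(\ud[1{:}j{-}1])$: same story, it equals $e_1 p e_2 \mu(\ud[1{:}j{-}1])$, which again only sees $e_1(\cdot)$ on the left and an $e_2$-prefixed word on the right, so equals $e_1 n_1 e m_2 e_2 \mu(\ud[1{:}j{-}1])$ by the same identity.
\item $\mu(\ud[j{+}1{:}|\ud|])m''$: does not involve $m'$, unchanged.
\end{itemize}
Since every input to $\Phi(i,j)$ agrees, $\Phi_{\text{left}}(i,j) = \Phi_{\text{right}}(i,j)$ for all $i,j$, and summing over the (identical) index ranges $[1{:}|\uu|]\times[1{:}|\ud|]$ gives the claimed equality of $\pro{-}$.

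\textbf{The main obstacle.} The whole proof is elementary monoid bookkeeping, but the delicate point is \textbf{the reduction "$e_1 p e_2 = e_1 n_1 e m_2 e_2$ only needs to hold after being wrapped by $e_1$ on the left and by a word starting with $e_2$ on the right"} — i.e.\ understanding exactly which of the three hypotheses $m_1 e_1 p e_2 n_2 = e$, $em_1e_1 p e_2 = em_2 e_2$, $e_1 p e_2 n_2 e = e_1 n_1 e$ are invoked and in what order. I expect to use $em_1e_1 p e_2 = em_2 e_2$ to control the left/context side of the separator and $e_1 p e_2 n_2 e = e_1 n_1 e$ to control the right side, glueing them via $e = e e$ and the definition $e = m_1 e_1 n_1 = m_2 e_2 n_2$; the first hypothesis $m_1 e_1 p e_2 n_2 = e$ is what guarantees the \emph{overall} image $\mu(\uu)m'\mu(\ud)$ inside the larger idempotent context stays $e$, which is needed so that the outer context $em_1e_1\cdots e_2n_2 e$ on each side is coherent. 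Getting these wrappings exactly right — so that no stray monoid element is dropped — is the part that needs care; everything else is mechanical.
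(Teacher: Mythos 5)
Your proof never invokes the hypothesis that $f$ is computable by a $k$-blind bimachine, which is the standing assumption of the whole section (this lemma sits inside the proof of $\ref{it:unt}\Rightarrow\ref{it:trt}$). That is already a red flag, and it points to the genuine gap: the identity $e_1 p e_2 = e_1 n_1 e m_2 e_2$ that you need in order to make the summands $\Phi(i,j)$ agree termwise simply \emph{does not follow} from the three algebraic hypotheses on $p$. Concretely, the argument to $\lambda_{\exte_j}$ in the third component is $\mu(\uu[i{+}1{:}|\uu|])\cdot m'\cdot\mu(\ud[1{:}j])$; the suffix and prefix factors here are arbitrary elements of $M$ that are not absorbed by $e_1$ or $e_2$, so you genuinely need $m'_{\mathrm{left}}=m'_{\mathrm{right}}$. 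But take the paper's own counterexample from the $\isq$ section: $M=\{\neutral,\deutral\}$ with $\deutral$ absorbing, $\uu=\ud=a$, $m_1=n_1=m_2=n_2=\deutral$, $e_1=e_2=\neutral$, $e=\deutral$. For $p=\neutral$ all three hypotheses hold ($\deutral\neutral\deutral=\deutral$, etc.), yet $e_1 p e_2 = \neutral$ while $e_1 n_1 e m_2 e_2 = \deutral$, and indeed the two bitype productions are $2$ and $0$. So the termwise equality fails and the conclusion of the lemma is false for that (non-blind) machine. Only the second component ($m\mu(\uu)m'\mu(\ud[1{:}j{-}1])$, where $m\mu(\uu)=em_1e_1$ is a controlled prefix) is saved by the hypotheses, as your item 4 correctly notes; item 3 is not.

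The paper's proof is therefore of a fundamentally different nature. It builds a parametric family of inputs $w^{2\omega-1}(\alpha\uu^{\omega X}\beta\ud^{\omega Y}\gamma)w^{\omega-1}(\alpha\uu^{\omega X'}\beta\ud^{\omega Y'}\gamma)w^{\omega}$, uses the $k$-blind hypothesis (Lemma~\ref{lem:forbid}) to show $f$ restricted to this family is a polynomial in the \emph{sums} $X{+}X'$ and $Y{+}Y'$, independently computes via the given $1$-marble bimachine (Lemma~\ref{lem:prod-iter}) the coefficients of the mixed monomials $Z_1Z_2$ and $Z_1Z_3$ as bitype productions $\pro{\Psi_1^2}$ and $\pro{\Psi_1^3}$, and concludes $\pro{\Psi_1^2}=\pro{\Psi_1^3}$ by comparing the two polynomial descriptions. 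No amount of monoid bookkeeping can replace this pumping step, because the statement is not true without blindness.
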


We denote $\Phi \defined em_1 e_1 \cro{\uu} e_1 p e_2 \cro{\ud} e_2 n_2 e$ and
$\Phi' \defined em_1 e_1 \cro{\uu} e_1 n_1 e m_2 e_2 \cro{\ud} e_2 n_2 e$
the bitypes considered in Lemma \ref{lem:revert1}.
They are depicted in Figure \ref{fig:bitypes-p1}.

\begin{figure}[h!]
\centering
\begin{subfigure}{1\linewidth}
\centering
\begin{tikzpicture}[scale=0.7]


	\draw (-4.5,5) rectangle (4.5,4.5);
	\node[above] at (0,4.45) {$e$};
	\draw (-6.5,5) rectangle (-4.5,4.5);
	\node[above] at (-5.5,4.45) {$e$};
	\draw (6.5,5) rectangle (4.5,4.5);
	\node[above] at (5.5,4.45) {$e$};

	\draw (-4.5,4.5) rectangle (-3.5,4);
	\node[above] at (-4,3.9) {$m_1$};	
	\draw (-3.5,4.5) rectangle (-2.5,4);
	\node[above] at (-3,3.9) {$e_1$};	
		\draw[fill=red!80]  (-2.5,3.5) rectangle (-1.5,4);
		\node[above] at (-2,3.4) {$u_1$};
	\draw (-2.5,4.5) rectangle (-1.5,4);
	\node[above] at (-2,3.9) {$e_1$};
	\draw (-1.5,4.5) rectangle (-0.5,4);
	\node[above] at (-1,3.9) {$e_1$};	
	
	\draw (-0.5,4.5) rectangle (0.5,4);
	\node[above] at (0,3.9) {$p$};	
	
	\draw (4.5,4.5) rectangle (3.5,4);	
	\node[above] at (4,3.9) {$n_2$};	
	\draw (3.5,4.5) rectangle (2.5,4);
	\node[above] at (3,3.9) {$e_2$};
		\draw[fill=blue!70]  (2.5,3.5) rectangle (1.5,4);
		\node[above] at (2,3.4) {$u_2$};	
	\draw (2.5,4.5) rectangle (1.5,4);
	\node[above] at (2,3.9) {$e_2$};	
	\draw (1.5,4.5) rectangle (0.5,4);
	\node[above] at (1,3.9) {$e_2$};
	
	\draw[thick,dashed](-1.5,3.3) --(-1.5,3) -- (6.5,3) -- (6.5,4.3) ;
	\node[above] at (3.5,2.4) {$e_1 n_1 e$};
	
	\draw[thick,dashed](1.5,2.8) --(1.5,2.5) -- (-6.5,2.5) -- (-6.5,4.3);
	\node[above] at (-3.5,2.4) {$e m_2 e_2$};

\end{tikzpicture}
\subcaption{The bitype $\Phi = em_1 e_1 \cro{\uu} e_1 p e_2 \cro{\ud} e_2 n_2 e$}
\end{subfigure}

\vspace{1\baselineskip}

\begin{subfigure}{1\linewidth}
\centering
\begin{tikzpicture}[scale=0.7]


	\draw (-6.5,5) rectangle (-4.5,4.5);
	\node[above] at (-5.5,4.45) {$e$};		
	\draw (-4.5,5) rectangle (0.5,4.5);
	\node[above] at (-2,4.45) {$e$};
	\draw (2.5,5) rectangle (0.5,4.5);
	\node[above] at (1.5,4.45) {$e$};
	\draw (2.5,5) rectangle (7.5,4.5);
	\node[above] at (5,4.45) {$e$};
	\draw (7.5,5) rectangle (9.5,4.5);
	\node[above] at (8.5,4.45) {$e$};

	\draw (-4.5,4.5) rectangle (-3.5,4);
	\node[above] at (-4,3.9) {$m_1$};	
	\draw (-3.5,4.5) rectangle (-2.5,4);
	\node[above] at (-3,3.9) {$e_1$};	
		\draw[fill=red!80]  (-2.5,3.5) rectangle (-1.5,4);
		\node[above] at (-2,3.4) {$u_1$};
	\draw (-2.5,4.5) rectangle (-1.5,4);
	\node[above] at (-2,3.9) {$e_1$};
	\draw (-1.5,4.5) rectangle (-0.5,4);
	\node[above] at (-1,3.9) {$e_1$};	
	
	\draw (-0.5,4.5) rectangle (0.5,4);
	\node[above] at (0,3.9) {$n_1$};	
	
	\draw (7.5,4.5) rectangle (6.5,4);
	\node[above] at (7,3.9) {$n_2$};
	\draw (3.5,4.5) rectangle (2.5,4);	
	\node[above] at (3,3.9) {$m_2$};	
	\draw (6.5,4.5) rectangle (5.5,4);
	\node[above] at (5,3.9) {$e_2$};
		\draw[fill=blue!70]  (5.5,3.5) rectangle (4.5,4);
		\node[above] at (5,3.4) {$u_2$};	
	\draw (5.5,4.5) rectangle (4.5,4);
	\node[above] at (4,3.9) {$e_2$};	
	\draw (4.5,4.5) rectangle (3.5,4);
	\node[above] at (6,3.9) {$e_2$};

\end{tikzpicture}
\subcaption{The bitype $\Phi' = em_1 e_1 \cro{\uu} e_1 n_1 e m_2 e_2 \cro{\ud} e_2 n_2 e$}
\end{subfigure}

\caption{\label{fig:bitypes-p1}  The bitypes considered in this subsection}
\end{figure}
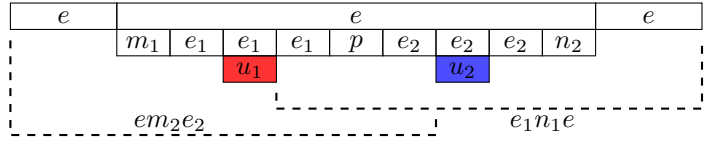
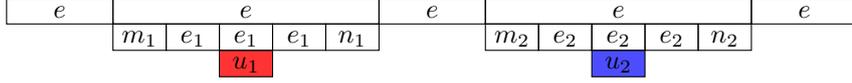

Let us prove Lemma \ref{lem:revert1}.
Since $\mu$ is surjective, there exists $\alpha, \beta, \gamma \in A^*$ such that
$\mu(\alpha) = em_1e_1$,
$\mu(\beta) = e_1pe_2$,
$\mu(\gamma) = e_2n_2e$.
They describe an instance of the bitype $\Phi$.

We are going to apply Lemma \ref{lem:forbid} below with $\alpha, \beta, \gamma, \uu$ and $\ud$.
Intuitively, this lemma creates a family of words in which both the bitypes $\Phi$ and $\Phi'$ occurs,
and describes what is the output of a $k$-blind bimachine on this family.

\begin{lemma} \label{lem:forbid} Let
 $\alpha, \beta, \gamma \in A^*$,
 $\uu, \ud \in A^+$, and $f$ computed by a $k$-blind bimachine.
 There exists $\omega \ge1$ such that the following holds.
 Let $w \defined \alpha \uu^\omega \beta \ud^\omega \gamma$
 and $G: \Nat^4 \rightarrow \Nat$ defined by:
   \begin{equation*}
   G(X,Y,X',Y') \defined  f(w^{2\omega-1} (\alpha \uu^{\omega X} \beta \ud^{\omega Y} \gamma)  w^{\omega-1} (\alpha \uu^{\omega X'} \beta \ud^{\omega Y'}  \gamma) w^\omega)
   \end{equation*}
 
Then there exists a polynomial $P: \Nat^2 \rightarrow \Nat$ such that
for all $X,Y,X',Y' \ge 2$:
  \begin{equation*}
  G(X,Y,X',Y') = P(X+X', Y+Y').
  \end{equation*}
\end{lemma}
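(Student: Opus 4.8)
The plan is to exploit the fact that a $k$-blind bimachine computing $f$ is, by Proposition~\ref{lem:pebble-bim} and Corollary~\ref{cor:pebmar}, equivalent to an $\SST$ whose output is $\mc{O}(|v|^{k+1})$. So $f(v) = I\,T(v)\,F$ for a monoid morphism $T : A^* \to (\Nat^{\regs \times \regs},\times)$ into a finite monoid of matrices (more precisely into a finitely generated submonoid; what matters is that $T$ factors through a finite monoid $N$ via a morphism $\mu' : A^* \to N$, since the $\SST$ reads its input left to right with a finite set of "profiles"). Actually I would rather argue directly on the blind bimachine: a $k$-blind bimachine applied to $v$ produces $\sum_i \exte_i(v)$ where each $\exte_i$ is selected by a bounded amount of information about $v$ (its image under a morphism to a finite monoid) and is itself computed by a $(k-1)$-blind bimachine on the \emph{whole} word $v$ — crucially \emph{without a marked position}. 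Iterating the induction, the total output is a finite sum of terms, each of the form "(number of positions with a given left/right context) $\times$ (value of a $(k-1)$-blind bimachine on $v$)", and unrolling fully shows $f(v)$ is a finite $\Nat$-linear combination of products of at most $k+1$ "counting quantities", each counting occurrences of a regular pattern in $v$.

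The key step is then to understand these counting quantities on the specific family of inputs
$v = w^{2\omega-1} (\alpha \uu^{\omega X} \beta \ud^{\omega Y} \gamma) w^{\omega-1} (\alpha \uu^{\omega X'} \beta \ud^{\omega Y'} \gamma) w^\omega$.
First I would choose $\omega \ge 1$ large enough that $\mu'(\uu^\omega)$, $\mu'(\ud^\omega)$ and $\mu'(w)$ are all idempotent in the relevant finite monoid $N$ (possible by taking $\omega$ a multiple of $|N|!$, or just using that some power is idempotent). With $\omega$ chosen this way, the image under $\mu'$ of any factor of the form $\uu^{\omega X}$ is the \emph{same} idempotent regardless of $X \ge 1$, and likewise for $\ud^{\omega Y}$ and for powers of $w$. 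Consequently, for a fixed position $p$ inside one of the blocks $\uu^{\omega X}$, the left-context-image and right-context-image of $p$ in $v$ do not depend on $X,Y,X',Y'$ (as long as these are $\ge 2$, so the "boundary" letters are genuinely interior to an idempotent block). Therefore each counting quantity "number of positions of $v$ whose left context maps to $s$ and right context maps to $t$" is an affine function of $X,Y,X',Y'$ — and moreover, because the role of the first inserted block $\uu^{\omega X}$ is combinatorially identical to that of the second $\uu^{\omega X'}$ (they sit in the same $\mu'$-context, sandwiched between idempotent powers of $w$), the coefficient of $X$ equals the coefficient of $X'$, and similarly the coefficient of $Y$ equals that of $Y'$. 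Hence each counting quantity depends only on $X+X'$ and $Y+Y'$ (plus a constant).

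Putting this together: $f(v)$ is a finite $\Nat$-linear combination of products of $\le k+1$ such counting quantities, each of which is affine in $X+X'$ and $Y+Y'$; so $f(v) = G(X,Y,X',Y') = P(X+X',Y+Y')$ for a polynomial $P : \Nat^2 \to \Nat$ of degree $\le k+1$, valid for all $X,Y,X',Y' \ge 2$. I would structure the write-up as: (1) reduce via the bimachine/$\SST$ correspondence to the "linear combination of counting terms" normal form, proved by induction on $k$; (2) fix $\omega$ making the relevant powers idempotent; (3) the symmetry argument showing each counting term depends only on the sums. The main obstacle I anticipate is step~(1): making precise, with a clean induction, that a $k$-blind bimachine's output on $v$ is exactly a bounded $\Nat$-linear combination of products of "regular counts", and in particular controlling how the \emph{innermost} external functions (which see all of $v$ but no marker) contribute — one has to be careful that the absence of a marked position is what makes the dependence factor through $\mu'(v)$ alone at each level, which is precisely the feature that ultimately forces the $X \leftrightarrow X'$ and $Y \leftrightarrow Y'$ symmetry. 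The condition $X,Y,X',Y' \ge 2$ is there exactly to guarantee that the first and last letters of each inserted block are surrounded on both sides by copies of $\uu$ (resp. $\ud$), so no "edge effect" breaks the uniformity of contexts.
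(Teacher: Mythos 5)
Your plan, after you discard the $\SST$ detour, is essentially the paper's: you run the same induction on $k$, with the base case a block decomposition of $W(X,Y,X',Y')$ under idempotency, and the inductive step the observation that a $k$-blind bimachine's output is $\sum_i g_i(v)\,\exte_i(v)$ where each $g_i$ is a $0$-blind position-count selected by the $\nu$-profile and each $\exte_i$ is a $(k{-}1)$-blind function seeing the whole word with no marker. Your ``normal form as a sum of products of at most $k{+}1$ counting quantities'' is just this induction unrolled explicitly; the paper instead keeps the polynomials abstract (it obtains $Q = \sum_i Q_i' Q_i$ with $Q_i'$ of degree $\le 1$ by the base case and $Q_i$ of degree $\le k{-}1$ by induction), which amounts to the same thing. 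One inaccuracy to repair: since $w = \alpha \uu^\omega \beta \ud^\omega \gamma$ itself depends on $\omega$, you cannot choose $\omega$ so that $\mu'(w)$ is idempotent; the argument only needs (and taking $\omega$ the idempotent index of $N$ automatically gives) that $\mu'(w^\omega) = \mu'(w)^\omega$ is idempotent. That is precisely why the padding factors are $w^{2\omega-1}$, $w^{\omega-1}$, $w^\omega$: one checks, e.g., that $\mu'(w)^{3\omega-1} = \mu'(w)^{2\omega-1}$, so a position inside $\uu^{\omega X'}$ sees the same left-context image as the analogous position inside $\uu^{\omega X}$. With that fix, the ``affine with $X\!\leftrightarrow\! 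X'$ and $Y\!\leftrightarrow\! Y'$ symmetry'' step you invoke is exactly the content of the paper's Sublemma~\ref{sub:cas}, which you correctly identify as the remaining verification; the case analysis there (boundary copies of $\uu^\omega,\ud^\omega$ contribute constants, middle copies contribute $(X{-}2)$- and $(X'{-}2)$-many equal terms, etc.) is the work you would have to write out.
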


\begin{remark} $P$ has degree at most $k$.
\end{remark}

\begin{remark} Here we shall fix $X' = 2$, but the lemma will be re-used lated with variable $X'$.
\end{remark}

Let $\omega$ be fixed by Lemma \ref{lem:forbid}. It allows to describe the output of
a $k$-blind bimachine on words of the form:
  \begin{equation*}
  w^{2\omega-1} (\alpha \uu^{\omega X} \beta \ud^{\omega Y} \gamma)  w^{\omega-1} (\alpha u_1^{2\omega} \beta \ud^{\omega Y'}  \gamma) w^\omega
  \end{equation*}

We now want to understand what is the output of our $1$-marble bimachine $\trans$,
and relate this output with $\pro{\Phi}$ and $\pro{\Phi'}$.
This is the purpose of Lemma \ref{lem:prod-iter} below, we
set $w_0 \defined w^{2\omega-1}\alpha$, $v_1 \defined \uu$, $w_1 = \beta$, $v_2 \defined \ud$, $w_2 \defined \gamma w^{\omega-1}\alpha u_1^{2\omega}\beta$, $v_3 \defined \ud$ and $w_3 \defined \gamma w^{\omega}$.

\begin{lemma} \label{lem:prod-iter} Let $w_0, w_1, w_2, w_3 \in A^*$ and $v_1, v_2, v_3 \in A^+$.
Let $f_1 \defined \mu(v_1)$, $f_2 \defined \mu(v_2)$ and $f_3 \defined \mu(v_3)$.
We suppose that $f_1, f_2, f_3$ are idempotents.
Let $H: \Nat^3 \rightarrow \Nat$ defined by:
  \begin{equation*}
  H(Z_1,Z_2,Z_3) = f(w_0 v_1^{Z_1} w_1 v_2^{Z_2} w_2 v_3^{Z_3} w_3).
  \end{equation*}

Define the following bitypes:
\item
\begin{itemize}
\item $\Psi_1^2 \defined \mu(w_0)f_1 \cro{v_1} f_1  \mu(w_1) f_2 \cro{v_2} f_2  \mu(w_2) f_3  \mu(w_3)$;
\item $\Psi_1^3 \defined   \mu(w_0)f_1 \cro{v_1} f_1  \mu(w_1) f_2  \mu(w_2) f_3 \cro{v_3}  f_3  \mu(w_3)$;
\item $\Psi_2^3 \defined  \mu(w_0)f_1  \mu(w_1) f_2 \cro{v_2} f_2    \mu(w_2) f_3 \cro{v_3}  f_3  \mu(w_3)$.
\end{itemize}

There exists a polynomial $Q(Z_1,Z_2,Z_3)$ without any term in $Z_1Z_2$, $Z_1Z_3 $ nor $Z_2Z_3$ such that for all $Z_1,Z_2, Z_3 \ge 4$:
  \begin{equation*}
  H(Z_1,Z_2,Z_3) = Q(Z_1,Z_2,Z_3) + \pro{\Psi_1^2}Z_1Z_2 + \pro{\Psi_1^3}Z_1Z_3 + \pro{\Psi_2^3} Z_2Z_3
  \end{equation*}
\end{lemma}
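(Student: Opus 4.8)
The plan is to fix a $1$-marble bimachine for $f$ and expand $H(Z_1,Z_2,Z_3) = f(w_0 v_1^{Z_1} w_1 v_2^{Z_2} w_2 v_3^{Z_3} w_3)$ using Lemma~\ref{lem:nodes-prod}. Since $f_1,f_2,f_3$ are idempotents, for $Z_1,Z_2,Z_3 \ge 4$ we can build (via Theorem~\ref{theo:simon} / Proposition~\ref{prop:effective-factorizations}) a factorization forest $\forest$ of $w_0 v_1^{Z_1} w_1 v_2^{Z_2} w_2 v_3^{Z_3} w_3$ in which each block $v_i^{Z_i}$ is produced by an idempotent node with $Z_i$ children (up to a bounded number of boundary copies absorbed into the $w_i$'s), and the rest of the word sits in a bounded "skeleton". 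The iterable nodes then come in a bounded number of families: those inside the $v_1$-block, the $v_2$-block, the $v_3$-block, and finitely many "skeleton" nodes whose structure does not depend on $Z_1,Z_2,Z_3$. Concretely there are $\Theta(Z_i)$ iterable nodes in the $v_i$-block, each with the \emph{same} type (same frontier word, same surrounding monoid context), because of idempotency.

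Next I would split the double sum $f(w) = \sum_{\nodi,\nodj \in \parti{\forest}} \pro{\nodi,\nodj}$ of Lemma~\ref{lem:nodes-prod} according to which families $\nodi$ and $\nodj$ belong to. The "diagonal within one block" contributions: for $\nodi,\nodj$ both in the $v_1$-block there are $\Theta(Z_1^2)$ pairs but, by Definition~\ref{def:prodnodes}, only those with $i \le j$ contribute, so this sums to a quadratic (in fact degree-$2$) polynomial in $Z_1$ alone — it has no cross term. Block-vs-block for two \emph{different} blocks, say $v_1$ and $v_2$: here there are $Z_1 Z_2$ pairs $(\nodi,\nodj)$ with $\nodi$ in the $v_1$-block and $\nodj$ in the $v_2$-block, and since $\nodi$ sits entirely to the left of $\nodj$ we always have $i \le j$, so the contribution is $Z_1 Z_2$ times a single value $\pro{\nodi,\nodj}$ that is independent of which copies we chose (again by idempotency of $f_1$ and $f_2$, all iterable nodes in a block have identical type). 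By unfolding the definition of $\pro{\nodi,\nodj}$ one checks this common value equals exactly $\pro{\Psi_1^2}$ (the bitype $\mu(w_0)f_1 \cro{v_1} f_1 \mu(w_1) f_2 \cro{v_2} f_2 \mu(w_2) f_3 \mu(w_3)$ is designed precisely to record the frontier words $v_1,v_2$ together with the monoid contexts $\mu(w_0)f_1$, $f_1\mu(w_1)f_2$, $f_2\mu(w_2)f_3\mu(w_3)$ around them). Symmetrically, the $v_1$-vs-$v_3$ pairs contribute $\pro{\Psi_1^3} Z_1 Z_3$ and the $v_2$-vs-$v_3$ pairs contribute $\pro{\Psi_2^3} Z_2 Z_3$. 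All remaining contributions — block-vs-skeleton, skeleton-vs-skeleton — involve at most one varying index, so they are polynomials in a single $Z_i$; collecting everything that is not one of the three designated cross terms into $Q(Z_1,Z_2,Z_3)$ gives a polynomial with no $Z_1Z_2$, $Z_1Z_3$ or $Z_2Z_3$ term, as claimed.

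The main obstacle is controlling the \emph{boundary effects}: when we iterate $v_i$ enough times ($Z_i \ge 4$) the factorization forest has a clean idempotent node, but its leftmost and rightmost children, and possibly its parent structure, differ from the "generic" middle children; moreover the factorization of $w_0 v_1^{Z_1} w_1 \cdots$ need not place the $v_i^{Z_i}$ in a single node — it may distribute copies of $v_i$ into nodes that also swallow parts of $w_{i-1}$ or $w_i$. One must argue that only a bounded number of copies are affected (so they contribute at most a linear-in-$Z_i$ correction folded into $Q$), and that the unbounded "bulk" of each block consists of iterable nodes of one fixed type whose pairwise productions with nodes of another block's fixed type is a constant — this is where idempotency of $f_1,f_2,f_3$ and the careful bookkeeping of which side "calls" which (the $i\le j$ constraint in $\pro{\nodi,\nodj}$) are essential, and where matching the resulting constant to the formally-defined $\pro{\Psi_i^j}$ requires a patient unfolding of Definitions~\ref{def:bitype} and~\ref{def:prodnodes}. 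A secondary subtlety is ensuring the degree bound: since $f$ is given by a $1$-marble bimachine, $f(w) = \mathcal{O}(|w|^2)$, which caps the degree of $H$ at $2$ and guarantees that $Q$ and the cross terms together exhaust the polynomial — there is no room for higher-degree surprises.
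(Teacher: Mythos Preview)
Your route through factorization forests and Lemma~\ref{lem:nodes-prod} is not the one the paper takes, and as written it has a genuine gap. The paper proves Lemma~\ref{lem:prod-iter} by a direct decomposition that never mentions forests: it expands $f(W(Z_1,Z_2,Z_3)) = \sum_{1\le i\le j\le |W|}\pro{i,j}$ straight from the definition of a $1$-marble bimachine, partitions this double sum according to which of the seven factors $w_0,\,v_1^{Z_1},\,w_1,\,v_2^{Z_2},\,w_2,\,v_3^{Z_3},\,w_3$ the positions $i$ and $j$ fall into (Equation~\ref{eq:fz}), and then applies to each piece the small computations of Claims~\ref{claim:Zw}, \ref{claim:wZ}, \ref{claim:ZZ} and the monotype Lemma~\ref{lem:Z}. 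Claim~\ref{claim:ZZ} is exactly what turns the three framed cross-block bitypes $\pro{m_0\cro{v_1^{Z_1}}m_1\cro{v_2^{Z_2}}\cdots}$ (and the two analogues) into $\pro{\Psi_i^j}Z_iZ_j$ plus linear corrections; every other term lives in at most one $Z_i$.

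Your argument can be repaired, but the gap is real: you assert that the iterable nodes in the $v_i$-block all have frontier word $v_i$, and that their pairwise production with a $v_j$-block node is literally $\pro{\Psi_i^j}$. Neither holds for a factorization handed to you by Theorem~\ref{theo:simon} or Proposition~\ref{prop:effective-factorizations}. If the subtree over a copy of $v_i$ itself contains idempotent nodes, then (i) there are iterable nodes at several depths inside the block, not one homogeneous family, and (ii) the frontier word $w[\fr{\nodi}{\forest}]$ of a middle child is a strict subword of $v_i$, so the bitype you recover is \emph{not} $\Psi_i^j$ and its production need not equal $\pro{\Psi_i^j}$ (the production depends on the actual letters, not just their $\mu$-image). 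The ``bounded boundary copies'' worry you raise is a red herring; the real issue is the internal structure of each copy.

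The fix is to abandon Theorem~\ref{theo:simon}/Proposition~\ref{prop:effective-factorizations} entirely and build $\forest$ by hand: make each $v_i^{Z_i}$ a single idempotent node whose $Z_i$ children are \emph{purely binary} factorizations of $v_i$, and glue the seven pieces at the top with binary nodes. Then the only iterable nodes are the $Z_i-2$ middle children of the three idempotent nodes, each with full frontier $v_i$; Lemma~\ref{lem:nodes-prod} plus your block-by-block split now goes through and literally reproduces the paper's Equation~\ref{eq:fz} computation. So your detour is salvageable, but once straightened out it is essentially a repackaging of the paper's direct argument, with extra machinery (frontiers, dependencies, $\parti{\forest}$) standing in for the elementary partition of $\{(i,j):i\le j\}$ into $7\times 7$ blocks.
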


\begin{remark} Here we shall not use $\Psi_2^3$, but the lemma will be re-used lated.
\end{remark}

Let $T_1, T_2, T'_2 \ge 4$,
we have:
   \begin{equation*}
   w_0 v_1^{\omega T_1} w_1 v_2^{\omega T_2} w_2 v_3^{\omega T'_2} w_3 = w^{2\omega-1} (\alpha \uu^{\omega T_1} \beta \ud^{\omega T_2} \gamma)  w^{\omega-1} (\alpha u_1^{2\omega} \beta \ud^{\omega T'_2}  \gamma) w^\omega.
   \end{equation*}

Thus by applying $f$ to each member of the equality:
  \begin{equation*}
  H(\omega T_1, \omega T_2, \omega T'_2) = G(T_1, T_2, 2, T'_2).
  \end{equation*}
  
Hence by using the two previous lemmas:
  \begin{equation*}
   P(T_1 +  2, T_2 + T'_2)= Q(\omega T_1, \omega T_2, \omega T'_2) + \omega^2\pro{\Psi_1^2}T_1T_2 +  \omega^2 \pro{\Psi_1^3}T_1T'_2 +  \omega^2 \pro{\Psi_2^3} T_2T'_2. 
  \end{equation*}

Since $Q(\omega T_1, \omega T_2, \omega T'_2)$ has no term in $T_1 T_2$ nor $T_1 T'_2$,
and $P$ depends on $T_2 +T'_2$, it follows by identification that $\pro{\Psi_1^2} = \pro{\Psi_1^3}$.
Finally $\pro{\Phi} = \pro{\Phi'}$ since:

\begin{claim}
$\Psi_1^2 = \Phi$ and $\Psi_1^3 = \Phi'$.
\end{claim}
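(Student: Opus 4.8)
The claim to prove is that $\Psi_1^2 = \Phi$ and $\Psi_1^3 = \Phi'$, where the bitypes are determined by the substitution $w_0 \defined w^{2\omega-1}\alpha$, $v_1 \defined \uu$, $w_1 \defined \beta$, $v_2 \defined \ud$, $w_2 \defined \gamma w^{\omega-1}\alpha u_1^{2\omega}\beta$, $v_3 \defined \ud$, $w_3 \defined \gamma w^{\omega}$, together with the chosen monoid elements.

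The plan is purely computational: expand each of the three formal components of $\Psi_1^2$ (namely $\mu(w_0)f_1$, then the bracket $f_1\mu(w_1)f_2$, then $f_2\mu(w_2)f_3\mu(w_3)$) and of $\Psi_1^3$, and check that they match $\Phi = em_1e_1\cro{\uu}e_1pe_2\cro{\ud}e_2n_2e$ and $\Phi' = em_1e_1\cro{\uu}e_1n_1em_2e_2\cro{\ud}e_2n_2e$ respectively. A bitype is determined by two words $\uu,\ud$ (here $v_1 = \uu$ and $v_2$ or $v_3 = \ud$) and three monoid elements $m,m',m''$; two bitypes with the same words are equal iff the three monoid elements agree and — crucially for $\pro{\cdot}$ — the internal factorization through $\mu$ agrees. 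So I only need to verify the monoid identities.

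First I would handle $\Psi_1^2 = \Phi$. Here $v_1 = \uu$, $v_2 = \ud$, $f_1 = \mu(\uu) = e_1$, $f_2 = \mu(\ud) = e_2$. The three monoid elements of $\Psi_1^2$ are: $\mu(w_0)f_1 = \mu(w^{2\omega-1}\alpha)e_1$; since $w = \alpha\uu^\omega\beta\ud^\omega\gamma$ and $\mu(\alpha) = em_1e_1$, $\mu(\beta) = e_1pe_2$, $\mu(\gamma) = e_2n_2e$, one computes $\mu(w) = em_1e_1\cdot e_1^\omega\cdot e_1pe_2\cdot e_2^\omega\cdot e_2n_2e = em_1e_1pe_2n_2e = e\cdot e = e$ using $m_1e_1pe_2n_2 = e$ and idempotency; hence $\mu(w^{2\omega-1}\alpha) = e^{2\omega-1}\cdot em_1e_1 = em_1e_1$, so $\mu(w_0)f_1 = em_1e_1e_1 = em_1e_1$. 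The middle element is $f_1\mu(w_1)f_2 = e_1\mu(\beta)e_2 = e_1\cdot e_1pe_2\cdot e_2 = e_1pe_2$. The third element is $f_2\mu(w_2)f_3\mu(w_3)$ where $v_3 = \ud$ so $f_3 = e_2$; I compute $\mu(w_2) = \mu(\gamma w^{\omega-1}\alpha u_1^{2\omega}\beta) = e_2n_2e\cdot e^{\omega-1}\cdot em_1e_1\cdot e_1^{2\omega}\cdot e_1pe_2 = e_2n_2em_1e_1pe_2$, $\mu(w_3) = \mu(\gamma w^\omega) = e_2n_2e\cdot e^\omega = e_2n_2e$, so the third element is $e_2\cdot e_2n_2em_1e_1pe_2\cdot e_2\cdot e_2n_2e = e_2n_2em_1e_1pe_2n_2e$; using $em_1e_1pe_2 = em_2e_2$ and $e_1pe_2n_2e = e_1n_1e$ and idempotency this collapses to $e_2n_2e$. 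So $\Psi_1^2 = em_1e_1\cro{\uu}e_1pe_2\cro{\ud}e_2n_2e = \Phi$, and the internal factorizations match because at every stage the $\mu$-values of the context words were computed explicitly. Then I would repeat the analogous (slightly longer) computation for $\Psi_1^3 = \mu(w_0)f_1\cro{v_1}f_1\mu(w_1)f_2\mu(w_2)f_3\cro{v_3}f_3\mu(w_3)$, where now the bracketed factor is $v_3 = \ud$ and the separating monoid element is $f_1\mu(w_1)f_2\mu(w_2)f_3 = e_1\cdot e_1pe_2\cdot e_2\cdot e_2n_2em_1e_1pe_2\cdot e_2 = e_1pe_2n_2em_1e_1pe_2$, which by the hypotheses reduces to $e_1n_1em_2e_2$; checking this reduction and the two outer elements gives $\Psi_1^3 = em_1e_1\cro{\uu}e_1n_1em_2e_2\cro{\ud}e_2n_2e = \Phi'$.

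I do not expect any genuine obstacle here — the claim is bookkeeping. The only point requiring care is that $\pro{\cdot}$ depends not just on the three boundary monoid elements but on the whole factored form of the bitype word, so I must confirm that the $\mu$-images of the context pieces $\alpha, \uu^\omega, \beta$, etc. that get absorbed are exactly the right idempotents and not merely some element with the same product; the idempotency of $e, e_1, e_2$ and the defining equations $m_1e_1pe_2n_2 = e$, $em_1e_1pe_2 = em_2e_2$, $e_1pe_2n_2e = e_1n_1e$ are what make this work, and each use of them should be flagged explicitly in the write-up.
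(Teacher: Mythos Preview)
Your proposal is correct and follows essentially the same computational approach as the paper: compute $\mu(w)=e$, then $\mu(w_0),\mu(w_1),\mu(w_2),\mu(w_3)$, and assemble the five components of each bitype using the idempotency of $e,e_1,e_2$ and the three hypotheses $m_1e_1pe_2n_2=e$, $em_1e_1pe_2=em_2e_2$, $e_1pe_2n_2e=e_1n_1e$. One small remark: your worry about an ``internal factorization through $\mu$'' is unnecessary, since by Definition~\ref{def:bitype} a bitype is literally the quintuple $(m,\uu,m',\ud,m'')\in M\times A^+\times M\times A^+\times M$, so equality of bitypes is just componentwise equality of the three monoid elements and two words---there is nothing further to check once those five agree.
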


\begin{proof}
With the elements chosen to apply Lemma \ref{lem:prod-iter}:
\begin{itemize}
\item $f_1 = \mu(u_1) = e_1$, $f_2 = e_2$, $f_3 = e_2$;
\item $\mu(w) = \mu(\alpha u_1^\omega \beta u_2^\omega \gamma) = e$;
\item $\mu(w_0) = e^{2 \omega - 1}  e m_1 e_1 = e m_1 e_1$;
\item $\mu(w_1) = e_1 p e_2 $;
\item $\mu(w_2) = e_2 n_2 (e m_1 e_1 p e_2) = e_2 n_2 e m_2 e_2$;
\item $\mu(w_3) = e_2 n_2 e$
\item $\mu(w_0) f_1 = e m_1 e_1$;
\item $f_1 \mu(w_1) f_2 = e_1 pe_2$;
\item $f_2 \mu(w_2) f_3 \mu(w_3) = e_2 n_2 (e m_2 e_2 n_2 e) = e_2 n_2 e$;
\item $f_1 \mu(w_1) f_2  \mu(w_2) f_3= (e_1 p e_2  n_2 e) m_2 e_2 = e_1 n_1 e m_2 e_2$;
\item $f_3 \mu(w_3) = e_2 n_2 e$. 
\end{itemize}
\item
And the $5$ last equalities give the equalities between the bitypes.
\end{proof}

\subsection{The production does not change when swapping $\uu$ and $\ud$}

This subsection is devoted to Lemma \ref{lem:revert2}.
Intuitively, it says that the production from $\ud$ to $\uu$ is the same
as that from $\uu$ to $\ud$, if we keep the same "contexts"
(which are independent from $p$).

\begin{lemma}  \label{lem:revert2}
Let $m_1,n_1,m_2,n_2 \in M$ and $\uu, \ud \in A^+$
such that $e_1 {\defined}\mu(\uu)$,  $e_2 {\defined} \mu(\ud)$
and $e {\defined} m_1 e_1 n_1 {=} m_2 e_2 n_2$ are idempotents.
Let $p \in M$ such that $m_1 e_1 p e_2 n_2 {=} e$, $em_1e_1p e_2{=} em_2e_2$
and $e_1p e_2 n_2 e{=} e_1n_1e$. Then:
\begin{equation*}
\pro{em_1 e_1 \cro{\uu} e_1 p e_2 \cro{\ud} e_2 n_2 e} = \pro{ e m_2 e_2 \cro{\ud} e_2 n_2e m_1 e_1 \cro{\uu} e_1 n_1 e}.
\end{equation*}
\end{lemma}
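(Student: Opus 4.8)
The plan is to mimic the strategy used in the previous subsection for Lemma~\ref{lem:revert1}, but now exploiting the full strength of Lemma~\ref{lem:prod-iter} by keeping all three iteration parameters variable. Thanks to Lemma~\ref{lem:revert1} we may already replace the separating element $p$ by $n_1 e m_2 e_2$ (or equivalently $e_1 p e_2$ by $e_1 n_1 e m_2 e_2$), so the bitype on the left-hand side can be written as $\pro{em_1 e_1 \cro{\uu} e_1 n_1 e m_2 e_2 \cro{\ud} e_2 n_2 e}$. The goal is then to show that this equals $\pro{em_2 e_2 \cro{\ud} e_2 n_2 e m_1 e_1 \cro{\uu} e_1 n_1 e}$, i.e. that exchanging the roles of the marked factor and the calling factor does not change the production.

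First I would pick, using surjectivity of $\mu$, words $\alpha,\beta,\gamma\in A^*$ with $\mu(\alpha)=em_1e_1$, $\mu(\beta)=e_1 n_1 e m_2 e_2$ (note this is an idempotent-flavoured context consistent with the hypotheses, and here $e_1\mu(\beta)e_2 = e_1 n_1 e m_2 e_2$), $\mu(\gamma)=e_2 n_2 e$. Then I would set, as in Lemma~\ref{lem:prod-iter}, $w_0 \defined \alpha$, $v_1 \defined \uu$, $w_1 \defined \beta$, $v_2 \defined \ud$, and I would introduce a \emph{third} iterated block on the other side: take $v_3$ to be a word whose image is another copy of $\uu$'s behaviour — concretely one wants to place $\uu$ both to the left of $\ud$ and to the right of $\ud$ in the same word, so that the two quadratic coefficients $\pro{\Psi_1^2}$ (production from $\uu$-on-the-left into $\ud$-in-the-middle) and $\pro{\Psi_2^3}$ (production from $\ud$-in-the-middle into $\uu$-on-the-right) both appear. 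The appropriate choice is $v_1 \defined \uu$, $v_2 \defined \ud$, $v_3 \defined \uu$ with suitable padding words $w_0,w_1,w_2,w_3$ chosen so that all the idempotent/context conditions of Lemma~\ref{lem:prod-iter} are met and so that $\Psi_1^2$ instantiates the first bitype in Lemma~\ref{lem:revert2} and $\Psi_2^3$ instantiates the second one (up to the $\pro$-reversal already licensed by Lemma~\ref{lem:revert1}). One must be careful that $v_1$ appears \emph{before} $v_2$, so $\Psi_1^2$ has $\uu$ as the marked factor called from $\ud$, whereas $v_2$ appears before $v_3$, so $\Psi_2^3$ has $\ud$ as the marked factor called from $\uu$ — exactly the swap we want.

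Next, applying Lemma~\ref{lem:forbid} to $\alpha,\beta,\gamma,\uu,\ud$ (as already done in the previous subsection, possibly re-used with a genuinely variable $X'$ this time, which the remark after Lemma~\ref{lem:forbid} explicitly anticipates) gives a polynomial $P$ with $G(X,Y,X',Y')=P(X+X',Y+Y')$. On the other hand, Lemma~\ref{lem:prod-iter} applied to the $v_1,v_2,v_3$-decomposition gives
\begin{equation*}
H(Z_1,Z_2,Z_3) = Q(Z_1,Z_2,Z_3) + \pro{\Psi_1^2}Z_1Z_2 + \pro{\Psi_1^3}Z_1Z_3 + \pro{\Psi_2^3}Z_2Z_3,
\end{equation*}
with $Q$ carrying no $Z_iZ_j$ monomial. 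Substituting the explicit words and equating $H$ with the corresponding specialization of $G$, then identifying coefficients: since $P$ depends only on $X+X'$ and $Y+Y'$, the $Z_1Z_2$ and $Z_2Z_3$ coefficients (which correspond, after the change of variables, to "$\uu$-block times $\ud$-block" in two different orders) must coincide, forcing $\pro{\Psi_1^2}=\pro{\Psi_2^3}$. Combined with the claim identifying these two bitypes (up to the already-established $\pro$-invariance from Lemma~\ref{lem:revert1}) with the two sides of the statement, this yields $\pro{em_1 e_1 \cro{\uu} e_1 p e_2 \cro{\ud} e_2 n_2 e} = \pro{em_2 e_2 \cro{\ud} e_2 n_2 e m_1 e_1 \cro{\uu} e_1 n_1 e}$.

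The main obstacle I anticipate is the bookkeeping of contexts: choosing $w_0,w_1,w_2,w_3$ so that \emph{simultaneously} (i) the hypotheses of Lemma~\ref{lem:prod-iter} (idempotency of $f_1,f_2,f_3$ and the fact that the $Z_iZ_j$-coefficients are genuinely the $\pro$ of the stated $\Psi$'s) hold, (ii) the specialization of the word matches a specialization of the family used in Lemma~\ref{lem:forbid} so that the $G=H$ identity is available, and (iii) the resulting $\Psi_1^2$ and $\Psi_2^3$ are, after applying Lemma~\ref{lem:revert1} to rewrite the separating element, \emph{exactly} the two bitypes in the statement (in particular the middle context $e_2 n_2 e m_1 e_1$ vs.\ $e_1 n_1 e m_2 e_2$ must come out right, which relies on the three compatibility equations $m_1e_1pe_2n_2=e$, $em_1e_1pe_2=em_2e_2$, $e_1pe_2n_2e=e_1n_1e$ being used symmetrically). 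Verifying these monoid identities — essentially a longer version of the Claim at the end of the previous subsection — is routine but delicate, and is where the bulk of the work lies.
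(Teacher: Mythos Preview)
Your proposal is correct and follows essentially the same approach as the paper: apply Lemma~\ref{lem:forbid} with $X'$ genuinely variable (and $Y'$ fixed), apply Lemma~\ref{lem:prod-iter} with $v_1=\uu$, $v_2=\ud$, $v_3=\uu$, and identify the $T_1T_2$ and $T_2T'_1$ coefficients to get $\pro{\Psi_1^2}=\pro{\Psi_2^3}$. Two minor remarks: the paper takes $\mu(\beta)=e_1pe_2$ directly and shows $\Psi_1^2=\Phi$ and $\Psi_2^3=\Phi'$ on the nose, so your pre-processing via Lemma~\ref{lem:revert1} is harmless but unnecessary; and the padding words must incorporate the $w^{2\omega-1}$, $w^{\omega-1}$, $w^{\omega}$ blocks from the word family of Lemma~\ref{lem:forbid} (e.g.\ $w_0=w^{2\omega-1}\alpha$, $w_2=\gamma w^{\omega-1}\alpha$, $w_3=\beta\ud^{2\omega}\gamma w^{\omega}$), not just $w_0=\alpha$ as you wrote --- this is precisely the bookkeeping you flag as the main obstacle, and it goes through using the three hypotheses on $p$ exactly as in the Claim at the end of the previous subsection.
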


We denote $\Phi \defined em_1 e_1 \cro{\uu} e_1 p e_2 \cro{\ud} e_2 n_2 e$ and
$\Phi' \defined  e m_2 e_2 \cro{\ud} e_2 n_2e m_1 e_1 \cro{\uu} e_1 n_1 e$
the bitypes considered in Lemma \ref{lem:revert2}.
They are depicted in Figure \ref{fig:bitypes-p}.

\begin{figure}[h!]
\centering
\begin{subfigure}{1\linewidth}
\centering
\begin{tikzpicture}[scale=0.7]


	\draw (-4.5,5) rectangle (4.5,4.5);
	\node[above] at (0,4.45) {$e$};
	\draw (-6.5,5) rectangle (-4.5,4.5);
	\node[above] at (-5.5,4.45) {$e$};
	\draw (6.5,5) rectangle (4.5,4.5);
	\node[above] at (5.5,4.45) {$e$};

	\draw (-4.5,4.5) rectangle (-3.5,4);
	\node[above] at (-4,3.9) {$m_1$};	
	\draw (-3.5,4.5) rectangle (-2.5,4);
	\node[above] at (-3,3.9) {$e_1$};	
		\draw[fill=red!80]  (-2.5,3.5) rectangle (-1.5,4);
		\node[above] at (-2,3.4) {$u_1$};
	\draw (-2.5,4.5) rectangle (-1.5,4);
	\node[above] at (-2,3.9) {$e_1$};
	\draw (-1.5,4.5) rectangle (-0.5,4);
	\node[above] at (-1,3.9) {$e_1$};	
	
	\draw (-0.5,4.5) rectangle (0.5,4);
	\node[above] at (0,3.9) {$p$};	
	
	\draw (4.5,4.5) rectangle (3.5,4);	
	\node[above] at (4,3.9) {$n_2$};	
	\draw (3.5,4.5) rectangle (2.5,4);
	\node[above] at (3,3.9) {$e_2$};
		\draw[fill=blue!70]  (2.5,3.5) rectangle (1.5,4);
		\node[above] at (2,3.4) {$u_2$};	
	\draw (2.5,4.5) rectangle (1.5,4);
	\node[above] at (2,3.9) {$e_2$};	
	\draw (1.5,4.5) rectangle (0.5,4);
	\node[above] at (1,3.9) {$e_2$};
	
	\draw[thick,dashed](-1.5,3.3) --(-1.5,3) -- (6.5,3) -- (6.5,4.3) ;
	\node[above] at (3.5,2.4) {$e_1 n_1 e$};
	
	\draw[thick,dashed](1.5,2.8) --(1.5,2.5) -- (-6.5,2.5) -- (-6.5,4.3);
	\node[above] at (-3.5,2.4) {$e m_2 e_2$};

\end{tikzpicture}
\subcaption{The bitype $\Phi = em_1 e_1 \cro{\uu} e_1 p e_2 \cro{\ud} e_2 n_2 e$}
\end{subfigure}

\vspace{1\baselineskip}

\begin{subfigure}{1\linewidth}
\centering
\begin{tikzpicture}[scale=0.7]


	\draw (-6.5,5) rectangle (-4.5,4.5);
	\node[above] at (-5.5,4.45) {$e$};		
	\draw (-4.5,5) rectangle (0.5,4.5);
	\node[above] at (-2,4.45) {$e$};
	\draw (2.5,5) rectangle (0.5,4.5);
	\node[above] at (1.5,4.45) {$e$};
	\draw (2.5,5) rectangle (7.5,4.5);
	\node[above] at (5,4.45) {$e$};
	\draw (7.5,5) rectangle (9.5,4.5);
	\node[above] at (8.5,4.45) {$e$};

	\draw (-4.5,4.5) rectangle (-3.5,4);
	\node[above] at (-4,3.9) {$m_2$};	
	\draw (-3.5,4.5) rectangle (-2.5,4);
	\node[above] at (-3,3.9) {$e_2$};	
		\draw[fill=blue!70]  (-2.5,3.5) rectangle (-1.5,4);
		\node[above] at (-2,3.4) {$u_2$};
	\draw (-2.5,4.5) rectangle (-1.5,4);
	\node[above] at (-2,3.9) {$e_2$};
	\draw (-1.5,4.5) rectangle (-0.5,4);
	\node[above] at (-1,3.9) {$e_2$};	
	
	\draw (-0.5,4.5) rectangle (0.5,4);
	\node[above] at (0,3.9) {$n_2$};	
	
	\draw (7.5,4.5) rectangle (6.5,4);
	\node[above] at (7,3.9) {$n_1$};
	\draw (3.5,4.5) rectangle (2.5,4);	
	\node[above] at (3,3.9) {$m_1$};	
	\draw (6.5,4.5) rectangle (5.5,4);
	\node[above] at (5,3.9) {$e_1$};
		\draw[fill=red!80]  (5.5,3.5) rectangle (4.5,4);
		\node[above] at (5,3.4) {$u_1$};	
	\draw (5.5,4.5) rectangle (4.5,4);
	\node[above] at (4,3.9) {$e_1$};	
	\draw (4.5,4.5) rectangle (3.5,4);
	\node[above] at (6,3.9) {$e_1$};

\end{tikzpicture}
\subcaption{The bitype $\Phi' = e m_2 e_2 \cro{\ud} e_2 n_2e m_1 e_1 \cro{\uu} e_1 n_1 e$}
\end{subfigure}

\caption{\label{fig:bitypes-p}  The bitypes considered in this subsection}
\end{figure}

The proof of Lemma \ref{lem:revert2} is essentially similar
to that of Lemma \ref{lem:revert1}, with the subtlety that we have
to swap $\uu$ and $\ud$ at some point.
Since $\mu$ is surjective, there exists $\alpha, \beta, \gamma \in A^*$ such that
$\mu(\alpha) = em_1e_1$,
$\mu(\beta) = e_1pe_2$,
$\mu(\gamma) = e_2n_2e$.
They describe an instance of the bitype $\Phi$.

We first apply Lemma \ref{lem:forbid} with $\alpha, \beta, \gamma, \uu$ and $\ud$.
We shall fix $Y' = 2$, but $X'$ is variable.
Let $\omega$ be fixed by Lemma \ref{lem:forbid} and similarily
$w \defined \alpha \uu^\omega \beta \ud^\omega \gamma$.
It allows to describe the output of
a $k$-blind bimachine on words of the form:
\begin{equation*}
w^{2\omega-1} (\alpha \uu^{\omega X} \beta \ud^{\omega Y} \gamma)  w^{\omega-1} (\alpha \uu^{\omega X'} \beta u_2^{2\omega}  \gamma) w^\omega
\end{equation*}

We now want to understand
what is the output of our $1$-marble bimachine
$\trans$ on such words
and relate this output with $\pro{\Phi}$ and $\pro{\Phi'}$.

For this purpose, we set $w_0 \defined w^{2\omega-1}\alpha$, $v_1 \defined \uu$, $w_1 = \beta$, $v_2 \defined \ud$, $w_2 \defined \gamma w^{\omega-1}\alpha$, $v_3 \defined \uu$ and $w_3 \defined  \beta u_2^{2 \omega} \gamma w^{\omega}$ and apply Lemma \ref{lem:prod-iter}.
We shall use $\Psi_2^3$, but not $\Psi_1^3$.

Let $T_1, T_2, T'_1 \ge 4$,
we have:
   \begin{equation*}
   w_0 v_1^{\omega T_1} w_1 v_2^{\omega T_2} w_2 v_3^{\omega T'_1} w_3 = w^{2\omega-1} (\alpha \uu^{\omega T_1} \beta \ud^{\omega T_2} \gamma)  w^{\omega-1} (\alpha \uu^{\omega T'_1} \beta u_2^{2\omega}  \gamma) w^\omega.
   \end{equation*}

Thus by applying $f$ to each member of the equality:
  \begin{equation*}
  H(\omega T_1, \omega T_2, \omega T'_1) = G(T_1, T_2, T'_1, 2).
  \end{equation*}
Hence by using Lemma \ref{lem:forbid} and Lemma \ref{lem:prod-iter}:
  \begin{equation*}
  P(T_1 +  T'_1, T_2 + 2)= Q(\omega T_1, \omega T_2, \omega T'_1) + \omega^2\pro{\Psi_1^2}T_1T_2 +  \omega^2 \pro{\Psi_1^3}T_1T'_1 +  \omega^2 \pro{\Psi_2^3} T_2T'_1. 
  \end{equation*}

Since $Q(\omega T_1, \omega T_2, \omega T'_1)$ has no term in $T_1 T_2$ nor $T_2 T'_1$
and $P$ depends on $T_1 +T'_1$,  it follows by identification that $\pro{\Psi_1^2} = \pro{\Psi_2^3}$.
Finally $\pro{\Phi} = \pro{\Phi'}$ since:

\begin{claim}
$\Psi_1^2 = \Phi$ and $\Psi_2^3 = \Phi'$.
\end{claim}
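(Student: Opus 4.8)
The plan is to prove the two identities $\Psi_1^2 = \Phi$ and $\Psi_2^3 = \Phi'$ by a direct monoid computation, exactly parallel to the analogous claim of the previous subsection; the only novelty here is that $\uu$ and $\ud$ occur in the swapped order in the relevant bitypes, so I must track which of the three hypotheses on $p$ is used in which slot.

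First I would record the data of the instantiation of Lemma~\ref{lem:prod-iter} used above: since $v_1 = \uu$, $v_2 = \ud$ and $v_3 = \uu$, we have $f_1 = \mu(\uu) = e_1$, $f_2 = \mu(\ud) = e_2$ and $f_3 = \mu(\uu) = e_1$, which are indeed idempotents, as required to apply the lemma. I would then compute $\mu(w) = \mu(\alpha)\,e_1\,\mu(\beta)\,e_2\,\mu(\gamma) = em_1e_1pe_2n_2e$, which equals $e$ by the hypothesis $m_1e_1pe_2n_2 = e$ and idempotency of $e$; hence $\mu(w^k) = e$ for every $k \ge 1$, and moreover $\mu(u_2^{2\omega}) = e_2$ since $u_2 = \ud$, so the fixed block $u_2^{2\omega}$ contributes the same monoid element as a power of $\ud$.

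Next I would evaluate the four factors $\mu(w_0) = e\cdot em_1e_1 = em_1e_1$, $\mu(w_1) = \mu(\beta) = e_1pe_2$, $\mu(w_2) = e_2n_2e\cdot e\cdot em_1e_1 = e_2n_2em_1e_1$, and $\mu(w_3) = e_1pe_2\cdot e_2\cdot e_2n_2e\cdot e = e_1pe_2n_2e = e_1n_1e$ (this last step by the hypothesis $e_1pe_2n_2e = e_1n_1e$). From these the monoid components of the two bitypes of Lemma~\ref{lem:prod-iter} read off:
\begin{itemize}
\item for $\Psi_1^2$: $\mu(w_0)f_1 = em_1e_1$; $f_1\mu(w_1)f_2 = e_1pe_2$; and $f_2\mu(w_2)f_3\mu(w_3) = e_2n_2e\,(m_1e_1n_1)\,e = e_2n_2e$, using $m_1e_1n_1 = e$ and idempotency;
\item for $\Psi_2^3$: $\mu(w_0)f_1\mu(w_1)f_2 = em_1e_1pe_2 = em_2e_2$, using the hypothesis $em_1e_1pe_2 = em_2e_2$; $f_2\mu(w_2)f_3 = e_2n_2em_1e_1$; and $f_3\mu(w_3) = e_1n_1e$.
\end{itemize}
Substituting these back into the definitions of $\Psi_1^2$ and $\Psi_2^3$, and keeping $v_1 = \uu$, $v_2 = \ud$, $v_3 = \uu$ as literal words, gives $\Psi_1^2 = em_1e_1\cro{\uu}e_1pe_2\cro{\ud}e_2n_2e = \Phi$ and $\Psi_2^3 = em_2e_2\cro{\ud}e_2n_2em_1e_1\cro{\uu}e_1n_1e = \Phi'$, which is exactly the claim.

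The whole argument is routine algebra; the only thing requiring care — and the only real obstacle — is the bookkeeping: making sure each of the three equational hypotheses on $p$ (namely $m_1e_1pe_2n_2 = e$, $em_1e_1pe_2 = em_2e_2$, $e_1pe_2n_2e = e_1n_1e$) together with the idempotency relations $m_1e_1n_1 = m_2e_2n_2 = e$ is invoked in precisely the slot where it is needed, and that $\omega$ has been taken large enough (via Lemma~\ref{lem:forbid}) for the powers of $w$ to collapse to $e$.
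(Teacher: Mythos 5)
Your proof is correct and is exactly the kind of routine monoid verification that the paper carries out explicitly for the analogous claim in the previous subsection ($\Psi_1^2 = \Phi$, $\Psi_1^3 = \Phi'$); the paper simply omits the computation here. All four monoid evaluations $\mu(w_0), \dots, \mu(w_3)$, the use of $\mu(u_2^{2\omega}) = e_2$, the collapse of powers of $w$ to $e$, and the three equational hypotheses on $p$ are invoked correctly and in the right slots.
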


\subsection{Conclusion and proof of Lemma \ref{lem:blind-symm}}

Let $m_1,n_1,m_2,n_2 \in M$ and $\uu, \ud \in A^+$
such that $e_1 {\defined}\mu(\uu)$,  $e_2 {\defined} \mu(\ud)$
and $e {\defined} m_1 e_1 n_1 {=} m_2 e_2 n_2$ are idempotents.
\begin{description}

\item[- by applying  Lemma \ref{lem:revert1}:]
if $p \in M$ is such that 

$m_1 e_1 p e_2 n_2 = e$, $em_1e_1p e_2= em_2e_2$ and $e_1p e_2 n_2 e= e_1n_1e$,

then $\pro{em_1 e_1 \cro{\uu} e_1 p e_2 \cro{\ud} e_2 n_2 e}$
$ = \pro{em_1 e_1 \cro{\uu} e_1 n_1 e m_2 e_2 \cro{\ud} e_2 n_2 e}$;

\item[- by applying  Lemma \ref{lem:revert2} (after swapping $1$ and $2$):]
if $p \in M$ is such that 

$m_2 e_2 p e_1 n_1 = e$, $em_2e_2pe_1 = em_1e_1$ and $ e_2pe_1 n_1e = e_2n_2e$,

then $\pro{em_2 e_2 \cro{\ud} e_2 p e_1 \cro{\uu} e_1 n_1 e}$
$ = \pro{em_1 e_1 \cro{\uu} e_1 n_1 e m_2 e_2 \cro{\ud} e_2 n_2 e} $.

\end{description}

Finally, we just set $K \defined  \pro{em_1 e_1 \cro{\uu} e_1 n_1 e m_2 e_2 \cro{\ud} e_2 n_2 e} $.
The production is independent from $p$ and of whether $\uu$ is before or after $\ud$.

\subsection{Technical tools: proofs of Lemma \ref{lem:forbid} and Lemma \ref{lem:prod-iter}}

\subsubsection{ Proof of Lemma \ref{lem:forbid}}

We assume that $f: A^+ \rightarrow \Nat$ can be computed
by a $k$-blind bimachine for some $k \ge 1$.
Let $\nu: A^* \rightarrow N$ be the morphism
of this bimachine (and all its sub-bimachines, without loss of generality).
In this proof we forget about the fact that $f$ was given by a $1$-marble
bimachine with morphism $\mu$: here,
the term "idempotent" will always mean "idempotent of $N$".

Let $\omega$ be the idempotent index of $N$,
that is the smallest integer such that $m^\omega$ is idempotent
for all $m \in N$.
In particular, $f_1 \defined \nu(\uu^\omega)$, $f_2 \defined \nu(\ud^\omega) $
are idempotents.
Let $w \defined \alpha \uu^\omega \beta \ud^\omega \gamma $,
then $f \defined \nu(w^\omega) $ is also an idempotent.
Note that forall $X,Y \ge 1$, $\nu(\alpha \uu^{\omega X} \beta \ud^{\omega Y}  \gamma) = \nu(w)$.

Now for $X,Y, X', Y' \ge 2$ let:
  \begin{equation*}
W(X,Y,X',Y') \defined w^{2\omega-1} (\alpha \uu^{\omega X} \beta \ud^{\omega Y} \gamma)  w^{\omega-1} (\alpha \uu^{\omega X'} \beta \ud^{\omega Y'}  \gamma) w^\omega) \in A^+
  \end{equation*}

We show Lemma \ref{lem:forbid} by induction on $k \ge 0$, starting with
the case $k=0$.

\begin{sublemma} \label{sub:cas} If $\utrans = (A,N,\nu,\lambda)$ is a $0$-blind bimachine
computing a function $g: A^+ \rightarrow \Nat$,
there exists a polynomial $P: \Nat^2 \rightarrow \Nat$ of degree at most $1$ such that:
  \begin{equation*}
  \forall X,Y, X', Y' \ge 2, f(W(X,Y,X',Y') ) = P(X+X', Y+Y').
  \end{equation*}
\end{sublemma}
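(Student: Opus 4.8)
The plan is to use that a $0$-blind bimachine is just a bimachine. Writing $\ell_i^v \defined (\nu(v[1{:}i{-}1]), v[i], \nu(v[i{+}1{:}|v|]))$ for the local datum of position $i$ in a word $v$, we have $g(v) = \sum_{i} \lambda(\ell_i^v) = \sum_{(m,a,n) \in N \times A \times N} \lambda(m,a,n)\, c_v(m,a,n)$, where $c_v(m,a,n)$ is the number of positions of $v$ whose local datum is $(m,a,n)$. It is therefore enough to prove that, for every fixed triple $(m,a,n)$, the map $(X,Y,X',Y') \mapsto c_{W(X,Y,X',Y')}(m,a,n)$ is of the form $\sigma_{m,a,n}(X{+}X') + \tau_{m,a,n}(Y{+}Y') + \rho_{m,a,n}$ with coefficients in $\Nat$; the $\lambda$-weighted sum of these maps then gives the desired degree-$\le 1$ polynomial.

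First I would cut $W \defined W(X,Y,X',Y')$ into its blocks: the \emph{rigid} ones ($w^{2\omega-1}$, the occurrences of $\alpha$, $\beta$, $\gamma$, the $w^{\omega-1}$, the trailing $w^\omega$) and the four \emph{elastic} ones $u_1^{\omega X}$, $u_2^{\omega Y}$, $u_1^{\omega X'}$, $u_2^{\omega Y'}$. Using that $f_1 \defined \nu(u_1^\omega)$, $f_2 \defined \nu(u_2^\omega)$, $f \defined \nu(w^\omega)$ are idempotent, together with $\nu(\alpha u_1^{\omega X}\beta u_2^{\omega Y}\gamma) = \nu(w)$ for $X,Y\ge 1$, I would establish: (i) for each block, the $\nu$-image of the factor of $W$ strictly to its left and the one strictly to its right are independent of $X,Y,X',Y'$ — this amounts to absorbing variable powers into idempotents, e.g.\ $\nu(u_1^{\omega X}) = f_1$ and $\nu(w)^{2\omega+1} = f\nu(w)$; and (ii) the two $u_1$-blocks are surrounded by the \emph{same} pair of left/right images, and likewise for the two $u_2$-blocks, which relies on identities such as $\nu(w)^{2\omega-1} = \nu(w)^{3\omega-1} = f\,\nu(w)^{\omega-1}$. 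Consequently each rigid block contributes a constant amount to $c_W(m,a,n)$, while the two $u_1$-blocks (resp.\ the two $u_2$-blocks) contribute through one and the same local count, evaluated at $X$ and at $X'$ (resp.\ at $Y$ and at $Y'$).

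It then remains to count the positions of a single elastic block $u_1^{\omega X}$ placed in a fixed context with left image $P$ and right image $S$. A position in copy $\ell \in \{1,\dots,\omega X\}$ at internal offset $s \in \{1,\dots,|u_1|\}$ has prefix image $P\,\nu(u_1)^{\ell-1}\,\nu(u_1[1{:}s{-}1])$ and suffix image $\nu(u_1[s{+}1{:}|u_1|])\,\nu(u_1)^{\omega X-\ell}\,S$, so its local datum is determined by $s$ and by the pair $(\nu(u_1)^{\ell-1}, \nu(u_1)^{\omega X-\ell})$, whose exponents sum to $\omega X - 1$. Since $j \mapsto \nu(u_1)^j$ is $\omega$-periodic for $j \ge \omega$ (because $\nu(u_1)^\omega = f_1$ is idempotent), I would split the range of $\ell$ into a periodic interior and two transient ends of length $\omega$, check that for $X \ge 2$ the two transient ends already yield a count independent of $X$ while the interior contributes exactly $X-2$ full periods, and conclude that the block contributes $(X-2)\alpha + \beta$ to $c_W(m,a,n)$ for constants $\alpha,\beta \in \Nat$. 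Adding up the contributions of the two $u_1$-blocks, the two $u_2$-blocks and the rigid blocks gives $c_W(m,a,n) = \alpha(X{+}X') + \alpha'(Y{+}Y') + \rho$, and substituting into $g(W)=\sum\lambda\,c_W$ produces the required polynomial $Q$ of degree at most $1$.

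The main obstacle is making the periodic-interior-plus-transients count valid for \emph{all} $X \ge 2$ rather than only for large $X$: at $X = 2$ there is no interior, and one must verify by an explicit index substitution (writing $j_1 = \ell-1$, $j_2 = \omega X-\ell$ with $j_1+j_2 = \omega X-1$) that the two transient ends are exactly the ones occurring for $X \ge 3$, so that the affine formula $(X-2)\alpha+\beta$ still holds at $X=2$. The remaining work — checking (ii) and that rigid blocks contribute constants — is a short list of idempotency computations in $N$.
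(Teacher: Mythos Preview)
Your proposal is correct and follows essentially the same approach as the paper: slice $W(X,Y,X',Y')$ into rigid pieces and the four elastic powers, use the idempotency of $f,f_1,f_2$ to show that the two $u_1$-blocks (resp.\ $u_2$-blocks) sit in identical $\nu$-contexts, and conclude that each elastic block contributes an affine function of its exponent with matching linear coefficients for $X,X'$ and for $Y,Y'$. The only cosmetic difference is that the paper groups the powers into $u_1^\omega$-chunks and absorbs the first and last chunk into the neighboring rigid factors (so the middle $X{-}2$ chunks trivially share a context), whereas you work letter-by-letter and invoke $\omega$-periodicity of $j\mapsto\nu(u_1)^j$ to isolate the transient ends; both yield the same $(X{-}2)\alpha+\beta$ count.
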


\begin{proof} Intuitively, we only show that since
$\uu^{X}$ and $\uu^{X'}$ (resp. $\ud^{Y}$ and $\ud^{Y'}$)
are in the same "context", the productions are the same
in both blocks.
Formally, we extend the definition of $\lambda: N \times A \times N \rightarrow \Nat$
to $\lambda: N \times A^* \times N \rightarrow \Nat$ with:
  \begin{equation*}
  \lambda(m,w,n) \defined \sum_{i=1}^{|w|} \lambda(m\nu(w[1{:}i{-}1]),w[i],\nu(w[i{+}1{:}|w|])n).
  \end{equation*}

It follows by partitioning the sum defining $g(w)$ that:
\begin{claim}
If $w = w_1 \cdots w_{\ell}$ with $w_j \in A^*$, then:
\begin{equation*}
g(w) = \sum_{j=1}^\ell \lambda(\nu(w_1 \cdots w_{j-1}),w_j,\nu(w_{j+1}\cdots w_{\ell})).
\end{equation*}
\end{claim}
By slicing the input $W(X,Y,X',Y')$ in several factors, we
thus show that $g(W(X,Y,X',Y'))$ is the sum of several $\lambda(m,w_i,n)$:
\begin{itemize}
\item $\lambda(\nu(\vide), w^{2\omega-1} \alpha \uu^\omega, \nu(\uu^{\omega (X-1)} \beta \ud^{\omega Y} \gamma  w^{\omega-1} \alpha \uu^{\omega X'} \beta \ud^{\omega Y'}  \gamma w^\omega))$

$= \lambda(\nu(\vide), w^{2\omega-1} \alpha \uu^\omega, f_1 \nu(\beta) f_2 \nu(\gamma) f )$
$=C_1$  constant with respect to $X,Y,X',Y'$;

\vspace*{0.5\baselineskip}

\item $\displaystyle \sum_{i=1}^{X-2} \lambda(\nu(w^{2\omega-1} \alpha \uu^{\omega i}),\uu^{\omega}, \nu(\uu^{\omega (X-1-i)} \beta \ud^{\omega Y} \gamma  w^{\omega-1} \alpha \uu^{\omega X'} \beta \ud^{\omega Y'}  \gamma w^\omega))$

$\displaystyle = \sum_{i=1}^{X-2} \lambda(f \nu(w^{\omega-1}\alpha) f_1,\uu^{\omega},f_1 \nu(\beta) f_2 \nu(\gamma) f)$

$= (X-2) \lambda(f \nu(w^{\omega-1}\alpha) f_1,\uu^{\omega},f_1 \nu(\beta) f_2 \nu(\gamma) f) =: (X-2) C_X$;

\vspace*{0.5\baselineskip}

\item $\lambda(\nu(w^{2\omega-1} \alpha \uu^{\omega (X-1)}), \uu^{\omega} \beta \ud^{\omega}, \nu(\ud^{\omega (Y-1)} \gamma  w^{\omega-1} \alpha \uu^{\omega X'} \beta \ud^{\omega Y'}  \gamma w^\omega))$

$=\lambda(f \nu(w^{\omega-1}\alpha) f_1, \uu^{\omega} \beta \ud^{\omega}, f_2 \nu(\gamma) f) = C_2$;

\vspace*{0.5\baselineskip}

\item $\displaystyle \sum_{i=1}^{Y-2} \lambda(\nu(w^{2\omega-1} \alpha \uu^{\omega X} \beta \ud^{\omega i}),\ud^{\omega}, \nu(\ud^{\omega (Y-1-i)} \gamma  w^{\omega-1} \alpha \uu^{\omega X'} \beta \ud^{\omega Y'}  \gamma w^\omega))$

$\displaystyle = \sum_{i=1}^{Y-2} \lambda(f \nu(w^{\omega-1}\alpha) f_1 \nu(\beta) f_2,\ud^{\omega}, f_2 \nu(\gamma)  f)$

$= (Y-2)\lambda(f \nu(w^{\omega-1}\alpha) f_1 \nu(\beta) f_2,\ud^{\omega}, f_2 \nu(\gamma)  f) =: (Y-2) C_Y$;

\vspace*{0.5\baselineskip}

\item $ \lambda(\nu(w^{2\omega-1} \alpha \uu^{\omega X} \beta \ud^{\omega (Y-1)}),\ud^{\omega} \gamma  w^{\omega-1} \alpha \uu^{\omega}, \nu(\uu^{\omega(X'-1)} \beta \ud^{\omega Y'}  \gamma w^\omega))$
$=C_3$;

\vspace*{0.5\baselineskip}

\item $\displaystyle \sum_{i=1}^{X'-2} \lambda(\nu(w^{2\omega-1} \alpha \uu^{\omega X} \beta  \ud^{\omega Y} \gamma  w^{\omega-1} \alpha  \uu^{\omega i}),\uu^{\omega}, \nu(\uu^{\omega (X'-1-i)}\beta \ud^{\omega Y'}  \gamma w^\omega))$

$= \displaystyle \sum_{i=1}^{X'-2} \lambda(f \nu(w^{\omega-1} \alpha)  f_1),\uu^{\omega}, f_1 \nu(\beta) f_2 \nu(\gamma)f) = (X'-2) C_X$;

\vspace*{0.5\baselineskip}

\item $\lambda(\nu(w^{2\omega-1} \alpha \uu^{\omega X} \beta  \ud^{\omega Y} \gamma  w^{\omega-1} \alpha  \uu^{\omega (X'-1)}),\uu^{\omega} \beta \ud^{\omega}, \nu(\ud^{\omega(Y'-1)}  \gamma w^\omega)) = C_4$;

\vspace*{0.5\baselineskip}

\item $\displaystyle \sum_{i=1}^{Y'-2} \lambda(\nu(w^{2\omega-1} \alpha \uu^{\omega X} \beta  \ud^{\omega Y} \gamma  w^{\omega-1} \alpha  \uu^{\omega X'} \beta \ud^{i}), \ud, \nu(\ud^{\omega (Y'-1-i)} \gamma w^\omega))$

$= \displaystyle \sum_{i=1}^{Y'-2} \lambda(f  \nu(w^{\omega-1} \alpha)  f_1 \nu(\beta) f_2), \ud, f_2 \nu(\gamma) f) = (Y'-2) C_Y$;

\vspace*{0.5\baselineskip}

\item $\lambda(\nu(w^{2\omega-1} \alpha \uu^{\omega X} \beta  \ud^{\omega Y} \gamma  w^{\omega-1} \alpha  \uu^{\omega X'} \beta \ud^{\omega(Y'-1)}), \ud^{\omega} \gamma w^\omega, \nu(\vide)) = C_5$.
\end{itemize}

Putting the terms together we get a polynomial of degree at most $1$ in $X+X',Y+Y'$.
\end{proof}

 For the induction step, let us consider a bimachine
 computing $f$
 with external blind functions $\exte_1, \dots, \exte_n$
 computed by $(k{-}1)$-blind bimachines
 (with morphism $\nu$). Then 
 by induction we get polynomials $Q_1, \dots, Q_n: \Nat^2 \rightarrow \Nat$
 of degree at most $k{-}1$
 such that:
  \begin{equation*}
  \forall X,Y, X', Y' \ge 2, \exte_i(W(X,Y,X',Y') ) = Q_i(X+X', Y+Y').
   \end{equation*}

By definition of external blind functions,
there exists $g_1, \dots, g_n$ computable
by a $0$-blind bimachine with monoid morphism $\nu$
(we define $g_i$ as the function which counts
"the number of times $\exte_i$ is called
by the main machine")
such that forall $w \in A^+$:
 \begin{equation*}
 f(w) = \sum_{i=1}^n g_i(w) \exte_i(w).
   \end{equation*}

By Sublemma \ref{sub:cas}, we get
polynomials $Q'_1, \dots, Q'_n: \Nat^2 \rightarrow \Nat$ such that:
 \begin{equation*}
 \forall X,Y, X', Y' \ge 2, g_i(W(X,Y,X',Y') ) = Q'_i(X+X', Y+Y').
  \end{equation*}
  
The polynomial of degree at most $k$ for $f$ is:
 \begin{equation*}
 Q \defined \sum_{i=1}^n Q'_i Q_i.
 \end{equation*}

\subsubsection{Proof of Lemma \ref{lem:prod-iter}}

Let $W(Z_1, Z_2, Z_3) \defined w_0 v_1^{Z_1} w_1 v_2^{Z_2} w_2 v_3^{Z_3} w_3$.
We want to describe $f(W(Z_1, Z_2,Z_3))$ as a polynomial.
Recall that $f$ is computed by the $1$-marble bimachine
$\trans = (A,M, \mu, \oras, \lambda)$. For $\exte \in \oras$,
let $\trans_{\exte} \defined (A,M, \mu, \lambda_{\exte})$
be the auxiliary bimachine which computes it.

\subparagraph*{Decomposing bitypes.} In order to study
the production performed by $\trans$ on $W(Z_1, Z_2, Z_3)$,
we first show how the productions on bitypes behave when
iterating an idempotent.

\begin{claim}\label{claim:Zw} Let $m,m',m'' \in M$ and $v,w \in A^+$ such that $e \defined \mu(v)$
is an idempotent.

Then $\forall Z \ge 2$:
\begin{align*}
\pro{m\cro{v^Z}m'\cro{w}m''} &= \pro{m\cro{v} e m'\cro{w}m''} +  \pro{me\cro{v} m'\cro{w}m''}\\
&+ (Z-2) \pro{m e\cro{v} e m'\cro{w}m''}.
\end{align*}
\end{claim}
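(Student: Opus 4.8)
The plan is to prove Claim \ref{claim:Zw} by splitting the double sum defining $\pro{m\cro{v^Z}m'\cro{w}m''}$ according to which copy of $v$ the calling position $i$ falls into. Recall that by definition, writing $\Phi \defined m\cro{v^Z}m'\cro{w}m''$ and indexing the $Z$ copies of $v$ as blocks $B_1, \dots, B_Z$, we have $\pro{\Phi} = \sum_{i,j} \Phi(i,j)$ where $i$ ranges over positions of $v^Z$ and $j$ over positions of $w$. For a fixed $j$, the external function $\exte_j$ called from position $j$ of $w$ depends only on the monoid elements $m \mu(v^Z) m' \mu(w[1{:}j{-}1]) = m e m' \mu(w[1{:}j{-}1])$, $w[j]$, and $\mu(w[j{+}1{:}|w|]) m''$, where we use $\mu(v^Z) = e$ since $Z \ge 1$ and $e$ idempotent; in particular $\exte_j$ is \emph{independent of $Z$}. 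So the whole $Z$-dependence is concentrated in the inner production $\Phi(i,j)$ as $i$ ranges over $v^Z$.

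Next I would compute, for a fixed copy $B_t$ of $v$ (with $1 \le t \le Z$) and a position $i$ inside it corresponding to position $i'$ of $v$, the value $\Phi(i,j)$. By definition this is $\lambda_{\exte_j}(m \mu(v^{t-1}) \mu(v[1{:}i'{-}1]),\, v[i'],\, \mu(v[i'{+}1{:}|v|]) \mu(v^{Z-t}) m' \mu(w[1{:}j]))$. Using $\mu(v^{t-1}) = e$ for $t \ge 2$ and $\mu(v^{Z-t}) = e$ for $t \le Z{-}1$ (both by idempotency), this value is: $\lambda_{\exte_j}(m \mu(v[1{:}i'{-}1]), v[i'], \mu(v[i'{+}1{:}|v|]) e m' \cdots)$ when $t = 1$; $\lambda_{\exte_j}(m e \mu(v[1{:}i'{-}1]), v[i'], \mu(v[i'{+}1{:}|v|]) m' \cdots)$ when $t = Z$; and $\lambda_{\exte_j}(m e \mu(v[1{:}i'{-}1]), v[i'], \mu(v[i'{+}1{:}|v|]) e m' \cdots)$ for every \emph{middle} copy $2 \le t \le Z{-}1$. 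Summing over $i'$ and $j$, the $t=1$ contribution is exactly $\pro{m\cro{v} e m'\cro{w}m''}$, the $t=Z$ contribution is $\pro{me\cro{v} m'\cro{w}m''}$, and each of the $Z{-}2$ middle copies contributes the same quantity $\pro{m e\cro{v} e m'\cro{w}m''}$ — here I must double-check that the external function $\exte_j$ computed with the "big" context agrees with the one appearing in the definition of each sub-bitype, which holds because $\exte_j$ only depends on $m e m' \mu(w[1{:}j{-}1])$ and $\mu(w[j{+}1{:}|w|])m''$, and $mem' = m(em')= (me)(em')$ etc. reduce correctly once $e$ is idempotent. Taking $C \defined \pro{m\cro{v} e m'\cro{w}m''} + \pro{me\cro{v} m'\cro{w}m''}$ (independent of $Z$) gives the claimed identity for all $Z \ge 2$.

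The main obstacle, and the only place requiring genuine care, is bookkeeping the monoid elements correctly: one has to be sure that when $v^Z$ is inserted as a factor, the "left context" seen by position $i$ in copy $B_t$ really collapses to $m e$ (for $t \ge 2$) rather than $m$, and symmetrically the right context collapses to $e m'$ (for $t \le Z{-}1$) rather than $m'$, and that the first and last copies are genuinely exceptional. This is where the hypotheses $e = \mu(v)$ idempotent and $Z \ge 2$ (so that there is at least one "first" and one "last" copy, possibly coinciding concerns handled by $Z \ge 2$ making them distinct) are used. Everything else is a routine re-indexing of a finite double sum, with no estimates involved.
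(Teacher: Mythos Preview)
Your proposal is correct and follows essentially the same approach as the paper: decompose $v^Z$ into its $Z$ copies, observe that the external function $\exte_j$ called from $w$ is independent of $Z$ (since $\mu(v^Z)=e$), and regroup the per-copy contributions into the three bitypes according to whether the copy is first, last, or in the middle. The paper's own proof is only a two-line sketch saying exactly this; your version simply fills in the bookkeeping (in particular the verification that $mem' = m\mu(v)em' = me\mu(v)m' = me\mu(v)em'$, so the $\exte_j$'s in the three sub-bitypes all coincide with the one in $\Phi$), which is the right thing to do.
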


\begin{proof}[Proof sketch.] We consider the word $v^Z$ as a concatenation
$v \cdots v$ and look at the production performed on each $v$ from $w$.
We regroup the production depending on the bitypes.
\end{proof}

The following claim is obtained symmetrically.

\begin{claim} \label{claim:wZ} Let $m,m',m'' \in M$ and $v,w \in A^+$ such that $\mu(v)$
is an idempotent.

Then  $\forall Z \ge 2$:
\begin{align*}
\pro{m\cro{w}m'\cro{v^Z}m''} &= \pro{m\cro{w} m' e \cro{v}m''} +  \pro{m\cro{w} m'\cro{w}em''}\\
&+ (Z-2) \pro{m \cro{w} m'e\cro{v}em''}.
\end{align*}
\end{claim}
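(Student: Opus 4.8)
This claim is the symmetric counterpart of Claim~\ref{claim:Zw}; the plan is to make the symmetry explicit. In Claim~\ref{claim:Zw} the iterated factor $v^Z$ occupied the first slot of the bitype---the one on which productions are made---and the proof decomposed the \emph{inner} sum of $\pro{\cdot}$; here $v^Z$ occupies the second slot---the one the calls are issued from---so I would instead decompose the \emph{outer} sum. Concretely, writing $v^Z$ as the concatenation of $Z$ successive copies of $v$, setting $e = \mu(v)$, and splitting the positions $1 \le j \le |v^Z|$ of the second factor according to which copy of $v$ contains them, one obtains
\begin{equation*}
\pro{m \cro{w} m' \cro{v^Z} m''} = \sum_{k=1}^{Z} \sum_{\substack{1 \le i \le |w| \\ 1 \le j' \le |v|}} \Phi\bigl(i,(k-1)|v|+j'\bigr),
\end{equation*}
where $\Phi$ denotes the bitype $m \cro{w} m' \cro{v^Z} m''$. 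It then remains to identify, for each $k$, the inner double sum over $(i,j')$ with the production of a short bitype.

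The key step is an idempotency collapse. Fix $j = (k-1)|v| + j'$ inside the $k$-th copy. Unwinding the definition of $\Phi(i,j)$, the only data that depend on the exponent are three monoid factors of the form $e^{k-1}$ or $e^{Z-k}$: the external function called at $j$ is $\exte_j = \lambda\bigl(m\,\mu(w)\,m'\,e^{k-1}\mu(v[1{:}j'{-}1]),\, v[j'],\, \mu(v[j'{+}1{:}|v|])\,e^{Z-k}\,m''\bigr)$, and the right context of position $i$ appearing in $\Phi(i,j)$ is $\mu(w[i{+}1{:}|w|])\,m'\,e^{k-1}\mu(v[1{:}j'])$. Since $e$ is idempotent, $e^{k-1} = \neutral$ when $k = 1$ and $e^{k-1} = e$ when $k \ge 2$, and dually $e^{Z-k} = \neutral$ when $k = Z$ and $e^{Z-k} = e$ when $k \le Z-1$. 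Substituting these, the inner sum indexed by $k=1$ equals $\pro{m \cro{w} m' \cro{v} em''}$, the one indexed by $k=Z$ equals $\pro{m \cro{w} m'e \cro{v} m''}$, and each of the $Z-2$ inner sums with $2 \le k \le Z-1$ equals $\pro{m \cro{w} m'e \cro{v} em''}$; summing over $k$ yields the claimed identity (and no choice of the auxiliary constant $C$ is needed, exactly as in Claim~\ref{claim:Zw}).

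I expect the only delicate point to be this bookkeeping: one must track, relative to Claim~\ref{claim:Zw}, exactly which arguments of $\exte_j$ and of $\Phi(i,j)$ still see the exponent. Here there are precisely the three occurrences listed above---and, crucially, unlike in Claim~\ref{claim:Zw} the exponent no longer reaches the first argument $m\,\mu(w)\,m'$ of $\exte_j$, because the iterated factor now sits in the calling slot rather than the producing one. Since $e$ is idempotent all three occurrences collapse simultaneously, so there is no real obstacle; in the write-up I would spell out the case $k = 1$ in full---checking that the triple defining $\exte_j$ and the right context of $i$ inside $\Phi(i,j)$ match the corresponding data of the bitype $m \cro{w} m' \cro{v} em''$---and then simply note that $k = Z$ (where $m'$ becomes $m'e$) and the middle cases $2 \le k \le Z-1$ (where additionally $m''$ becomes $em''$) are completely analogous.
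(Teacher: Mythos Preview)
Your argument is correct and is exactly the symmetry the paper invokes (its own proof is the single line ``obtained symmetrically''), and you have in fact written out more detail than the paper does. One cosmetic remark: your parenthetical that the exponent ``no longer reaches the first argument $m\,\mu(w)\,m'$ of $\exte_j$'' is slightly off---the first argument of $\exte_j$ is $m\,\mu(w)\,m'\,e^{k-1}\mu(v[1{:}j'{-}1])$ and does carry $e^{k-1}$, as you yourself listed; it is the \emph{first argument of $\Phi(i,j)$}, namely $m\,\mu(w[1{:}i{-}1])$, that is now exponent-free---but this side comment does not touch the proof itself.
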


Finally we can study the case of two idempotents iterated independently.

\begin{claim} \label{claim:ZZ} Let $m,m',m'' \in M$ and $v_1, v_2 \in A^+$ such that $e_1 \defined \mu(v_1)$
and $e_2 \defined \mu(v_2)$ are idempotents. There exists $C_1, C_2, C \in \Nat$ such that $\forall Z_1, Z_2 \ge 2$:
\begin{align*}
\pro{m\cro{v_1^{Z_1}}m'\cro{v_2^{Z_2}}m'} =  &\pro{me_1\cro{v_1}e_1 m'e_2\cro{v_2}e_2m''} \times Z_1 Z_2 \\
&+ C_{1} Z_1 + C_{2} Z_2 + C.
\end{align*}
\end{claim}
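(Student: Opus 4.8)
The plan is to derive Claim~\ref{claim:ZZ} from Claims~\ref{claim:Zw} and~\ref{claim:wZ} by peeling off the two iterations one after another: first the iteration of $v_1$ on the production side, then the iteration of $v_2$ on the calling side.

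First I would apply Claim~\ref{claim:Zw} with $v \defined v_1$ and $Z \defined Z_1$ (legitimate since $e_1 = \mu(v_1)$ is idempotent and $Z_1 \ge 2$), taking the calling factor to be the word $v_2^{Z_2} \in A^+$ and the trailing monoid element to be $m''$. This rewrites $\pro{m\cro{v_1^{Z_1}}m'\cro{v_2^{Z_2}}m''}$ as
\begin{align*}
\pro{m\cro{v_1^{Z_1}}m'\cro{v_2^{Z_2}}m''} &= \pro{m\cro{v_1}e_1m'\cro{v_2^{Z_2}}m''}+\pro{me_1\cro{v_1}m'\cro{v_2^{Z_2}}m''}\\
&\quad +(Z_1{-}2)\,\pro{me_1\cro{v_1}e_1m'\cro{v_2^{Z_2}}m''}.
\end{align*}
In each of the three summands the calling factor is still $v_2^{Z_2}$, so I would then apply Claim~\ref{claim:wZ} to each of them, with $v \defined v_2$ and $Z \defined Z_2$ (legitimate since $e_2 = \mu(v_2)$ is idempotent and $Z_2 \ge 2$). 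Each such application turns a production of the form $\pro{\,\cdots\cro{v_2^{Z_2}}m''}$ into a $Z_2$-independent constant plus $(Z_2{-}2)$ times a bitype in which $v_2$ occurs once; in particular the third summand becomes $c+(Z_2{-}2)\,\pro{me_1\cro{v_1}e_1m'e_2\cro{v_2}e_2m''}$ for a constant $c$ that is itself a finite sum of bitype productions.

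Collecting everything, $\pro{m\cro{v_1^{Z_1}}m'\cro{v_2^{Z_2}}m''}$ becomes $\pro{me_1\cro{v_1}e_1m'e_2\cro{v_2}e_2m''}\,(Z_1{-}2)(Z_2{-}2)$ plus a $(Z_1,Z_2)$-independent constant, plus $(Z_1{-}2)$ times a constant, plus $(Z_2{-}2)$ times a constant, where each constant is a finite sum of bitype productions. Using $(Z_1{-}2)(Z_2{-}2)=Z_1Z_2-2Z_1-2Z_2+4$ and regrouping by monomials gives an expression of the form $\pro{me_1\cro{v_1}e_1m'e_2\cro{v_2}e_2m''}\,Z_1Z_2+C_1Z_1+C_2Z_2+C$, which is the statement.

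I expect the main difficulty to be the bookkeeping of monoid elements, not the structure of the argument: one must carry the long products (prefix, separating factors, suffix) through both applications of Claims~\ref{claim:Zw} and~\ref{claim:wZ} and check that the relevant idempotent powers collapse --- for instance that a ``middle'' copy of $v_1$ sees $e_1^{\,Z_1-1}\,m'\,e_2^{\,\ell-1}=e_1m'e_2$ between itself and $v_2$ --- so that it is precisely the bitype $me_1\cro{v_1}e_1m'e_2\cro{v_2}e_2m''$ that appears in front of $Z_1Z_2$, and not a neighbouring bitype carrying a stray $\neutral$ coming from a boundary copy. A secondary point is to check that $C_1,C_2,C$ may be taken in $\Nat$: the regrouping above a priori produces integer coefficients, and their nonnegativity is cleanest to see by reading them directly off the combinatorial count of the pairs $(i,j)\in v_1^{Z_1}\times v_2^{Z_2}$, grouped according to which copies of $v_1$ and $v_2$ contain $i$ and $j$.
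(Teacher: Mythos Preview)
Your proposal is correct and is exactly the paper's approach: the paper's entire proof is the single sentence ``Apply successively claims~\ref{claim:Zw} and~\ref{claim:wZ},'' which is precisely what you unfold. Your remark about the nonnegativity of $C_1,C_2,C$ is well taken---this is not immediate from the algebraic regrouping and the paper does not argue it either; it is also immaterial for the only place the claim is used (Lemma~\ref{lem:prod-iter}), where all that matters is the polynomial shape and the identification of the $Z_1Z_2$ coefficient.
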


\begin{proof}  Apply successively claims \ref{claim:Zw} and \ref{claim:wZ}.
\end{proof}
\begin{figure}[]
\centering
\resizebox{11cm}{!}{
\begin{tikzpicture}[scale=0.7]

	\node[above] at (-1.5,8.95) {$m$};	
		
	\draw (0,9) rectangle (2,9.5);
	\node[above] at (1,8.95) {$v$};	
	\draw (2,9) rectangle (4,9.5);
	\node[above] at (3,8.95) {$v$};	
	\draw (4,9) rectangle (6,9.5);
	\node[above] at (5,8.95) {$v$};	
	\draw (6,9) rectangle (8,9.5);
	\node[above] at (7,8.95) {$v$};	
	\draw (8,9) rectangle (10,9.5);
	\node[above] at (9,8.95) {$v$};	
	\draw (10,9) rectangle (12,9.5);
	\node[above] at (11,8.95) {$v$};	
	\draw (12,9) rectangle (14,9.5);
	\node[above] at (13,8.95) {$v$};

	\node[above] at (15.5,8.955) {$m'$};	
	
	\draw[thick, dotted](1,9) --(1,7) -- (0.5,7) ;
	\draw[thick, dotted](3,9) --(3,5) -- (0.5,5) ;
	\draw[thick, dotted](5,9) --(5,3) -- (0.5,3) ;
	\draw[thick, dotted](7,9) --(7,1) -- (0.5,1) ;
	\draw[thick, dotted](9,9) --(9,-1) -- (0.5,-1) ;
	\draw[thick, dotted](11,9) --(11,-3) -- (0.5,-3) ;
	\draw[thick, dotted](13,9) --(13,-5) -- (0.5,-5) ;

	\newcommand{\noeud}[3]%
	{\draw[fill = \col] (#1,#2) circle (0.85);
	\node[above] at (#1,#3) {\tiny$\numero$};}
	
	\def\col{white};	
	
	\def\col{brown!70};
	\def\numero{f \cro{v} f};
		\noeud{3}{5}{4.75}
		\noeud{5}{3}{2.75}
		\noeud{7}{1}{0.75}	
		\noeud{9}{-1}{-1.25}
		\noeud{11}{-3}{-3.25}	
		
	\def\col{white};	
		
	\def\numero{\cro{v} f};
		\noeud{1}{7}{6.75}

	\def\numero{f \cro{v}};
		\noeud{13}{-5}{-5.25}
		
	\def\col{olive!60};	

	\def\numero{f \cro{v} \cro{v} f};

		\noeud{3}{3}{2.75}
		\noeud{5}{1}{0.75}
		\noeud{7}{-1}{-1.25}
		\noeud{9}{-3}{-3.25}

	\def\col{white};	

	\def\numero{f \cro{v} \cro{v}};
		\noeud{11}{-5}{-5.25}		

	\def\numero{\cro{v} \cro{v} f};
		\noeud{1}{5}{4.75}	
		
	\def\col{blue!40};		
		
	\def\numero{f \cro{v} f\cro{v} f};

		\noeud{3}{1}{0.75}
		\noeud{5}{-1}{-1.25}
		\noeud{7}{-3}{-3.25}
		
		\noeud{3}{-1}{-1.25}
		\noeud{5}{-3}{-3.25}
		
		\noeud{3}{-3}{-3.25}
		
	\def\col{white};	
	\def\numero{f \cro{v} f\cro{v} f};

	\def\col{red!50};		
	\def\numero{f \cro{v} f \cro{v}};
		\noeud{9}{-5}{-5.25}
		\noeud{7}{-5}{-5.25}	
		\noeud{5}{-5}{-5.25}	
		\noeud{3}{-5}{-5.25}	
		
	\def\col{white};	
	\def\numero{\cro{v} f\cro{v}};	
		\noeud{1}{-5}{-5.25}		

	\def\col{purple!50};		
	\def\numero{\cro{v} f \cro{v}f};
		\noeud{1}{3}{2.75}
		\noeud{1}{1}{0.75}
		\noeud{1}{-1}{-1.25}
		\noeud{1}{-3}{-3.25}

\end{tikzpicture}}
\caption{\label{fig:carre} A decomposition of production on the monotype $m\cro{v^{7}}m'$}
\end{figure}

\subparagraph*{Monotypes.} In order to obtain a complete
description of the production performed on $W(Z_1, Z_2, Z_3)$,
we also need to describe "the production on $v$ from itself".
This notion is covered by monotypes,
which are defined below similarly to bitypes.

\begin{definition} A \emph{monotype} $\Pi \defined m \cro{v} m'$ consists in $m,m' \in M$ and $v \in A^+$.
\end{definition}

As for bitypes,
we define "the production performed in
$v$ by the calls from $v$", in $\Pi$.
For $1 \le i \le j \le |v|$, we define
$\Pi(i,j) \defined \lambda_{\exte_{j}}(m\mu(v[1{:}{i}{-}1]), v[i], \mu(v[i{+1}{:}j])m') \in \Nat $
where $\exte_{j} \defined \lambda(m\mu(v[1{:}{j}{-}1]), v[j],\mu(v[j{+1}{:}|v|])m')$. Then we set:
\begin{equation*}
\pro{\Pi} \defined \sum_{\substack{1 \le i \le j \le |v|}}
\Pi(i,j) \in \Nat.
\end{equation*}

We now describe the production
on a monotype given by an iterated idempotent.

\begin{lemma} \label{lem:Z} Let $m,m' \in M$ and $v \in A^+$ such that $\mu(v)$
is an idempotent. There exists a polynomial $P: \Nat \rightarrow \Nat$
of degree at most $2$ such that $\forall Z \ge 4$:
\begin{equation*}
\pro{m\cro{v^Z}m'} = P(Z).
\end{equation*}
\end{lemma}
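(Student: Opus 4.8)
The plan is to decompose $v^Z$ into its $Z$ consecutive copies of $v$ and to split the sum defining $\pro{m\cro{v^Z}m'}$ according to which copies contain the production position and the calling position. Write $v^Z=v_1\cdots v_Z$ with each $v_k$ a copy of $v$, and for a pair $1\le i\le j\le |v^Z|$ let $k_i\le k_j$ be the indices of the copies containing $i$ and $j$. Put $A_k\defined\sum\{(m\cro{v^Z}m')(i,j):i\le j,\ k_i=k_j=k\}$ and, for $k<l$, $B_{k,l}\defined\sum\{(m\cro{v^Z}m')(i,j):k_i=k,\ k_j=l\}$, so that $\pro{m\cro{v^Z}m'}=\sum_{k=1}^{Z}A_k+\sum_{1\le k<l\le Z}B_{k,l}$.

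First I would show that, since $e\defined\mu(v)$ is idempotent, each value $(m\cro{v^Z}m')(i,j)$ is determined by a bounded amount of data. Unfolding the definition of $\pro{i,j}$, this value depends only on the monoid images $m\mu((v^Z)[1{:}i{-}1])$, $(v^Z)[i]$, $\mu((v^Z)[i{+}1{:}j])m'$, $m\mu((v^Z)[1{:}j{-}1])$, $(v^Z)[j]$ and $\mu((v^Z)[j{+}1{:}|v^Z|])m'$; each of these is a product $(\text{constant})\cdot e^{p}\cdot(\text{constant})$ where $p$ is the length of a maximal run of whole copies of $v$, and $e^{p}=e$ whenever $p\ge 1$. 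Hence $A_k$ depends only on whether $k=1$ (which fixes the left context) and whether $k=Z$ (which fixes the right context of the called function), and $B_{k,l}$ depends only on whether $k=1$, whether $l=Z$, and whether $l=k{+}1$ or $l\ge k{+}2$ (the latter fixing the image of the infix between the two positions). In each of these finitely many cases the value is a fixed function of the internal positions of $i$ and $j$ within their copies, so summing over those internal positions yields a constant independent of $Z$.

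It then remains to count multiplicities, and this is where the hypothesis $Z\ge 4$ is used: it guarantees that copies $1$ and $Z$ are distinct and at distance at least $2$, so that interior copies and interior adjacent pairs of copies genuinely exist and the case decomposition is uniform. The diagonal sum $\sum_k A_k$ then consists of copy $1$ (once), the $Z{-}2$ interior copies (all in one case), and copy $Z$ (once), hence is affine in $Z$. The off-diagonal pairs $(k,l)$ with $k<l$ split into: the adjacent pairs $(1,2)$ and $(Z{-}1,Z)$, the $Z{-}3$ interior adjacent pairs; and the far pairs $(1,Z)$, the $Z{-}3$ pairs $(1,l)$ with $3\le l\le Z{-}1$, the $Z{-}3$ pairs $(k,Z)$ with $2\le k\le Z{-}2$, and the $\binom{Z-3}{2}$ interior far pairs $(k,l)$ with $2\le k$, $l\le Z{-}1$, $l\ge k{+}2$. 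Multiplying each multiplicity by the corresponding constant shows that $\pro{m\cro{v^Z}m'}$ is an $\Nat$-linear combination of $1$, $Z$ and $\binom{Z-3}{2}$, hence equals $P(Z)$ for a polynomial $P$ of degree at most $2$ and all $Z\ge 4$.

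The main obstacle is the second step: one has to check carefully, from the definition of $\pro{i,j}$, that the value is insensitive both to $Z$ and to the exact values of $k_i,k_j$ beyond the stated bounded data. This is exactly where the idempotency of $\mu(v)$ is essential, and it is the reason the statement would fail without it; the final counting is elementary bookkeeping. An alternative proof would iterate Claims \ref{claim:Zw} and \ref{claim:wZ} on the splitting $v^Z=v\cdot v^{Z-1}$, reducing $\pro{m\cro{v^Z}m'}$ to a constant, a term linear in $Z$, and $\pro{me\cro{v^{Z-1}}m'}$; since $me\cdot e=me$, the induced recurrence has constant coefficients, whose solution is a polynomial of degree at most $2$.
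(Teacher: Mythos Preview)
Your proof is correct and follows essentially the same approach as the paper: decompose $v^Z$ into its $Z$ copies, observe (via idempotency of $\mu(v)$) that the contribution of a pair of copies depends only on a bounded case distinction (first/middle/last copy on the diagonal; first/last and adjacent/far off the diagonal), and count multiplicities to obtain a polynomial of degree at most~$2$ in $Z$. The paper records this as Equation~\eqref{eq:carre} (illustrated in Figure~\ref{fig:carre}), which is exactly your case table with the constants written out as the corresponding monotypes and bitypes; your multiplicities $1,\,Z{-}2,\,1$ (diagonal), $1,\,1,\,Z{-}3$ (adjacent), and $1,\,Z{-}3,\,Z{-}3,\,\binom{Z-3}{2}$ (far) coincide with the coefficients there.
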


\begin{proof} We claim that for all $Z \ge 4$:

\begin{equation}
\begin{aligned}
\label{eq:carre}
\pro{m\cro{v^Z}m'} &= \pro{m \cro{v} f m'} +  \pro{m \cro{v} \cro{v} f m'} + \pro{m \cro{v} f \cro{v} f m'}  \\
&+\pro{m f \cro{v} \cro{v} m'}  +\pro{m f \cro{v}}\\
&+(Z-3)\pro{m f \cro{v}f \cro{v} m'}  + (Z-3)\pro{m \cro{v}f \cro{v} fm'}\\
&+(Z-3)\pro{m f \cro{v}\cro{v}f m'}  + (Z-2)\pro{m f\cro{v} fm'}\\
&+\dfrac{(Z-4)(Z-3)}{2} \pro{mf \cro{v} f \cro{v}f m'}.
\end{aligned}
\end{equation}

In order to show Equation \ref{eq:carre}, 
we first decompose the production of
$v^Z$ in several blocks, which are
for $1 \le i \le j \le Z$ "the production on the $i$-th $v$
when called from the $j$-th $v$":
\begin{equation}
\begin{aligned}
\label{eq:carre2}
\pro{m\cro{v^Z}m'} &= \sum_{\substack{1\le i< j \le Z}} \pro{m \mu(v^{i-1})\cro{v} \mu(v^{j-i-1}) \cro{v} \mu(v^{Z-j})m'}\\
& +\sum_{1 \le j \le Z} \pro{m \mu(v^{j-1})\cro{v} \mu(v^{Z-j})m'}\\
\end{aligned}
\end{equation}

Finally, we can obtain Equation \ref{eq:carre}
by regrouping the terms that occur in the right member
of Equation \ref{eq:carre2},
depending on their bitype.
In order to convince the
reader without burdening the proof, 
we decomposed $\pro{m\cro{v^7}m'}$ in Figure \ref{fig:carre},
using a matrix which describes the terms of Equation \ref{eq:carre2}:
in cell $(i,j)$ for  $i \le j$, we have written "the bitype
of the $i$-th $v$ when called from the $j$-th $v$".
The bitypes which are equal are identified with the same colors.
It is not hard to see that we get the coefficients
of Equation \ref{eq:carre}.
\end{proof}

\subparagraph*{Proof of Lemma \ref{lem:prod-iter}.} 

We have $\mu(v_1) = f_1$, $\mu(v_2) = f_2$, $\mu(v_3) = f_3$.
Define $m_0 \defined \mu(w_0)$, $m_1 \defined \mu(w_1)$,
$m_2 \defined \mu(w_2)$, $m_3 \defined \mu(w_3)$.
By batching the productions $\pro{i,j}$
performed when computing $f(W(Z_1, Z_2, Z_3))$, we obtain Equation \ref{eq:fz}
 for all $Z_1, Z_2, Z_3 \ge 4$.
Applying the claims \ref{claim:Zw}, \ref{claim:wZ}, \ref{claim:ZZ} and  Lemma \ref{lem:Z} to the various
terms of this equation yields Lemma \ref{lem:prod-iter}.
The framed terms give
the monomials in $Z_1Z_2$, $Z_1Z_3$ and $Z_2Z_3$.

\begin{equation}
\label{eq:fz}
\begin{aligned}
f(W(Z_1, Z_2, Z_3)) &= \pro{\cro{w_0} f_1 m_1 f_2 m_2 f_3 m_3} + \pro{\cro{w_0} \cro{v_1^{Z_1}} m_1 f_2 m_2 f_3 m_3} \\
&+ \pro{\cro{w_0} f_1 \cro{w_1}  f_2 m_2 f_3 m_3} + \pro{\cro{w_0}f_1 m_1 \cro{v_2^{Z_2}} m_2 f_3 m_3}\\
&+ \pro{\cro{w_0} f_1 m_1 f_2 \cro{w_2} f_3 m_3} + \pro{\cro{w_0}f_1 m_1 f_2 m_2 \cro{v_3^{Z_3}}m_3}\\
&+ \pro{\cro{w_0} f_1 m_1 f_2 m_2 f_3 \cro{w_3}}\\
%
%
&+\pro{m_0 \cro{v_1^{Z_1}} m_1 f_2 m_2 f_3 m_3} + \pro{m_0 \cro{v_1^{Z_1}}  \cro{w_1}  f_2 m_2 f_3 m_3}\\
&+ \text{\framebox{$\pro{m_0 \cro{v_1^{Z_1}}  m_1 \cro{v_2^{Z_2}} m_2 f_3 m_3}$}}+ \pro{m_0 \cro{v_1^{Z_1}} m_1 f_2 \cro{w_2} f_3 m_3}\\
&+ \text{\framebox{$\pro{m_0 \cro{v_1^{Z_1}}  m_1 f_2 m_2 \cro{v_3^{Z_3}}m_3}$}}+ \pro{m_0 \cro{v_1^{Z_1}} m_1 f_2 m_2 f_3 \cro{w_3}}\\
&+\pro{m_0 f_1 \cro{w_1}  f_2 m_2 f_3 m_3} + \pro{m_0 f_1 \cro{w_1} \cro{v_2^{Z_2}} m_2 f_3 m_3}\\
&+\pro{m_0 f_1 \cro{w_1}  f_2 \cro{w_2} f_3 m_3}+ \pro{m_0 f_1 \cro{w_1}  f_2 m_2 \cro{v_3^{Z_3}}m_3}\\
&+ \pro{m_0 f_1 \cro{w_1}  f_2 m_2 f_3 \cro{w_3}}\\
\\
&+\pro{m_0 f_1 m_1 \cro{v_2^{Z_2}}  m_2 f_3 m_3} +\pro{m_0 f_1 m_1 \cro{v_2^{Z_2}}  \cro{w_2} f_3 m_3}\\
&+\text{\framebox{$\pro{m_0 f_1 m_1 \cro{v_2^{Z_2}} m_2 \cro{v_3^{Z_3}}m_3}$}} + \pro{m_0 f_1 m_1 \cro{v_2^{Z_2}} m_2 f_3 \cro{w_3}}\\
\\
&+\pro{m_0 f_1 m_1 f_2 \cro{w_2} f_3 m_3}+ \pro{m_0 f_1 m_1 f_2 \cro{w_2} \cro{v_3^{Z_3}}m_3} \\
&+ \pro{m_0 f_1 m_1 f_2 \cro{w_2}  f_3 \cro{w_3}}\\
\\
&+\pro{m_0 f_1 m_1 f_2 m_2 \cro{v_3^{Z_3}}m_3} + \pro{m_0 f_1 m_1 f_2 m_2 \cro{v_3^{Z_3}} \cro{w_3}}\\
\\
&+ \pro{m_0 f_1 m_1 f_2 m_2  f_3 \cro{w_3}}\\
\end{aligned}
\end{equation}

\section{Proof of Lemma \ref{lem:partition}}

We show that $\dep{\forest} \uplus \{\dep{\nodi}: \nodi \in \itera{\forest}\}$
partitions $\nodes{\forest}$. Applying this result specifically to the leaves
will directly give a partition of the positions of $w$
by the frontiers of these nodes.
The proof is done by induction on the structure of $\forest$:
\begin{itemize}
\item if $\forest = (a)$ is a leaf the result is true;
\item if $\forest = (\forest_1) \dots (\forest_n)$ then
$\dep{\forest} = \{\forest\} \uplus \dep{\forest_1} \uplus \dep{\forest_n}$.
\begin{equation*}
\itera{\forest} =\{\forest_i: 2 \le i \le n{-}1\}  \biguplus_{1 \le i \le n} \itera{\forest_i} \\
\end{equation*}
Furthermore, by induction hypothesis
$\dep{\forest_i} \uplus \{\dep{\nodi}: \nodi \in \itera{\forest_i}\}$ partitions $\nodes{\forest_i}$
for all $1 \le i \le n$, hence:
\begin{equation*}
\begin{aligned}
\nodes{\forest} &= \{\forest\} \biguplus_{1 \le i \le n} \nodes{\forest_i}.\\
& = \{\forest\}  \biguplus_{1 \le i \le n}\left( \dep{\forest_i} \uplus \{\dep{\nodi}: \nodi \in \itera{\forest_i}\}\right)\\
& = \{\forest\} \uplus \dep{\forest_1} \uplus  \dep{\forest_n} \\
&\biguplus_{2 \le i \le n{-}1} \dep{\forest_i} \biguplus_{1 \le i \le n}\{\dep{\nodi}: \nodi \in \itera{\forest_i}\}\\
& = \dep{\forest} \uplus \{\dep{\nodi}: \nodi \in \itera{\forest}\}\\
\end{aligned}
\end{equation*}
\end{itemize}

\section{Proof of Lemma \ref{lem:nodes-prod}}

Let $w \in A^+$, $\forest \in \facto{\mu}{w}$, we show that:
\begin{equation*}
\begin{aligned}
f(w) =  \sum_{\nodi ,\nodj \in \parti{\forest}} \pro{\nodi,\nodj}.
\end{aligned}
\end{equation*}

We first note that by definition:
\begin{equation*}
\begin{aligned}
 \sum_{\nodi ,\nodj \in \parti{\forest}} \pro{\nodi,\nodj}
= \sum_{\substack{\nodi \in \parti{\forest} \\ i \in \fr{\nodi}{\forest}}}
\sum_{\substack{\nodj \in \parti{\forest} \\ j \in \fr{\nodj}{\forest} \\ i \le j}} \pro{i,j}
\end{aligned}
\end{equation*}

But since by Lemma \ref{lem:partition} $\{\fr{\nodi}{\forest}: \nodi \in \parti{\forest}\}$ partitions the positions of $w$:
\begin{equation*}
\begin{aligned}
 \sum_{\nodi ,\nodj \in \parti{\forest}} \pro{\nodi,\nodj} = \sum_{1 \le i \le |w|}
\sum_{i \le  j \le |w|} \pro{i,j}
\end{aligned}
\end{equation*}

Finally by definition of a $1$-blind bimachine it follows:
\begin{equation*}
\begin{aligned}
 \sum_{1 \le i \le j \le |w|} \pro{i,j} = f(w).
\end{aligned}
\end{equation*}

\section{Proof of Lemma \ref{lem:dep-root}}

We show that each iterable node is
the middle child of a unique basis.
The result is obtained by induction on the structure of the factorization:
\begin{itemize}
\item if $\forest = (a)$ is a leaf the result is true by emptiness;
\item if $\forest = (\forest_1) \dots (\forest_n)$ then
$\dep{\forest} = \{\forest\} \uplus \dep{\forest_1} \uplus \dep{\forest_n}$.
\item[]
\begin{itemize}
\item by induction hypothesis, the iterable nodes of $\forest_1$ (resp. $\forest_n$)
are the descendant of a middle child of a unique
basis of $\forest_1$ (resp. $\forest_n$).
These nodes are still bases of $\forest$.
Hence the iterable nodes of $\forest_1$ and $\forest_n$
are the descendant of a unique basis of $\forest$.
\item the iterable nodes from $\forest_i$ for $2 \le i \le n{-}1$
are the descendant of a middle child of a unique
basis of $\forest$, which is $\forest$ itself (if $n \ge 2$ it is an idempotent node).
\end{itemize}
\end{itemize}

\section{Further properties of factorizations}

In this section, we develop several tools which are useful in our
factorization techniques.

\subsection{Representation of factorizations}

The set $(\apar{A})^+$ not only contains factorizations of height at most $\mh$,
it also contains factorizations of bigger height, and other words which are not
factorizations. However, such inputs will often be considered as "invalid" in
our constructions, thus we need to isolate them.

\begin{definition} Let $\ltm$ be the set of factorizations of height
at most $\mh$.
\end{definition}

$\ltm$ can also be seen as a language of $(\apar{A})^*$.
Using this point of view, the following lemma follows from an easy
 inductive construction.

\begin{lemma}[Folklore] $\ltm$ is a regular language of $(\apar{A})^*$.
\end{lemma}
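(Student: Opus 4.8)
The plan is to prove regularity by induction on the height bound, using the standard closure properties of regular languages. First I would extend the morphism $\mu$ to a morphism $\hat\mu : \apar{A}^* \rightarrow M$ by sending each of the two bracket symbols to $\neutral$. Since a factorization $\forest$ of a word $w$, read as a word of $\apar{A}^+$, differs from $w$ only by the insertion of parentheses, this gives $\hat\mu(\forest) = \mu(w)$; and I would recall the folklore fact that the preimage $\hat\mu^{-1}(S)$ of any $S \subseteq M$ is a regular language of $\apar{A}^*$ (recognized by the automaton with state set $M$, initial state $\neutral$, right-multiplication transitions, and accepting set $S$).

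Next, for $h \ge 1$ let $L_h \subseteq \apar{A}^+$ be the language of (encodings of) factorizations of height at most $h$, so that $\ltm = L_{\mh}$. I would show by induction on $h$ that $L_h$ is regular. For $h = 1$ we simply have $L_1 = A$, a finite language. For the inductive step I would use that $\forest \in \apar{A}^+$ lies in $L_{h+1}$ exactly when either $\forest \in A$, or $\forest = (\forest_1)\cdots(\forest_n)$ with $n \ge 2$, each $\forest_i \in L_h$, and either $n = 2$, or $n \ge 3$ and $\hat\mu(\forest_1) = \cdots = \hat\mu(\forest_n)$ is an idempotent of $M$. Setting $P_e \defined \{(\forest') : \forest' \in L_h \cap \hat\mu^{-1}(\{e\})\}$ for $e \in M$ --- a regular language, by the induction hypothesis, closure under intersection, and concatenation with the fixed symbols $($ and $)$ --- this rewrites as
\[
L_{h+1} \;=\; A \;\cup\; \bigcup_{e,e' \in M} P_e\,P_{e'} \;\cup\; \bigcup_{\substack{e \in M \\ e^2 = e}} P_e\,P_e\,P_e\,P_e^{*},
\]
a finite union built from regular languages by concatenation, union and Kleene star, hence regular. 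One checks that matching of parentheses makes the decomposition $\forest = (\forest_1)\cdots(\forest_n)$ unique, so this union is an honest description of $L_{h+1}$.

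Taking $h = \mh$ then yields the lemma. The only substantive point is the first observation: that the monoid value $\mu(w_i)$ of the subword represented by a bracketed sub-factorization $\forest_i$ can be read off by a finite-state computation, namely via $\hat\mu$, which is what lets the idempotency constraint attached to an idempotent node be checked by a finite automaton. An alternative I would mention as a remark is to build a deterministic automaton directly: since height at most $\mh$ bounds the parenthesis-nesting depth by $\mh$, the natural pushdown parser for factorizations uses a stack of bounded size --- at each currently open level one records, in $M$, the product of the $\hat\mu$-values of the children already closed, the number of children capped at $3$, whether all of them agree on a common idempotent value, and the current subtree height capped at $\mh$ --- so it is equivalent to a finite automaton. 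I do not expect any real obstacle here: the work is purely in organizing these finitely many bounded quantities correctly.
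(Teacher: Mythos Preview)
Your proof is correct and follows exactly the ``easy inductive construction'' the paper alludes to (the paper gives no further detail beyond that phrase). Your explicit regular expression for $L_{h+1}$ in terms of the $P_e$, together with the observation that $\hat\mu$ lets a finite automaton read off $\mu(w_i)$ from the bracketed subword, is precisely the standard way to carry out that induction.
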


As a consequence, a function defined over $\ltm$ can
be extended to $(\apar{A})^+$, while preserving its computability
by some bimachine. Thus in the proofs of lemmas
\ref{lem:distinguishable}, \ref{lem:linked} and \ref{lem:independent} 
we shall always consider that the inputs belong to $\ltm$.

\begin{lemma} \label{lem:restriction} Let $k \ge 0$, $g: \ltm \rightarrow \Nat$ computed
by a $k$-pebble (resp. $k$-blind, $k$-marble) bimachine, and:
\begin{equation*}
h: (\apar{A})^+ \rightarrow \Nat, \forest \mapsto
\left\{
    \begin{array}{l}
        g(\forest) \text{ if } \forest \in \ltm; \\
        0 \text{ otherwise.}\\
    \end{array}
\right.
\end{equation*}
Then one can build a $k$-pebble (resp. $k$-blind, $k$-marble) bimachine
which computes $h$.
\end{lemma}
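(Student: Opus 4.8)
The plan is to take the given bimachine and run it essentially unchanged, while \emph{guarding} every production by an on-the-fly test of whether the input lies in $\ltm$. The point is that $\ltm$ is regular (the preceding folklore lemma), hence recognised by a morphism $\eta\colon \apar{A}^* \to N$ into a finite monoid together with an accepting set $P \subseteq N$ (so $\forest \in \ltm$ iff $\eta(\forest) \in P$); and that the output function of a bimachine at position $i$ already sees $\mu(\forest[1{:}i{-}1])$ and $\mu(\forest[i{+}1{:}|\forest|])$, so it can also be made to see $\eta(\forest[1{:}i{-}1])$ and $\eta(\forest[i{+}1{:}|\forest|])$, which together with the current letter determine $\eta(\forest)$.

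Concretely, starting from a $k$-pebble bimachine $(\apar{A}, M, \mu, \oras, \lambda)$ (the blind and marble cases are identical), I would build a new bimachine over the product monoid $M \times N$ with morphism $\mu \times \eta$, keeping the external functions $\oras$ and adding the constant external function $\mathbf{0}$ equal to $0$ (which is computed by a trivial $(k{-}1)$-pebble bimachine when $k \ge 1$; when $k = 0$ there are no external functions and $\mathbf{0}$ is simply the value $0 \in \Nat$). Its output function is
\begin{equation*}
\lambda'\bigl((m_1,n_1),\, a,\, (m_2,n_2)\bigr) \defined
\begin{cases}
\lambda(m_1, a, m_2) & \text{if } n_1\,\eta(a)\,n_2 \in P,\\
\mathbf{0} & \text{otherwise.}
\end{cases}
\end{equation*}
None of the external functions of $\oras$ needs to be modified: each is only ever \emph{invoked} (as opposed to mentioned in $\lambda'$) on a marked input derived from some $\forest \in \ltm$, so its behaviour on other inputs is irrelevant.

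For correctness I would observe that, at every position $i$, one has $n_1\,\eta(\forest[i])\,n_2 = \eta(\forest)$ whenever $n_1 = \eta(\forest[1{:}i{-}1])$ and $n_2 = \eta(\forest[i{+}1{:}|\forest|])$, so the guard in $\lambda'$ evaluates to the same truth value ``$\forest \in \ltm$'' at all positions. If $\forest \in \ltm$ then at every position $\lambda'$ returns $\lambda$ applied to the $M$-components, which are exactly $\mu(\forest[1{:}i{-}1])$ and $\mu(\forest[i{+}1{:}|\forest|])$, so the new bimachine performs precisely the same calls as the original one and outputs $g(\forest)$; if $\forest \notin \ltm$, every position contributes $\mathbf{0}$ (resp.\ the value $0$) and the total is $0$. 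Hence the new bimachine computes $h$. There is no real obstacle: the only points to check are that membership in the regular language $\ltm$ can be folded into the bimachine's monoid by a product, and that a constant external function is available at each level — and both are immediate.
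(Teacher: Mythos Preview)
Your proposal is correct and is exactly the approach the paper sketches: enlarge the monoid of the main bimachine by a product with a morphism recognising the regular language $\ltm$, and use the extra component to guard every output with the test ``$\forest \in \ltm$''. The paper's proof idea is a single sentence to this effect; you have simply spelled it out.
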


\begin{proof}[Proof idea.] Using its monoid morphism, the main bimachine
can check a regular property of their input and have a different behavior
depending on it.
\end{proof}

\subsection{Frontiers and dependencies}

\label{subs:nota}

\subparagraph*{Size of the sets.}
As evoked in the main body of this paper, over factorizations of height at most $\mh$,
dependencies and frontiers have a bounded size.

\begin{lemma} \label{lem:sizef} Let $\forest \in \ltm$ and $\nodi \in \nodes{\forest}$.
Then $|\dep{\nodi}|, |\fr{\nodi}{\forest}| \le 2^{\mh}$.
\end{lemma}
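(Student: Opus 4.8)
The statement to prove is Lemma~\ref{lem:sizef}: for $\forest \in \ltm$ and $\nodi \in \nodes{\forest}$, one has $|\dep{\nodi}|, |\fr{\nodi}{\forest}| \le 2^{\mh}$. Since $\fr{\nodi}{\forest}$ is by definition the set of leaves of $\forest$ lying in $\dep{\nodi}$, it is a subset of $\dep{\nodi}$, so it suffices to bound $|\dep{\nodi}|$; both bounds then follow from the single estimate $|\dep{\nodi}| \le 2^{\mh}$. The plan is to prove this by induction on the height $h$ of the subtree $\nodi$, using the recursive definition $\dep{(\forest_1)\cdots(\forest_n)} = \{\forest\} \cup \dep{\forest_1} \cup \dep{\forest_n}$ and the fact that the height of a node in $\forest \in \ltm$ is at most $\mh$.

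\textbf{Key steps.} First I would set up the induction with the sharper hypothesis that any node $\nodi$ of height $h$ satisfies $|\dep{\nodi}| \le 2^{h}$ (so that plugging in $h \le \mh$ at the end gives the claimed bound). The base case is a leaf $\nodi = a \in A$, which has height $1$: there $\dep{\nodi} = \{a\}$, so $|\dep{\nodi}| = 1 \le 2^1$. For the inductive step, write $\nodi = (\forest_1)\cdots(\forest_n)$, a binary or idempotent node of height $h \ge 2$; each child $\forest_i$ has height at most $h-1$. By definition $\dep{\nodi} = \{\nodi\} \cup \dep{\forest_1} \cup \dep{\forest_n}$, so $|\dep{\nodi}| \le 1 + |\dep{\forest_1}| + |\dep{\forest_n}| \le 1 + 2^{h-1} + 2^{h-1} = 1 + 2^{h} \le 2^{h+1}$. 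This is slightly weaker than wanted; to get exactly $2^{h}$ I would instead carry the cleaner hypothesis $|\dep{\nodi}| \le 2^{h} - 1$, which holds for leaves ($1 \le 2^1 - 1$) and propagates as $1 + (2^{h-1}-1) + (2^{h-1}-1) = 2^{h} - 1$. Finally, since $\forest \in \ltm$ means every node of $\forest$ has height at most $\mh$, any $\nodi \in \nodes{\forest}$ satisfies $|\dep{\nodi}| \le 2^{\mh} - 1 \le 2^{\mh}$, and hence $|\fr{\nodi}{\forest}| \le |\dep{\nodi}| \le 2^{\mh}$ as well.

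\textbf{Main obstacle.} There is no real difficulty here; the only point requiring a little care is getting the right arithmetic constant in the induction so that the final bound is genuinely $2^{\mh}$ and not $2^{\mh+1}$, which is why I would use the invariant $|\dep{\nodi}| \le 2^{\mathrm{height}(\nodi)} - 1$ rather than the naive $\le 2^{\mathrm{height}(\nodi)}$. One should also note that this exponential bound is exactly why the constant $2^{\mh}$ appears in Definition~\ref{def:symmetrical} and in Lemma~\ref{lem:partition}-related arguments: a frontier can genuinely be of size exponential in $\mh$, but since $\mh = 3|M|$ is a constant depending only on the monoid, the frontier has bounded size, which is all the later constructions of $1$-blind bimachines need.
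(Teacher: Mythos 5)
Your proof is correct and follows the same inductive approach as the paper (induction on the height of a node using the recursive definition $\dep{\nodi} = \{\nodi\} \cup \dep{\forest_1} \cup \dep{\forest_n}$, together with the observation that the frontier consists of the leaves in the dependency). Your refinement of the inductive invariant from $|\dep{\nodi}| \le 2^h$ to $|\dep{\nodi}| \le 2^h - 1$ is in fact needed: the paper's stated invariant does not close, since the recursion gives $1 + 2\cdot 2^{h-1} = 2^h + 1$, whereas your sharper hypothesis closes exactly and still yields $|\dep{\nodi}| \le 2^{\mh}$.
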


\begin{proof} We show by induction that if $\nodi$ has
height at most $h$, then $|\dep{\nodi}| \le 2^h$. Thus 
here $|\dep{\nodi}| \le 2^{\mh}$.
Considering the leaves gives the result for the frontier.
\end{proof}

\subparagraph*{Minimum and maximum.}
Intuitively, taking the dependency of a node $\nodi$
consists in pruning the subtree $\nodi$, in a way that
shall "preserve" some information on it.
First, it preserves the "borders" of the words.

\begin{definition} Let $w \in A^+$, $\forest \in \facto{\mu}{w}$
and $\nodi \in \nodes{\forest}$. By construction,
$\nodi$ is a factorization of some portion of $w$
between two positions denoted $\mine{\nodi}{\forest}$ and $\maxe{\nodi}{\forest}$.
\end{definition}

In other words, $\nodi \in \facto{\mu}{w[\mine{\nodi}{\forest}{:}\maxe{\nodi}{\forest}]}$.

\begin{lemma} \label{lem:preserv}
Let $w \in A^+$, $\forest \in \facto{\mu}{w}$
and $\nodi \in \nodes{\forest}$. Let
 $\{i_1 < \cdots < i_\ell\} \defined \fr{\nodi}{\forest}$. Then
 $\mine{\nodi}{\forest} = i_1$
and $\maxe{\nodi}{\forest} = i_{\ell}$.
\end{lemma}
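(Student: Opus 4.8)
The plan is to reduce the statement to the auxiliary claim that, for every $\nodi \in \nodes{\forest}$, the leftmost leaf and the rightmost leaf of the subtree $\nodi$ both belong to $\dep{\nodi}$. I would prove this claim by structural induction on $\nodi$. In the base case $\nodi = a \in A$ is a leaf: $\dep{\nodi} = \{a\}$ contains this unique leaf, which is simultaneously the leftmost and the rightmost one. For the inductive step, write $\nodi = (\forest_1)\cdots(\forest_n)$ with $n \ge 2$; by definition $\dep{\nodi} = \{\nodi\} \cup \dep{\forest_1} \cup \dep{\forest_n}$. The leftmost leaf of $\nodi$ is the leftmost leaf of $\forest_1$, which by the induction hypothesis lies in $\dep{\forest_1} \subseteq \dep{\nodi}$; symmetrically the rightmost leaf of $\nodi$ is the rightmost leaf of $\forest_n$, which lies in $\dep{\forest_n} \subseteq \dep{\nodi}$.

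Next I would translate this claim into a statement about positions. Since $\nodi \in \facto{\mu}{w[\mine{\nodi}{\forest}{:}\maxe{\nodi}{\forest}]}$, the leaves of $\nodi$ read from left to right are precisely the positions $\mine{\nodi}{\forest}, \mine{\nodi}{\forest}{+}1, \dots, \maxe{\nodi}{\forest}$ of $w$; in particular the leftmost leaf of $\nodi$ is position $\mine{\nodi}{\forest}$ and the rightmost is position $\maxe{\nodi}{\forest}$. By the auxiliary claim both of these positions belong to $\dep{\nodi}$, hence to $\fr{\nodi}{\forest}$, so $i_1 \le \mine{\nodi}{\forest}$ and $\maxe{\nodi}{\forest} \le i_\ell$. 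Conversely, every element of $\fr{\nodi}{\forest}$ is by definition a leaf of $\nodi$, hence a position in $[\mine{\nodi}{\forest}{:}\maxe{\nodi}{\forest}]$, which yields $\mine{\nodi}{\forest} \le i_1$ and $i_\ell \le \maxe{\nodi}{\forest}$. Combining the two pairs of inequalities gives $i_1 = \mine{\nodi}{\forest}$ and $i_\ell = \maxe{\nodi}{\forest}$, as desired.

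This argument is elementary and I do not expect a genuine obstacle. The only point requiring mild care is to keep consistent the two viewpoints on $\dep{\nodi}$: as a subset of $\nodes{\nodi}$, and, through the identification made just before the definition of the frontier, as a subset of $\nodes{\forest}$. Since $\mine{\cdot}{\forest}$, $\maxe{\cdot}{\forest}$ and $\fr{\cdot}{\forest}$ are all defined relative to the ambient factorization $\forest$, no ambiguity arises, because the leftmost (resp. rightmost) leaf of $\nodi$ occupies the same position in $w$ whether $\nodi$ is considered on its own or as a subtree of $\forest$.
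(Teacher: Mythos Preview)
Your proposal is correct and follows essentially the same approach as the paper: the paper's proof idea reads ``By induction since $\dep{\nodi}$ contains its rightmost and leftmost children,'' and you have simply unfolded this induction in full, showing that the leftmost and rightmost leaves of $\nodi$ lie in $\dep{\nodi}$, then combining this with the obvious containment $\fr{\nodi}{\forest} \subseteq [\mine{\nodi}{\forest}{:}\maxe{\nodi}{\forest}]$.
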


\begin{proof}[Proof idea.]
By induction since $\dep{\nodi}$ contains its rightmost and leftmost children.
\end{proof}

\begin{remark} Thus $\mine{\forest}{\forest} = 1$ and $\maxe{\forest}{\forest} = |w|$
since $\forest$ is a factorization of the whole $w$.
\end{remark}

\subparagraph*{Values.}
More interestingly, the frontier preserves the
image by $\mu$. This explains why considering the frontier
of $\nodi$ is a sufficient abstraction of $\nodi$ when
looking at monoids elements.

\begin{lemma} Let $w \in A^+$, $\forest \in \facto{\mu}{w}$
and $\nodi \in \nodes{\forest}$, then:
\begin{equation*}
\mu(w[\mine{\nodi}{\forest}{:}\maxe{\nodi}{\forest}]) = \mu(w[\fr{\nodi}{\forest}]).
\end{equation*}
We denote this element $\valu{\nodi}$.
\end{lemma}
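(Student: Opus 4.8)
The plan is to prove the identity by structural induction on the node $\nodi$ (equivalently, on the subtree of $\forest$ rooted at $\nodi$). The invariant I maintain is exactly $\mu(w[\mine{\nodi}{\forest}{:}\maxe{\nodi}{\forest}]) = \mu(w[\fr{\nodi}{\forest}])$; once established, the displayed element is baptised $\valu{\nodi}$. The base case, where $\nodi = a \in A$ is a leaf, is immediate: the dependency is $\{a\}$, its frontier is the single position carrying that leaf, which also equals both $\mine{\nodi}{\forest}$ and $\maxe{\nodi}{\forest}$, and both sides reduce to $\mu(a)$.

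For the inductive step, write $\nodi = (\forest_1)\cdots(\forest_n)$ — a binary node if $n = 2$, an idempotent node if $n \ge 3$ — and let $[p_i{:}q_i]$ be the range of $w$ that $\forest_i$ factorizes, so $q_i + 1 = p_{i+1}$, $\mine{\nodi}{\forest} = p_1$ and $\maxe{\nodi}{\forest} = q_n$. The key structural observation is that $\dep{\nodi} = \{\nodi\} \cup \dep{\forest_1} \cup \dep{\forest_n}$ by definition, and since $\nodi$ is not a leaf, restricting this set to its leaves gives $\fr{\nodi}{\forest} = \fr{\forest_1}{\forest} \uplus \fr{\forest_n}{\forest}$: the middle children contribute nothing. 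Because $\fr{\forest_1}{\forest} \subseteq [p_1{:}q_1]$ lies entirely to the left of $\fr{\forest_n}{\forest} \subseteq [p_n{:}q_n]$ (clear from the construction, or from Lemma \ref{lem:preserv}), the word $w[\fr{\nodi}{\forest}]$ is the concatenation $w[\fr{\forest_1}{\forest}] \cdot w[\fr{\forest_n}{\forest}]$. Applying $\mu$ and then the induction hypothesis to $\forest_1$ and $\forest_n$ yields $\mu(w[\fr{\nodi}{\forest}]) = \mu(w[p_1{:}q_1])\,\mu(w[p_n{:}q_n])$.

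It then remains to match this with $\mu(w[p_1{:}q_n])$. If $\nodi$ is binary then $n = 2$ and the two expressions are literally equal since $w[p_1{:}q_2] = w[p_1{:}q_1]\,w[p_2{:}q_2]$. The interesting case — and the only place where the combinatorics of factorization forests is genuinely used — is the idempotent node: there $e := \mu(w[p_1{:}q_1]) = \cdots = \mu(w[p_n{:}q_n])$ is by definition an idempotent of $M$, so on one hand $\mu(w[p_1{:}q_n]) = e^n = e$, and on the other hand our computation gave $\mu(w[\fr{\nodi}{\forest}]) = e \cdot e = e$; both equal $e$, closing the induction. I expect this idempotent step to be the crux: collapsing a subtree to its dependency discards all descendants of the middle children, and the reason this leaves the image under $\mu$ unchanged is precisely $e^2 = e = e^n$. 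Everything else — the leaf case, the concatenation bookkeeping, and the binary node — is routine.
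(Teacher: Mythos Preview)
Your proof is correct and follows exactly the approach the paper indicates: the paper only offers the one-line hint ``By induction since $\dep{\nodi}$ only removes `middle children' of $\nodi$'', and your argument is a faithful and complete unfolding of that hint, with the idempotent collapse $e^n = e = e^2$ being precisely the point the hint alludes to.
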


\begin{proof}[Proof idea.]
By induction since $\dep{\nodi}$ only removes "middle children" of $\nodi$.
\end{proof}

\section{Proof of Lemma \ref{lem:distinguishable} - pairs separated by the frontier of the root}

We build a $1$-blind bimachine computing:
\begin{equation*}
f_D: (\apar{A})^+ \rightarrow \Nat, \forest \mapsto
\left\{
    \begin{array}{l}
        \displaystyle \sum_{(\nodi,\nodj) \in D(\forest)} \pro{\nodi,\nodj} \text{ if } \forest \text{ factorization of height at most } \mh; \\
        0 \text{ otherwise.}\\
    \end{array}
\right.
\end{equation*}

Using Lemma \ref{lem:restriction}, we
restrict our construction to
factorizations of height at most $\mh$.
Let $\forest \in \facto{\mu}{w}$ of height at most $\mh$.
Let $\{p_1 < \dots <p_{\ell(\forest)}\} \defined \fr{\forest}{\forest}$
be the frontier of its root node $\forest$.
By lemmas \ref{lem:sizef} and \ref{lem:preserv},
$\ell(\forest) \le 2^{\mh}$
and $p_1 = 1$ and $p_{\ell(\forest)} = |w|$.

Intuitively, the $p_k$ split the word $w$ in at most $\ell(\forest){-}1$
blocks (some of them can be empty),
and if $(\nodi, \nodj) \in D(\forest)$, then $\fr{\nodi}{\forest}$
and $\fr{\nodj}{\forest}$ are in two distinct blocks.

\begin{claim} \label{claim:fd} The following holds:
\begin{equation*}
f_D(\forest) = \sum_{\substack{1\le k' < k< \ell(\forest) \\ p_{k'} < i < p_{k'+1} \\ \substack{p_{k} < j < p_{k+1}}}}\pro{i,j}.
\end{equation*}
\end{claim}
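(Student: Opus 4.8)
The plan is to unfold the definition of $f_D$ into a sum over pairs of positions and then reorganize these positions according to the blocks cut out by the root frontier $\fr{\forest}{\forest} = \{p_1 < \dots < p_{\ell(\forest)}\}$ (we work throughout in the fixed setting where $\forest$ is a factorization of $w$ of height at most $\mh$, so that $\ell(\forest)$, $p_1 = 1$ and $p_{\ell(\forest)} = |w|$ are available). Combining the definitions of $D(\forest)$ and of $\pro{\cdot,\cdot}$ on pairs of nodes, $f_D(\forest) = \sum \pro{i,j}$ where the sum ranges over $\nodi,\nodj \in \itera{\forest}$ with $\basis{\nodi}{\forest} \neq \basis{\nodj}{\forest}$, $i \in \fr{\nodi}{\forest}$, $j \in \fr{\nodj}{\forest}$ and $i \le j$. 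By Lemma~\ref{lem:partition}, the sets $\fr{\nodi}{\forest}$ for $\nodi \in \itera{\forest}$ form a partition of the positions of $w$ outside $\fr{\forest}{\forest} = \{p_1,\dots,p_{\ell(\forest)}\}$; since $p_1 = 1$ and $p_{\ell(\forest)} = |w|$, these remaining positions are exactly the union of the blocks $B_{k'} \defined \{q : p_{k'} < q < p_{k'+1}\}$, $1 \le k' \le \ell(\forest)-1$. Hence every non-separator position $q$ lies in a unique $\fr{\nodi(q)}{\forest}$, and I can rewrite $f_D(\forest) = \sum \pro{i,j}$ over $i,j \notin \fr{\forest}{\forest}$ with $i \le j$ and $\basis{\nodi(i)}{\forest} \neq \basis{\nodj(j)}{\forest}$.

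The geometric heart of the argument is: \textbf{(i)} for each $\nodi \in \itera{\forest}$ the whole of $\fr{\nodi}{\forest}$ lies in a single block $B_{\kappa(\nodi)}$; \textbf{(ii)} the index $\kappa(\nodi)$ depends only on $\basis{\nodi}{\forest}$, and distinct bases give distinct indices. For (i)–(ii), write $\nodb = \basis{\nodi}{\forest}$ (Lemma~\ref{lem:dep-root}) as an idempotent node $(\forest_1)\cdots(\forest_n)$ with $n \ge 3$. Since $\nodb$ is a basis it lies in $\dep{\forest}$, hence so do its leftmost and rightmost children $\forest_1, \forest_n$; by Lemma~\ref{lem:preserv} the positions $\maxe{\forest_1}{\forest} = \max \fr{\forest_1}{\forest}$ and $\mine{\forest_n}{\forest} = \min \fr{\forest_n}{\forest}$ therefore belong to $\fr{\forest}{\forest}$. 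As $\nodi$ is a descendant of a middle child of $\nodb$, every position of $\fr{\nodi}{\forest}$ lies in the span of $\forest_2 \cdots \forest_{n-1}$, i.e. strictly between $\maxe{\forest_1}{\forest}$ and $\mine{\forest_n}{\forest}$; and no position in that open interval lies in $\fr{\forest}{\forest}$, since every such position is a leaf below a middle child of $\nodb$ and hence outside $\dep{\forest}$. So $\maxe{\forest_1}{\forest}$ and $\mine{\forest_n}{\forest}$ are consecutive in $\fr{\forest}{\forest}$, which pins down $\kappa = \kappa(\nodi)$ (with $p_\kappa = \maxe{\forest_1}{\forest}$, $p_{\kappa+1} = \mine{\forest_n}{\forest}$) from $\nodb$ alone. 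Finally, if $\nodb_1 \neq \nodb_2$ are bases: if they are incomparable their spans are disjoint, hence so are their blocks; if $\nodb_2$ is a strict descendant of $\nodb_1$ then, both being in $\dep{\forest}$, the node $\nodb_2$ lies in $\dep{\forest_1}$ or $\dep{\forest_n}$ for the leftmost/rightmost child $\forest_1$ (resp. $\forest_n$) of $\nodb_1$, so the span of $\nodb_2$ — a fortiori the block of $\nodb_2$ — lies on one side of the separator $p_{\kappa(\nodb_1)}$ (resp. $p_{\kappa(\nodb_1)+1}$); in all cases the two blocks differ.

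It then suffices to run the case analysis. In the rewritten sum a pair $i \le j$ contributes precisely when $\kappa(\nodi(i)) \neq \kappa(\nodj(j))$; writing $k' = \kappa(\nodi(i))$ and $k = \kappa(\nodj(j))$, the case $k' > k$ is impossible since it would force $i > p_{k'} \ge p_{k+1} > j$, so $k' < k$. Conversely, for any $k' < k$ and any $i \in B_{k'}$, $j \in B_k$, uniqueness of $\nodi(i), \nodj(j)$ together with part (ii) gives $\basis{\nodi(i)}{\forest} \neq \basis{\nodj(j)}{\forest}$, while $i < p_{k'+1} \le p_k < j$. Hence $f_D(\forest) = \sum_{1 \le k' < k < \ell(\forest)} \sum_{p_{k'} < i < p_{k'+1},\ p_k < j < p_{k+1}} \pro{i,j}$, the claimed identity. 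I expect the main obstacle to be part (ii) — ruling out that two different bases govern the same block — which rests on the structural observation that within $\dep{\forest}$ one basis can only be nested inside another along leftmost/rightmost-child paths; everything else is bookkeeping with Lemma~\ref{lem:partition} and the frontier/span identities already established.
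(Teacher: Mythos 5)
Your proof is correct and follows the same route the paper sketches: unfold $f_D$ into a sum over positions via Lemma~\ref{lem:partition}, establish that two iterable nodes share a basis if and only if their frontiers lie in the same block cut out by $\fr{\forest}{\forest}$ (your items (i)–(ii)), and then reorganize the sum; your (i)–(ii) are exactly the ``first step'' the paper leaves implicit, worked out in full.
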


\begin{proof} We first show that $\forall \nodi, \nodj \in \itera{\forest}$,
$\basis{\nodi}{\forest} = \basis{\nodj}{\forest}$ if and only if there exists $k$ such that
$p_k<\fr{\nodi}{\forest}<p_{k+1}$ and $p_k<\fr{\nodj}{\forest}<p_{k+1}$.
Second, using Lemma \ref{lem:partition} get for all
$1\le i \le |w|$ that $i \not \in \fr{\forest}{\forest}$ if and only if
$i \in \fr{\nodi}{\forest}$ for some $\nodi \in \itera{\forest}$.
The result follows by rewriting the sum which defines $f_D$.
\end{proof}

We now use Claim \ref{claim:fd}  to compute $f_D$ using a $1$-blind bimachine
in Algorithm \ref{algo:distinguish}. Intuitively, the
algorithm ranges over all possible $j$ and for each of them,
it calls an external function which ranges over all $i$ in blocks "on the left", and 
produces the $\pro{i,j}$.

We use the notation of \emph{bitypes}, see Definition \ref{def:bitype}
on page \pageref{def:bitype}.
Note that by definition
$\pro{i,j} = \pro{\mu(w[1{:}i{-}1]) \cro{w[i]} \mu(w[i{+}1{:}j{-}1])  \cro{w[j]} \mu(w[j{+}1{:}|w|])}.$

\begin{algorithm}[h!]
\SetKw{KwVar}{Variables:}
\SetKwProg{Fn}{Function}{}{}
\SetKw{In}{in}
\SetKw{Out}{Output}

 \Fn{$\operatorname{Main}(\forest)$}{

 		$w \leftarrow \operatorname{\text{word factored by}} \forest$;

		\For{$j$ \In $\{1, \dots, |w|\}$}{
		
			\If{$p_{k} <j< p_{k{+}1}\operatorname{\text{\normalfont for some }} 1 \le k < \ell(\forest)$}{
						
			$a \leftarrow w[j]$;
			\tcc{Letter in position $j$.}
			$m \leftarrow \mu(w[p_{k}{+}1{:}j{-}1])$;	\tcc{Left "context" of $j$ in block $k$.}
			$n \leftarrow \mu(w[j{+}1{:}p_{k{+}1}{-}1])$; 	\tcc{Right "context" of $j$ in block $k$.}

			\Out{$\exte^{m\cro{a}n}_k (\forest)$}
			
			}
			
		}}

 \Fn{$\exte^{m\cro{a}n}_k (\forest)$}{
 
 		$w \leftarrow \operatorname{\text{word factored by}} \forest$;
					
		\For{$i$ \In $\{1, \dots, |w|\}$}{
		
			\If{$p_{k'} <i < p_{k'{+}1}\operatorname{\text{\normalfont for some }} 0 \le k' < k $}{
						
			$a' \leftarrow w[i]$;
			\tcc{Letter in position $i$.}
			$m' \leftarrow \mu(w[p_{k'}{+}1{:}i{-}1])$;	\tcc{Left "context" of $i$ in block $k'$.}
			$n' \leftarrow \mu(w[i{+}1{:}p_{k'{+}1}{-}1])$; 	\tcc{Right "context" of $i$ in block $k'$.}
			$m_1 \leftarrow \mu(w[1{:}p_{k'}])$; 	\tcc{Context before block $k'$.}
			$m_2 \leftarrow \mu(w[p_{k'+1} {:}p_{k}])$; 	\tcc{Context between blocks $k'$ and $k$.}
			$m_3 \leftarrow \mu(w[p_{k+1} {:}|w|])$; 	\tcc{Context between blocks $k'$ and $k$.}

			\Out{$\pro{m_1 (m'\cro{a'}n') m_2 (m \cro{a} n) m_3}$}
			
			}
			
		}

}
	
 \caption{\label{algo:distinguish} Computing $f_D$ with a $1$-blind bimachine}
\end{algorithm}

\subparagraph*{Correctness of the algorithm.}
We first claim that if $\forest$ is a factorization of $w$ and $k < \ell(\forest)$, then
$\exte^{m\cro{a}n}_k (\forest)$ computes the sum of productions
over all $i$ which are in a block $k' < k$,
when called from a $j$ in block $k$ and in 
"context" $m,n$. Formally:
\begin{equation*}
\begin{aligned}
\exte^{m\cro{a}n}_k (\forest) = \sum_{\substack{1\le k' < k\\ p_{k'} < i < p_{k'+1}}}\pro{\mu(w[1{:}i{-}1]) \cro{w[i]} \mu(w[i{+}1{:}p_{k}]) m\cro{a}n \mu(w[p_{k+1}{:}|w|]) }.
\end{aligned}
\end{equation*}
Finally with Claim \ref{claim:fd}, it is easy to show that $\operatorname{Main}(\forest) = f_D(\forest)$.

\subparagraph*{Implementation by a $1$-blind bimachine.}
We justify how Algorithm \ref{algo:distinguish} can be implemented with a $1$-blind bimachine.
It uses finitely many external functions $\exte^{m\cro{a}n}_k$ for $1 \le k \le 2^{\mh}-1$, $m,n \in M$ and $a \in A$ and uses its morphism to "detect" 
the frontier of the root (recall that the frontier has a bounded size).
This detection is detailed below.

\begin{definition} \label{def:pi}
Given $w \in A^+$ and $\forest \in \facto{\mu}{w}$,
let $\pi: \{1, \dots, |w|\} \rightarrow \{1, \dots, |\forest|\}$ be the function mapping a position
of $w$ to the corresponding leaf of $\forest$, seen as a position.
\end{definition}

\begin{remark} The image of $\pi$ is exactly the set of positions of $\forest$ labelled by $A$.

\end{remark}

We explain how to "detect" if
the current position belongs to the frontier of the root.

\begin{claim} One can build a regular language $P \subseteq (\apar{A}\uplus \marq{\apar{A}})^*$
such that for all $w \in A^+$, $\forest \in \facto{\mu}{w}$ of height at most $\mh$ and $1 \le j \le |w|$:
\begin{equation*}
\omar{\forest}{\pi(j)} \in P \text{ if and only if } j \in \fr{\forest}{\forest}.
\end{equation*}
\end{claim}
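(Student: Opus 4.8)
The plan is to first translate the statement into a purely combinatorial property of the tree $\forest$, and then to observe that this property is regular because $\forest$ has bounded height. For the translation, a routine induction on the structure of $\forest$ (in the style of the proof of Lemma~\ref{lem:partition}, and confirming the intuition stated in the main text after the definition of $\dep{\cdot}$) shows that $\dep{\forest}$ is exactly the set of nodes of $\forest$ none of whose ancestors is a \emph{middle child}, i.e.\ a child of index $2,\dots,n{-}1$ of an $n$-ary node. In particular $\fr{\forest}{\forest}$, which consists of the leaves of $\forest$ lying in $\dep{\forest}$, is exactly the set of positions $j$ such that, on the branch of $\forest$ from the root down to the leaf $\pi(j)$, every node is entered through its first or its last child. (Only $n$-ary nodes with $n \ge 3$, which are idempotent, have a middle child, so idempotency need not be invoked explicitly here.)

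It then remains to recognize, with a finite automaton reading $\omar{\forest}{\pi(j)} \in (\apar{A} \uplus \marq{\apar{A}})^*$ from left to right, the words satisfying this branch condition. By Theorem~\ref{theo:simon} we only care about $\forest$ of height at most $\mh$, so the parenthesis nesting depth of the input is at most $\mh$, and the automaton can keep a stack of height at most $\mh$ inside its (finite) state. In the frame of a currently open node $v$ it records the number of children of $v$ opened so far, capped at $2$ (enough to tell "first child" from "a later child"), and whether the unique marked leaf has already been read below the child of $v$ currently open, and if so whether that child is the first child of $v$. When the closing parenthesis of $v$ is read, the automaton peeks at the next symbol --- a "$($" if a further child of $v$ follows, a "$)$" otherwise --- which reveals whether the child of $v$ just closed was its last one; it moves to a rejecting sink precisely when the marked leaf lay below that child and the child was neither the first nor the last. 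The automaton accepts iff it has scanned a syntactically correct factorization of height at most $\mh$ carrying exactly one underlined leaf and never entered the rejecting sink. Taking $P$ to be the language so defined --- equivalently, intersecting with the marked version of $\ltm$, which is regular by the folklore lemma, so that the behaviour on ill-formed inputs is irrelevant --- yields a regular language with $\omar{\forest}{\pi(j)} \in P$ iff $j \in \fr{\forest}{\forest}$.

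The only delicate point is the detection of the \emph{last} child of a node: this is unknown at the moment the marked leaf is read and becomes available only when the enclosing node is about to close, which is why the bookkeeping above postpones each ancestor's verdict until the matching closing parenthesis and uses one symbol of look-ahead (equivalently, a one-step-delayed deterministic run). If one prefers to avoid the explicit automaton, the branch condition is plainly expressible in monadic second-order logic over the word $\omar{\forest}{\pi(j)}$ --- it quantifies over the at most $\mh$ bracket pairs enclosing the marked position and, for each, over the matched sub-block lying to its left and to its right inside its parent --- so regularity of $P$ also follows from the classical equivalence between monadic second-order logic over finite words and regular languages.
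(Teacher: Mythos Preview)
Your proof is correct. Both you and the paper rest on the same combinatorial characterization --- a leaf lies in $\fr{\forest}{\forest}$ iff the root-to-leaf path never enters a middle child --- but establish regularity by different means. The paper gives a three-line height-indexed inductive definition, essentially $P^1 = \{\marq{a}:a\in A\}$ and $P^h = P^{h-1} \cup \{(\forest_1)\cdots(\forest_n): \forest_1 \in P^{h-1} \text{ or } \forest_n \in P^{h-1}\}$, relying implicitly on closure of regular languages under union, concatenation and star; you instead construct an explicit bounded-stack automaton (with MSO as a fallback). The paper's route is terser and mirrors the recursive definition of $\dep{\cdot}$ directly; yours makes the operational content visible and explains where the one-step look-ahead for detecting ``last child'' is needed, which the inductive definition hides.
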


\begin{proof} We build by induction a language $P^h$
which works for factorizations of height at most $h$.
For $h=1$, $P^1 \defined \{\underline{a}:a \in A\}$.
Assume that $h \ge 2$ and $P^{h-1}$  is built. Then we define
$P^{h} \defined P^{h-1} \cup \{(\forest_1) \dots (\forest_n):n \ge 0 \text{ and } (\forest_1 \in P^{h} \text{ or } \forest_n \in P^{h}) \}.$
\end{proof}

In a similar way, it is possible to detect the
number of the current block, and describe the "context" of
the current position in this block.

\begin{claim} \label{claim:regF}
One can build a regular language $P_k^{m,n} \subseteq (\apar{A} \uplus \marq{\apar{A}})^*$
such that for all $w \in A^+$, $\forest \in \ltm$ over $w$ and $1 \le j \le |w|$:
\begin{align*} 
&\omar{\forest}{\pi(j)} \in P_k^{m,n} \text{ if and only if }\\
&p_{k}<j<p_{k+1} \text{ and } \mu(w[p_{k}{+}1{:}j{-}1]) = m  \text{ and } \mu(w[j{+}1{:}p_{k+1}{-}1]) = n.
\end{align*}
\end{claim}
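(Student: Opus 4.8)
The plan is to mimic the construction of the previous claim and build $P_k^{m,n}$ by induction on the height of the factorization. Since the set $\ltm$, and likewise its marked version (the set of words $\omar{\forest}{\pi(j)}$ where $\forest$ is a factorization of height at most $\mh$), is regular, and regular languages are closed under intersection, I can assume throughout that the input is such a valid marked factorization; it then only remains to recognise the arithmetic conditions on $k$, $m$ and $n$. Concretely, for each $h \le \mh$ I would produce a regular language $P_k^{m,n,h}$ that is correct on factorizations of height at most $h$ and set $P_k^{m,n} \defined P_k^{m,n,\mh}$. Equivalently, I build a deterministic automaton which reads $\omar{\forest}{\pi(j)}$ as a parenthesised word and, after processing the block encoding a subtree $\nodi$, keeps a bounded \emph{summary} $\sigma(\nodi)$; membership is then read off from $\sigma(\forest)$.

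The summary $\sigma(\nodi)$ should record: the element $\valu{\nodi} \in M$; the cardinality $|\fr{\nodi}{\nodi}|$, capped at $2^{\mh}$ (a legitimate bound by Lemma~\ref{lem:sizef}); whether the marked leaf $\pi(j)$ lies inside $\nodi$ and, if so, whether it belongs to $\dep{\nodi}$ or to a strict descendant of a middle child; the number of leaves of $\dep{\nodi}$ lying strictly to the left of the mark, again capped; and, when the mark is inside $\nodi$, the four monoid elements given by $\mu$ of the factor of $\nodi$ (i) strictly left of the mark, (ii) strictly right of the mark, (iii) strictly between the rightmost leaf of $\dep{\nodi}$ that is left of the mark and the mark itself, and (iv) symmetrically on the right. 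This is a bounded amount of data, so there are finitely many summaries. The update at a binary node $(\forest_1)(\forest_2)$ simply concatenates the two summaries; the update at an idempotent node $(\forest_1)\cdots(\forest_n)$ keeps the frontier data of $\forest_1$ and $\forest_n$ only, while each middle child $\forest_i$ (for $2 \le i \le n-1$) adds its $\mu$-value to the left/right products (i),(ii) and, if it contains the mark, also feeds the partial-block products (iii),(iv), but contributes nothing to the frontier. At the root, $\omar{\forest}{\pi(j)} \in P_k^{m,n}$ iff $\sigma(\forest)$ says that the mark is not a frontier leaf, that there are exactly $k$ frontier leaves strictly left of it (so that $p_k < j < p_{k+1}$), that product (iii) equals $m$ and that product (iv) equals $n$.

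The main obstacle will be getting the idempotent case right. The subtlety is that frontiers are computed relative to the \emph{root}: as long as the path from $\nodi$ to the root stays inside dependency children, $\fr{\nodi}{\nodi}$ coincides with the set of root-frontier positions inside $\nodi$ (essentially the argument behind Lemma~\ref{lem:preserv}), and then the block surrounding the mark is delimited by the two elements of $\dep{\nodi}$ immediately around it; but as soon as $\nodi$ becomes (a descendant of) a middle child of some ancestor, \emph{all} leaves of $\nodi$ — including those of $\dep{\nodi}$ — stop being root-frontier leaves, the whole factor of $\nodi$ gets absorbed into one block, and it must be split at the mark between the contributions to $m$ and to $n$. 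This is precisely why the summary has to carry both the global products (i),(ii) and the partial-block products (iii),(iv) and merge them consistently while climbing through binary and idempotent nodes. Once the update rule is pinned down, correctness follows by a routine induction on height, and regularity of $P_k^{m,n}$ is immediate since the automaton has finitely many states.
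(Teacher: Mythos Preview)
Your proposal is correct and follows the same approach the paper intends: the paper gives no explicit proof of this claim, merely the sentence ``In a similar way, it is possible to detect the number of the current block, and describe the `context' of the current position in this block,'' i.e.\ an inductive construction on the height analogous to the previous claim. Your bounded-summary automaton is precisely a detailed realisation of that idea, and you correctly identify the one subtlety the paper leaves implicit --- that the root-frontier status of the leaves of $\dep{\nodi}$ changes when $\nodi$ passes through a middle child, which is why both the ``global'' products (i),(ii) and the ``partial-block'' products (iii),(iv) must be carried simultaneously.
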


Using Claim \ref{claim:regF}, we can implement the function
Main using a bimachine with external blind functions. Its morphism
is built from the regular languages $P_k^{m,n}$
for $1 \le k \le 2^{\mh}$ and $m,n \in M$, and it calls 
the corresponding $\exte^{m \cro{a}n}_k$ as external function.

The construction of a bimachine the $\exte^{m \cro{a}n}_k$ is similar.

\section{Proof of Lemma \ref{lem:linked} - linked pairs}

We want to build a $0$-blind bimachine computing:
\begin{equation*}
f_L: (\apar{A})^+ \rightarrow \Nat, \forest \mapsto
\left\{
    \begin{array}{l}
        \displaystyle \sum_{(\nodi,\nodj) \in L(\forest)} \pro{\nodi,\nodj} \text{ if } \forest \text{ factorization of height at most $\mh$}; \\
        0 \text{ otherwise.}\\
    \end{array}
\right.
\end{equation*}

We shall build a $1$-pebble bimachine
computing $f_L$. Why not a $0$-blind?
Because from a $1$-pebble bimachine
we can build an equivalent $1$-marble bimachine 
(Corollary \ref{cor:pebmar}).
But as claimed below, $f_L$ has a "linear growth",
hence the results of \cite{doueneau20}
(which "minimize" the number of marbles used) will automatically
build a $0$-marble (= $0$-blind) bimachine\footnote{A direct
construction of a $0$-blind bimachine is possible,
but somehow more complex to explain.}.

\begin{claim} $f_{L}(\forest) = \mc{O}(|\forest|)$.
\end{claim}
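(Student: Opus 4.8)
The plan is to bound $f_L(\forest)$ as a product of two quantities, each of which is under control thanks to the height bound $\mh$: the number of linked pairs, and a uniform bound on every summand $\pro{\nodi,\nodj}$. Since $f_L$ vanishes on inputs that are not factorizations of height at most $\mh$, I may assume throughout that $\forest \in \ltm$.

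First I would observe that there is a constant $C$, depending only on $\trans$, with $\pro{i,j} \le C$ for all $w \in A^+$ and all $1 \le i \le j \le |w|$. Indeed, by definition $\pro{i,j} = \lambda_{\exte}(m,a,m')$ for some $\exte \in \oras$, $m,m' \in M$ and $a \in A$, and since $\trans$ is a $1$-marble bimachine each $\trans_\exte = (A,M,\mu,\lambda_\exte)$ has no external functions, so $\lambda_\exte$ is a function from the finite set $M \times A \times M$ to $\Nat$; taking $C$ to be the maximum of all such values over the finitely many $\exte \in \oras$ works. Combining this with Lemma \ref{lem:sizef}, for any $\nodi,\nodj \in \nodes{\forest}$ I would get
\begin{equation*}
\pro{\nodi,\nodj} = \sum_{\substack{i \in \fr{\nodi}{\forest},\ j \in \fr{\nodj}{\forest} \\ i \le j}} \pro{i,j} \ \le\ |\fr{\nodi}{\forest}|\cdot|\fr{\nodj}{\forest}|\cdot C \ \le\ 2^{2\mh} C \ =:\ C'.
\end{equation*}

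Next I would bound $|L(\forest)|$. If $(\nodi,\nodj) \in L(\forest)$ then one of $\nodi,\nodj$ is an ancestor, or the left/right sibling of an ancestor, of the other. As noted before the statement of Lemma \ref{lem:linked}, in a tree of height at most $\mh$ every node has at most $3\mh$ nodes that are ancestors or left/right siblings of ancestors of it. Summing over the "lower" node of each linked pair, in both orientations, yields $|L(\forest)| \le 6\mh\,|\parti{\forest}| \le 6\mh\,|\nodes{\forest}|$; and since each internal node of $\forest$ accounts for an opening parenthesis and each leaf for a symbol of $\apar{A}$, we have $|\nodes{\forest}| \le |\forest|$. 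Putting the pieces together, $f_L(\forest) = \sum_{(\nodi,\nodj)\in L(\forest)} \pro{\nodi,\nodj} \le C'\,|L(\forest)| \le 6\mh C'\,|\forest| = \mathcal{O}(|\forest|)$. The argument is a pure counting estimate, so there is no real obstacle; the one point needing care is the boundedness of the single-position productions $\pro{i,j}$, which genuinely uses that we started from a $1$-marble (not an arbitrary $k$-marble) bimachine, together with the fact that the height bound $\mh$ keeps all frontiers of bounded size.
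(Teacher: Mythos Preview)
Your proof is correct and follows essentially the same approach as the paper: bound each $\pro{\nodi,\nodj}$ uniformly using the bounded frontier size (Lemma \ref{lem:sizef}) and the finiteness of the $\lambda_{\exte}$, then bound $|L(\forest)|$ linearly in $|\forest|$ via the $3\mh$ bound on ancestors and siblings-of-ancestors. The paper organizes the count as a sum over $\nodi \in \parti{\forest}$ of at most $1 + 2\cdot 3\mh$ terms, but this is the same computation you carry out.
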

\begin{proof} Let $w \in A^+$, $\forest \in \facto{\mu}{w}$ of height at
most $\mh$ (by definition of $f_L$, it is enough to only consider 
such inputs). Given $\nodi \in \nodes{\forest}$, we denote by $\up{\nodi}$ the
set of nodes $\nodj \in \parti{\forest}$ which are either an ancestor of $\nodi$,
or the right/left sibling of an ancestor
of $\nodi$. Since the height is at most $\mh$, it follows that
$|\up{\nodi}| \le 3\mh$. Furthermore:
\begin{equation*}
\begin{aligned}
f_L(\forest) = \sum_{\substack{\nodi  \in \parti{\forest}}} \left( \pro{\nodi, \nodi} + \sum_{\nodj \in \up{\nodi}\smallsetminus \{\nodi\}} \pro{\nodi,\nodj} + \pro{\nodj,\nodi}\right).
\end{aligned}
\end{equation*}

Since the frontiers have size at most $2^{\mh}$ (Lemma \ref{lem:sizef}),
there exists $B \ge 0$ independent from $\forest$  and $w$ such that $\pro{\nodi,\nodj} \le B$
for all $\nodi, \nodj \in \nodes{\forest}$.

Finally $f_L(\forest) \le |\forest| \times (1 + 2 \times 3\mh) \times B.$
\end{proof}

It remains to describe how $f_L$ can be computed
by a $1$-pebble bimachine. We restrict our construction
to factorizations of height at most $\mh$ (using Lemma \ref{lem:restriction}).
The $1$-pebble bimachine is described by Algorithm \ref{algo:linked}.
Intuitively, it ranges
over all possible $i$ and $j$ positions of $w$,
and for each pair, it checks whether $i \in \fr{\nodi}{\forest}$
and $j \in \fr{\nodj}{\forest}$ for some $(\nodi, \nodj) \in L(\forest)$.
If it is the case, it outputs $\pro{i,j}$.
Note that we use an external \emph{pebble} function and not a blind one:
it can "see" the calling position $i$ on the input $\forest$.

\begin{algorithm}[h!]
\SetKw{KwVar}{Variables:}
\SetKwProg{Fn}{Function}{}{}
\SetKw{In}{in}
\SetKw{Out}{Output}

 \Fn{$\operatorname{Main}(\forest)$}{

 		$w \leftarrow \operatorname{\text{word factored by}} \forest$;

		\For{$i$ \In $\{1, \dots, |w|\}$}{

			\Out{$\exte(\forest,i)$}
			
			}
			
		}

 \Fn{$\operatorname{\exte}(\forest,i)$}{

 		$w \leftarrow \operatorname{\text{word factored by}} \forest$;

		\For{$j$ \In $\{1, \dots, |w|\}$}{
			
			$\nodi \leftarrow \text{ unique node in } \parti{\forest} \text{ such that } i \in \fr{\nodi}{\forest}$;
			
			$\nodj \leftarrow \text{ unique node in } \parti{\forest} \text{ such that }  j \in \fr{\nodj}{\forest}$;		
			
			\If{$(\nodi, \nodj) \in L(\forest)$}{
				\Out{$\pro{i,j}$}
			
			}
			
		}}
	
 \caption{\label{algo:linked} Computing $f_L$ with a $1$-pebble bimachine}
\end{algorithm}

Correctness of Algorithm \ref{algo:linked} follows from
the definitions. To see how it can be implemented by a $1$-pebble
bimachine, we only need to show that the condition $(\nodi, \nodj) \in L(\forest)$
can be checked using a regular language, as claimed below.
We re-use the notation $\pi$ from the proof of Lemma \ref{lem:distinguishable}.
The claim follows from similar constructions of regular languages.

\begin{claim} 
One can build a regular language $P \subseteq (\apar{A}\uplus \marq{\apar{A}})^*$
such that for all $w \in A^+$, $1 \le i\le j \le |w|$ and factorization $\forest$ of height at most $\mh$ over $w$:
\begin{align*}
&\omar{\omar{\forest}{\pi(i)}}{\pi(j)} \in P \text{ if and only if}\\
 &\text{when }\nodi, \nodj  \in \parti{\forest} \text{ are such that } i \in \fr{\nodi}{\forest}, j \in \fr{\nodj}{\forest}, \text{we have } (\nodi, \nodj) \in L(\forest).
\end{align*}
\end{claim}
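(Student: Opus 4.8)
The plan is to reduce the claim to a finite-automaton computation on the parenthesized encoding of $\forest$, and then carry out the same kind of inductive construction as for the languages $P^h$ and $P_k^{m,n}$ above. Throughout, by the \emph{path of $\pi(i)$} I mean the sequence of nodes from the root of $\forest$ down to the leaf $\pi(i)$, and I call an edge on this path a \emph{middle edge} if it leads to a child that is neither the leftmost nor the rightmost one (such edges occur only below idempotent nodes, which are exactly the ones with at least three children). Unfolding the definition of $\dep{\cdot}$ and using Lemma \ref{lem:dep-root}, one checks that the unique $\nodi \in \parti{\forest}$ with $i \in \fr{\nodi}{\forest}$ is the lower endpoint of the deepest middle edge on the path of $\pi(i)$, or $\nodi=\forest$ when that path has no middle edge; and, when $\nodi \in \itera{\forest}$, the node $\basis{\nodi}{\forest}$ is the upper endpoint of the \emph{shallowest} middle edge on the path of $\pi(i)$. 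The same holds for $j$, $\pi(j)$, $\nodj$.

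Next I would rewrite the target condition purely in terms of the two paths. The condition $(\nodi,\nodj)\in U(\forest)$, i.e. $(\nodi,\nodj)\notin D(\forest)$, holds iff $\nodi=\forest$, or $\nodj=\forest$, or both $\nodi,\nodj\in\itera{\forest}$ and $\basis{\nodi}{\forest}=\basis{\nodj}{\forest}$; by the previous paragraph, this last equality says that the paths of $\pi(i)$ and $\pi(j)$ reach a common node $x$ using only non-middle edges and both leave $x$ by a middle edge. The $L$-condition proper — that $\nodi$ is an ancestor of $\nodj$, or a left/right sibling of an ancestor of $\nodj$, or symmetrically — is a statement about the node at which the paths of $\pi(i)$ and $\pi(j)$ diverge, relative to where $\nodi$ and $\nodj$ sit on these paths; here one may use that $\pi(i)$ occurs before $\pi(j)$ in $\forest$ since $i\le j$. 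The degenerate case $i=j$, where $\pi(i)=\pi(j)$ is doubly marked, is trivial because $(\nodi,\nodi)\in L(\forest)$ always.

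Finally I would argue that all of this is regular. Since the inputs we care about have height at most $\mh$, a left-to-right finite automaton over $\apar{A}^*$ can maintain the bounded-size stack describing the node currently open on its path, together with, for each stacked node, whether it already has a sibling strictly to its left and — upon closing it — whether it has one strictly to its right; this is what lets it classify edges as middle or not. When the automaton passes the marked leaf $\pi(i)$ it records a bounded summary of the path of $\pi(i)$, in particular the levels of its shallowest and deepest middle edges, hence $\basis{\nodi}{\forest}$ and $\nodi$ as heights on the current stack; it does likewise at $\pi(j)$, keeps track of the level at which the two stacks diverged, and then checks the reformulated $U$- and $L$-conditions, while also verifying en passant that the input is a factorization of height at most $\mh$. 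Packaging this as a regular language $P=P^{\mh}$ by induction on the height bound $h$ — the inductive step splitting a node $(\forest_1)\cdots(\forest_n)$ into its leftmost child, its rightmost child and its middle children exactly as in the constructions of $P^h$ and $P_k^{m,n}$ — completes the proof. The only slightly delicate point, which I expect to be the main obstacle, is the "middle edge" designation: it requires the automaton to anticipate whether a currently open node will acquire a further sibling, i.e. the same bit of lookahead already implicit in those earlier inductive constructions, and it is harmless here since the height bound keeps everything finite.
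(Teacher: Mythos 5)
Your proposal is correct and takes essentially the same approach as the paper's (very terse) proof: the paper merely states the claim ``follows from similar constructions of regular languages'' as for the languages $P^h$ and $P_k^{m,n}$ in the proof of Lemma~\ref{lem:distinguishable}, i.e.\ the bounded-height inductive / bounded-stack automaton argument you carry out. Your explicit characterization of $\nodi$, $\nodj$ and $\basis{\nodi}{\forest}$ as endpoints of middle edges on the two marked root-to-leaf paths, and of the $U$- and $L$-conditions as bounded comparisons of levels and sibling adjacency at the divergence node, is a correct way to make that ``similar construction'' precise.
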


\section{Proof of Lemma \ref{lem:independent} - Independent pairs}

Assume that $\trans$ is symmetrical. We build a $1$-blind bimachine computing:
\begin{equation*}
f_I: (\apar{A})^+ \rightarrow \Nat, \forest \mapsto
\left\{
    \begin{array}{l}
        \displaystyle \sum_{(\nodi,\nodj) \in I(\forest)} \pro{\nodi,\nodj} \text{ if } \forest \text{ factorization of height at most $\mh$}; \\
        0 \text{ otherwise.}\\
    \end{array}
\right.
\end{equation*}

Once more, using Lemma \ref{lem:restriction},
we only consider "valid" inputs.

As explained in the  body of this paper, we shall see
that since $\trans$ is symmetrical,
$\pro{\nodi, \nodj}$ does not depend
on the relative position of $\nodi$ and $\nodj$,
when $(\nodi, \nodj) \in I(\forest)$.
It remains to formalize this intuition by defining the 
\emph{type}. We use the notations of Subsection \ref{subs:nota}.

\subparagraph*{Left and right.} In this paragraph, we
define the "left" and "right" of a node,
as the monoid elements which
are "before" and "after" it.

\begin{definition} Let $w \in A^+$, $\forest \in \facto{\mu}{w}$, and $\nodi \in \nodes{\forest}$.
We define the left and right of $\nodi$ in $\forest$ as follows:
\begin{equation*}
\left\{
    \begin{array}{l}
        \lefe{\nodi}{\forest} \defined \mu(w[1{:}\mine{\nodi}{\forest}{-}1]\\
        \rige{\nodi}{\forest} \defined \mu(w[\maxe{\nodi}{\forest}{+}1{:}|w|])
    \end{array}
\right.
\end{equation*}
\end{definition}

\begin{remark}
$\lefe{\forest}{\forest} = \rige{\forest}{\forest} = \mu(\vide)$.
\end{remark}

\begin{example}
In the factorization $\forest$ of Figure \ref{fig:cont},
we consider the basis $\nodi \in \nodes{\forest}$ colored in blue.
Then $\lefe{\nodi}{\forest} = \mu(w[1{:}2]) = \mu(aa)$ and
$\rige{\nodi}{\forest} = \mu(w[12{:}11]) = \mu(\vide)$.
\end{example}

\begin{figure}[h!]

\centering
\begin{tikzpicture}{scale=1}

	\newcommand{\couleur}{blue}
	\newcommand{\texte}{\small \bfseries \sffamily \mathversion{bold} }

	\fill[fill=blue]  (4,9)  circle (0.15);
		
	\draw (1.5,9.5) -- (-1,9);
	\draw (1.5,9.5) -- (4,9);
	\draw (1.75,9) -- (6.25,9);
	\draw (6.25,8.5) -- (6.25,9);
	\node[above] at (6.25,8.1) {$b$};
	\draw (4.75,8.5) -- (4.75,9);

	\draw (3.25,8.5) -- (3.25,9);
	\node[above] at (3.25,8.1) {$c$};
	\draw (1.75,8.5) -- (1.75,9);
	\node[above] at (1.75,8.1) {$b$};

	\draw (-1,9) -- (-2,8.5);
	\node[above] at (-2,8.1) {$a$};
	\draw (-1,9) -- (0,8.5);
	\node[above] at (0,8.1) {$a$};
	
	\draw (4.75,8.5) -- (3.75,8);
	\node[above] at (3.75,7.6) {$a$};	
	\draw (4.75,8.5) -- (5.75,8);

	\draw (4.75,8) -- (6.75,8);
	\draw (4.75,8) -- (6.75,8);
	
	\draw (6.75,7.5) -- (6.75,8);
	\node[above] at (6.75,7.1)  {$b$};	
	\draw (6.25,7.5) -- (6.25,8);
	\node[above] at (6.25,7.1)  {$c$};	
	\draw (5.75,7.5) -- (5.75,8);
	\node[above] at (5.75,7.1)  {$b$};	
	\draw (5.25,7.5) -- (5.25,8);
	\node[above] at (5.25,7.1) {$b$};	
	\draw (4.75,7.5) -- (4.75,8);
	\node[above] at  (4.75,7.1)  {$c$};

\end{tikzpicture}

\caption{\label{fig:cont}  The factorization $(aa)(bc(a(cbbcb))b)$ of $aabcacbbcbb$}

\end{figure}

To get our result, we shall define a more precise
abstraction of the position of an iterable node,
that we call its \emph{type}.

\subparagraph*{Types.} Let $\nodi \in \itera{\forest}$,
it is the descendant of a middle child of $\basis{\nodi}{\forest}$.
We denote this middle child $\mimi{\nodi}{\forest}$ (note that it is an iterable node).
 Intuitively, the \emph{type} of $\nodi$
is an abstraction of the left and right
of $\basis{\nodi}{\forest}$,
plus the context of the idempotent parent of $\nodi$
within $\mimi{\nodi}{\forest}$ (seen as a subtree).

\begin{definition} Let $w \in A^+$, $\forest \in \facto{\mu}{w}$ and
$\nodi \in \itera{\forest}$.
We define \linebreak $\type{\nodi}{\forest} =  (d, m,n,e,m',n', u) \in \Nat \times M^5 \times A^+$
as follows:
\item
\begin{itemize}
\item $1 \le d $ is the depth of the node $\nodi$ in $\forest$;
\item $ m \defined \lefe{\basis{\nodi}{\forest}}{\forest}$ and $n \defined \rige{\basis{\nodi}{\forest}}{\forest}$;
\item  $e \defined \valu{\mimi{\nodi}{\forest}} = \valu{\basis{\nodi}{\forest}}$ (it is an idempotent);
\item 
\begin{itemize}
\item if $\mimi{\nodi}{\forest} = \nodi$ (i.e. $\nodi$ is a middle children of its basis),
$m' \defined n'  \defined \mu(\vide)$;
\item if $\mimi{\nodi}{\forest} \neq \nodi$ (i.e. $\nodi \in \itera{\mimi{\nodi}{\forest}}$),
let $\nodp$ be its parent, then $m' \defined \lefe{\nodp}{\mimi{\nodi}{\forest}}$
and $n' \defined \rige{\nodp}{\mimi{\nodi}{\forest}}$;
\end{itemize}
\item $u  = w[\fr{\nodi}{\forest}]$.
\end{itemize}
\end{definition}

We denote by $T \defined [1,\mh] \times M^5 \times \{w\in A^+:|w| \le 2^{\mh}\}$.
This set is finite, furthermore, if $\forest$ has height at most $\mh$
then $\type{\nodi}{\forest} \in T$.

\begin{example}
In the factorization $\forest$ of Figure \ref{fig:type},
we consider $\nodi \in \itera{\forest}$ colored
in red. Then $\basis{\nodi}{\forest}$ is colored in blue and
$\mimi{\nodi}{\forest}$ is colored in gray
(the associated subtree is dotted).
Then $\type{\nodi}{\forest} = (d,m,n,e,m',n',u)$ where
 $d = 5$;
 $m = \mu(aa)$;
 $n = \mu(\vide)$;
 $e = \mu(bcabcbbcbb) = \mu(b) = \mu(c) = \mu(acbbcb)$;
$m' = \mu(a)$;
$n' = \mu(\vide)$;
$u = b$.
\end{example}

\begin{figure}[h!]

\centering
\begin{tikzpicture}{scale=1}

	\newcommand{\couleur}{blue}
	\newcommand{\texte}{\small \bfseries \sffamily \mathversion{bold} }

	\fill[fill=gray]  (4.75,8.5)  circle (0.15);
	\fill[fill=blue]  (4,9)  circle (0.15);
	\fill[fill=red]  (5.25,7.5)  circle (0.15);
		
	\draw (1.5,9.5) -- (-1,9);
	\draw (1.5,9.5) -- (4,9);
	\draw (1.75,9) -- (6.25,9);
	\draw (6.25,8.5) -- (6.25,9);
	\node[above] at (6.25,8.1) {$b$};
	\draw (4.75,8.5) -- (4.75,9);

	\draw (3.25,8.5) -- (3.25,9);
	\node[above] at (3.25,8.1) {$c$};
	\draw (1.75,8.5) -- (1.75,9);
	\node[above] at (1.75,8.1) {$b$};

	\draw (-1,9) -- (-2,8.5);
	\node[above] at (-2,8.1) {$a$};
	\draw (-1,9) -- (0,8.5);
	\node[above] at (0,8.1) {$a$};
	
	\draw[thick,dotted] (4.75,8.5) -- (3.75,8);
	\node[above] at (3.75,7.6) {$a$};	
	\draw[thick,dotted] (4.75,8.5) -- (5.75,8);

	\draw[thick,dotted] (4.75,8) -- (6.75,8);
	
	\draw[thick,dotted] (6.75,7.5) -- (6.75,8);
	\node[above] at (6.75,7.1)  {$b$};	
	\draw[thick,dotted] (6.25,7.5) -- (6.25,8);
	\node[above] at (6.25,7.1)  {$c$};	
	\draw[thick,dotted] (5.75,7.5) -- (5.75,8);
	\node[above] at (5.75,7.1)  {$b$};	
	\draw[thick,dotted] (5.25,7.5) -- (5.25,8);
	\node[above] at (5.25,7.1) {$b$};	
	\draw[thick,dotted] (4.75,7.5) -- (4.75,8);
	\node[above] at  (4.75,7.1)  {$c$};	
	
	\draw[thick, dashed] (-2.5,8.5)--(-2.5,8) -- (0.5,8) -- (0.5,8.5);
	\node[above] at (-1,7.5) {$m$};

	\draw[thick, dashed] (3.5,8)--(3.5,7.5) -- (4,7.5) -- (4,8);
	\node[above] at (3.75,7) {$m'$};	

	\draw[thick, dashed] (3,8.5)--(3,8) -- (3.5,8) -- (3.5,8.5);
	\node[above] at (3.25,7.5) {$e$};	

	\draw[thick, dashed] (1.5,8.5)--(1.5,8) -- (2,8) -- (2,8.5);
	\node[above] at (1.75,7.5) {$e$};	

	\draw[thick, dashed] (3.5,7.5)--(3.5,7) -- (7,7) -- (7,7.5);
	\node[above] at (5.25,6.5) {$e$};	
	
\end{tikzpicture}

\caption{\label{fig:type}  The factorization $(aa)(bc(a(cbbcb))b)$ of $aabcacbbcbb$}

\end{figure}

\subparagraph*{Relating types and bitypes.}
We now show that when $\nodi, \nodj \in I(\forest)$,
then $\pro{\nodi, \nodj}$ only depends on the
types of $\nodi$ and $\nodj$.
This is the purpose of Lemma \ref{lem:indep-prof}
below, whose proof (given in the next subsection)
crucially relies on the symmetry of $\trans$.

\begin{definition}
If $(\nodi, \nodj) \in I(\forest)$, we write $\nodi < \nodj$ if $\maxe{\nodi}{\forest} < \mine{\nodj}{\forest}$.
\end{definition}

\begin{remark} Since $(\nodi, \nodj) \in I(\forest)$, $\nodi$ and $\nodj$
are not on the same branch of $\forest$,
thus either $\nodi < \nodj$ or $\nodj < \nodi$
(their frontiers cannot be interleaved).
\end{remark}

In the following, we write $\{\tau_1, \tau_2\} \subseteq T$
to describe a set of $2$ or $1$ elements of $T$.
Caution: we can have $\tau_1 = \tau_2$;
this abuse of notation makes the statements more readable.

\begin{lemma} \label{lem:indep-prof}
Let $\{\tau_1, \tau_2 \} \subseteq T$. There exists
$K \ge 0$ such that
for all factorization $\forest$ of height at most $\mh$, if $(\nodi, \nodj) \in I(\forest)$,
 $\nodi < \nodj$ and $\{\tau_1, \tau_2\} = \{\type{\nodi}{\forest}, \type{\nodj}{\forest}\}$
then $\pro{\nodi, \nodj} = K$.
We thus define 
$\pro{\{\tau_1, \tau_2\}} \defined K$.
\end{lemma}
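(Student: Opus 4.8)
The plan is to reduce Lemma~\ref{lem:indep-prof} to the symmetry hypothesis via the combinatorics of factorization forests. First I would observe that for $(\nodi,\nodj)\in I(\forest)$ with $\nodi<\nodj$, the quantity $\pro{\nodi,\nodj}$ is, by Definition~\ref{def:prodnodes}, a sum of $\pro{i,j}$ over $i\in\fr{\nodi}{\forest}$, $j\in\fr{\nodj}{\forest}$ with $i\le j$ (here $i\le j$ automatically since $\nodi<\nodj$). Using Lemma~\ref{lem:preserv} and the definition of $\valu{\cdot}$, the values $\mu(w[1{:}i{-}1])$, $\mu(w[i{+}1{:}j{-}1])$, $\mu(w[j{+}1{:}|w|])$ that control $\pro{i,j}$ can be expressed through the data recorded in $\type{\nodi}{\forest}$ and $\type{\nodj}{\forest}$ together with one monoid element $p$ describing "what lies strictly between $\nodi$ and $\nodj$". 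Concretely, writing $\tau_1=\type{\nodi}{\forest}=(d_1,m_1',n_1',e_1,\dots,u_1)$-style data and likewise $\tau_2$, I would show that $\fr{\nodi}{\forest}$ is $u_1$ placed inside a context built from $m_1$, $e_1$, $m_1'$, $n_1'$ and the idempotent $e_1$ of the basis; symmetrically for $\nodj$; and that $\pro{\nodi,\nodj}=\pro{\Phi}$ for a bitype $\Phi=m\cro{u_1}m'\cro{u_2}m''$ whose parameters are determined by $\tau_1,\tau_2$ and by $p=\mu(w[\maxe{\nodi}{\forest}{+}1{:}\mine{\nodj}{\forest}{-}1])$.

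The key step is then to check that the equational side conditions of Definition~\ref{def:symmetrical} are satisfied in this situation, so that symmetry applies and gives a constant $K$ independent of $p$ (hence independent of the relative placement of $\nodi$ and $\nodj$). This is where the hypothesis $(\nodi,\nodj)\in I(\forest)$ is essential: since $\nodi$ and $\nodj$ are not on the same branch, are not ancestors or siblings-of-ancestors of one another, yet have the same basis $\nodb=\basis{\nodi}{\forest}=\basis{\nodj}{\forest}$, both $\mimi{\nodi}{\forest}$ and $\mimi{\nodj}{\forest}$ are distinct middle children of the idempotent node $\nodb$, whose common value $e$ is idempotent. The portion of $w$ strictly before $\nodi$'s frontier inside $\nodb$, the portion between them, and the portion after $\nodj$ inside $\nodb$ all have image a product of copies of $e$ with the relevant $m',n'$ pieces; because $e$ is idempotent and sits on both sides of each middle child, the idempotent equations $m_1e_1pe_2n_2=e$, $em_1e_1pe_2=em_2e_2$, $e_1pe_2n_2e=e_1n_1e$ (and their mirror images) reduce to identities of the form $e\cdot(\text{stuff})\cdot e=e$ that hold automatically. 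I would verify each of the three equations by expanding $m_1,n_1,m_2,n_2,e_1,e_2$ in terms of the type data and using idempotency of $e$; the bound $|u_1|,|u_2|\le 2^\mh$ needed in Definition~\ref{def:symmetrical} comes from Lemma~\ref{lem:sizef}.

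Once the equations are established, Definition~\ref{def:symmetrical} yields a single $K\ge 0$ with $\pro{\Phi}=K$ for all admissible $p$, and by the swap clause the same $K$ works whether the "$u_1$-node" is $\nodi$ or $\nodj$; this is exactly what makes $\pro{\{\tau_1,\tau_2\}}$ well defined as an unordered function of the two types. Finally, I would note that when $\nodi$ or $\nodj$ is a middle child of its own basis (so $m'=n'=\mu(\vide)$) the same computation goes through with trivial simplifications, covering all cases in $I(\forest)$. The main obstacle I expect is purely bookkeeping: correctly translating the seven-coordinate $\type{\cdot}{\forest}$ data and the inter-node element $p$ into the $\Phi=m\cro{u_1}m'\cro{u_2}m''$ format of a bitype and matching all three idempotent side conditions exactly as written in Definition~\ref{def:symmetrical}, including getting the left/right orientation right so that calls go "from the right" as required in Definition~\ref{def:prodnodes}; the underlying algebra is routine once the dictionary between types and bitypes is set up.
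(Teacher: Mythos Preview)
Your overall strategy coincides with the paper's: express $\pro{\nodi,\nodj}$ as $\pro{\Phi}$ for a bitype $\Phi$ determined by $\type{\nodi}{\forest}$, $\type{\nodj}{\forest}$ and a middle element $p$, then invoke Definition~\ref{def:symmetrical} to kill the dependence on $p$. Two points in your sketch do not go through as written.

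First, the step ``$\pro{\nodi,\nodj}=\pro{\Phi}$'' needs more than Lemma~\ref{lem:preserv} and the identity $\valu{\nodi}=\mu(w[\fr{\nodi}{\forest}])$. For each individual $i\in\fr{\nodi}{\forest}$ you must match $\mu(w[1{:}i{-}1])$ with the corresponding \emph{prefix} image inside $u_1$; this is a separate inductive statement (Claim~\ref{claim:skel-monoid} in the paper) showing that frontiers preserve all prefix/suffix images, not just the total value. Without it the translation from $\pro{i,j}$ to a $\Phi(i',j')$ term is not justified.

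Second, and more seriously, your verification of the three idempotent equations rests on the assertion that $\mimi{\nodi}{\forest}$ and $\mimi{\nodj}{\forest}$ are \emph{distinct} middle children of the basis~$\nodb$. This is false in general: two non-adjacent middle children $\nodi,\nodj$ of a single idempotent node $\nodp$ lying strictly inside one middle child $\mc{M}$ of $\nodb$ satisfy $(\nodi,\nodj)\in I(\forest)$ yet have $\mimi{\nodi}{\forest}=\mimi{\nodj}{\forest}=\mc{M}$. In that situation the word between $\nodi$ and $\nodj$ lives entirely inside $\mc{M}$ and is \emph{not} a product of copies of $e=\valu{\nodb}$, so your ``$e\cdot(\text{stuff})\cdot e=e$'' reduction does not apply; moreover $e_1=\valu{\nodi}$ and $e_2=\valu{\nodj}$ need not equal $e$, so equations like $e_1pe_2n_2e=e_1n_1e$ are not of that shape at all. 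The paper avoids this trap by working not at the level of the basis but at the level of the \emph{parents} of $\nodi$ and $\nodj$: since $\nodi,\nodj$ are iterable they have an immediate right sibling $\nodi'$ and left sibling $\nodj'$ with $\valu{\nodi'}=e_1$, $\valu{\nodj'}=e_2$; one then defines $p$ via the gap between $\nodi'$ and $\nodj'$ (so that the middle of the bitype is genuinely $e_1pe_2$) and performs a case analysis on the least common ancestor of $\nodi,\nodj$, using precisely the $I(\forest)$ exclusions (not an ancestor, not a left/right sibling of an ancestor) to rule out the degenerate configurations. This case analysis (the paper's Lemma~\ref{lem:cas}) is where the work lies; it is not automatic.
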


\begin{proof} See Subsection \ref{proof:indep-prof}.
\end{proof}

\begin{remark} 
Since we use sets, $\pro{\{\tau_1, \tau_2\}} = \pro{\{\tau_2, \tau_1\}}$.
The symmetry of the definition means that the calls from $\tau_1$ to $\tau_2$
give the same production as those from $\tau_2$ to $\tau_1$.
\end{remark}

Lemma \ref{lem:indep-prof} still holds
if we let the depths of the nodes be variable.
However, the depths will be used for
the computation by a $1$-blind bimachine.

\subparagraph*{Computing $f_I$ with types.}
We can now decompose $f_I$ as a linear combinaison.
Recall that in our notations, we can have $\tau_1 = \tau_2$.
Equation \ref{eq:partiti} follows by partitioning
the sum defining $f_I$ depending on $\{\type{\nodi}{\forest}, \type{\nodj}{\forest}\}$
and applying Lemma \ref{lem:indep-prof}.

\begin{equation}
\label{eq:partiti}
\begin{aligned}
f_I(\forest) &= \sum_{\{\tau_1, \tau_2\} \subseteq T}
\sum_{\substack{(\nodi, \nodj) \in I(\forest) \\ \nodi < \nodj \\ \{\type{\nodi}{\forest}, \type{\nodj}{\forest}\} =  \{\tau_1, \tau_2\}} }\pro{\nodi, \nodj}\\
&= \sum_{\{\tau_1, \tau_2\} \subseteq T}
f_{\{\tau_1, \tau_2\}} \times \pro{\{\tau_1, \tau_2\}}\\
 \text{where }& f_{\{\tau_1, \tau_2\}}\defined \big| \{(\nodi, \nodj) \in I(\forest): \nodi < \nodj, \{\type{\nodi}{\forest}, \type{\nodj}{\forest}\} =  \{\tau_1, \tau_2\} \}\big|.
\end{aligned}
\end{equation}

Since the number of sets $\{\tau_1, \tau_2\} \subseteq T$ is bounded,
we only need to describe how to compute each $f_{\{\tau_1, \tau_2\}}$.
Indeed, we can recombine them using Claim \ref{claim:sss}.

\begin{claim} \label{claim:sss} Let $\alpha, \beta \in \Nat$ and
$g,h: B^* \rightarrow \Nat$ computable by a $k$-blind bimachine, so is:
\begin{equation*}
\alpha g+\beta h: w \mapsto \alpha \times g(w) + \beta \times h(w).
\end{equation*}
\end{claim}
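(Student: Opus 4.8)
The plan is to prove, by induction on $k \ge 0$, the slightly more general statement that for every $k$ the set of functions computed by $k$-blind bimachines is closed under $\Nat$-linear combinations: if $g$ and $h$ are computed by $k$-blind bimachines, then so is $\alpha g + \beta h$ for all $\alpha, \beta \in \Nat$. The claim is then the instance for the relevant $k$ (we work over $A^+$ throughout, the value on $\varepsilon$ being irrelevant as in Section~\ref{sec:bimachines}).

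First I would treat the base case $k = 0$. Let $g$ and $h$ be computed by bimachines $(A, M_g, \mu_g, \lambda_g)$ and $(A, M_h, \mu_h, \lambda_h)$ without external functions. I take the product monoid $M \defined M_g \times M_h$ with the morphism $\mu : a \mapsto (\mu_g(a), \mu_h(a))$, so that at each position $i$ of $w \in A^+$ the triple $(\mu(w[1{:}i{-}1]), w[i], \mu(w[i{+}1{:}|w|]))$ determines simultaneously the triple seen by the bimachine for $g$ and the one seen by the bimachine for $h$. Setting $\lambda((m_g,m_h), a, (n_g,n_h)) \defined \alpha\cdot\lambda_g(m_g,a,n_g) + \beta\cdot\lambda_h(m_h,a,n_h)$ yields a $0$-blind bimachine whose output on $w$ is $\alpha\sum_i \lambda_g(\cdots) + \beta\sum_i \lambda_h(\cdots) = \alpha g(w) + \beta h(w)$.

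For the inductive step with $k \ge 1$, the bimachines for $g$ and $h$ have external blind functions forming finite sets $\oras_g$ and $\oras_h$, each computed by a $(k{-}1)$-blind bimachine. The hard part, and the only genuine subtlety, is that a bimachine calls exactly one external function per position, so one cannot merely run the two machines side by side and fire two calls; instead I merge the calls. For every pair $(\exteg, \exte) \in \oras_g \times \oras_h$ I form the function $\alpha\exteg + \beta\exte : w \mapsto \alpha\exteg(w) + \beta\exte(w)$, which by the induction hypothesis is computed by a $(k{-}1)$-blind bimachine. I then build the main bimachine over $M_g \times M_h$ with morphism $\mu_g \times \mu_h$, whose set of external functions is $\{\alpha\exteg + \beta\exte : \exteg \in \oras_g,\ \exte \in \oras_h\}$, and whose output function sends a triple to the function $\alpha\exteg + \beta\exte$, where $\exteg$ and $\exte$ are the external functions that the machines for $g$ and $h$ would respectively call on that triple. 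Since a blind call does not mark the calling position, the call to $\alpha\exteg + \beta\exte$ on input $w$ returns exactly $\alpha\exteg(w) + \beta\exte(w)$, and summing over all positions gives $\alpha g(w) + \beta h(w)$; by construction this is a $k$-blind bimachine. The remaining verifications — that the output function is well defined over the product monoid, and that the newly introduced external functions are uniformly $(k{-}1)$-blind — are routine, and constitute the only place where a little bookkeeping is needed; the rest is a direct unfolding of the defining sums.
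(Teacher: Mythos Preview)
Your proof is correct, but it takes a different route from the paper. The paper's proof is a single sentence: it passes through the equivalence between $k$-blind bimachines and $k$-blind \emph{transducers} (Proposition~\ref{lem:pebble-bim}), and then simply builds a two-way transducer that first simulates the transducer for $g$ (scaling its outputs by $\alpha$), rewinds, and then simulates the transducer for $h$ (scaling by $\beta$); the two-way head makes the sequential composition trivial, with no induction needed. Your argument stays entirely at the bimachine level and proceeds by induction on $k$, merging the two machines via a product monoid and, at each level, fusing the two external calls at a position into a single call to $\alpha\exteg+\beta\exte$, which is $(k{-}1)$-blind by induction. Your approach is more explicit about the bimachine structure and avoids invoking the transducer equivalence, at the cost of a short induction; the paper's approach is shorter but leans on that equivalence. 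Both are valid and essentially equally elementary.
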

\begin{proof}[Proof idea.]
We build a $k$-blind transducer which
simulates sequentially $g$, and then $h$.
\end{proof}

From now we fix $\tau_1 = (d_1,m_1,n_1,e_1,m_1',n_1',u_1)$ and $\tau_2 = (d_2,m_2,n_2,e_2,m_2',n_2',u_2)$.
We assume that $\tau_1 \neq \tau_2$ (the case of equality
is similar and even easier). Up to switching $1$ and $2$,
we assume that $d_1 \ge d_2$, that is $\tau_1$ is "deeper" than $\tau_2$.
Then:
\begin{equation*}
\begin{aligned}
f_{\{\tau_1, \tau_2\}}(\forest)& = \left| \{(\nodi, \nodj) \in I(\forest): \nodi < \nodj, \{\type{\nodi}{\forest}, \type{\nodj}{\forest}\} =  \{\tau_1, \tau_2\} \}\right|\\
&= \frac{1}{2}\big| \{(\nodi, \nodj) \in I(\forest):\{\type{\nodi}{\forest}, \type{\nodj}{\forest}\} =  \{\tau_1, \tau_2\} \}\big|\\
& = \big| \{(\nodi, \nodj) \in I(\forest): \type{\nodi}{\forest} = \tau_1, \type{\nodj}{\forest} =  \tau_2 \}\big|\\
\end{aligned}
\end{equation*}
The second line is justified since the function
$\sigma = (\nodi, \nodj) \mapsto (\nodj, \nodi)$ is an involution without fixpoint
(since $(\nodi, \nodi) \not \in I(F)$) of the set
$\{(\nodi, \nodj) \in I(\forest):\{\type{\nodi}{\forest}, \type{\nodj}{\forest}\} =  \{\tau_1, \tau_2\}\}$
which reverses the ordering of the pair (that is, if $\nodi < \nodj$ then $\nodj > \nodi$).
For the third line, we similarly use $\sigma$ since it
reverses the types $\type{\nodi}{\forest}$ and $\type{\nodj}{\forest}$
(we use $\tau_1 \neq \tau_2$ here).

By partitioning the last set, we then get:
\begin{equation}
\begin{aligned}
f_{\{\tau_1, \tau_2\}}(\forest) & = \sum_{\substack{\nodi \in \itera{\forest} \\ \type{\nodi}{\forest} = \tau_1}}  \big| \{(\nodi, \nodj) \in I(\forest): \type{\nodj}{\forest} = \tau_2 \}\big|\\
& = \sum_{\substack{\nodi \in \itera{\forest} \\ \type{\nodi}{\forest} = \tau_1}} \big| \{\nodj: (\nodi, \nodj) \in I(\forest), \type{\nodj}{\forest} =  \tau_2 \}\big|\\
\end{aligned}
\label{eq:tautau}
\end{equation}

For all $\nodi$, we thus count all the $\nodj \in \itera{\nodj}$ such that $\type{\nodj}{\forest} = \tau_2$
and $\basis{\nodj}{\forest} = \basis{\nodi}{\forest} $,
except those which are in "nearly" on the same branch as $\nodi$.
However this set of $\nodj$ seems to depend on $\nodi$, which we want to avoid.

We want to rewrite Equation \ref{eq:tautau} in order to remove this dependency.

\begin{claim} \label{claim:disju} Let $\forest$ of height
at most $\mh$, for all $\nodi \in \itera{\forest}$ with $\type{\nodi}{\forest} = \tau_1$;
then
$\{\nodj: (\nodi, \nodj) \in I(\forest), \type{\nodj}{\forest} =  \tau_2\} = J_\forest(\basis{\nodi}{\forest}) \smallsetminus A_\forest(\nodi)$ where:
\item
\begin{itemize}
\item $J_F(\nodb) \defined \{\nodj \in \itera{\forest}: \basis{\nodj}{\forest} = \nodb, \type{\nodj}{\forest} =  \tau_2 \}$;
\item $A_\forest(\nodi)$ is the set of nodes $\nodj$ such that $\type{\nodj}{\forest} = \tau_2$,
$\basis{\nodj}{\forest} = \basis{\nodi}{\forest}$ and $\nodj$ is either an ancestor of $\nodi$, or the right/left
sibling of an ancestor of $\nodi$.
\end{itemize}
\end{claim}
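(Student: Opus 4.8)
The plan is to prove the set equality by a double inclusion, after unwinding the definitions of $I(\forest)$, $U(\forest)$, $L(\forest)$ and $D(\forest)$, and using the standing normalization $d_1 \ge d_2$ together with $\tau_1 \ne \tau_2$. Two observations will be invoked throughout. First, if $\type{\nodj}{\forest} = \tau_2$ then $\nodj \in \itera{\forest}$ (a type is only defined on an iterable node), so $\basis{\nodj}{\forest}$ makes sense and $\nodj \in \parti{\forest}$; moreover, if in addition $\basis{\nodj}{\forest} = \basis{\nodi}{\forest}$, then $(\nodi,\nodj) \notin D(\forest)$, hence $(\nodi,\nodj) \in U(\forest)$, so that $(\nodi,\nodj) \in I(\forest)$ is equivalent to $(\nodi,\nodj) \notin L(\forest)$. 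Second, any candidate $\nodj$ satisfies $\type{\nodj}{\forest} = \tau_2 \ne \tau_1 = \type{\nodi}{\forest}$, so $\nodi \ne \nodj$; hence "ancestor'' may be read strictly, and a proper ancestor strictly decreases the depth stored as the first component of the type.

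For the inclusion from left to right, I would take $\nodj$ with $(\nodi,\nodj) \in I(\forest)$ and $\type{\nodj}{\forest} = \tau_2$. Then $\nodi,\nodj \in \itera{\forest}$ and $(\nodi,\nodj) \in U(\forest)$, which forces $\basis{\nodj}{\forest} = \basis{\nodi}{\forest}$, i.e.\ $\nodj \in J_\forest(\basis{\nodi}{\forest})$. If $\nodj$ also lay in $A_\forest(\nodi)$, then $\nodj$ would be an ancestor of $\nodi$ or the left/right sibling of an ancestor of $\nodi$, which together with $(\nodi,\nodj) \in U(\forest)$ is exactly one of the defining clauses of $L(\forest)$, contradicting $(\nodi,\nodj) \in I(\forest) = U(\forest) \smallsetminus L(\forest)$. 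Hence $\nodj \in J_\forest(\basis{\nodi}{\forest}) \smallsetminus A_\forest(\nodi)$.

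For the reverse inclusion, I would take $\nodj \in J_\forest(\basis{\nodi}{\forest}) \smallsetminus A_\forest(\nodi)$; by the first observation $(\nodi,\nodj) \in U(\forest)$, so it suffices to rule out $(\nodi,\nodj) \in L(\forest)$. Assuming the contrary, one of four cases holds. If $\nodj$ is an ancestor of $\nodi$ or the left/right sibling of an ancestor of $\nodi$, then, using $\type{\nodj}{\forest} = \tau_2$ and $\basis{\nodj}{\forest} = \basis{\nodi}{\forest}$, one gets $\nodj \in A_\forest(\nodi)$, a contradiction. If $\nodi$ is a (strict) ancestor of $\nodj$, then the depth of $\nodi$ is strictly smaller than that of $\nodj$, i.e.\ $d_1 < d_2$, contradicting $d_1 \ge d_2$. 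Finally, if $\nodi$ is the left/right sibling of some ancestor $\noda$ of $\nodj$, then $\nodi$ and $\noda$ share the depth $d_1$, and $d_1 \le d_2$ since $\noda$ is an ancestor of $\nodj$; with $d_1 \ge d_2$ this yields $d_1 = d_2$, hence $\noda = \nodj$, so $\nodj$ is the left/right sibling of $\nodi$, thus the left/right sibling of an ancestor (namely $\nodi$ itself) of $\nodi$, and again $\nodj \in A_\forest(\nodi)$, a contradiction. Therefore $(\nodi,\nodj) \notin L(\forest)$, so $(\nodi,\nodj) \in I(\forest)$, and since $\type{\nodj}{\forest} = \tau_2$ this $\nodj$ belongs to the left-hand set.

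The argument is essentially careful bookkeeping of definitions; the only genuine point is that the two awkward configurations, where $\nodi$ sits above or beside $\nodj$, are excluded precisely by the normalization $d_1 \ge d_2$ (plus the fact that an ancestor of a node at the same depth is that node). The main thing to be cautious about will be fixing the conventions cleanly — that a proper ancestor has strictly smaller depth, and that "left/right sibling'' means an immediately adjacent child — so that the depth inequalities and the identification $\noda = \nodj$ are unambiguous.
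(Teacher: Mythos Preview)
Your proof is correct and follows the same approach as the paper: both use the depth normalization $d_1 \ge d_2$ to force that any ``linked'' configuration between $\nodi$ and $\nodj$ must have $\nodj$ above (or adjacent-above) $\nodi$, hence $\nodj \in A_\forest(\nodi)$. Your version is simply more explicit, spelling out the double inclusion and the case analysis (in particular the sibling case via $\noda = \nodj$) that the paper compresses into the single sentence ``since $d_1 \ge d_2$, $\nodi$ is deeper than $\nodj$''.
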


\begin{proof}
Let $\nodi$ be fixed as in the claim.
Let $\nodj \in J_\forest(\nodi)$.
Since $d_1 \ge d_2$, then $\nodi$ is "deeper" than $\nodj$ and
either $(\nodi, \nodj) \in I(\forest)$;
or $\nodj$ is either an ancestor of $\nodi$, or the right/left
sibling of an ancestor of $\nodi$.
This case disjunction gives a disjoint union.
\end{proof}

Using Claim \ref{claim:disju} and Equation \ref{eq:tautau}, we finally get:
\begin{equation}
\begin{aligned}
f_{\{\tau_1, \tau_2\}}(\forest) & = \sum_{\nodi \in \itera{\forest}} \big| J_\forest(\basis{\nodi}{\forest})\big|- \big|A_\forest(\nodi)\big|
\label{eq:indep}
\end{aligned}
\end{equation}

Let us explain why Equation \ref{eq:indep} is satisfactory.
First, the set $J_\forest(\basis{\nodi}{\forest})$ does not depend on $\nodi$,
but only on $\basis{\nodi}{\forest}$, which is a bounded information that can
be given to an external blind function.  Second, the set $A_{\forest}(\nodi)$
still depends on $\nodi$, but:

\begin{claim}
$|A_{\forest}(\nodi)| \le 3$
\end{claim}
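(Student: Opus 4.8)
The plan is to exploit that the depth of a node is stored as the first component of its type, so that any two nodes with the same type lie at the same depth in $\forest$. Fix $\nodi \in \itera{\forest}$ with $\type{\nodi}{\forest} = \tau_1$, and write $\tau_2 = (d_2,m_2,n_2,e_2,m_2',n_2',u_2)$. By definition of $A_\forest(\nodi)$, every $\nodj \in A_\forest(\nodi)$ satisfies $\type{\nodj}{\forest} = \tau_2$, hence $\nodj$ is an iterable node of depth exactly $d_2$ in $\forest$.

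First I would observe that on the unique branch running from the root of $\forest$ down to $\nodi$ there is at most one node of depth $d_2$; denote it $\noda$ when it exists (it does exactly when $d_2$ is at most the depth of $\nodi$). If $\nodj \in A_\forest(\nodi)$ is an ancestor of $\nodi$, then $\nodj$ lies on that branch and has depth $d_2$, so necessarily $\nodj = \noda$. If instead $\nodj$ is the left or right sibling of some ancestor $\noda'$ of $\nodi$, then $\nodj$ and $\noda'$ share a parent and therefore have the same depth $d_2$; by the uniqueness just noted, $\noda' = \noda$, so $\nodj$ is the left sibling or the right sibling of $\noda$.

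Combining the two cases, $A_\forest(\nodi)$ is contained in the set made of $\noda$, the left sibling of $\noda$, and the right sibling of $\noda$ (whichever of these actually exist), whence $|A_\forest(\nodi)| \le 3$. There is no genuine obstacle here: the only point that deserves a moment's care is that membership in $A_\forest(\nodi)$ forces $\nodj$ to be iterable and to carry the type $\tau_2$, which is immediate from the definition and is precisely what pins down the depth of $\nodj$. This bound is exactly what is needed so that the correcting term $|A_\forest(\nodi)|$ appearing in Equation \ref{eq:indep} is a bounded quantity, which a $1$-blind bimachine can recompute locally along its run.
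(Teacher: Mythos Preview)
Your argument is correct and is essentially the same as the paper's: both observe that every $\nodj \in A_\forest(\nodi)$ has type $\tau_2$ and hence depth $d_2$, so $A_\forest(\nodi)$ is contained in the set consisting of the unique ancestor of $\nodi$ at depth $d_2$ together with its left and right siblings. Your write-up is simply more explicit about why a sibling of an ancestor of $\nodi$ must be a sibling of \emph{that} particular ancestor $\noda$.
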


\begin{proof} The nodes of  $A_{\forest}(\nodi)$ have type $\tau_2$.
Thus  $A_{\forest}(\nodi)$  is included in the set containing the ancestor of $\nodi$
at height $d_2$, and its potential right/left siblings.
\end{proof}

Hence, it is also a bounded information which can be
given to a external function.
We can now describe  in Algorithm \ref{algo:indep} how to compute $f_{\tau_1, \tau_2}$
with a $1$-blind bimachine.

\begin{algorithm}[h!]
\SetKw{KwVar}{Variables:}
\SetKwProg{Fn}{Function}{}{}
\SetKw{In}{in}
\SetKw{Out}{Output}

 \Fn{$\operatorname{Main}(\forest)$}{

	\For{$\nodi \in \itera{\forest}$}{

		\If{$\type{\nodi}{\forest} = \tau_1$}{
				$c \leftarrow |A_\forest(\nodi)|$;
				\tcc{Number of nodes of type $\tau_2$ "above" $\nodi$.}
				
				$\nodb \leftarrow \basis{\nodi}{\forest}$;
				\tcc{Base of $\nodi$.}
				
				\Out{$\exte^{-c}_{\nodb}(\forest)$}
			
			}
		}
			
		}

 \Fn{$\operatorname{\exte}^{-c}_{\nodb}(\forest)$}{
 
 		$\operatorname{\textsf{counter}} \leftarrow 0$;
 	 
		\For{$\nodj \in \itera{\forest}$}{
			\If{\normalfont $\type{\nodj}{\forest} = \tau_2$ and $\basis{\nodj}{\forest} = \nodb$}{
			
			\eIf{\normalfont $\operatorname{\textsf{counter}} < c  $}{
				 $\operatorname{\textsf{counter}} \leftarrow \operatorname{\textsf{counter}}  + 1$;
				 \tcc{No output on the $c$ first nodes.}	
			}{
				\Out{$1$}
			}	
			}		
		}}
	
 \caption{\label{algo:indep} Computing $f_{\{\tau_1, \tau_2\}}$ with a $1$-blind bimachine}
\end{algorithm}

\subparagraph*{Correctness of the algorithm.}
It follows by definition that $\exte_{\nodb}^{-c}(\forest) = \max(0,\big| J_\nodb(\forest)\big|{-} c)$.
Using Equation \ref{eq:indep}, we get $\operatorname{Main}(\forest) = f_{\{\tau_1,\tau_2\}}(\forest)$.

\subparagraph*{Implementation by a $1$-blind bimachine.}
We justify how Algorithm \ref{algo:indep} can be implemented with a $1$-blind bimachine.
First, it uses finitely many external functions $\exte^{-c}_\nodb$ for $0 \le c \le 3$,
and $\nodb$ a base of the forest (there is a bounded number of bases).

It seems however that $\nodb$ depends on the input $\forest$, but
we can naturally order the bases depending on "the first position below an iterable node having this base":

\begin{claim} The bases of $\forest$ can be ordered $\nodb_1, \dots, \nodb_{\ell(\forest)}$,
in the strictly increasing order of the values
$\min\left\{ \mine{\nodi}{\forest}:\nodi \in \itera{\forest} \text{ and } \basis{\nodi}{\forest} = \nodb_i\right\}$.
\end{claim}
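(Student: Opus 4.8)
The plan is to show that the map $\nodb \mapsto p(\nodb) \defined \min\{\mine{\nodi}{\forest} : \nodi \in \itera{\forest} \text{ and } \basis{\nodi}{\forest} = \nodb\}$ is injective on the set of bases of $\forest$; the required enumeration $\nodb_1, \dots, \nodb_{\ell(\forest)}$ is then obtained by listing the bases in increasing order of $p(\nodb)$. First one checks that $p$ is well defined: a basis is, by definition, an idempotent node of $\dep{\forest}$, hence has $n \ge 3$ children and in particular a leftmost middle child $\nodb^{(2)}$; this node lies in $\itera{\forest}$ and, by Lemma~\ref{lem:dep-root}, satisfies $\basis{\nodb^{(2)}}{\forest} = \nodb$, so the minimum defining $p(\nodb)$ is taken over a nonempty set.

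Next I would pin down $p(\nodb)$ explicitly. By Lemma~\ref{lem:dep-root} (and the uniqueness it provides), the iterable nodes $\nodi$ with $\basis{\nodi}{\forest} = \nodb$ are exactly those that are descendants of a middle child $\nodb^{(j)}$ of $\nodb$, for some $2 \le j \le n-1$. Since the children of $\nodb$ are ordered left to right we have $\mine{\nodb^{(2)}}{\forest} \le \mine{\nodb^{(j)}}{\forest}$, and any descendant $\nodi$ of $\nodb^{(j)}$ satisfies $\mine{\nodi}{\forest} \ge \mine{\nodb^{(j)}}{\forest}$; hence the minimum is attained at $\nodi = \nodb^{(2)}$, so $p(\nodb) = \mine{\nodb^{(2)}}{\forest}$, which moreover equals $\maxe{\nodb^{(1)}}{\forest}+1$ for the first child $\nodb^{(1)}$.

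Finally, for injectivity, suppose $\nodb \ne \nodb'$ are bases with $\mine{\nodb^{(2)}}{\forest} = \mine{\nodb'^{(2)}}{\forest} = p$. Then both subtrees $\nodb^{(2)}$ and $\nodb'^{(2)}$ contain the leaf at position $p$, so they lie on the root-to-leaf branch through that leaf and are comparable for the ancestor relation; say $\nodb^{(2)}$ is an ancestor of $\nodb'^{(2)}$. If they are equal, their common parent forces $\nodb = \nodb'$, a contradiction. Otherwise $\nodb'^{(2)}$ is a strict descendant of $\nodb^{(2)}$, so its parent $\nodb'$ is a descendant of $\nodb^{(2)}$; and $\nodb' \ne \nodb^{(2)}$ since $\nodb^{(2)}$, being a middle child, belongs to the part $\dep{\nodb^{(2)}}$ of the partition of Lemma~\ref{lem:partition}, hence does not lie in $\dep{\forest}$ and is not a basis. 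Thus $\nodb'$ is a strict descendant of $\nodb^{(2)}$, so $\mine{\nodb'}{\forest} \ge \mine{\nodb^{(2)}}{\forest} = p$; but as $\nodb'^{(2)}$ is the second child of $\nodb'$ we get $\mine{\nodb'^{(2)}}{\forest} = \maxe{\nodb'^{(1)}}{\forest}+1 > \mine{\nodb'}{\forest} \ge p$, contradicting $\mine{\nodb'^{(2)}}{\forest} = p$. Hence $p$ is injective, which finishes the proof. The only point demanding care is the comparability step, and it is immediate because subtrees of a factorization are intervals of positions, so two of them sharing an endpoint are nested. (An alternative route, closer to the proof of Lemma~\ref{lem:distinguishable}, identifies each basis with the block of $\fr{\forest}{\forest}$ containing the frontiers of its iterable descendants, and observes that distinct bases fall into distinct blocks.)
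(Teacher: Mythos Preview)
Your argument is correct. It is, however, considerably more explicit than the paper's own proof. The paper simply observes that the sets $S_{\nodb} = \{\nodi \in \itera{\forest} : \basis{\nodi}{\forest} = \nodb\}$ are nonempty (an idempotent node has a middle child) and pairwise disjoint (by uniqueness of the base, Lemma~\ref{lem:dep-root}); from this it concludes that the minima are strictly ordered. Strictly speaking, disjointness of the $S_{\nodb}$ alone does not force the position sets $\{\mine{\nodi}{\forest}:\nodi\in S_{\nodb}\}$ to have distinct minima; one needs the extra remark that two iterable nodes with the same $\mine{}{}$ value lie on a common branch and hence share their base. Your proof avoids this implicit step by computing $p(\nodb)=\mine{\nodb^{(2)}}{\forest}$ explicitly and proving injectivity of $\nodb\mapsto p(\nodb)$ by a direct comparability argument. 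This is a genuinely more careful route; the paper's version is shorter but leans on an unstated (though easy) observation. Your parenthetical alternative, identifying bases with the blocks cut out by $\fr{\forest}{\forest}$, is precisely the viewpoint used in Claim~\ref{claim:fd} of the proof of Lemma~\ref{lem:distinguishable}.

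One minor simplification: once you establish that $\nodb'$ is a descendant of the middle child $\nodb^{(2)}$, you can conclude immediately that $\nodb'\notin\dep{\forest}$ (every descendant of a middle child lies outside $\dep{\forest}$ by the partition of Lemma~\ref{lem:partition}), contradicting that $\nodb'$ is a basis; the detour through the inequality $\mine{\nodb'^{(2)}}{\forest}>\mine{\nodb'}{\forest}\ge p$ is not needed.
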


\begin{proof} The set $S_{\nodb} \defined \{\nodi \in \itera{\forest} \text{ and } \basis{\nodi}{\forest} = \nodb\}$
is not empty since idempotent nodes have at least one middle children.
We only need to show that $S_{\nodb}$ are $S_{\nodb'}$ are disjoint when
$\nodb \neq \nodb'$, what is clear since every iterable node has
a unique base.
\end{proof}
Hence we do not really index the external function by
the bases, but by their number in this ordering
(the range of numbers is bounded independently from $\forest$).

The second issue is that Algorithm \ref{algo:indep} iterates on
 $\itera{\forest}$, but not on the positions $\{1, \dots, |F|\}$
 (that's what a bimachine does). Recall the function $\pi$ from
 Definition \ref{def:pi}. 
 The idea is to code
 the iterable node $\nodi$ by the position $\xi(\nodi) \defined \pi(\mine{\nodi}{\forest}) \in \{1, \dots, |\forest|\}$.
 Due to the partition, each position of $\{1, \dots, |\forest|\}$ codes
 at most one iterable node.
 
 We then need to check if a position codes for an iterable node $\nodi$
 of type $\tau_1$, such that $|A_{\forest}| = c$ and $\nodb_i = \basis{\nodi}{\forest}$.
 This is the purpose of Claim \ref{claim:check-tau},
 which is used to build the monoid morphism of the main bimachine
 (similarly to what we did for Lemma \ref{lem:distinguishable}).
 Its proof is an easy construction of regular languages.
 
\begin{claim} \label{claim:check-tau}
One can build a regular language $P^{-c}_{i} \subseteq (\apar{A} \uplus \marq{\apar{A}})^*$
such that for all $w \in A^+$, $\forest \in \ltm$ over $w$ and $1 \le j \le |\forest|$:
\begin{align*} 
&\omar{\forest}{j} \in P^{-c}_{i} \text{ if and only if } \xi(\nodi) = j
\text{ for some } \nodi \in \itera{\forest} \text{ such that }\\
& \type{\nodi}{\forest} = \tau_1, |A_{\forest}(\nodi)|= c  \text{ and }  \basis{\nodi}{\forest} = \nodb_i.
\end{align*}
\end{claim}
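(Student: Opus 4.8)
\textbf{Proof plan for Claim \ref{claim:check-tau}.}

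The goal is to build, for each fixed target type $\tau_1 = (d_1,m_1,n_1,e_1,m_1',n_1',u_1) \in T$, each $c \in \{0,1,2,3\}$ and each index $i$ (a position in the ordered list of bases), a regular language $P^{-c}_i \subseteq (\apar{A}\uplus\marq{\apar{A}})^*$ recognising exactly those marked factorizations $\omar{\forest}{j}$ for which the marked position $j$ codes (via $\xi$) an iterable node $\nodi$ with $\type{\nodi}{\forest}=\tau_1$, $|A_\forest(\nodi)|=c$ and $\basis{\nodi}{\forest}=\nodb_i$. The plan is to decompose this into three independent regular checks and take their intersection (regular languages are closed under intersection, and we range over finitely many $\tau_1$, $c$, $i$, so the total number of languages is bounded). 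The whole construction is by induction on the bound $\mh$ on the height, exactly in the style of the regular-language constructions already used in the proofs of Lemma \ref{lem:distinguishable} and Lemma \ref{lem:linked}.

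First I would check that $j$ really is of the form $\xi(\nodi)=\pi(\mine{\nodi}{\forest})$ for some iterable node $\nodi$, and simultaneously read off $\type{\nodi}{\forest}$. Walking down the branch from the root of $\forest$ to the marked leaf, one reads the sequence of nodes above; the depth $d_1$ is just the number of ancestors, so it is a finite information read along the branch. The node $\nodi$ of type $\tau_1$ is the ancestor at depth $d_1$: it is iterable iff it is a middle child of an idempotent node (checked locally at its parent). That it is a \emph{leftmost} descendant, i.e. $j=\pi(\mine{\nodi}{\forest})$, is the condition that on the path from $\nodi$ down to the marked leaf we always descend into the first child; this is again a local condition along the branch. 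For the remaining components of the type: $e_1=\valu{\basis{\nodi}{\forest}}$ and the left/right of $\basis{\nodi}{\forest}$ ($m_1,n_1$) are the images under $\mu$ of, respectively, the portion factored by $\basis{\nodi}{\forest}$ and the portions to its left and right --- and here I use Lemma \ref{lem:partition}/the value lemmas: the image under $\mu$ of a subtree equals $\mu$ of its frontier, and a DFA reading the parenthesised word can maintain, for each currently open node on the stack, the $\mu$-value of what has been read inside it so far; because the height is bounded by $\mh$, the stack has bounded depth, so this is genuinely a finite-state (regular) computation. Similarly $m_1',n_1'$ (the context of the idempotent parent of $\nodi$ inside $\mimi{\nodi}{\forest}$) and $u_1=w[\fr{\nodi}{\forest}]$ (of bounded length $\le 2^{\mh}$ by Lemma \ref{lem:sizef}) are read off by the same bounded-stack automaton. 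So ``$\xi(\nodi)=j$ and $\type{\nodi}{\forest}=\tau_1$'' is regular.

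Second, I would check $|A_\forest(\nodi)|=c$. Recall $A_\forest(\nodi)$ consists of nodes of type $\tau_2$ that share $\nodi$'s basis and are either an ancestor of $\nodi$ or the left/right sibling of one; by the last Claim before the algorithm this set has at most three elements, and it is contained in the set consisting of the ancestor of $\nodi$ at depth $d_2$ plus its two neighbours. So it suffices to inspect a bounded number of specified nodes on (or adjacent to) the branch leading to the marked leaf, compute their types by the same bounded-stack automaton as above, and count how many equal $\tau_2$ --- plainly a regular test. Third, I would check $\basis{\nodi}{\forest}=\nodb_i$, i.e. that the basis of the marked iterable node is the $i$-th basis in the ordering of the preceding Claim. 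The basis is the unique idempotent node in $\dep{\forest}$ of which $\nodi$ is the descendant of a middle child (Lemma \ref{lem:dep-root}), identifiable along the branch as the lowest ancestor of $\nodi$ that lies on the ``dependency spine'' (never reached through a middle child). Its rank $i$ is then determined by the value $\min\{\mine{\nodk}{\forest}:\nodk\in\itera{\forest},\ \basis{\nodk}{\forest}=\text{this basis}\}$; but one does not need to compute this rank by scanning the whole word --- it is enough to read the bounded-length frontier of the root (size $\le 2^{\mh}$), which partitions the word into a bounded number of blocks, and record in which block the basis sits. This gives a regular test, and intersecting the three languages yields $P^{-c}_i$.

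\textbf{Main obstacle.} The delicate point is the third check: one must verify that ``being the $i$-th basis'' is genuinely a \emph{finite-state} property of $\omar{\forest}{j}$ and not something requiring unbounded counting over the whole input. The resolution, as sketched above, is that over factorizations of height $\le\mh$ the frontier of the root has bounded size, so there are boundedly many bases and they are linearly ordered by their ``first position below''; reading this bounded frontier (again via the bounded-stack automaton) suffices to locate the marked node's basis among them. Everything else reduces to local branch inspection plus the already-established facts that $\mu$-values of subtrees are recoverable from bounded frontiers (the value lemma of Section on factorizations) and that dependencies and frontiers have size $\le 2^{\mh}$ (Lemma \ref{lem:sizef}).
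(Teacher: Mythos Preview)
Your proposal is correct and aligns with the paper's approach: the paper itself does not give a detailed proof of this claim, merely stating that ``its proof is a standard construction of regular languages'' in the spirit of the earlier regular-language claims (e.g.\ Claim~\ref{claim:regF}). Your decomposition into three regular checks (coding an iterable node of the right type, counting $|A_\forest(\nodi)|$, and identifying the basis index via the bounded-size root frontier) is exactly the kind of elaboration the paper is gesturing at, and the bounded-height/bounded-stack argument you invoke is the standard mechanism for these constructions.
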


The construction of the bimachines for the external functions $\exte^{-c}_{\nodb}$ is similar.
However, we need to discuss the implementation of $\operatorname{\textsf{counter}}$.
The idea is to use the ordering given by $\xi(\nodj)$ on the nodes $\nodj \in \itera{\forest}$,
and to produce $1$ each time $\type{\nodj}{\forest} = \tau_2$
and $\basis{\nodj}{\forest} = \nodb$, except for the $c$ first ones
(which is a regular property since $c \le 3$).

\subsection{Proof of Lemma \ref{lem:indep-prof}}

\label{proof:indep-prof}

Let $\tau_1, \tau_2 \in T$ be two types.

\begin{claim} \label{claim:same} If there exists a factorization $\forest$
of height at most $\mh$ and $(\nodi, \nodj) \in I(\forest)$ such that
$\{\tau_1, \tau_2\} = \{\type{\nodi}{\forest}, \type{\nodj}{\forest}\}$,
then $\exists 1 \le d_1, d_2 \le \mh$, $m,n,e,m_1,n_1,m_2,n_2 \in M$,
$u_1, u_2 \in A^+$ with $|u_1|, |u_2| \le 2^{\mh}$
such that:
\begin{equation*}
\left\{
    \begin{array}{l}
        \tau_1 = (d_1, m,n,e,m_1,n_1,u_1);\\
        \tau_2 = (d_2, m,n,e,m_2,n_2,u_2).\\
    \end{array}
\right.
\end{equation*}
\end{claim}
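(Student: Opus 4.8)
The plan is to unfold the two definitions at play — that of $\type{\cdot}{\forest}$ and that of $I(\forest)$ — and to extract from the membership $(\nodi,\nodj)\in I(\forest)$ the single structural fact that makes the claim work: that $\nodi$ and $\nodj$ are iterable nodes sharing the same basis. Since the statement only concerns the unordered pair $\{\tau_1,\tau_2\}$, I may fix a witnessing factorization $\forest$ of height at most $\mh$ and a pair $(\nodi,\nodj)\in I(\forest)$ with $\tau_1=\type{\nodi}{\forest}$ and $\tau_2=\type{\nodj}{\forest}$, and it suffices to prove the conclusion for this labelling.

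The first step is to observe that neither $\nodi$ nor $\nodj$ is the root: since $I(\forest)=U(\forest)\smallsetminus L(\forest)$ and the root is an ancestor of every node, any pair containing $\forest$ lies in $L(\forest)$. Hence $\nodi,\nodj\in\itera{\forest}$, and since $(\nodi,\nodj)\in U(\forest)=\parti{\forest}\times\parti{\forest}\smallsetminus D(\forest)$ is a pair of iterable nodes, it is not in $D(\forest)$, so $\basis{\nodi}{\forest}=\basis{\nodj}{\forest}$; write $\nodb$ for this common basis. This is exactly the informal description of $I(\forest)$ given right after its definition, so this step can be kept very short.

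The second step is to read off the definition of $\type{\cdot}{\forest}$: at any node its second, third and fourth coordinates are respectively $\lefe{\basis{\cdot}{\forest}}{\forest}$, $\rige{\basis{\cdot}{\forest}}{\forest}$ and $\valu{\basis{\cdot}{\forest}}$, hence depend on the node only through its basis. Since $\basis{\nodi}{\forest}=\basis{\nodj}{\forest}=\nodb$, these three coordinates of $\tau_1$ and $\tau_2$ coincide; setting $m\defined\lefe{\nodb}{\forest}$, $n\defined\rige{\nodb}{\forest}$ and $e\defined\valu{\nodb}$ (the latter being idempotent) yields the shape $\tau_1=(d_1,m,n,e,m_1,n_1,u_1)$, $\tau_2=(d_2,m,n,e,m_2,n_2,u_2)$ asserted by the claim. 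The ranges are then immediate: $1\le d_1,d_2\le\mh$ because the depth of any node is bounded by the height of $\forest$, which is at most $\mh$; and for $\nodi$ (and symmetrically $\nodj$) the word $u_1=w[\fr{\nodi}{\forest}]$ lies in $A^+$ since the frontier is non-empty — it contains the leftmost leaf of $\nodi$, cf.\ Lemma~\ref{lem:preserv} — and has length $|\fr{\nodi}{\forest}|\le 2^{\mh}$ by Lemma~\ref{lem:sizef}.

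There is no genuine obstacle here: the entire content is the first-step observation that a pair in $I(\forest)$ consists of two iterable nodes with the same basis, after which the agreement on the $(m,n,e)$ coordinates is forced, these being precisely the coordinates of a type computed from the basis rather than from the node itself. The only points worth writing out carefully are (i) ruling out the root so as to land in $\itera{\forest}$ and (ii) invoking Lemma~\ref{lem:sizef} for the length bound on $u_1,u_2$; everything else is reading off a definition.
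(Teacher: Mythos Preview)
Your proof is correct and follows essentially the same approach as the paper: the key observation is that $(\nodi,\nodj)\in I(\forest)$ forces $\basis{\nodi}{\forest}=\basis{\nodj}{\forest}$, whence the components $m,n,e$ of the type coincide. You spell out in addition why the root is excluded and why the bounds on $d_1,d_2,|u_1|,|u_2|$ hold, which the paper leaves implicit.
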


\begin{proof} If $(\nodi, \nodj) \in I(\forest)$ then $\basis{\nodi}{\forest} = \basis{\nodj}{\forest}$.
Hence they these nodes have the same left $m$, right $n$ and value $e$.
\end{proof}

We thus assume that $\tau_1$ and $\tau_2$ are
as in Claim \ref{claim:same}, since otherwise Lemma \ref{lem:indep-prof} is true by emptiness.
The general idea is to reduce the proof to the
definition of a symmetrical $1$-marble bimachine. For this,
we relate pairs of nodes with bitypes.

Let $w \in A^+$ and $ \forest \in \facto{\mu}{w}$ be a factorization of height at most $\mh$.
If $(\nodi, \nodj) \in I(\forest)$ with $\nodi < \nodj$,
we let:
\begin{equation*}
\begin{aligned}
\bitype{\nodi, \nodj} \defined &\mu(w[1{:}{\mine{\nodi}{\forest}{-1}}]) \cro{w[\fr{\nodi}{\forest}]} \mu(w[\maxe{\nodi}{\forest}{+}1{:}{\mine{\nodj}{\forest}{-1}}]) \\
&\cro{w[\fr{\nodj}{\forest}]} \mu(w[\maxe{\nodj}{\forest}{+}1{:}|w|]).
\end{aligned}
\end{equation*}

The following lemma, shown in Subsubsection \ref{proof:lembi}, justifies this definition.

\begin{lemma} \label{lem:bitype-prod}
If $(\nodi, \nodj) \in I(\forest)$ and $\nodi < \nodj$, then $\pro{\nodi,\nodj} = \pro{\bitype{\nodi, \nodj}}$.
\end{lemma}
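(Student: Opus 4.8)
The plan is to expand both sides as double sums over frontier positions and match them term by term. Write $\fr{\nodi}{\forest} = \{i_1 < \cdots < i_p\}$ and $\fr{\nodj}{\forest} = \{j_1 < \cdots < j_q\}$, and set $v \defined w[\fr{\nodi}{\forest}] = w[i_1]\cdots w[i_p]$ and $v' \defined w[\fr{\nodj}{\forest}] = w[j_1]\cdots w[j_q]$. By Lemma \ref{lem:preserv} we have $\mine{\nodi}{\forest} = i_1$, $\maxe{\nodi}{\forest} = i_p$, $\mine{\nodj}{\forest} = j_1$ and $\maxe{\nodj}{\forest} = j_q$. Since $\nodi < \nodj$ means $i_p < j_1$, every pair $(i_k, j_\ell)$ automatically satisfies $i_k \le j_\ell$, so by definition $\pro{\nodi,\nodj} = \sum_{k=1}^p \sum_{\ell=1}^q \pro{i_k, j_\ell}$; and likewise $\pro{\bitype{\nodi,\nodj}} = \sum_{k=1}^p \sum_{\ell=1}^q \Phi(k,\ell)$, writing $\Phi \defined \bitype{\nodi,\nodj}$. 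It is therefore enough to prove $\pro{i_k, j_\ell} = \Phi(k,\ell)$ for all $k, \ell$.

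Unwinding the definition of $\pro{i,j}$ (computed in $w$) and of $\Phi(i,j)$ (computed in the bitype), both quantities are obtained by applying the \emph{same} output functions of $\trans$ and of the $\trans_\exte$; since moreover $w[i_k] = v[k]$ and $w[j_\ell] = v'[\ell]$, the equality $\pro{i_k,j_\ell} = \Phi(k,\ell)$ reduces to the monoid identities, where $m \defined \mu(w[1{:}i_1{-}1])$, $m' \defined \mu(w[i_p{+}1{:}j_1{-}1])$ and $m'' \defined \mu(w[j_q{+}1{:}|w|])$ are the context elements of $\Phi$:
\begin{equation*}
\mu(w[1{:}i_k{-}1]) = m\,\mu(v[1{:}k{-}1]), \qquad \mu(w[i_k{+}1{:}j_\ell]) = \mu(v[k{+}1{:}p])\,m'\,\mu(v'[1{:}\ell]),
\end{equation*}
\begin{equation*}
\mu(w[1{:}j_\ell{-}1]) = m\,\mu(v)\,m'\,\mu(v'[1{:}\ell{-}1]), \qquad \mu(w[j_\ell{+}1{:}|w|]) = \mu(v'[\ell{+}1{:}q])\,m''.
\end{equation*}
All of these will follow from a single \emph{frontier abstraction} sublemma: for every node $\mathcal{K} \in \nodes{\forest}$ with $\fr{\mathcal{K}}{\forest} = \{k_1 < \cdots < k_s\}$ and every $1 \le a \le s$,
\begin{equation*}
\mu(w[\mine{\mathcal{K}}{\forest}{:}k_a]) = \mu(w[k_1]\cdots w[k_a]) \qquad \text{and} \qquad \mu(w[k_a{:}\maxe{\mathcal{K}}{\forest}]) = \mu(w[k_a]\cdots w[k_s]),
\end{equation*}
together with the obvious variants that remove $w[k_a]$ from one end (this generalizes the lemma defining $\valu{\mathcal{K}}$, which is the case $a=1$ of the second identity). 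Granting it, each of the four identities above is obtained by splitting $w$ at the positions $i_1, i_p, j_1, j_q$ and at the boundaries of the relevant frontier interval, and substituting.

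The sublemma is where the structure of factorization forests is genuinely used, and it is the step I expect to require the most care. The key point is to control the $\mu$-image of a ``pruned'' block of $w$ lying strictly between two consecutive frontier positions $k_m < k_{m+1}$ of $\mathcal{K}$. From the definition of $\dep{\cdot}$, if such a block is non-empty then the lowest common ancestor $\nodp$ of $k_m$ and $k_{m+1}$ in $\forest$ must be an \emph{idempotent} node, with $k_m$ the rightmost leaf of its first child and $k_{m+1}$ the leftmost leaf of its last child; the block $w[k_m{+}1{:}k_{m+1}{-}1]$ is then the concatenation of the middle children of $\nodp$, so its $\mu$-image is a nonzero power of the idempotent $e \defined \mu(\nodp)$, hence equals $e$. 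Moreover $\mu(w[\mine{\mathcal{K}}{\forest}{:}k_m])$ ends with the word of the first child of $\nodp$, whose image is also $e$, so $\mu(w[\mine{\mathcal{K}}{\forest}{:}k_m])\,e = \mu(w[\mine{\mathcal{K}}{\forest}{:}k_m])$; symmetrically $e\,\mu(w[k_{m+1}{:}\maxe{\mathcal{K}}{\forest}]) = \mu(w[k_{m+1}{:}\maxe{\mathcal{K}}{\forest}])$. The sublemma then follows by induction on $a$ (resp. on $s-a$), the base cases $\mine{\mathcal{K}}{\forest} = k_1$ and $\maxe{\mathcal{K}}{\forest} = k_s$ being Lemma \ref{lem:preserv}, and the inductive step using the above absorption to swallow the pruned block. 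Combining the term-wise equalities $\pro{i_k,j_\ell} = \Phi(k,\ell)$ will then give $\pro{\nodi,\nodj} = \pro{\bitype{\nodi,\nodj}}$, as desired.
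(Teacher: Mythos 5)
Your proof is correct and follows the same overall strategy as the paper's: expand both sides as double sums over frontier positions, reduce to the term-wise equality $\pro{i_k,j_\ell} = \Phi(k,\ell)$, and then reduce that to a ``frontier abstraction'' statement that the $\mu$-image of a contiguous block of $w$ delimited by frontier positions of a node equals the $\mu$-image of the corresponding block of the frontier word.

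The one place where you diverge from the paper is in how you prove that abstraction statement. The paper's Claim~\ref{claim:skel-monoid} is proved by structural induction on the factorization tree: in the idempotent case $\forest = (\forest_1)\cdots(\forest_n)$, it splits $x[1{:}i_k{-}1]$ at the boundary of the last child, observes that $\mu(x_1\cdots x_{n-1}) = e$ equals the image of the frontier word of $\forest_1$ by Lemma~\ref{lem:preserv}, and recurses into $\forest_n$. You instead do induction on the frontier index $a$, identifying the lowest common ancestor of two consecutive frontier positions as the idempotent node responsible for any pruned block, and swallowing that block's image $e$ into the running prefix using idempotent absorption. Both rest on the same two facts (Lemma~\ref{lem:preserv} and idempotency of the middle-children products), and both give the same family of variants; your version is perhaps slightly more explicit about which node creates each gap, while the paper's is shorter because it only states the prefix form for the root and declares the other three identities ``similar.'' The only thing I would tighten in your write-up is the passing reference to ``obvious variants that remove $w[k_a]$ from one end'': since $M$ is not cancellative, the variant $\mu(w[k_a{+}1{:}\maxe{\mathcal{K}}{\forest}]) = \mu(w[k_{a+1}]\cdots w[k_s])$ does not follow formally from the stated identity by dividing out $\mu(w[k_a])$; it needs the same one-line absorption argument ($e\,\mu(w[k_{a+1}{:}\maxe{\mathcal{K}}{\forest}]) = \mu(w[k_{a+1}{:}\maxe{\mathcal{K}}{\forest}])$), which you do already have in hand. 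With that spelled out, the argument is complete.
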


We finally want to relate $\bitype{\nodi, \nodj}$ with the types $\tau_1$ and $\tau_2$.

From now assume without loss of generality
that $ \type{\nodi}{\forest}= \tau_1$ and $\type{\nodj}{\forest}  = \tau_2$.
Let $e_1 \defined \mu(u_1)$ and $e_2 \defined \mu(u_2)$, 
the following claim is immediate by definition of a type.

\begin{claim}
$u_1 = w[\fr{\nodi}{\forest}] $, $u_2 = w[\fr{\nodj}{\forest}]$ and $m_1 e_1 n_1 = m_2 e_2 n_2 = e$.
\end{claim}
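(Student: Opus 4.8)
The plan is to unfold the definition of $\type{\cdot}{\forest}$ and feed it into the value lemma of Subsection~\ref{subs:nota}, namely the statement $\mu(w[\mine{\nodi}{\forest}{:}\maxe{\nodi}{\forest}]) = \mu(w[\fr{\nodi}{\forest}]) = \valu{\nodi}$. Recall from Claim~\ref{claim:same} that $\type{\nodi}{\forest} = \tau_1 = (d_1,m,n,e,m_1,n_1,u_1)$ and $\type{\nodj}{\forest} = \tau_2 = (d_2,m,n,e,m_2,n_2,u_2)$. Since the seventh component of a type is by definition $w[\fr{\cdot}{\forest}]$, the equalities $u_1 = w[\fr{\nodi}{\forest}]$ and $u_2 = w[\fr{\nodj}{\forest}]$ are immediate. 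The remaining content is the pair of equalities $m_1 e_1 n_1 = e$ and $m_2 e_2 n_2 = e$, which are two instances of a single claim about the type of an arbitrary iterable node, so I would spell out the first and invoke it again for the second.

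First I would record, via the value lemma, that $e_1 = \mu(u_1) = \mu(w[\fr{\nodi}{\forest}]) = \valu{\nodi}$, and then split into the two cases of the type definition. If $\mimi{\nodi}{\forest} = \nodi$, then $m_1 = n_1 = \mu(\vide)$ by definition, hence $m_1 e_1 n_1 = e_1 = \valu{\nodi} = \valu{\mimi{\nodi}{\forest}} = e$, the last equality being exactly the definition of the fourth component of a type.

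If $\mimi{\nodi}{\forest} \neq \nodi$, set $\mc{M} \defined \mimi{\nodi}{\forest}$ and let $\nodp$ be the parent of $\nodi$ inside $\mc{M}$; since $\nodi \in \itera{\mc{M}}$ the node $\nodp$ is idempotent, and by definition $m_1 = \lefe{\nodp}{\mc{M}}$, $n_1 = \rige{\nodp}{\mc{M}}$. Viewing $\mc{M}$ as a factorization of $w[\mine{\mc{M}}{\forest}{:}\maxe{\mc{M}}{\forest}]$, this word is the concatenation of the part strictly left of $\nodp$, the part of $\nodp$, and the part strictly right of $\nodp$; applying $\mu$ together with the value lemma for $\nodp$ gives $\valu{\mc{M}} = \lefe{\nodp}{\mc{M}} \cdot \valu{\nodp} \cdot \rige{\nodp}{\mc{M}} = m_1 \cdot \valu{\nodp} \cdot n_1$. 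Because $\nodi$ is a middle child of the idempotent node $\nodp$, all children of $\nodp$ have the same $\mu$-image, equal to $\valu{\nodp}$; in particular $\valu{\nodp} = \valu{\nodi} = e_1$, so $\valu{\mc{M}} = m_1 e_1 n_1$. Finally $\valu{\mc{M}} = \valu{\mimi{\nodi}{\forest}} = e$ by definition of the type, whence $m_1 e_1 n_1 = e$; the same reasoning applied to $\nodj$ gives $m_2 e_2 n_2 = e$.

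I do not expect a genuine obstacle here: the computation is pure bookkeeping. The one point deserving attention is to keep the absolute notions ($\mine{\cdot}{\forest}$, $\lefe{\cdot}{\forest}$, read inside $\forest$) distinct from the relative ones ($\lefe{\nodp}{\mc{M}}$, $\rige{\nodp}{\mc{M}}$, read inside the subtree $\mc{M}$), and to invoke the value lemma at the correct level — once for $\nodi$ seen in $\forest$, once for $\nodp$ seen in $\mc{M}$ — which is just a matching of positions under an index shift.
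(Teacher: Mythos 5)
Your proof is correct and is exactly the unfolding of the type definition together with the value lemma that the paper has in mind when it calls the claim ``immediate by definition of a type''; the case split on whether $\mimi{\nodi}{\forest}=\nodi$, and the use of $\valu{\nodp}=\valu{\nodi}=e_1$ for the idempotent parent $\nodp$, is the right bookkeeping and matches the paper's implicit argument.
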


Similarly, we can show the following (the proof uses ideas from that of Lemma \ref{lem:cas},
it is in fact easier, hence we do not detail it).
\begin{claim} $  \mu(w[1{:}\mine{\nodi}{\forest}{-}1]) = m e m_1 e_1$ and $ \mu(w[\maxe{\nodj}{\forest}{+}1{:}|w|])=e_2n_2 en$.
\end{claim}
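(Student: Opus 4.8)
The plan is to compute each of the two monoid images by reading off the factorization $\forest$ along the path from the root to $\nodi$ (resp. $\nodj$) and matching the pieces with the data stored in the type; this is the bookkeeping step that lets $\bitype{\nodi,\nodj}$ be expressed purely in terms of $\tau_1,\tau_2$. Write $\nodb \defined \basis{\nodi}{\forest}$ and $\mc{M} \defined \mimi{\nodi}{\forest}$, so that $\mc{M}$ is a middle child of the idempotent node $\nodb$ and $\nodi$ is a descendant of $\mc{M}$. Using Lemma \ref{lem:preserv} to locate the relevant positions, I would cut the prefix at $\mine{\nodb}{\forest}$ and at $\mine{\mc{M}}{\forest}$, giving
\[
\mu(w[1{:}\mine{\nodi}{\forest}{-}1]) = \mu(w[1{:}\mine{\nodb}{\forest}{-}1])\cdot\mu(w[\mine{\nodb}{\forest}{:}\mine{\mc{M}}{\forest}{-}1])\cdot\mu(w[\mine{\mc{M}}{\forest}{:}\mine{\nodi}{\forest}{-}1]).
\]
The first factor is $\lefe{\nodb}{\forest} = m$ by definition of $\type{\nodi}{\forest}$. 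For the second factor, since $\mc{M}=\mimi{\nodi}{\forest}$ is a middle child of $\nodb$ it is neither the leftmost nor the rightmost child, so the portion of $\nodb$ strictly left of $\mc{M}$ is a nonempty concatenation of children of $\nodb$, all of value $e = \valu{\nodb}$; hence this factor equals $e^{j} = e$ for some $j\ge 1$. The third factor is exactly $\lefe{\nodi}{\mc{M}}$, the left of $\nodi$ inside the subtree $\mc{M}$.

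It then remains to check $\lefe{\nodi}{\mc{M}} = m_1 e_1$, where $e_1 \defined \mu(u_1)$. If $\nodi = \mc{M}$, then $\lefe{\nodi}{\mc{M}} = \mu(\vide)$, $m_1 = \mu(\vide)$ by the type definition, and $e_1 = \valu{\nodi} = \valu{\mc{M}} = e$ is idempotent, so $m e m_1 e_1 = m e$, which is indeed $\mu(w[1{:}\mine{\nodi}{\forest}{-}1])$ in that case. Otherwise $\nodi \in \itera{\mc{M}}$, so $\nodi$ is a middle child of an idempotent node $\nodp$ inside $\mc{M}$, whose children all have value $\valu{\nodi} = \mu(w[\fr{\nodi}{\forest}]) = \mu(u_1) = e_1$ (by the lemma identifying $\valu{\cdot}$ with $\mu$ of the frontier). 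Cutting $\lefe{\nodi}{\mc{M}}$ at $\mine{\nodp}{\mc{M}}$: the part before $\nodp$ is $\lefe{\nodp}{\mc{M}} = m_1$ (this is exactly the $m'$-component of $\type{\nodi}{\forest}$), and the part of $\nodp$ strictly left of $\nodi$ is a nonempty concatenation of children of value $e_1$, hence equals $e_1$. So $\lefe{\nodi}{\mc{M}} = m_1 e_1$ and $\mu(w[1{:}\mine{\nodi}{\forest}{-}1]) = m e m_1 e_1$. The second identity is obtained by the mirror argument applied to the suffix $w[\maxe{\nodj}{\forest}{+}1{:}|w|]$: cut at $\maxe{\mimi{\nodj}{\forest}}{\forest}$ and $\maxe{\basis{\nodj}{\forest}}{\forest}$, use $\basis{\nodj}{\forest} = \basis{\nodi}{\forest}$ (so its right is $n$ and its value is $e$), contract the two nonempty runs of children of idempotent nodes to $e$ and to $e_2 \defined \mu(u_2)$ respectively, and read $\rige{\nodp'}{\mimi{\nodj}{\forest}} = n_2$ off $\type{\nodj}{\forest}$ for the parent $\nodp'$ of $\nodj$; this yields $e_2 n_2 e n$, again with a degenerate case $\nodj = \mimi{\nodj}{\forest}$ where $n_2 = \mu(\vide)$ and $e_2 = e$.

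I expect no serious obstacle: the statement is essentially an unfolding of the definitions of $\type{\cdot}{\cdot}$, $\lefe{\cdot}{\cdot}$, $\rige{\cdot}{\cdot}$, $\valu{\cdot}$ and of idempotent nodes, together with Lemma \ref{lem:preserv}. The only points requiring care are the bookkeeping between subtree-relative and forest-relative quantities (handled uniformly by concatenating the $\mu$-images of the pieces separated by the nodes $\nodb$, $\mc{M}$, $\nodp$ on the path from the root to $\nodi$), and the two degenerate cases $\nodi = \mimi{\nodi}{\forest}$ and $\nodj = \mimi{\nodj}{\forest}$, which are consistent with the stated formulas precisely because $e_1$ (resp. $e_2$) is then an idempotent equal to $e$.
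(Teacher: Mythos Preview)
Your proposal is correct and follows exactly the approach the paper gestures at (the paper does not give a proof, only remarking that it ``uses ideas from that of Lemma~\ref{lem:cas}, it is in fact easier''). Your decomposition of the prefix at $\mine{\nodb}{\forest}$, $\mine{\mc{M}}{\forest}$, and $\mine{\nodp}{\forest}$, together with the idempotent collapses $e^j=e$ and $e_1^j=e_1$, is precisely the bookkeeping one would extract from the case analysis in the proof of Lemma~\ref{lem:cas}; your treatment of the degenerate case $\nodi=\mc{M}$ (where $m_1=\mu(\vide)$ and $e_1=e$, so the extra $e_1$ is absorbed by $e$) is also handled correctly.
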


We finally create some $p \in M$, which
fits the conditions of Definition \ref{def:symmetrical}.
This result relies on the definition of $I(\forest)$, that is
that the nodes are not "nearly on the same branch".

\begin{lemma} \label{lem:cas} There exists $p \in M$ such that:
\begin{equation*}
\left\{
    \begin{array}{l}
        m_1 e_1 p e_2 n_2 = e;\\
        em_1e_1p e_2= em_2e_2;\\
        e_1p e_2 n_2 e= e_1n_1e;\\
         \mu(w[\maxe{\nodi}{\forest}{+}1{:}{\mine{\nodj}{\forest}{-1}}])  = e_1 p e_2;\\
    \end{array}
\right.
\end{equation*}
\end{lemma}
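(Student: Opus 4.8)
The plan is to take for $p$ the image under $\mu$ of the factor of $w$ lying strictly between $\nodi$ and $\nodj$, that is $p \defined \mu(w[\maxe{\nodi}{\forest}{+}1{:}{\mine{\nodj}{\forest}{-1}}])$; this makes sense since we are in the situation $(\nodi,\nodj)\in I(\forest)$ with $\nodi < \nodj$. With this choice the fourth equation reduces to $e_1 p e_2 = p$, and the other three become identities obtained by ``reading off'' $\mu$ along suitable factors of $w$, once we know that $e_1 p = p = p e_2$.

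So the first and central step is to establish $e_1 p = p$ and $p e_2 = p$; I expect this to be where the hypothesis $(\nodi, \nodj) \in I(\forest)$ is really used. Since $\nodi \in \itera{\forest}$, its parent $\nodp_1$ in $\forest$ is an idempotent node and $\nodi$ is one of its middle children; in particular $\nodi$ admits a right sibling inside $\nodp_1$, whose span has $\mu$-value $\valu{\nodp_1} = \valu{\nodi} = e_1$, an idempotent of $M$. If this right sibling were not contained in $w[\maxe{\nodi}{\forest}{+}1{:}{\mine{\nodj}{\forest}{-1}}]$, then $\mine{\nodj}{\forest}$ would lie inside its span, making it an ancestor of $\nodj$ and $\nodi$ its left sibling, which contradicts $(\nodi, \nodj) \notin L(\forest)$. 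Hence the factor between $\nodi$ and $\nodj$ starts with a nonempty block of $\mu$-value $e_1$, so $e_1 p = p$ by idempotency; in particular this factor is nonempty. The equality $p e_2 = p$ is symmetric, using the left sibling of $\nodj$ in its parent $\nodp_2$. The degenerate cases $\mimi{\nodi}{\forest} = \nodi$ or $\mimi{\nodj}{\forest} = \nodj$ are treated identically, replacing $\nodp_1$ or $\nodp_2$ by the common basis $\nodb$ and using $e_1 = e$ or $e_2 = e$.

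Granting $e_1 p = p = p e_2$, the three remaining equations become computations of $\mu$ over three factors of $w$, all of which evaluate to $e = \valu{\nodb}$ or to a boundary context of $\nodb$. First, $m_1 e_1 p e_2 n_2 = m_1 p n_2$ is $\mu$ of the factor running from the start of $\mimi{\nodi}{\forest}$ to the end of $\mimi{\nodj}{\forest}$: indeed $m_1 e_1$ is the value of the factor of $\mimi{\nodi}{\forest}$ ending at $\nodi$, $p$ the value of the factor between $\nodi$ and $\nodj$, and $e_2 n_2$ the value of the factor of $\mimi{\nodj}{\forest}$ starting at $\nodj$ (this uses the descriptions of the contexts of $\nodi$ and $\nodj$ encoded in their types, together with the already established $m_1 e_1 n_1 = m_2 e_2 n_2 = e$). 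That factor is a concatenation of children of the idempotent node $\nodb$, each of value $e$, hence $m_1 p n_2 = e$. Next, $e m_1 e_1 p e_2 = e m_1 p$ is $\mu$ of the whole factor of $\nodb$ ending just before $\nodj$ (the children of $\nodb$ left of $\mimi{\nodi}{\forest}$ contribute the leading $e$), and evaluating that same factor from the other side of $\nodj$ yields $e m_2 e_2$; thus $e m_1 e_1 p e_2 = e m_2 e_2$. Symmetrically, $e_1 p e_2 n_2 e = p n_2 e$ is $\mu$ of the factor of $\nodb$ starting just after $\nodi$, which equals $e_1 n_1 e$. The only bookkeeping subtlety is the case $\mimi{\nodi}{\forest} = \mimi{\nodj}{\forest}$ (both nodes sit in the same middle child of $\nodb$), in which the factor from $\mimi{\nodi}{\forest}$ to $\mimi{\nodj}{\forest}$ collapses to a single child of $\nodb$ of value $e$; none of the identities above is affected.
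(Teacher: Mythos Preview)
Your argument is correct, and it is actually a cleaner route than the one in the paper. The paper defines $p$ as the $\mu$-image of the factor strictly between the \emph{siblings} $\nodi'$ (the right neighbour of $\nodi$) and $\nodj'$ (the left neighbour of $\nodj$), which forces a preliminary case split $\nodi' = \nodj'$ versus $\nodi' \neq \nodj'$, and then a further ``rather long case disjunction'' on the relative positions of $\nodi, \nodj$ inside $\nodb$ to verify the three monoid identities. You instead take $p$ to be the $\mu$-image of the entire gap between $\nodi$ and $\nodj$, and reduce everything to the absorption laws $e_1 p = p = p e_2$. Those absorption laws are exactly where the hypothesis $(\nodi,\nodj)\notin L(\forest)$ enters, and your justification (the right sibling of $\nodi$ must be an ancestor of $\nodj$ otherwise, making $\nodi$ the left sibling of an ancestor of $\nodj$) is correct and essentially mirrors the paper's inner Claim. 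What you gain is uniformity: the three identities $m_1 e_1 p e_2 n_2 = e$, $e m_1 e_1 p e_2 = e m_2 e_2$, $e_1 p e_2 n_2 e = e_1 n_1 e$ all follow from reading $\mu$ along a single factor of $w$ (a run of consecutive children of $\nodb$, or a boundary segment thereof) in two different ways, with no secondary case analysis. The only point worth spelling out slightly more is that when $\nodi'$ is not contained in the gap, the spans of $\nodi'$ and $\nodj$ overlap, so one is an ancestor of the other; $\nodj$ cannot be a proper ancestor of $\nodi'$ since that would force $\mine{\nodj}{\forest}\le\mine{\nodi}{\forest}$, contradicting $\nodi<\nodj$.
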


\begin{proof} Since $\nodi, \nodj \in \itera{\forest}$, then
$\nodi$ (resp. $\nodj$) has a right sibling $\nodi'$ (resp. a left sibling $\nodj'$) and
$\valu{\nodi'} = \valu{\nodi} = \mu(u_1) = e_1$
(resp. $\valu{\nodj'} = \valu{\nodj} = \mu(u_2) = e_2$).
Two cases occur:
\item
\begin{itemize}

\item either $\nodi' = \nodj'$, that is $\nodi, \nodi', \nodj$
are successive siblings. In that case
$m_1 = m_2$, $n_1 = n_2$ and $e_1 = e_2$.
We set $p \defined \mu(\vide)$, the conditions are easy to check;

\item or $\nodi' \neq \nodj'$. In that case:
\begin{claim}
$\maxe{\nodi'}{\forest} < \mine{\nodj'}{\forest}$
\end{claim}
\begin{proof}
Consider the least common ancestor $\noda$
of $\nodi$ and $\nodj$. It cannot be $\nodi$ nor $\nodj$
itself by definition of $I(\forest)$,
thus one of the following occurs:
\item
\begin{itemize}
\item either $\noda$ is neither the direct parent of $\nodi$ nor that of $\nodj$,
the result is immediate;
\item or $\noda$ it is the parent of $\nodi$. But by definition of $I(\forest)$,
$\nodj$ is not a descendant of $\nodi'$, hence $\nodj$
is the descendant of a sibling on the right of $\nodi'$ and the result follows;
\item or $\noda$ it is the parent of $\nodj$, treated symmetrically.
\end{itemize}
\vspace*{-1\baselineskip}
\end{proof}
\item
We define:
\begin{equation*}
p \defined \mu(w[\maxe{\nodi'}{\forest}{+}1{:}\mine{\nodj'}{\forest}{-}1]).
\end{equation*}

It is always clear that $\mu(w[\maxe{\nodi}{\forest}{+}1{:}{\mine{\nodj}{\forest}{-1}}])  = e_1 p e_2$.
The other equations can be checked by a (rather long) case disjunction.
Let us detail one case.

Assume that $\nodi$ is a middle child of $\basis{\nodi}{\forest} = \basis{\nodj}{\forest}$, 
i.e. $\nodi = \mimi{\nodi}{\forest}$.
Then $m_1 = n_1  = \mu(\epsilon)$ and $e_1 = e$.
Let $\nodi, \nodi', \nodi'', \nodi'''$ be the successive right siblings below $\basis{\nodi}{\forest}$
and assume that $\nodj$ is a descendant of $\nodi'''$.
Then $p = em_2 e_2$ or $p = em_2$ (depending on whether $\nodj'$
is a middle child or not).
Then $m_1 e_1 p e_2 n_2 = e m_2 e_2 n_2 = e$; $em_1e_1p e_2= em_2e_2$;
and  $e_1p e_2 n_2 e=  e m_2 e_2 n_2 e = e= e_1n_1e$.
\end{itemize}
\vspace*{-1\baselineskip}
\end{proof}

Finally we get $\bitype{\nodi, \nodj} = em_1 e_1 \cro{\uu} e_1 p e_2 \cro{\ud} e_2 n_2 e$.
To conclude the proof, we use the $K$ of Definition \ref{def:symmetrical}
whose conditions are met. Switching $\tau_1$ and $\tau_2$ gives the same $K$,
using the second case of Definition \ref{def:symmetrical}.

\subsubsection{Proof of Lemma \ref{lem:bitype-prod}}

\label{proof:lembi}

Let $\nodi < \nodj$ and $(\nodi, \nodj) \in I(F)$.
Let $\{i_1, \dots, i_{\ell}\} \defined \fr{\nodi}{\forest}$
and $\{j_1, \dots, j_{\ell'}\} \defined \fr{\nodj}{\forest}$.
Hence $i_{\ell} = \maxe{\nodi}{\forest} < \mine{\nodi}{\forest} = j_1$.
Furthermore:
\begin{equation*}
\begin{aligned}
\pro{\nodi, \nodj} &= \sum_{\substack{1 \le k \le \ell \\ 1 \le k' \le \ell'}} \pro{i_k,j_{k'}}
\end{aligned}
\end{equation*}

Let $q \defined \mu([1{:}i_1{-}1])$,  $q' \defined \mu(w[i_{\ell}{+}1{:}j_{1}{-}1]) $
and $q'' \defined \mu(w[j_{\ell'}{+}1{:}|w|])$.
Let $u \defined w[\fr{\nodi}{\forest}]$ and $u' \defined w[\fr{\nodj}{\forest}]$
be the words described by their frontiers.

By definition, $\bitype{\nodi, \nodj} = q\cro{u} q' \cro{u'} q''$.
Hence:
\begin{equation}
\begin{aligned}
&\pro{\bitype{\nodi, \nodj}} = \sum_{\substack{1 \le  k \le \ell  \\ 1 \le k' \le \ell'}} P(k,k') \text{ where}\\
&P(k,k') \defined q \mu(u[1{:}{k}{-}1])\cro{u[k]} 
\mu(u[{k}{+}1{:}{\ell}] ) 
q' \mu(u'[{1}{:}{k'}{-}1] )\cro{u'[{k'}]} \mu(u'[{k'}{+}1{:}\ell']) q''.
\end{aligned}
\label{eq:Pkk}
\end{equation}

Hence to show Lemma  \ref{lem:bitype-prod},
is is enough to show $\forall k,k'$ $\pro{i_k, i_{k'}} = P(k,k')$. And:
\begin{equation}
\begin{aligned}
\pro{i_k, i_k'}&=  \pro{\mu(w[1{:}i_{k}{-}1])\cro{w[i_k]} \mu(w[i_{k}{+}1{:}j_{k'}{-}1] )\cro{w[j_{k'}]} \mu(w[j_{k'}{+}1{:}|w|])}\\
&=  \pro{\mu(w[1{:}i_{k}{-}1])\cro{u[k]} \mu(w[i_{k}{+}1{:}j_{k'}{-}1] )\cro{u'[{k'}]} \mu(w[j_{k'}{+}1{:}|w|])}\\
\end{aligned}
\label{eq:prodkk}
\end{equation}

Comparing Equations \ref{eq:Pkk} and \ref{eq:prodkk}, we see that we only need to show that $\forall k,k'$:
\begin{equation*}
\left\{
    \begin{array}{l}
        \mu(u[1{:}k{-}1]) = \mu(w[i_1{:}i_{k}{-}1])\\
 	 \mu(u[k{+}1{:}\ell]) = \mu(w[i_{k}{+}1{:}i_{\ell}])\\
 	 \mu(u'[1{:}k'{-}1]) = \mu(w[j_{1}{:}j_{k'}{-}1])\\
 	 \mu(u'[k'{+}1{:}\ell']) = \mu(w[j_{k'}{+}1{:}j_{\ell'}])\\
    \end{array}
\right.
\end{equation*}

We only treat the first case, since the others
are similar. It follows by applying the claim
below to the node $\nodi \in \facto{\mu}{w[i_1{:}i_{\ell}]}$
seen as a factorization.

\begin{claim} \label{claim:skel-monoid} Let $\forest \in \facto{\mu}{x}$,
$y \defined x[\fr{\forest}{\forest}]$ and
$\{1 = i_1 < \cdots < i_{\ell} = |x|\} \defined \fr{\forest}{\forest}$.
\\
Then $\forall 1 \le k \le \ell$, $\mu(x[1{:}i_k{-1}]) = \mu(y[1{:}k{-}1])$.
\end{claim}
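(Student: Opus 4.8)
The plan is to prove the claim by structural induction on the factorization $\forest \in \facto{\mu}{x}$. The only facts about factorizations I will need are the two established just before in this section: that the extremal points of a frontier are the corresponding $\mine{\cdot}{\cdot}$ and $\maxe{\cdot}{\cdot}$ (Lemma~\ref{lem:preserv}), and the value-preservation lemma, which in particular gives $\mu(x[\fr{\nodi}{\forest}]) = \valu{\nodi} = \mu(x)$ when $\nodi$ is the root of a factorization of $x$. Both sides of the desired equality are words read off $x$, so the argument is entirely a matter of comparing which positions of $x$ contribute on each side.

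For the base case $\forest = (a)$ one has $\fr{\forest}{\forest} = \{1\}$, $\ell = 1$ and $y = x = a$, so both sides equal $\mu(\vide)$ for the unique index $k = 1$. For the inductive step, write $\forest = (\forest_1) \cdots (\forest_n)$ with $\forest_i \in \facto{\mu}{x_i}$ and $x = x_1 \cdots x_n$. Since $\dep{\forest} = \{\forest\} \cup \dep{\forest_1} \cup \dep{\forest_n}$ and the root is not a leaf, the leaves of $\dep{\forest}$ are exactly those of $\dep{\forest_1}$ together with those of $\dep{\forest_n}$ (the middle children contribute nothing); reading left to right, $\fr{\forest}{\forest}$ is thus the frontier $\fr{\forest_1}{\forest_1}$ of $\forest_1$, followed by the frontier $\fr{\forest_n}{\forest_n}$ of $\forest_n$ shifted by $|x_1 \cdots x_{n-1}|$. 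I set $\ell_1 \defined |\fr{\forest_1}{\forest_1}|$, so $\ell = \ell_1 + |\fr{\forest_n}{\forest_n}|$, and correspondingly $y = y_1 y_n$ with $y_1 \defined x_1[\fr{\forest_1}{\forest_1}]$ and $y_n \defined x_n[\fr{\forest_n}{\forest_n}]$.

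Now I split on the location of the $k$-th element $i_k$ of $\fr{\forest}{\forest}$. If $k \le \ell_1$, then $i_k$ is the $k$-th element of $\fr{\forest_1}{\forest_1}$, and since $i_{\ell_1} = \maxe{\forest_1}{\forest_1} = |x_1|$ by Lemma~\ref{lem:preserv} we have $x[1{:}i_k{-}1] = x_1[1{:}i_k{-}1]$ and $y[1{:}k{-}1] = y_1[1{:}k{-}1]$; applying the induction hypothesis to $\forest_1$ finishes this case. If $k > \ell_1$, put $k' \defined k - \ell_1$; then $i_k = |x_1 \cdots x_{n-1}| + i'_{k'}$ where $i'_{k'}$ is the $k'$-th element of $\fr{\forest_n}{\forest_n}$, hence $\mu(x[1{:}i_k{-}1]) = \mu(x_1 \cdots x_{n-1})\,\mu(x_n[1{:}i'_{k'}{-}1])$ and $\mu(y[1{:}k{-}1]) = \mu(y_1)\,\mu(y_n[1{:}k'{-}1])$. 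The induction hypothesis for $\forest_n$ matches the right-hand factors, and it remains to check $\mu(x_1 \cdots x_{n-1}) = \mu(y_1)$: by value preservation $\mu(y_1) = \mu(x_1)$, while $\mu(x_1 \cdots x_{n-1}) = \mu(x_1)$ because either $n = 2$ (so the product is just $x_1$) or $n \ge 3$ and $\mu(x_1) = \cdots = \mu(x_n)$ is an idempotent, whence $\mu(x_1 \cdots x_{n-1}) = \mu(x_1)^{\,n-1} = \mu(x_1)$.

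There is no genuine obstacle here: the proof is routine once the frontier of the root is decomposed through the first and last children. The only step that requires a moment's thought is the idempotent-node case, where passing to dependencies collapses the factor $x_1 \cdots x_n$ to $x_1$ on the left of $y_n$'s contribution, and it is precisely idempotency of the common value of the children that makes this collapse invisible to $\mu$; everything else is the bookkeeping of index shifts between positions of $x$ and positions of $y$.
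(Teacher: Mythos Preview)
Your proof is correct and follows essentially the same approach as the paper's: structural induction on $\forest$, decomposing $\fr{\forest}{\forest}$ through the first and last children and invoking value preservation together with idempotency to collapse the middle factors. Your presentation is in fact slightly tidier, since you handle the binary case $n=2$ explicitly rather than dismissing it as uninteresting, and you phrase the induction hypotheses for $\forest_1$ and $\forest_n$ more symmetrically.
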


\begin{proof} The result is shown by induction on the factorization.
The only interesting case is when
$\forest = (\forest_1) \dots (\forest_n)$ starts with an idempotent node.
Let $x = x_1 \cdots x_n$ be such
that $\forest_i$ is a factorization of $x_i$. Then $ e \defined \mu(x_i)$ is an idempotent.

By definition of the dependency and Lemma \ref{lem:preserv},
there exists $1 \le p \le \ell$ such that $i_{p} = |x_1|$ and $i_{p+1} = |x| - |x_n| +1$.
Hence:
\begin{equation*}
y = x_1[i_1] \cdots x_1[i_p] x_n[i_{p+1} {-} |x|{+}|x_n|] \cdots x_n[i_{\ell} {-} |x|{+}|x_n|]
\end{equation*}

Let $1 \le k \le \ell$, we assume that $p+1 \le k$
(the case $k \le p$ is easier).
Then:
\begin{equation*}
	y[1{:}k{-}1] = x_1[i_1] \cdots x_n[i_p] x_n[i_{p+1} {-} |x|{+}|x_n|] \cdots x_n[i_{k{-}1} {-} |x|{+}|x_n|].
\end{equation*}
By Lemma \ref{lem:preserv}:
\begin{equation*}
	\mu(x_1[i_1] \cdots x_1[i_{p-1}]x_1[i_p]) = \mu(x_1) = e
\end{equation*}
And by induction hypothesis:
\begin{equation*}
	\mu(x_n[i_{p+1}{-}|x|{+}|x_n|] \cdots x_n[i_{k-1} {-} |x|{+}|x_n|]) = \mu(x_n[i_{p+1}{-}|x|{+}|x_n|{:}i_{k} {-} 1 {-} |x|{+}|x_n|]) 
\end{equation*}

Finally since $e = \mu(x_2) = \cdots = \mu(x_{n-1}) $ is idempotent, it follows that:
\begin{equation*}
\begin{aligned}
\mu(x[1{:}i_k{-}1] ) &= e \mu(x_n[i_{p+1}{-}|x|{+}|x_n|{:}[i_{k} {-} 1 {-} |x|{+}|x_n|])\\
&= \mu(x_1[i_1] \cdots x_1[i_{p-1}]x_1[i_p]) \mu(x_n[i_{p+1}{-}|x|{+}|x_n|] \cdots x_n[i_{k-1} {-} |x|{+}|x_n|])\\
&=\mu(y[1{:}k{-}1]).\\
\end{aligned}
\end{equation*}

\vspace*{-1\baselineskip}

\end{proof}

\section{Proof of Theorem \ref{theo:characterization} - characterization}

We only show that $\ref{it:trt} \Rightarrow \ref{it:det}$.
Assume that the $1$-blind bimachine $\trans$ is symmetrical.

By Proposition \ref{prop:effective-factorizations}, one can
build a two-way transducer (with non-unary output) that
given $w \in A^+$, computes some  factorization $\forest \in \facto{\mu}{w}$ of height at most $\mh$,
 described as a word, that is $\forest \in \apar{A}^+$. We denote this function $f_{\ltm}$.
 
Using lemmas \ref{lem:distinguishable}, \ref{lem:linked}, \ref{lem:independent},
the functions $f_D, f_L$ and $f_I$ are computable by a $1$-blind transducer.
We want to compute their sum, in the sense of Claim \ref{claim:sss}.
It follows from this claim that $f_D + f_L + f_I$ can be computed
by a $1$-blind bimachine, or equivalently a $1$-blind transducer.
By Lemma \ref{lem:nodes-prod} and the definitions
of $f_D, f_L$ and $f_I$, it follows that:
\begin{equation*}
(f_D + f_L + f_I) \circ f_{\ltm} = f.
\end{equation*}

It follows from \cite{nguyen2020comparison}
that (even for non-unary alphabets) $1$-blind transducers are effectively closed under
composition with two-way transducers, hence $(f_D + f_L + f_I) \circ f_{\ltm}$
can be computed by a $1$-blind transducer.

\end{document}